\newcommand {\ignore} [1] {}
\def\Prob{\mathbb{P}\mathrm{r}}
\def\F{\mathbb{F}}
\def\G{\mathbb{G}}
\def\Exp{\mathbb{E}}
\def\reals{\mathbb{R}}
\def\nats{\mathbb{N}}
\newcommand{\cL}{\mathcal{L}}
\renewcommand{\Pr}[1]{\Prob\left[#1\right]}
\newcommand{\cH}{{\mathcal{H}}}
\newcommand{\kk}{k + \lfloor k^{2/3} \log {k} \rfloor}
\newcommand{\kkminus}{k - \lfloor k^{2/3} \log {k} \rfloor}
\providecommand{\myfloor}[1]{\lfloor {#1} \rfloor}
\newcommand{\polylog}{\mbox{polylog }}
\newenvironment{proof}{\noindent   {\bf Proof.}}{\hspace*{\fill}$\Box$\par\vspace{2mm}}
\newtheorem{lemma}{Lemma}
\newtheorem{theorem}{Theorem}
\newtheorem{corollary}{Corollary}
\newtheorem{proposition}{Proposition}
\newtheorem{definition}{Definition}
\newcommand{\E}{\mathbb{E}}
\newcommand{\dkk}[1]{{\color{red} #1}}
\newcommand{\dk}[1]{\todo[inline,color=red!25!white]{Darek: #1}}
\newcommand{\pk}[1]{{\color{blue}Piotr: #1}}
\begin{document}

\title{
Optimal Algorithms for Free Order Multiple-Choice Secretary
}

\author[1]{Mohammad Taghi Hajiaghayi\footnote{Supported in part by NSF CCF grant-2114269 and an Amazon AWS award.}}

\author[2]{Dariusz R. Kowalski\footnote{Partially supported by the 
NSF grant 2131538.}}

\author[3]{Piotr Krysta}

\author[4]{Jan Olkowski}

\affil[1,4]{Department of Computer Science, University of Maryland, College Park, Maryland, USA. {\tt {hajiagha,olkowski}@umd.edu}.}
\affil[2]{School of Computer and Cyber Sciences, Augusta University, Augusta, Georgia, USA. {\tt dkowalski@augusta.edu}.}
\affil[3]{Department of Computer Science, University of Liverpool, Liverpool, UK. {\tt pkrysta@liverpool.ac.uk}.}


\date{}

\maketitle
\thispagestyle{empty}

\vspace*{-13mm}

\begin{abstract}

Consider the following problem: we are given $n$ boxes labeled $\{1,2,\ldots, n\}$ by an adversary, each containing a single number chosen from an unknown distribution, and these $n$ distributions are not necessarily identical. We are also given an integer $k \leq n$. 
We have to choose an order in which we will sequentially open these boxes, and each time we open the next box in this order, we learn the number in the box. Once we reject a number in a box, the box cannot be recalled. Our goal is to accept $k$ of these numbers, without necessarily opening all boxes, such that the accepted numbers are the $k$ largest numbers in the boxes (thus their sum is maximized). This problem is called the \textit{free order multiple-choice secretary problem}. Recently, the \textit{free order} (a.k.a., \textit{best order}) variants have been studied extensively for the secretary and prophet problems (see e.g.~\cite{abolhassani2017beating,DBLP:conf/sigecom/0001SZ20,ArsenisDK21,DBLP:journals/ior/BeyhaghiGLPS21,10.1145/1806689.1806733,DBLP:journals/mp/CorreaSZ21,jaillet2013online,LiuLPSS21,DBLP:conf/sigecom/LiuLPSS21,PT22}). In particular, in a recent work by Arsenis, Drosis and Kleinberg \cite{ArsenisDK21} (SODA'21), who 
study algorithms that can pre-compute a small set of permutations (orders) for solving the free order prophet inequality problem. 
In the context of secretary problems, the seminal work of Kesselheim, Kleinberg, and Niazadeh~\cite{KesselheimKN15} (STOC'15) initiates an investigation into the question of randomness-efficient algorithms (i.e., the cheapest order in terms of used random bits) for the free order secretary problems, where they say:
\textit{``Of particular relevance to our work is the free order model \cite{jaillet2013online}; our results on the minimum entropy s-admissible distribution can be regarded as a randomness-efficient secretary algorithm in the free-order model."}~\cite{KesselheimKN15}

In this paper, we present an algorithm for free order multiple-choice secretary, which is provably and simultaneously optimal with respect to the achieved competitive ratio and the used amount of randomness. In particular, we construct a distribution on the orders with entropy $\Theta(\log\log n)$ such that a deterministic multiple-threshold algorithm, selecting only elements with values above certain current thresholds, gives a competitive ratio $1-O(\sqrt{\log k/k})$, for $k < \log n/\log \log n$. Our competitive ratio is simultaneously optimal and uses optimal entropy $\Theta(\log\log n)$. Our result improves in three ways the previous best construction by Kesselheim, Kleinberg and Niazadeh~\cite{KesselheimKN15} (STOC'15), whose competitive ratio is $1 - O(1/k^{1/3}) - o(1)$. First, our algorithm achieves a competitive ratio that is provably optimal for the multiple-choice secretary problem, up to a $\sqrt{\log k}$ factor. Second, our solution works for exponentially larger range of parameter $k$, comparing to
$k=O((\log\log\log n)^{\epsilon})$ in~\cite{KesselheimKN15}, for some $\epsilon\in (0,1)$. And third, our algorithm is a simple \textit{deterministic} multiple-threshold algorithm (only drawing an order from a stochastic  distribution), while the algorithm in~\cite{KesselheimKN15} uses additional randomness. We also prove a corresponding lower bound on the entropy of optimal solutions for the multiple-choice secretary problem, matching the entropy of our algorithm. No previous lower bound on entropy was known for this problem.

We obtain our algorithmic results with a host of new techniques that we introduce to construct probability distributions on permutations, including decompositions of the random success events of secretary algorithms into symmetric atomic events, derandomization of concentration bounds based on special pessimistic estimators, and dedicated dimensionality reductions. With these new techniques we also improve significantly the previous results of Kesselheim, Kleinberg and Niazadeh~\cite{KesselheimKN15} about constructing 
entropy-optimal distributions for the classic free order secretary problem.

\ignore{
Intrestingly, in a recent work by Arsenis, Drosis and Kleinberg \cite{ArsenisDK21} (SODA'21), they 
study algorithms that can pre-compute a small set of permutations (orders), and then for any collection of $n$ distributions in the $n$ boxes, choose one of these permutations for opening the boxes in this order. 
They obtain a constant analog to the competitive ratio (which they call threshold prophet ratio) for the free order prophet inequality problem, a closely related problem to the free order secretary problem. 
}

\ignore{ 

In the \textit{secretary} problem, our goal is to stop a sequence of values at the moment it observes the maximum value in the sequence. While there is an algorithm that succeeds with probability $1/e$ if the sequence is presented in uniformly random order, no non-trivial performance guarantee is possible if the elements arrive in worst-case order. In real-world applications, though, it is plausible to assume some randomness in the input sequence, but not reasonable to assume the arrival ordering is uniformly random \dkk{over the set of all permutations}. Motivated by this, the seminal work of
Kesselheim, Kleinberg, and Niazadeh~\cite{KesselheimKN15} (STOC'15) initiates an investigation into relaxations of the random-ordering assumption for the secretary problem. In particular, they define a distribution over permutations to be \textit{admissible}, if there exist an algorithm which guarantees at least a constant probability of selecting the element of maximum value over permutations from this distribution; the distribution is \textit{optimal}, if the constant probability approaches the best secretary bound (e.g., $1/e$ for the classic one) as the number of elements, $n$, tends to infinity.
Motivated by the theory of pseudorandomness,
Kesselheim Kleinberg, and Niazadeh~\cite{KesselheimKN15} raise the question of the minimum entropy of an admissible/optimal distribution over permutations and whether there is an explicit
construction that achieves the minimum entropy. Though they prove tight bound $\Theta(\log\log n)$ for minimum entropy of an admissible distribution  for the  secretary problem,
bounds that they obtain for the classic \textit{multiple-choice secretary} (a.k.a. $k$-secretary) are far from being tight.

In this paper, we study the problem for the entropy of both admissible and optimal distributions of permutations to the multiple-choice secretary problem 
and provide  tight bounds for the problem. This completely resolves the entropy-optimality question for the multiple-secretary problem.
In particular, we construct a distribution with entropy $\Theta(\log\log n)$ such that a deterministic threshold-based algorithm gives a nearly-optimal 
competitive ratio $1-O(\log(k)/k^{1/3})$ for $k=O((\log n)^{3/14})$. Our error is simultaneously nearly-optimal and with optimal entropy $\Theta(\log\log n)$. Our result improves in two ways the previous best construction by Kesselheim, Kleinberg and Niazadeh~\cite{KesselheimKN15} whose competitive ratio is $1 - O(1/k^{1/3}) - o(1)$. First, our solution works for exponentially larger range of parameters $k$, as in~\cite{KesselheimKN15} $k=O((\log\log\log n)^{\epsilon})$ for some $\epsilon\in (0,1)$. Second, our algorithm is a simple deterministic \textit{single-threshold} algorithm (only drawing a permutation from a stochastic  uniform distribution), while the algorithm in~\cite{KesselheimKN15} uses additional randomness.
We also prove a corresponding lower bound for entropy of
optimal solutions to the $k$-secretary problem, matching the entropy of our algorithm. No previous lower bound on entropy was known for the $k$-secretary problem.

We further show the strength of our techniques by obtaining fine-grained results for optimal distributions of permutations for the secretary problem (equivalent to $1$-secretary).
For optimal entropy $\Theta(\log\log n)$, we precisely characterize the success probability of uniform distributions that is below, and close to, $1/e$, and construct such distributions in polynomial time.
Furthermore, we prove even higher entropy $\Theta(\log(n))$ suffices for a success probability above $1/e$, but, no uniform probability distribution with small support and entropy strictly less than $\log (n)$ can have success probability above $1/e$. For maximum entropy $\Theta(n \log(n))$, improving upon a result of Samuels from 1981 \cite{Samuels81}, we find the precise formula for the optimal success probability of any secretary algorithm. Our results indeed give a profound understanding of limiting randomness for stopping theory and in particular for the (multiple-choice) secretary problem. 
} 
\

\noindent
{\bf Keywords:}
free order, multiple-choice secretary, online algorithms, approximation algorithms, entropy, derandomization.
\end{abstract}




\section{Introduction}
In this paper, we consider the following problem in which we are given $n$ boxes labeled $\{1,2,\ldots, n\}$ by an adversary, each containing a single number chosen from an unknown distribution, where these  distributions are non-identical. We have to choose an order in which we will sequentially open these boxes. Each time we open the next box, we learn the number in the box. Then we can either stop opening the remaining boxes and accept the current number, in which case the game ends. Or we can continue opening the boxes until the last box, in which case we have to accept this last number. The goal is to accept the largest number in these $n$ boxes. This problem, called a {\em free order secretary problem}, belongs to the class of optimal stopping problems, called prophet inequality  with order selection, introduced by Hill \cite{Hill1983} in 1983. Recently, the {\em free order} (a.k.a., {\em best order}) variants have been studied extensively for the secretary and prophet problems (see e.g.~\cite{abolhassani2017beating, ArsenisDK21,DBLP:journals/ior/BeyhaghiGLPS21,10.1145/1806689.1806733,DBLP:journals/mp/CorreaSZ21,jaillet2013online,LiuLPSS21,DBLP:conf/sigecom/LiuLPSS21,PT22}. The fundamental question that this problem asks is:

{\em What is the best order to choose to maximize the chance of accepting the largest number?}

This question has already been addressed in some related problems. For instance,
in a recent work by Arsenis, Drosis and Kleinberg \cite{ArsenisDK21} (SODA'21), they prove that the algorithm can pre-compute a small set of permutations (orders), and then for any collection of $n$ distributions in the $n$ boxes, choose one of these permutations for opening the boxes in this order. This way they obtain a constant analog to the competitive ratio (which they call threshold prophet ratio) for the free order prophet inequality problem, which is related to the free order secretary problem.

A closely related problem is the secretary problem introduced by statisticians in the 60s. It is the problem of irrevocably hiring the best secretary out of $n$ rankable applicants and was first analyzed in~\cite{Lindley61,dynkin1963optimum,ChowMRS64,GilbertM66}. In this problem the goal is to find the best strategy when choosing between a sequence of alternatives. In particular, asymptotically optimal algorithm with success probability $\frac{1}{e}$ was proposed, when the order of applicants is chosen uniformly at random from the set of all $n!$ permutations. Gilbert and Mosteller~\cite{GilbertM66} showed  with perfect randomness, no algorithm could achieve better probability of success than some simple {\em wait-and-pick} algorithm with specific checkpoint
$m\in [n-1]$ (which can be proved to be in $\{\lfloor n/e \rfloor,\lceil n/e \rceil\}$). Wait-and-pick 
are deterministic algorithms observing values until some pre-defined checkpoint position
$m\in [n-1]$, and after that they accept the first value that is larger than all the previously observed ones, or the last occurring value otherwise.

\ignore{
How would we use such algorithm in practice? We would assign id's to the $n$ candidates, sample a random order and ask them to come in this order, applying the above optimal algorithm that guarantees to hire the best applicant with probability at least $\frac{1}{e}$. However, sampling a random permutation requires lots of random bits which may be expensive. How much randomness is provably required to achieve optimal success probability (close to) $\frac{1}{e}$?
}

Whereas, the best order for the free order secretary problem is the uniform random order over all orders (we are not aware about a proof of this fact, and we provide the proof in this paper), what is the  randomness-efficient order (i.e., the cheapest order in terms of used random bits)? The seminal work of
Kesselheim, Kleinberg, and Niazadeh~\cite{KesselheimKN15} (STOC'15) initiates an investigation into this question for the secretary problems, where they say: 

{\em ``Of particular relevance to our work is the free order model \cite{jaillet2013online}; our results on the minimum entropy s-admissible distribution can be regarded as a randomness-efficient secretary algorithm in the free-order model."}~\cite{KesselheimKN15}

More precisely, they construct in polynomial-time probability distributions on orders, i.e., permutations of size $n$, with small entropy $O(\log \log n)$ such that when the secretary algorithm samples the random order from that distribution, its success probability is (close to) the optimal success probability of $\frac{1}{e}$. They also prove that if this distribution has entropy $o(\log \log n)$ then no secretary algorithm can achieve constant success probability.

We continue this line of work and significantly improve on their results, in particular for the generalized {\em free order multiple-choice secretary} problem, in which the goal is to choose an order to accept $1\leq k \leq n$ largest numbers in the boxes (thus with maximum sum) while decisions are irrevocable, i.e.,  when we decide to not accept a number in a box, we cannot undo our decision to recall the box.
\ignore{
the following problem: we are given $n$ boxes, each containing a single number chosen from an unknown distribution, and these $n$ distributions are non-identical. We are also given an integer $k \leq n$. The goal is to choose an order in which we will sequentially open these boxes, and each time we open the next box in this order, we learn the number in the box. Our goal is to accept $k$ of these numbers, without necessarily opening all boxes, such that the accepted numbers are the $k$ largest numbers in the boxes. When we decide to not accept a number, we cannot undo our decision.
}
We present an algorithm for the free order multiple-choice secretary, which is provably and simultaneously optimal with respect to the achieved competitive ratio and the used amount of randomness. In particular, we construct a distribution on the orders with entropy $\Theta(\log\log n)$ such that a deterministic multiple-threshold algorithm, selecting only elements with values above certain current thresholds (see Section~\ref{sec:prel} for precise definition), gives a competitive ratio $1-O(\sqrt{\log k/k})$, for $k < \log n/\log \log n$. Our competitive ratio is simultaneously optimal and uses optimal entropy $\Theta(\log\log n)$. Our result improves in three ways the previous best construction by Kesselheim, Kleinberg and Niazadeh~\cite{KesselheimKN15} (STOC'15), whose competitive ratio is $1 - O(1/k^{1/3}) - o(1)$. First, our algorithm achieves a competitive ratio which is provably optimal for the multiple-choice secretary problem, up to $\sqrt{\log k}$ factor. Second, our solution works for exponentially larger range of parameter $k$, comparing to~\cite{KesselheimKN15} where $k=O((\log\log\log n)^{\epsilon})$, for some $\epsilon\in (0,1)$. And third, our algorithm is a simple deterministic 
multiple-threshold
algorithm (only drawing an order from a stochastic  distribution), while the algorithm in~\cite{KesselheimKN15} uses additional randomness.
We also prove a corresponding lower bound on entropy of optimal solutions for the free order multiple-choice secretary problem, matching the entropy of our algorithm. No previous lower bound on entropy was known for this problem.

We obtain our algorithmic results with a host of new techniques that we introduce to construct probability distributions on permutations, including decompositions of the random success events of secretary algorithms into symmetric atomic events, derandomization of concentration bounds based on special pessimistic estimators, dedicated dimensionality reductions, and success probability lifting techniques.

We further show the strength of our techniques by obtaining fine-grained results for optimal distributions of permutations for the (free order) secretary problem (equivalent to $1$-secretary).
For entropy $\Theta(\log\log n)$, we precisely characterize the success probability of uniform distributions that is below, and close to, $1/e$, and construct such distributions in polynomial time.  Furthermore, we prove even higher entropy $\Theta(\log(n))$ suffices for a success probability above $1/e$, but, no uniform probability distribution with small support and entropy strictly less than $\log (n)$ can have success probability above $1/e$. Last but not least, with maximum entropy, $\Theta(n \log(n))$, of the uniform distribution with support $n!$, we find the precise formula $OPT_n$ for the optimal success probability of any secretary algorithm. In addition, we prove that any secretary algorithm that uses any, not necessarily uniform distribution, has success probability at most $OPT_n$.  This improves the result of Samuels from 1981 \cite{Samuels81}, who proved that under uniform distribution no secretary algorithm can achieve success probability of $1/e + \varepsilon$, for any constant $\varepsilon > 0$. \\

\ignore{
\noindent
{\bf IMPORTANT:}\\ 
-- Compare to best order prophet, and random order prophet, and pandora box, ...\\
-- Ideas for new presentation: only keep in the main part our results for the multiple-choice secretary, classic secretary in the appendices; prepare table with our and previous results for $1$-secretary and $k$-secretary; present first high-level description (top-down) for $k$-secretary of our derandomization, new probabilistic analysis, dimension reduction, lower bound.\\
-- Claim: The best order is random order (because the numbers are adversarial, that is, when we uncover some value, we don't learn anything about the rest of the adversarial numbers).-- formalize/prove this.\\
-- q: kleinberg says in his review that probably entropy is not the right measure; but is then min entropy the right measure??? \\
-- Prove that the best order for the free order secretary problem is the random order? (check that our previous theorem about characterization proves this fact!?).\\
-- Q: Is random order also optimal for multiple-choice secretary? Also, can we obtain better than $1-1/\sqrt{k}$ approx for $k$-secretary in the free order model? E.g., can we prove: The approx ratio of any deterministic procedure that computes the order is always worse than the approx ratio of uniform random order (??!)\\
-- In technical contributions: Can we derandomize their chernoff bound argument with our techniques (and get UIOP)? Note that they use Chernoff only to prove existsnce but do not use it in any algorithmic construction or argument.\\
-- Read Nader Bshouty paper and papers he cites on derandomizing Chernoff bounds and relate them to our derandomization.\\
-- Read Harris-Srinivasan paper on derandomizing LLL for permutations and relate their techniques to ours.
} 

\ignore{
\subsection{Further ideas for motivation/intro}

{\bf Algorithmic perspective of secretarial problems.} The main goal of this research is to quantify to which extent can the random arrival order help algorithms to achieve good competitive ratios. In this context it is akin greedy or priority algorithms where the algorithm can influence the arrival order, effectively this aspect being part of the algorithm. Already, Kesselheim et al. take this perspective, and give algorithmic constructions of probability distributions (with small entropy) over arrival orders that imply existence of algorithms with provably near-optimal competitive (approximation?) ratios. We continue this line of research and ... \\

Motivation: How would one use secretarial model/algorithm? We would first choose an arrival order (from some pre-specified distribution) and then run the secretarial algorithm on this order. In this sense we do not assume that this distribution is given by "nature", but that it is constructed as part of the algorithm. Then we can refer to them that it's motivated by pseudorandomness. \\

Compare to best order prophet, and random order prophet, and pandora box, ...\\

Motivation: Kesselheim et al. say in their paper that (lines 3-5 on page 3) "Of particular relevance to our work is the free order model (see Jaillet, P., Soto, J. A., and Zenklusen, R. Advances on matroid secretary problems: Free order model and laminar case. IPCO 2013); our results [i.e., Kesselheim et al.'s results] on the minimum entropy s-admissible distribution can be regarded as a randomness-efficient secretary algorithm in the free-order model."

In a sense maybe we solve the best order secretary problem, and the same (maybe) s the perspective of Kesselheim et al. (???) But if we are allowed to choose the arrival order, can we better choose some (deterministic) best order rather than random order? //

We study "best order secretary". 

Claim: The best order is random order (because the numbers are adversarial, that is, when we uncover some value, we don't learn anything about the rest of the adversarial numbers).-- formalize/prove this.

Then, we ask (well Kesselheim et al) what is the cheapest random order that still leads to best competitive ratios (limited amount of randomness -- pseudorandomness, answered first by Kesselheim et al.). We are significantly improving on them.\\

Check (search): is there any relation between entropy and the (min) number of inversions in permutations? This could give some additional motivation.\\

IDEAS FOR STRUCTURE OF THE PAPER:

-- in main body only $k$-secretary; $1$-secretary in the appendix

in the first 10 pages:

-- write our new results vs previous (table with upper/lower bounds vs same previously known (clear comparison))

-- start early with top-down description of our strongest technical results emphasizing novelty -- first rough description then early describe algorithms ("derandomization of concentration inequalities"; new probabilistic analysis (negative association / double use of Hoeffding/Chernoff); first lower bound for $k$-secretary; dimension reduction (new algebraic block lemma, etc); pessimistic estimator/algorithms for conditional probability (say in some way that it's more technical than previous such derandomizations);

-- maybe: draw a diagram showing how our proofs are build and how their components are interdependent: eg, probabilistic anallysis $\longrightarrow$ derandomization $\longrightarrow$ dimension reduction ...\\

POSSIBLE STRONGER RESULTS (may be quite technical/time consuming):

-- stronger lower bound of $\log k$ on entropy for $k$-secretary

-- derandomization for $k$-secretary with competitive factor of $1-1/\sqrt{k}$ \\

IDEAS: \\

-- our techniques are simpler than Kesselheim et al, e.g., they use approximation theory (Bernstein polynomials) which only gives asymptotic analysis, but our analysis is simpler, combinatorial and exact (non-asymptotic) -- we must be dyplomatic in criticising them though \\
} 

\vspace*{-7mm}

\subsection{Preliminaries}
\label{sec:prel}

\noindent
{\bf Notation.} Let $[i] = \{1,2,\ldots,i\}$, and
$n$ be the number of arriving elements/items. Each of them has a unique index $i\in [n]$, and corresponding unique value $v(i)$ assigned to it by an adversary. The adversary knows the algorithm and the distribution of random arrival orders.

Let $\Pi_n$ denote the set of all $n !$ permutations of the sequence $(1,2,\ldots,n)$. A {\em probability distribution} $p$ over $\Pi_n$ is a function $p : \Pi_n \longrightarrow [0,1]$ such that $\sum_{\pi \in \Pi_n} p(\pi) = 1$. {\em Shannon entropy}, or simply, {\em entropy}, of the probability distribution $p$ is defined as $\cH(p) = - \sum_{\pi \in \Pi_n} p(\pi) \cdot \log(p(\pi))$, where $\log$ has base $2$, and if $p(\pi)=0$ for some $\pi \in \Pi_n$, then we assume that $0 \cdot \log(0) = 0$.
Given a distribution $\mathcal{D}$ on $\Pi_n$, $\pi \sim \mathcal{D}$ means that $\pi$ is sampled from $\mathcal{D}$.
A special case of a distribution, convenient to design efficiently, is when we are given a (multi-)set $\cL\subseteq \Pi_n$ of permutations, called a {\em support}, and random order is selected uniformly at random (u.a.r. for short) from this set; in this case we write $\pi \sim \mathcal{L}$. The entropy of this distribution is $\log |\cL|$. We call such an associated probabilistic distribution {\em uniform}, and otherwise {\em non-uniform}. We often abbreviate ``random variable" to r.v., and ``uniformly at random'' to u.a.r. We will routinely denote as $poly(x)$ a fixed polynomial in single variable $x$, where its form will always be clear from the context.

For a positive integer $k < n$, let $[n]_k$ be the set of all $k$-element subsets of $[n]$.
Given a sequence of (not necessarily sorted) {\em values} $v(1),v(2),\ldots,v(n) \in \reals$, we denote by $ind(k\rq{}) \in \{1,2,\ldots,n\}$ the index of the element with the $k\rq{}$th largest value, that is, the $k\rq{}$th largest value is $v(ind(k\rq{}))$.

\noindent
{\bf Problems.} In the {\em free order multiple-choice secretary} problem, we are given integers $k,n$, $1 \leq k \leq n$, and $n$ boxes labeled $\{1,2,,\ldots,n\}$ by an adversary, each containing a single number chosen by the adversary. 
We need to choose an order in which we will be sequentially opening these boxes. Each time we open the next box in the chosen order, we learn the number in the box and decide to accept this number or not. This decision is irrevocable, and we cannot revisit any box. We have to accept $k$ of these numbers without necessarily opening all boxes, where the objective is to accept $k$ largest among them (that is, to accept $k$ elements with maximum possible sum). Once we have accepted $k$ of those numbers, we stop opening the remaining boxes, if any. If we have accepted $i$ numbers so far and there are only $k-i$ remaining boxes, we have to open all these boxes and accept their numbers. This problem is also called the {\em free order $k$-secretary problem}.

The free order $k$-secretary problem is directly related to the {\em prophet inequality problem with order selection}, introduced by Hill \cite{Hill1983}. Namely, the free order $1$-secretary problem can be modelled by Hill's problem assuming that the numbers in the boxes are given by unknown single-point
distributions, whereas the prophet inequality problem with order selection assumes arbitrary (un)known distributions. The free order $k$-secretary problem is also exactly the free order matroid secretary problem, studied by Jaillet, Soto and Zenklusen \cite{jaillet2013online}, where the matroid is uniform.

\noindent
{\bf Competitive ratio.}
As is common, e.g., 
\cite{jaillet2013online}, we quantify the performance of an algorithm $A$ for the free order $k$-secretary problem by the {\em competitive ratio}, saying that $A$ is {\em $\alpha$-competitive} or has {\em competitive ratio} $\alpha \in (0,1)$ if it accepts $k$ numbers whose sum is at least $\alpha$ times the sum of the $k$ largest numbers in the $n$ boxes; the competitive ratio is usually in expectation with respect to randomization in the chosen random order.

\noindent
{\bf Wait-and-pick algorithms.}
An algorithm for the $k$-secretary problem is called {\em wait-and-pick} if it only observes the first $m$ values (position $m \in \{1,2,\ldots,n-1\}$ is a fixed observation {\em checkpoint}), selects one of the observed values $x$ ($x$ is a fixed {\em value threshold} or simply a {\em threshold}), and then selects every value of at least $x$ received after checkpoint position $m$; however, it cannot select more than $k$ values in this way, and it may also select the last $i$ values (even if they are smaller than $x$) provided it selected only $k-i$ values~before~that.

We also consider a sub-class of wait-and-pick algorithms, which as their value threshold $x$ choose the $\tau$-th largest value, for some $\tau \in \{1,2,\ldots,m\}$, among the first $m$ observed values. In this case we say that such wait-and-pick algorithm has a {\em statistic} $\tau$ and value $x$ is also called a statistic in this case.

The definition of the wait-and-pick algorithms applies also to the secretary problem, i.e., with $k=1$. It has been shown that some wait-and-pick algorithms are optimal in case of perfect randomness in selection of random arrival order, see \cite{GilbertM66}.

\noindent
{\bf Threshold algorithms.}
An extension of wait-and-pick algorithms, considered in the literature (c.f., the survey~\cite{Gupta_Singla}), allows partition of the order into consecutive phases, and selecting potentially different threshold value for each phase (other than $1$) based on the values observed in the preceding phases. Thresholds are computed at checkpoints -- last positions of phases. These thresholds could also be set based on statistics. We call such algorithms {\em threshold algorithms}, or more specifically {\em multiple-threshold algorithms} if there are more than two phases.

\ignore{
{\em Wait-and-pick} algorithms, also called {\em threshold} or {\em classic} secretarial algorithms, are parametrized by {\em threshold} $m\in [n-1]$, denoted by $m_0$. They work as follows: they apply random order to the adversarial assignment of values, keep observing the first $m$ values in this random order and then select the first $i$-th arriving element, for $i>m$, whose value is larger than all the previously observed values. If such $i$ does not exist, then the last, $n$-th, element is selected. It has been shown that some wait-and-pick algorithms are optimal in case of perfect randomness in selection of random arrival order, see \cite{GilbertM66}.
}

\section{Our results and techniques}

Suppose that we are given a random permutation $\pi$, $\pi \sim \Pi_n$, i.e., a random order for a multiple-choice secretary algorithm. This algorithm faces adversarial values $v(1),v(2),\dots,v(n)$ with indices $ind(k') \in [n]$ for value $v(k')$, $k' \in [n]$, where $(ind(1),ind(2),\ldots,ind(n))$ is the the adversarial permutation. The secretary algorithm considers these values in order $(\pi(ind(1)),\pi(ind(2)),\ldots,\pi(ind(n)))$.

Our starting point is to define a very general family of events, which we call {\em atomic events}, see Definition \ref{def:atomic_event}, in the uniform probability space $(\Pi_n,\pi \sim \Pi_n)$. Suppose that we are given a partition of the positions in $\pi$ into $t$ consecutive blocks of positions (called buckets) and a mapping $f$ of the $k$ adversarial indices $\{ind(1),\ldots,ind(k)\}$ to the $t$ buckets. Let also $\sigma$ be any ordering of the indices $\{ind(1),\ldots,ind(k)\}$. The {\em atomic event} for the given $f$ and $\sigma$, corresponds to all $\pi \in \Pi_n$ that obey the mapping $f$ and preserve the ordering $\sigma$.

Atomic event have some flavor of ``$k$-wise independent" and ``block independent" random permutation; in fact our atomic event means that $\pi$ has both BIP and UIOP properties combined, introduced by Kesselheim, Kleinberg, and Niazadeh~\cite{KesselheimKN15}.
  The crucial property of atomic events 
are that they are very general and very symmetric. Kesselheim, Kleinberg, and Niazadeh~\cite{KesselheimKN15} show how to construct probabilistic distributions on $\Pi_n$ that are UIOP, have small entropy and lead to high competitive ratios for secretary problems. 

We observe, however, that it might be difficult to construct such general distributions obeying atomic events, and thus also UIOP (BIP), with small entropy. The number of all atomic events is $t^k \cdot k!$, implying that if we, for instance, would like to preserve a constant fraction of all atomic events, then the resulting entropy would be at least $\Omega(\log (t^k \cdot k!)) = \Omega(k \log k)$, meaning that we could only obtain small entropy for small values of $k$.

On the one hand, atomic events are appealing due to  their symmetry which could lead to elegant generic algorithms and analysis. On the other hand it might not be possible to preserve atomic events directly because of entropy. Our new approach seeks to balance these two tensions by introducing a generic framework to group atomic events. Atomic events are grouped into events, called {\em positive events}, see Definition \ref{def:positive_event}, that are closer to the success probability of secretarial algorithms. Interestingly, we present algorithmic ways to do such grouping for two problems: the classic secretary and the multiple-choice secretary. The result of this new framework are general derandomization algorithms, see Algorithm \ref{algo:Find_perm_2_111} and \ref{algo:Cond_prob_2_111}, that operate on atomic events to handle conditional probabilities, but explicitly preserve only positive events. This new approach enables a general analysis which should relatively easy generalize to other problems, and at the same time separates the problem-specific aspects: grouping algorithms to compute atomic events for any positive event, and lifting the probability of positive events. We will describe details of this approach below when discussing our technical contributions.

\ignore{
\noindent
{\bf Algorithms.} A free order $k$-secretary algorithm can compute its order non-adaptively, or possibly recompute the order based on the history, i.e., adaptively. We focus in this paper on non-adaptively order algorithms, which we show is without loss of generality for the free order $1$-secretary problem. We leave this as an open question for the
free order $k$-secretary problem.

\begin{proposition}
 
\end{proposition}

\begin{proof}
 We prove in Proposition \ref{Thm:optimum_expansion}, Part 1 and 2, that the best way of computing non-adaptively the order of any (that is also the best) algorithm for the $1$-secretary problem is to use the uniform random order on $\Pi_n$. 
 
 Hill shows ...
\end{proof}
} 

\subsection{Free order multiple-choice secretary ($k$-secretary) problem}

\paragraph{Main contribution: algorithmic results.} We will show how to embed in the above framework multiple-threshold algorithms for the free order multiple-choice secretary problem. Our main result is a tight result for optimal policy for the free order multiple-choice secretary problem under low entropy distributions. Below we assume that the adversarial values are such that $v(1) \geq v(2) \geq \cdots \geq v(n)$.

\begin{theorem}\label{thm:k_secretary_main_result}
For any $k < \log{n}/\log \log n$, there exists a multi-set of $n$-element permutations $\mathcal{L}_{n}$ such that a deterministic multiple-threshold algorithm for the free order multiple-choice secretary achieves an expected
$$1 - 4 \sqrt{\frac{\log{k}}{k}} $$
competitive ratio when it uses the order chosen uniformly at random from $\mathcal{L}_{n}$. The set $\mathcal{L}_{n}$ is computable in time $O(\text{poly } (n))$ and the uniform distribution over $\mathcal{L}_{n}$ has the optimal $O(\log\log{n})$ entropy.
\end{theorem}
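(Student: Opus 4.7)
The plan is to instantiate the generic framework sketched above through a multiple-threshold algorithm that partitions the $n$ positions into $t$ consecutive blocks (buckets) $B_1,\dots,B_t$, geometrically sized, with $t = \mathrm{poly}(\log n)$ chosen so that $\log_2 t = \Theta(\log\log n)$. At the checkpoint ending bucket $B_j$ the algorithm sets the value threshold for $B_{j+1}$ to be the $\tau_j$-th largest value seen so far, where $\tau_j$ is calibrated to roughly $k \cdot (|B_1|+\cdots+|B_j|)/n$ minus a Hoeffding-type slack. This is the natural multi-phase generalization of the Kleinberg-style wait-and-pick $k$-secretary algorithm.

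First, I would analyze the algorithm under a uniformly random $\pi \sim \Pi_n$. Call $\pi$ \emph{good} if two kinds of positive events hold simultaneously: \textbf{(i)} for every bucket $B_j$, the number of top-$k$ adversarial indices falling in $B_j$ lies within an additive $O(\sqrt{k\log k/t})$ of its expectation $k\,|B_j|/n$; and \textbf{(ii)} at each checkpoint, the $\tau_j$-th largest observed value actually has rank at most $k$ among the adversarial values (so the threshold is a true top-$k$ value). A Chernoff-Hoeffding calculation, together with a telescoping union bound across the $t$ buckets, shows that on good orders the algorithm picks at least $k - O(\sqrt{k\log k})$ of the top-$k$ values, yielding competitive ratio $1 - O(\sqrt{\log k/k})$.

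Second, I would decompose each positive event into the atomic events defined in the paper. Fixing a bucket-mapping $f:\{ind(1),\dots,ind(k)\} \to [t]$ and an internal ordering $\sigma$, the deviation bounds in (i) and the rank bounds in (ii) are both $\{0,1\}$-functions of $(f,\sigma)$; so the good set is a union of atomic events, and its measure can be lower-bounded by standard concentration. The derandomization is then executed through Algorithm~\ref{algo:Find_perm_2_111}/\ref{algo:Cond_prob_2_111}: maintain a pessimistic estimator that upper-bounds the conditional failure probability as we fix $\pi$ coordinate by coordinate, and at each step commit to the extension that keeps the estimator below $1$. The estimator decomposes as a sum over positive events, each of which reduces to counting compatible atomic events --- a combinatorial quantity one can compute in $\mathrm{poly}(n,t^k)$ time, which for $k < \log n/\log\log n$ and $t = \mathrm{poly}(\log n)$ is $\mathrm{poly}(n)$. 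This yields $\mathcal{L}_n$ in polynomial time, and the bookkeeping of the conditional-expectations argument gives $|\mathcal{L}_n| = \mathrm{poly}(\log n)$, hence entropy $O(\log\log n)$.

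The main obstacle is event (ii): the thresholds themselves are random variables depending on values observed across earlier buckets, so the ``calibration'' event couples the buckets and is not cleanly expressible by counting indices alone. To handle this I would apply the paper's dimensionality reduction --- restricting attention to the top $\Theta(k \cdot \mathrm{polylog})$ adversarial indices and arguing that indices of rank $> k \cdot \mathrm{polylog}$ contribute negligibly to both the competitive ratio and the threshold statistics --- which cuts the effective dimension of $f$ and keeps $t^k$ genuinely polynomial. A secondary obstacle is quantitative: the slack $\sqrt{\log k/k}$ is tight, so the two error sources from (i) and (ii) must be balanced carefully; this is where the ``success probability lifting'' mentioned in the overview is needed, converting the $O(1)$ probability of the good event into the claimed $(1-o(1))$-fraction competitive loss by repeating / amplifying through the structure of $\mathcal{L}_n$.
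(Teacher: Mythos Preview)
Your high-level framework (define positive events for a multiple-threshold algorithm, decompose into atomic events, derandomize via a pessimistic estimator) matches the paper's, but there is a genuine gap in the running-time argument, and your description of dimension reduction is not what the paper actually does.

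The problem is the factor you do not account for: the number $q$ of positive events. The distribution $\mathcal{L}_n$ must be constructed \emph{before} seeing the adversary's values, so it must work simultaneously against every possible placement of the top-$k$ indices. That means the pessimistic estimator is a sum over \emph{all} ordered $k$-tuples $\hat S\subseteq[n]$, of which there are $\Theta(n^k)$. For $k=\Theta(\log n/\log\log n)$ this is $2^{\Theta((\log n)^2/\log\log n)}$, which is superpolynomial. Your claimed $\mathrm{poly}(n,t^k)$ bound only tracks the cost of computing a single positive event's conditional probability; the outer sum over the $n^k$ adversarial configurations is what kills you. (The entropy bound survives, since only $\log q$ enters there, but the algorithm cannot.)

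This is precisely why the paper's dimension reduction is indispensable, and it is \emph{not} ``restricting to the top $k\cdot\mathrm{polylog}$ adversarial indices'' as you describe --- that cannot help, since those indices can be any $k$-subset of $[n]$. What the paper does instead is reduce the \emph{permutation dimension}: it builds a small family $\mathcal{G}$ of maps $g:[n]\to[\ell]$ with $\ell=k^8$, via Reed-Solomon codes engineered so that each $g$ has near-balanced preimages (Lemma~\ref{lem:Reed_Solomon_Construction}). The derandomization is then carried out entirely on $\ell$-element permutations, where the number of $k$-tuples is only $\ell^k=k^{8k}=\mathrm{poly}(n)$, yielding $\mathcal{L}_\ell$ in polynomial time. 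Finally $\mathcal{L}_n=\{\pi\circ g:\pi\in\mathcal{L}_\ell,\,g\in\mathcal{G}\}$, and the balanced-preimage property is exactly what ensures that bucket boundaries on $[\ell]$ translate to the correct checkpoints on $[n]$, so the competitive ratio survives the lift (Lemma~\ref{lem:mult-lifting}). Your ``lifting'' paragraph conflates this with probability amplification; in the paper, lifting refers to this $\ell\to n$ step, not to boosting a constant success probability.
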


\noindent
Detailed proof of this theorem can be found in Section~\ref{sec:application_k_secretary}: Theorem \ref{thm:k_secretary_main_result} is reformulated as Theorem~\ref{thm:k_secretary_main}.

\medskip

\noindent
{\bf Optimality of our results vs previous results.} Theorem \ref{thm:k_secretary_main_result} achieves a competitive ratio of $1-O(\sqrt{\log k/k})$, with provably minimal entropy $O(\log \log n)$, when $k < \log n/\log \log n$, 
for the free order $k$-secretary problem. Our competitive ratio is optimal up to a factor of $\sqrt{\log k}$, as $(1-1/\sqrt{k})$ is best possible competitive ratio for the $k$-secretary problem in the random order model, see \cite{kleinberg2005multiple,Gupta_Singla,AgrawalWY14}. The previous best result for the free order $k$-secretary problem was by Kesselheim, Kleinberg and Niazadeh~\cite{KesselheimKN15}, and their competitive ratio is $(1-O(1/k^{1/3})-o(1))$ and uses entropy $O(\log \log n)$. Optimality of the entropy follows by our new lower bounds in Theorem \ref{thm:lower-general} and \ref{thm:lower}, see the discussion after Theorem \ref{thm:lower} below. Note that such lower bounds we not known before for the $k$-secretary problem.

Theorem \ref{thm:k_secretary_main_result} improves the previous best results by Kesselheim, Kleinberg and Niazadeh~\cite{KesselheimKN15} even more: our solution works for exponentially larger range of parameters $k$, because in~\cite{KesselheimKN15} $k=O((\log\log\log n)^{\epsilon})$ for some $\epsilon\in (0,1)$. And, finally, we use a simple deterministic adaptive threshold algorithm (only drawing a permutation from a stochastic uniform distribution with entropy 
$O(\log \log n)$), while the algorithm in~\cite{KesselheimKN15} employs additional randomness. In fact, the randomized algorithm in 
\cite{KesselheimKN15} uses internal randomness that has entropy at least $\Omega((\log k)^2)$, see Proposition \ref{prop:Kessel_Large_Entropy} in Section~\ref{sec:previous-suboptimality}. 
Their construction of the distribution on random orders that their algorithm uses, has entropy $O(\log \log n)$, but it only applies to $k=O((\log\log\log n)^{\epsilon})$ for some fixed $\epsilon\in (0,1)$. However, if their randomized algorithm is used with higher 
values of parameter $k$, for example $k = \Theta(\log n)$ as in our case, the entropy of its internal randomization would be $\Omega((\log \log n)^2)$, which is asymptotically larger than the optimal entropy $O(\log \log n)$ for the random orders.

\medskip

\noindent
{\bf Technical contributions.} We describe here our techniques that lead to our main results in Theorem \ref{thm:k_secretary_main_result}. Our approach has the following main steps:\\
\noindent
{\bf 1. Probabilistic analysis and defining positive events.} (Section \ref{sec:application_k_secretary})\\
\noindent
{\bf 2. Decomposing positive event into atomic events.} (Section \ref{sec:application_k_secretary})\\
\noindent
{\bf 3. Abstract derandomization of positive events via concentration bounds.} (Section \ref{sec:abstract_derand})\\
\noindent
{\bf 4. Dimension reduction and lifting positive events.} (Section \ref{sec:application_k_secretary})

\medskip

\noindent
{\bf Ad.~1. Probabilistic analysis and defining positive events.} Our starting point is a probabilistic analysis of an algorithm for the $k$-secretary problem. We need an algorithm whose success probability (expected competitive ratio) can be analyzed by probabilistic events that can be modelled by atomic events. We have chosen a multiple-threshold algorithm presented in the survey by Gupta and Singla \cite{Gupta_Singla}, see Algorithm \ref{algo:k_secr_algo_1}. We use the probabilistic analysis of this algorithm from \cite{Gupta_Singla} where they apply Chernoff bound to a collection of indicator random variables, which indicate if indices fall in an interval in a random permutation. These random variables are not independent but they are {\em negatively associated}. We show an additional fact (Lemma \ref{lemma:Chernoff-per}) to justify the application of the Chernoff bound to these negatively-associated random variables. This analysis helps us define positive events for Algorithm \ref{algo:k_secr_algo_1} which essentially mean that this algorithm picks the item with the $i$th largest adversarial value, for $i \in [k]$. 

\noindent
{\bf Ad.~2. Decomposing positive event into atomic events.} In this step we show how to define any positive event from the previous step as union of appropriate atomic events. Once we prove that such decomposition exists, it is easy to find it by complete enumeration over the full space of atomic events, because the size of this space, $t^k \cdot k!$, essentially only depends on $k$.

\noindent
{\bf Ad.~3. Abstract derandomization of positive events via concentration bounds.} When any positive $P$ event is defined in terms of atomic events $A \in Atomic(P)$ and holds with probability at least $p$, we prove by Chernoff bound (Theorem \ref{theorem:Chernoff_Positive_Events}) that there exists a small multi-set $\mathcal{L}$, $|\mathcal{L}| = \ell$, of permutations that nearly preserves probabilities $p_{\gamma}$ of all positive events. We show how to algorithmically derandomize (Theorem \ref{Thm:Derandomization_2_111}) this theorem based on the method of conditional expectations with a special pessimistic estimator for the failure probability. This estimator is derived from the proof of Chernoff bound and is inspired by Young's \cite{Young95} oblivious rounding. Our abstract derandomization algorithm in Algorithm \ref{algo:Find_perm_2_111} uses as an oracle an abstract algorithm that computes decomposition of any positive event $P$ into atomic events  $P = \, \, \stackrel{\cdot}{\bigcup}_{A \in Atomic(P)} A$. Algorithm \ref{algo:Find_perm_2_111} calls Algorithm \ref{algo:Cond_prob_2_111} for computing conditional probabilities. This crucially uses highly symmetric nature of atomic events (as opposed to the usually non-symmetric positive events) and the resulting Algorithm \ref{algo:Cond_prob_2_111} and its analysis are much simpler compared to directly computing conditional probabilities for positive events. It also leads to simpler proofs and gives a promise for further applications.


\noindent
{\bf Ad.~4. Dimension reduction and lifting positive events.} The above abstract derandomization algorithm has time complexity of order $n^k$, which is polynomial only for non-constant $k$. To make it polynomial, we design {\em dimension reductions}. We build on the idea of using Reed-Solomon codes from Kesselheim, Kleinberg and Niazadeh~\cite{KesselheimKN15}, with two significant changes. First, we only use a single Reed-Solomon code in our dimension reduction, whereas they use a product of 2 or 3 such codes. Second, we replace their second step based on complete enumeration by our derandomization algorithm from  step {\bf 3.} To define the dimension reduction, we propose a new technical ingredient: an algebraic construction of a family of functions that have bounded number of collisions and their preimages are of almost same sizes up to additive $1$ (see Lemma \ref{lem:Reed_Solomon_Construction}). We prove this lemma by carefully using algebraic properties of polynomials.
Our construction significantly improves and simplifies their constructions \cite{KesselheimKN15} by adding the constraint on sizes of preimages and using only one Reed-Solomon code, instead of two or three codes and we do not need auxiliary composition functions. The constraint on preimages, precisely tailored for the $k$-secretary problem, allows us to apply more direct techniques of finding permutations distributions over a set with reduced dimension. This constraint is crucial for proving the competitive ratios. Our construction is computable in polynomial time and we believe that it is of independent interest.  The last step of our technique is to lift the lower-dimensional permutations back to the original dimension. That is, to prove that we lose only slightly on the probability of positive events when going from the low-dimensional permutations back to the original dimension $n$.

\paragraph{Lower bounds.}
We are the first to prove two lower bounds on entropy of $k$-secretary algorithms achieving expected competitive ratio $1-\epsilon$.
Their proofs can be found in Section~\ref{sec:lower-bounds}.
The first one is for any algorithm, but works only for $k\le \log^a n$ for some constant $a\in (0,1)$.

\begin{theorem}
\label{thm:lower-general}
Assume $k\le \log^a n$ for some constant $a\in (0,1)$.
Let $\epsilon\in (0,1)$ be a given parameter.
Then, any algorithm (even fully randomized) solving $k$-secretary problem while drawing permutations from some distribution on $\Pi_n$ with an entropy $H\le \frac{1-\epsilon}{9} \log\log n$, cannot achieve the expected competitive ratio of at least $1-\epsilon$ for sufficiently large $n$. 
\end{theorem}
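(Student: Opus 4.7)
The plan proceeds in three phases: (i) translate the entropy hypothesis into a small-effective-support statement for $\mathcal{D}$, (ii) apply Yao's minimax principle to reduce to deterministic algorithms against a suitable hard-instance distribution, and (iii) argue combinatorially that no deterministic algorithm using only orderings in the small support can achieve ratio $1-\epsilon$. For step (i), I apply Markov's inequality to the self-information $-\log p(\pi)$, whose expectation equals $H$. For a slack parameter $\lambda>1$, the set $S_\lambda = \{\pi : p(\pi) \ge 2^{-\lambda H}\}$ satisfies $\Pr{\pi \in S_\lambda} \ge 1 - 1/\lambda$ and $|S_\lambda| \le 2^{\lambda H}$. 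Choosing $\lambda$ as a suitable function of $\epsilon$ (roughly $3/(1-\epsilon)$) together with $H \le \frac{1-\epsilon}{9}\log\log n$ gives $|S_\lambda| \le (\log n)^{1/3}$, which is small compared to $k \le \log^a n$ once $a > 1/3$ (with $\lambda$ tuned otherwise). Since the competitive ratio is trivially at most $1$ on the rare event $\pi \notin S_\lambda$, it suffices to upper bound the conditional expected ratio on $\pi \in S_\lambda$ by $1 - \epsilon - \Theta(1/\lambda)$.

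For step (ii), Yao's principle lets me construct a distribution $\mathcal{I}$ over adversarial value assignments and show that every deterministic algorithm has expected ratio below $1-\epsilon$ when $(\pi,v) \sim \mathcal{D} \times \mathcal{I}$. A natural candidate for $\mathcal{I}$ places $k$ ``top'' items at a random $k$-subset $T \subseteq [n]$, all sharing a common large value, with the remaining items sharing a smaller value; the competitive ratio then reduces to the overlap $|A_\pi \cap T|/k$ between the algorithm's selection $A_\pi$ and $T$. Step (iii) amounts to showing $\E_{\pi,T}[|A_\pi \cap T|] < (1-\epsilon)k$ on $\pi \in S_\lambda$. I attack this by a counting/pigeonhole argument: for each $\pi \in S_\lambda$ the algorithm's adaptive decisions are parametrized by $\pi$ and the binary observation pattern, so the number of distinct selection patterns realized across all $\pi \in S_\lambda$ is polylogarithmic, far smaller than the $\binom{n}{k}$ possible top-$k$ placements $T$; hence on average over $T$ the algorithm's overlap with $T$ is small.

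The main obstacle is step (iii): the algorithm's adaptive use of online observations can defeat a naive counting argument. In the simple 0/1 setup above, the greedy strategy ``pick every observed $1$ until $k$ have been picked'' achieves ratio $1$ regardless of $\pi$, so any lower bound must preclude such trivial strategies. I expect the resolution to involve a more refined adversarial family: either real-valued instances with near-tied top-$k$ values together with many ``near-top'' decoys that deny the greedy strategy a decisive signal, or a distribution $\mathcal{I}$ correlated with $S_\lambda$ placing top-$k$ items at indices whose arrival times under most $\pi \in S_\lambda$ fall in a narrow window the algorithm cannot fully exploit. The precise constant $1/9$ arises from carefully balancing the Markov slack $\lambda$, the adaptive-observation overhead in step (iii), and the ratio loss $\epsilon$; the full proof should construct such an adversarial family explicitly and carry out the counting at a granularity fine enough to yield this factor.
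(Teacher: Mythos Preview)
Your phase (i) and the Yao framework in (ii) are essentially right and match the paper's entropy-to-support reduction. The genuine gap is phase (iii), which you yourself flag but do not resolve; your counting/pigeonhole sketch and the speculated ``near-tied values with decoys'' family are not what the paper does, and it is not clear they can be made to work.

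The paper's construction differs from your speculation in two essential ways. First, the hard instances are not built from a random $k$-subset with tied top values; instead they live on a \emph{semitone sequence} $(x_1,\ldots,x_s)$ of length $s=\frac{\log n}{\ell+1}$ with respect to the small support $\Pi$: recursively, for every $\pi\in\Pi$, the position $\pi(x_s)$ is either $\min_i\pi(x_i)$ or $\max_i\pi(x_i)$. Such a sequence always exists for any $\ell$ permutations (from \cite{KesselheimKN15}), and this structural object, not a count of selection patterns, is what drives the argument. Second, the values assigned to the semitone sequence come from the geometric set $V^*=\{1,\frac{k}{1-\epsilon},(\frac{k}{1-\epsilon})^2,\ldots\}$, with all off-sequence elements given the negligible value $\frac{1-\epsilon}{k}$. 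Because each value dominates the sum of all smaller ones by a factor at least $\frac{k}{1-\epsilon}$, achieving ratio $1-\epsilon$ \emph{forces} the algorithm to select the single maximum value. This reduces $k$-secretary to ``select the maximum using $k$ picks,'' which sidesteps your $0/1$ greedy obstruction entirely.

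The technical core is then a biased recursive assignment of values to $(x_1,\ldots,x_s)$: $v(x_s)$ is the middle element of the current value pool, and with probability $1/s$ the pool for $(x_1,\ldots,x_{s-1})$ becomes the smaller half (so $v(x_s)$ is the overall maximum), otherwise the larger half. An induction on pairs $(t,i)$, splitting on whether $x_t$ arrives first or last among $x_1,\ldots,x_t$ in $\pi$ (the semitone property gives exactly these two cases), shows that any deterministic algorithm with $i$ remaining picks on $(x_1,\ldots,x_t)$ selects the maximum with probability at most $i/t$, hence $k/s$ overall. Your proposal supplies neither the semitone structure nor this biased-binary-search distribution, and the constant $\tfrac{1}{9}$ arises from balancing the $\frac{8H}{\log(\ell-3)}$ loss in the entropy-support lemma against $k/s$ with the choice $\ell\approx\sqrt{\log n/k}$, not from any adaptive-observation overhead.
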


The second lower bound on entropy is for the wait-and-pick algorithms for any $k<n/2$.

\begin{theorem}
\label{thm:lower}
Any wait-and-pick algorithm solving $k$-secretary problem, for $k<n/2$, with expected competitive ratio of at least $(1-\epsilon)$ requires entropy $\Omega(\min\{\log 1/\epsilon,\log \frac{n}{2k}\})$.
\end{theorem}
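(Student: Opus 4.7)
The plan is to prove Theorem~\ref{thm:lower} by exhibiting two adversarial input families, each of which forces a structural constraint on the distribution $\mathcal{D}$ on permutations used by a wait-and-pick algorithm with parameters $(m,\tau)$.  One argument yields the $\Omega(\log(1/\epsilon))$ bound and the other yields the $\Omega(\log(n/(2k)))$ bound, so the entropy is at least the minimum of the two as claimed.

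For the $\Omega(\log(1/\epsilon))$ bound I use the $k$-ones adversary: $v(i)=1$ for $i$ in an adversarial set $T\in\binom{[n]}{k}$, and $v(i)=0$ otherwise.  Setting $g=|T\cap\pi([m])|$ and $g'=|T\cap\pi([m+1,m+k])|$, a case analysis gives the number of $1$s accepted as $\mathbf{1}[g\geq\tau](k-g)+\mathbf{1}[g<\tau]g'$; since $g+g'\leq k$ this is at most $k-g$.  Taking expectations and demanding $(1-\epsilon)$-competitiveness for every $T$ forces the top-$k$ entries of the marginal vector $(q_i)_{i\in[n]}$, with $q_i=\Pr{i\in\pi([m])}$, to sum to at most $\epsilon k$, so at most $k-1$ marginals can exceed $\epsilon$.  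In the generic regime $m\geq k$, every supporting $\pi$ has $|\pi([m])|=m\geq k$, so $\pi([m])$ must contain some ``light'' index $i^\star$ with $q_{i^\star}\leq\epsilon$, and since $p_\pi\leq q_{i^\star}$ (the event ``$\pi$ is chosen'' is contained in ``$i^\star\in\pi([m])$'') we obtain $\max_\pi p_\pi\leq\epsilon$ and hence $H(\mathcal{D})\geq\log(1/\max_\pi p_\pi)\geq\log(1/\epsilon)$.

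For the $\Omega(\log(n/(2k)))$ bound I use the single-spike adversary $v(i^*)=1$, $v(i)=0$ for $i\neq i^*$.  For a checkpoint $m\leq n-k$, a brief case analysis shows $i^*$ is accepted iff $\pi^{-1}(i^*)\in[m+1,m+k]$: if $i^*$ lies in $\pi([m])$ the threshold becomes $1$ and no subsequent $0$-value can cross it, so the algorithm defaults to the last $k$ positions (disjoint from $[m]$); otherwise the threshold stays at $0$ and the first $k$ post-checkpoint positions are picked.  The requirement $\Pr{\pi^{-1}(i^*)\in[m+1,m+k]}\geq 1-\epsilon$ for every $i^*\in[n]$ forces the effective support of $\mathcal{D}$ to cover $[n]$ via the $k$-windows $\pi([m+1,m+k])$; using a standard typicality bound (any distribution with entropy $H$ admits a set of size $\leq 2^{O(H)}$ carrying probability $\geq 1/2$), one then concludes $H=\Omega(\log(n/(2k)))$, since otherwise some $i^*$ is covered by no permutation in the typical set and $\Pr{\text{pick }i^*}\leq 1/2<1-\epsilon$ for $\epsilon<1/2$.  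The boundary cases $m+k>n$ and $\epsilon\geq 1/2$ are handled by symmetric or trivial arguments.

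Taking the larger of the two bounds gives $H(\mathcal{D})=\Omega(\min(\log(1/\epsilon),\log(n/(2k))))$, completing the proof.  The main delicate point is the typicality argument in the second part: the direct bound $H\geq\log|\mathrm{supp}(\mathcal{D})|$ holds only for uniform distributions, so for general $\mathcal{D}$ one must use a quantitative AEP-style carving of a high-probability sub-distribution and argue that its support is still forced to be $\Omega(n/k)$ by the covering requirement, losing only constant factors in the exponent.
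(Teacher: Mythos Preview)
Your proof is correct and follows the same two-case skeleton as the paper ($m\geq k$ via a $k$-ones adversary, small $m$ via a single spike), but the execution is genuinely different and in fact more general. The paper opens its proof with ``Let $\ell$ be the number of permutations, from which the order is chosen uniformly at random,'' so its argument is really a support-size bound; you instead handle arbitrary distributions. In Part~1 the paper greedily builds the adversarial set $K$ by peeling $\epsilon k$ fresh prefix-elements from each permutation in turn (forcing $\ell\geq 1/\epsilon$ if the process fills $K$), whereas you bound the marginal vector $q_i=\Pr[i\in\pi([m])]$, show at most $k-1$ coordinates can exceed $\epsilon$, and convert this into $p_\pi\leq\epsilon$ for every $\pi$ in the support---cleaner, and it yields Shannon entropy directly via $H\geq H_\infty=\log(1/\max_\pi p_\pi)$. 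In Part~2 the paper places the spike with nonzero background $\tfrac{1-\epsilon}{k}$ and argues the $\ell$ windows $\pi_i([m{+}k])$ must cover $[n]$, giving $\ell>n/(2k)$; your typicality carving (Markov on $\log(1/p_\pi)$ produces a mass-$\tfrac12$ set of size $2^{O(H)}$ whose $k$-windows must still cover $[n]$) upgrades this to a genuine entropy bound for non-uniform~$\mathcal{D}$.

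Two minor slips. In Part~2 you write ``the threshold becomes $1$'' when $i^*\in\pi([m])$, but this holds only for $\tau=1$; for $\tau\geq 2$ the $\tau$-th largest among the first $m$ values is $0$, and the algorithm then greedily accepts positions $m{+}1,\dots,m{+}k$, all of value $0$. Either way $i^*$ is missed, so your conclusion ``$i^*$ accepted iff $\pi^{-1}(i^*)\in[m{+}1,m{+}k]$'' stands. And ``Taking the larger of the two bounds gives \ldots $\min$'' is garbled: what you mean is that Part~1 handles $m\geq k$ and Part~2 handles $m<k$ (indeed all $m\leq n-k$), and in each case the derived bound dominates $\min\{\log(1/\epsilon),\log(n/(2k))\}$.
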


It follows from Theorem~\ref{thm:lower-general} that entropy $\Omega(\log\log n)$ is necessary for any algorithm to achieve even a constant positive competitive ratio $1-\epsilon$, for $k=O(\log^a n)$, where $a<1$. In particular, it proves that our upper bound in Theorem~\ref{thm:k_secretary_main_result} is tight.
Theorem~\ref{thm:lower} 
implies that entropy $\Omega(\log\log n)$ is necessary for any wait-and-pick algorithm to achieve a close-to-optimal competitive ratio $1-\Omega(\frac{1}{k^a})$, for {\em any} $k<n/2$, where constant $a\le 1/2$.  
Even more, in such case entropy $\Omega(\log k)$ is necessary, which could be as large as $\Omega(\log n)$ for $k$ being a polynomial in $n$.

\medskip

\noindent
{\bf Technical contributions.} 
The lower bound for all algorithms builds on the concept of semitone sequences with respect to the set of permutation used by the algorithm. It was proposed in \cite{KesselheimKN15} in the context of $1$-secretary problem. Intuitively, in each permutation of the set, the semitone sequence always positions next element before or after the previous elements of the sequence (in some permutations, it could be before, in others -- after).
Such sequences occurred useful in cheating $1$-secretary algorithms by assigning different orders of values, but occurred hard to extend to the general $k$-secretary problem.
The reason is that, in the latter,
there are 
two challenges requiring new concepts. First, there are $k$ picks of values by the algorithm, instead of one --
this creates additional dependencies in probabilistic part of the proof (c.f., Lemma~\ref{lem:lower-random-adv}), which we overcome by introducing more complex parametrization of events and inductive proof.
Second, the algorithm does not always have to choose maximum value to guarantee competitive ratio $1-\epsilon$, or can still choose the maximum value despite of the order of values assigned to the semitone sequence -- to address these challenges, we not only consider different orders the values in the proof (as was done in case of $1$-secretary in~\cite{KesselheimKN15}), but also expand them in a way the algorithm has to pick the largest value but it cannot pick it without considering the order (which is hard for the algorithm working on semitone sequences). It leads to so called hard assignments of values and their specific distribution in Lemma~\ref{lem:lower-random-adv} resembling biased binary search, see details in Section~\ref{sec:lower-general}.

The lower bound for wait-and-pick algorithms, presented in Section~\ref{sec:lower-wait-and-pick}, is obtained by constructing a virtual bipartite graph with neighborhoods defined based on elements occurring on left-had sides of the permutation 
checkpoint, and later by analyzing relations between sets of elements on one side of the graph and sets of permutations represented by nodes on the other side of the graph.


\subsection{Classical free order secretary ($1$-secretary) problem}\label{sec:classic-secretary-low-entropy}

\noindent
{\bf Characterization and lower bounds.} We prove in Proposition \ref{Thm:optimum_expansion} a characterization of the optimal success probability $OPT_n$ of secretary algorithms. When the entropy is maximum, $\Theta(n \log(n))$, of the uniform distribution on the set of permutations with support $n!$, we find the precise formula for the optimal success probability of the best secretary algorithm, $OPT_n = 1/e + c_0/n + \Theta((1/n)^{3/2})$, where $c_0 = 1/2 - 1/(2e)$, see Proposition \ref{Thm:optimum_expansion}, Part 1. We prove that any secretary algorithm that uses any, not necessarily uniform distribution, has success probability at most $OPT_n$ (Part 2, Proposition \ref{Thm:optimum_expansion}). This improves the result of Samuels \cite{Samuels81}, who proved that under uniform distribution no secretary algorithm can achieve success probability of $1/e + \varepsilon$, for any constant $\varepsilon > 0$. Interestingly, no uniform probability distribution with small support and entropy strictly less than $\log (n)$ can have success probability above $1/e$ (Part 3, Proposition~\ref{Thm:optimum_expansion}). 

\vspace*{1ex}
\noindent
{\bf Algorithmic results.} By using the same techniques {\bf 1.}-{\bf 4.} developed for the $k$-secretary problem, we obtain the following fine-grained analysis results for the classical secretary problem.

\begin{theorem}\label{thm:1_secretary_results}
There exists a multi-set of $n$-element permutations $\mathcal{L}_{n}$ such that the wait-and-pick algorithm with checkpoint $\lfloor n/e \rfloor$ achieves
$$\frac{1}{e} - \frac{3\log\log^2{n}}{e\log^{1/2}{n}} $$
success probability for the free order $1$-secretary problem, when the adversarial elements are presented in the order chosen uniformly from $\mathcal{L}_{n}$. The set is computable in time $O(\text{poly } n)$ and the uniform distribution on it has $O(\log\log{n})$ entropy.
\end{theorem}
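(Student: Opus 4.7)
The plan is to specialize the four-step framework Ad.~1--4 developed earlier for the general $k$-secretary problem to the case $k=1$, applied to the wait-and-pick algorithm with checkpoint $m=\lfloor n/e\rfloor$. Under the uniform distribution on $\Pi_n$ this algorithm picks $ind(1)$ with probability $\frac{m}{n}\sum_{i=m+1}^{n}\frac{1}{i-1}\ge\frac{1}{e}-O(1/n)$; my task is to reproduce this guarantee via a polynomial-time-constructible multi-set $\mathcal{L}_n$ whose uniform distribution has entropy $O(\log\log n)$. I would first partition the $n$ positions into $t := \lceil \log^{1/2} n \rceil$ consecutive buckets $B_1,\ldots,B_t$ of equal size, aligned so that the boundary between $B_{j_0}$ and $B_{j_0+1}$ coincides with position $m$ (hence $j_0 \approx t/e$). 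For each $j>j_0$ define the positive event $P_j$ in the sense of Definition~\ref{def:positive_event}: $ind(1)\in B_j$, and the maximum-valued index among those in $B_1\cup\cdots\cup B_j$ distinct from $ind(1)$ itself lies in $B_1\cup\cdots\cup B_{j_0}$. On $P_j$, the algorithm's threshold equals the maximum value in positions $1,\ldots,m$; no element in positions $m+1,\ldots,\pi^{-1}(ind(1))-1$ exceeds this threshold, so the algorithm necessarily picks $ind(1)$. A direct calculation under a truly uniform $\pi$ then gives $\sum_{j>j_0}\Pr{P_j}\to \frac{1}{e}$ with a discretization error of order $1/t = 1/\sqrt{\log n}$.

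Next, each $P_j$ decomposes into a disjoint union of atomic events (Definition~\ref{def:atomic_event}) that fix the bucket of $ind(1)$, the buckets of the next few top-valued indices $ind(2), ind(3),\ldots$, and their relative order; the $k$ of the atomic-event framework is taken to be a slowly growing parameter $s$ so that with probability $1-o(1/t)$ at least one of the tracked top indices lies inside $B_{\le j}$ and hence acts as the running maximum governing $P_j$. I then invoke the abstract derandomization Theorem~\ref{Thm:Derandomization_2_111} through Algorithms~\ref{algo:Find_perm_2_111}--\ref{algo:Cond_prob_2_111}, whose Chernoff-based pessimistic estimator operates on the symmetric atomic events rather than on the combinatorially complex $P_j$; setting the per-event target deviation to be a small fraction of $\frac{1}{t}$ and union-bounding over the $O(t)$ positive events, a derandomized multi-set $\mathcal{L}'\subseteq\Pi_{n'}$ of size $\mathrm{poly}(\log n)$ preserves each $\Pr{P_j}$ up to the required additive accuracy.

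To make the derandomization polynomial-time in $n$, the final step applies the dimension reduction of Lemma~\ref{lem:Reed_Solomon_Construction}: choose reduced dimension $n'=\mathrm{poly}(\log n)$ and use a single Reed-Solomon-based family of functions $h:[n]\to[n']$ with bounded number of collisions and preimages of almost equal size (differing by at most one). For every $\pi'\in\mathcal{L}'$ and every $h$ in the family, lift $\pi'$ to a permutation on $[n]$ by sorting positions according to $\pi'(h(\cdot))$, breaking ties canonically inside each preimage. The preimage-balance property guarantees that the bucket boundaries of the lifted permutations deviate from the ideal ones by at most one position per bucket, so positive-event probabilities at dimension $n$ match those at dimension $n'$ up to an extra $O(1/t)$, while the collision bound keeps atomic-event probabilities close under the lift. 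The resulting $\mathcal{L}_n$ has size $\mathrm{poly}(\log n)$, entropy $O(\log\log n)$, and is computable in time $\mathrm{poly}(n)$. The main obstacle will be jointly tuning $t$, $n'$, $|\mathcal{L}'|$, and the number $s$ of tracked top indices so that the three error contributions --- bucket discretization, Chernoff concentration in the derandomization, and preimage-imbalance in the lift --- combine to the advertised bound $\frac{3\log\log^2 n}{e\log^{1/2} n}$ while keeping the entropy at $O(\log\log n)$.
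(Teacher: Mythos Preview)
Your high-level plan is sound and follows the paper's four-step framework, but your instantiation of steps~1--2 is genuinely different from what the paper does. The paper uses only \emph{two} buckets (before/after the checkpoint $\ell/e$), tracks the top $k:=\log\log n$ indices, and defines a \emph{single} positive event $P_\sigma$ per adversarial $k$-tuple $\sigma$: namely, $P_\sigma=\bigcup_{i=2}^k P_i$ where $P_i$ says that $\sigma(i)$ lies in bucket~1 while $\sigma(1),\ldots,\sigma(i-1)$ lie in bucket~2 \emph{and} $\sigma(1)$ precedes $\sigma(2),\ldots,\sigma(i-1)$ there. A direct computation (Lemma~\ref{lem:one-sec-uniform-lower-bound}) gives $\Pr[P_\sigma]\ge\frac{1}{e}-\frac{1}{ek}(1-1/e)^k$, a constant, so the derandomization (Theorem~\ref{Thm:Derandomization_2_111}) is applied with $p_0$ bounded away from~$0$. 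Then one sets $\ell=\log n$, $k=\log\log n$ and lifts via Lemma~\ref{lem:Reed_Solomon_Construction}; the dominant loss is the collision term $k^2/\sqrt{\ell}=(\log\log n)^2/\sqrt{\log n}$.

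Your route instead discretises positions into $t=\sqrt{\log n}$ buckets and uses $O(t)$ positive events $P_j$ per adversary, each of probability $\Theta(1/t)$, relying only on bucket membership (not relative order). This is workable, but buys you an extra $O(1/t)$ discretisation error and forces $p_0=\Theta(1/t)$ in the derandomization, inflating $|\mathcal{L}'|$ by a factor $t$. The paper's two-bucket approach is cleaner: the relative-order aspect of atomic events does all the work, avoids the Riemann-sum discretisation, and keeps $p_0$ constant. One small slip in your write-up: the union bound in Theorem~\ref{Thm:Derandomization_2_111} must range over positive events for \emph{every} adversarial $s$-tuple at the reduced dimension $n'$, so $q=\Theta(t\cdot(n')^s\, s!)$, not $O(t)$; this does not break your argument (since $\log q$ stays polylogarithmic) but should be stated correctly.
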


\noindent
{\bf Our results vs previous results.} Proof of Theorem \ref{thm:1_secretary_results} can be found in Section \ref{sec:application_1_secretary} (as Theorem \ref{thm:1_secretary}). The original analysis in \cite{Lindley61,dynkin1963optimum} shows that this algorithm's success probability with 
full $\Theta(n \log n)$ entropy is at least $1/e - 1/n$. Theorem \ref{thm:1_secretary_results} uses
optimal $O(\log \log (n))$ entropy by the lower bound in \cite{KesselheimKN15}. It also improves,
over {\em doubly-exponentially}, on the additive error to $OPT_n$ of 
$\omega(\frac{1}{(\log\log\log(n))^{c}})$ due to Kesselheim, Kleinberg and Niazadeh~\cite{KesselheimKN15,KesselheimKN15-arxiv}, which holds for any positive constant $c < 1$. Our results for $1$-secretary problem in Theorem \ref{thm:1_secretary_results} and described above this theorem, are related to the results of Arsenis, Drosis and Kleinberg \cite{ArsenisDK21}. They present a fine-grained analysis of the threshold prophet ratio (analog of our competitive ratio) for the prophet inequality problem where the algorithm chooses the order to open the $n$ boxes. They prove how the (constant) prophet ratio depends on the size of support of orders among which the algorithm can choose its order, which is analog to how the success probability depends on the entropy in the free order $1$-secretary problem.

\vspace*{1ex}
\noindent
{\bf Technical contributions.} We obtain Theorem \ref{thm:1_secretary_results} by the same techniques {\bf 1.}-{\bf 4.} described above. To apply these techniques, we need to develop problem-specific parts for the $1$-secretary problem: probabilistic analysis, leading to the definition of positive events and decomposition of positive events into atomic events.
For the probabilistic analysis, which estimates the additive error of the success probability very precisely, we use a similar parameterization of the problem to $k$-secretary, denoted $k \in \{2,3, \ldots,n\}$, which is interpreted as corresponding to $k$ largest adversarial values. We characterize precise probability of success of any wait-and-pick algorithm with single checkpoint position $m$ by analyzing how the set of $k$ largest adversarial values is located with respect to the checkpoint $m$. The positive event means that this algorithm picks the element with the largest adversarial value, depending on how the items with other values are located with respect to $m$. To model this event by atomic events, interestingly, unlike the $k$-secretary problem the injection $\sigma$ not only has to obey elements' order in different buckets, but also in the same bucket. The last, dimension reduction and lifting, step of our technique is similar to that of the $k$-secretary.

\section{Further related work}
In this section, we present recent literature on important online stopping theory concepts such as secretary, prophet inequality, and  prophet secretary. 
\vspace{-0.11in}
\paragraph{Secretary Problem.}
In this problem, we receive a sequence of randomly permuted numbers
in an online fashion. Every time we observe a new number, we have
the option to stop the sequence and select the most recent number.
The goal is to maximize the probability of selecting the maximum of
all numbers. The pioneering work of Lindley~\cite{Lindley61} and Dynkin~\cite{dynkin1963optimum}
present a simple but elegant algorithm that succeeds with
probability $1/e$. In particular, they show that the best strategy, a.k.a.~wait-and-pick, is
to skip the first $1/e$ fraction of the numbers and then take the
first number that exceeds all its predecessors. Although simple,
this algorithm specifies the essence of best strategies for many
generalizations of secretary problem. Interestingly, Gilbert and Mosteller~\cite{GilbertM66} show that when the values are drawn i.i.d.~from a known
distribution, there is a wait-and-pick algorithm that selects the best value with probability approximately 0.5801 (see~\cite{DBLP:conf/aistats/EsfandiariHLM20} for generalization to non-identical distributions). 

The connection between secretary problem and online auction
mechanisms has been explored by the 
work of Hajiaghayi, Kleinberg and
Parkes~\cite{HajiaghayiKP04} and has brought lots of attention to this classical problem
in computer science theory. In particular, this work introduces the
{\em multiple-choice value version} of the problem, also known as the $k$-secretary problem (the original secretary
problem only considers rankings and not values), in which the goal
is to maximize the expected sum of the selected numbers, and discusses
its applications in limited-supply online auctions.
Kleinberg~\cite{kleinberg2005multiple} later presents a tight
$(1-O(\sqrt{1/k}))$-competitive algorithm for multiple-choice secretary resolving an open problem of~\cite{HajiaghayiKP04}. The
bipartite matching variant is studied by Kesselheim et
al.~\cite{kesselheim2013optimal} for which they give a
$1/e$-competitive solution using a generalization of the classical
algorithm. Babaioff et al.~\cite{babaioff2007matroids} consider the
{\em matroid} version and give an $\Omega(1/\log k)$-competitive
algorithm when the set of selected items have to be an independent
set of a rank $k$ matroid. Other generalizations of secretary
problem such as the submodular variant has been initially studied by the Bateni, Hajiaghayi, and ZadiMoghaddam~\cite{BHZ13} and  Gupta, Roth, Schoenebeck, and
Talwar~\cite{DBLP:conf/wine/GuptaRST10}.

\vspace{-0.11in}
\paragraph{Prophet Inequality.} In prophet inequality, we are initially given $n$ distributions for each of the numbers in the sequence.
Then, similar to the secretary problem setting, we observe the
numbers one by one, and can stop the sequence at any point and
select
 the most recent observation. The goal is to maximize the ratio between the expected value of the selected number
 and the expected value of the maximum of the sequence.
  This problem was first introduced by Krengel-Sucheston~\cite{krengel1977semiamarts,krengel1978semiamarts},
  for which they gave a tight $1/2$-competitive algorithm. Later on,
the research investigating the relation between prophet inequalities
and online auctions was initiated by the 
work of the
Hajiaghayi, Kleinberg, and Sandholm~\cite{hajiaghayi2007automated}. 
In
particular this work considers the multiple-choice variant
of the problem in which a selection of $k$ numbers is allowed and
the goal is to maximize the ratio between the sum of the selected
numbers and the sum of the $k$ maximum numbers. The best result on
this topic is due to Alaei~\cite{alaei2014bayesian} who gives a
$(1-{1}/{\sqrt{k+3}})$-competitive algorithm. This factor almost
matches the lower bound of $1-\Omega(\sqrt{1/k})$ already known from
the prior work of Hajiaghayi et al.~\cite{hajiaghayi2007automated}. Motivated
by applications in online ad-allocation, Alaei, Hajiaghayi and Liaghat~\cite{AHL13} study the bipartite matching variant
of prophet inequality and achieve the tight factor of $1/2$.
Feldman et al.~\cite{feldman2015combinatorial} study the
generalizations of the problem to combinatorial auctions in which
there are multiple buyers and items and every buyer, upon her
arrival, can select a bundle of available items. Using a posted
pricing scheme they achieve the same tight bound of $1/2$.
Furthermore, Kleinberg and Weinberg~\cite{KW-STOC12} study the problem
when a selection of multiple items is allowed under a given set of
matroid feasibility constraints and present a $1/2$-competitive
algorithm. Yan \cite{yan2011mechanism} improves this bound to
$1-1/e\approx 0.63$ when the arrival order can be determined by the
algorithm. More recently Liu, Paes Leme, P{\'{a}}l, Schneider, and
Sivan~\cite{DBLP:conf/sigecom/LiuLPSS21} obtain the first Efficient PTAS (i.e., a $1+\epsilon$ approximation for any constant $\epsilon> 0$) for the free order (best order) case when the arrival order can be determined by the algorithm (the task of selecting
the optimal order is NP-hard~\cite{DBLP:conf/sigecom/0001SZ20}).
In terms of competitive ratio for the free order  case, very recently, Peng and Tang~\cite{PT22} (FOCS'22),
obtain a 0.725-competitive algorithm, that substantially improves the state-of-the-art 0.669
ratio by Correa, Saona and Ziliotto~\cite{DBLP:journals/mp/CorreaSZ21}.

 Prophet inequality (as well as the secretary problem) has also been studied beyond a matroid or a matching. For the intersection of $p$ matroids, Kleinberg and Weinberg~\cite{KW-STOC12} gave an $O(1/p)$-competitive prophet inequality. Later, D\"{u}tting and Kleinberg~\cite{dutting2015polymatroid} extended this result to polymatroids. Rubinstein~\cite{rubinstein2016beyond} and Rubinstein and Singla~\cite{RS-SODA17} consider prophet inequalities and secretary problem for arbitrary downward-closed set systems. Babaioff et al.~\cite{babaioff2007matroids} show a lower bound of $\Omega(\log n \log\log n)$ for this problem. 
Prophet inequalities have also been studied for many  combinatorial
optimization problems (see e.g.
\cite{DEHLS17,garg2008stochastic,gobel2014online,Meyerson-FOCS01}).
\vspace{-0.11in}
\paragraph{Prophet Secretary.} The original prophet inequality setting assumes either the buyer values or the buyer
arrival order is chosen by an adversary. In practice, however, it is
often conceivable that there is no adversary acting against you. Can
we design better strategies in such settings? The  {\em prophet
secretary} model introduced by the Esfandiari, Hajiaghayi, Liaghat, and Monemizadeh~\cite{EHLM17}  is a natural way to consider
such a process where we assume both {\em stochastic knowledge} about
buyer values and that the buyers arrive in a uniformly random order.
The goal is to design a strategy that maximizes expected accepted
value, where the expectation is over the random arrival order, the
stochastic buyer values, and also any internal randomness of the
strategy.
 
This work indeed introduced a natural combination of the fundamental
problems of prophet and secretary above.
More formally, in the \textit{prophet secretary} problem we are initially given $n$ distributions
$\mathcal{D}_1,\ldots,\mathcal{D}_n$ from which $X_1,\ldots,X_n$ are drawn.
Then after applying a random permutation $\pi(1),\ldots,\pi(n)$ the values of the items are given to us in an online fashion,
i.e., at step $i$ both $\pi(i)$ and $X_{\pi(i)}$ are revealed. The goal is to stop the sequence in a way that maximizes the expected
value\footnote{Over all random permutations and draws from distributions} of the most recent item. Esfandiari, Hajiaghayi, Liaghat, and Monemizadeh~\cite{EHLM17}
 provide an algorithm that uses different thresholds for different items, and achieves a competitive factor of $1-1/e$ when $n$ tends to infinity.
 Beating the factor of 
$1-\frac{1}{e}\approx 0.63$ substantially for the prophet secretary problems, however, has been very challenging. A recent result by Azar et al.~\cite{ACK18} and then Correa et
al.~\cite{CorreaSZ19} improves this bound by $\frac{1}{30}$ to
$1-\frac{1}{e}+\frac{1}{30}\approx 0.665$. For the special case of {\em single item
i.i.d.}, Hill and Kertz~\cite{hill1982comparisons}~give a
characterization of the hardest distribution, and Abolhasani et
al.~\cite{abolhassani2017beating} show that one can get a
$0.73$-competitive ratio. Recently, this factor has been improved to the
tight bound of $0.745$ by Correa et al.~\cite{correa2017posted}. However finding
the tight bought for the general prophet secretary problem still remains the main~open~problem.

\section{Probabilistic atomic and positive events for threshold algorithms}\label{section:prob_analysis_k-secr}

Let $\Omega = (\Pi_n,\mu)$ denote the probabilistic space of all $n!$ permutations of $n$ elements with uniform probabilities. That is, for each permutation $\pi \in \Pi_n$ we have that $\Prob[\pi \in \Pi_n] = \mu(\pi) = 1/n!$. In the next definition we will define atomic events in space $\Omega$.

\ignore{
\begin{definition}[Atomic events]
Given any integer $t \in \{2,3,\ldots, n\}$, let $1 = \tau_0 < \tau_{1} < \tau_{2} < \cdots < \tau_{t-1} < \tau_t = n$, $\forall j \in [t-1] : \tau_j \in [n]$, be a set of fixed thresholds, $\mathcal{T} = \{\tau_{1}, \tau_{2}, \cdots,\tau_{t-1}\}$, $\forall j \in [t-1] : \tau_j \in [n]$. Let $K \subseteq [n]$ be any subset of $|K| = k \in [n]$ indices. Let also $\psi : K \longrightarrow [t]$ be any mapping of indices from set $K$ to $t$ "time intervals" $\{\tau_0,\ldots,\tau_1\}$, $\{\tau_1+1,\dots,\tau_2\}$, $\{\tau_2+1,\dots,\tau_3\}$, $\cdots$, $\{\tau_t+1,\dots,\tau_t\}$, such that
$$
  \forall j \in [t] : |\psi^{-1}(\{j\})| \leq \tau_j - \tau_{j-1} + 1_{j > 1},
$$ where $1_{j > 1} = 1$ if $j>1$ and $1_{j > 1} = 0$ otherwise.

The following event, called an atomic event, is important for the threshold algorithms:
$$
 A_{\mathcal{T},K,\phi} = \{ \pi \in \Omega \,\, | \,\, \forall i \in K : \tau_{\psi(j)-1} < \pi^{-1}(i) \leq \tau_{\psi(i)}\} \, .
$$

We consider a family of all atomic events for a fixed $n$ and for all possible thresholds, sets $K$ and mappings $\psi : K \longrightarrow [t]$.
\end{definition}
}

Given any integer $t \in \{2,3,\ldots, n\}$, let $\mathcal{B} := B_{1}, B_{2}, \ldots, B_{t}$, be a \textit{bucketing} of the sequence $(1, \ldots, n)$, i.e., partition of the sequence $(1,\ldots, n)$ into $t$ disjoint subsets (buckets) of consecutive numbers whose union is the whole sequence. Formally, there are indices $\tau_{1} < \tau_{2} < \cdots < \tau_{t-1} < \tau_t = n$, $\forall j \in [t-1] : \tau_j \in [n]$, such that $B_1 = \{1,\ldots,\tau_{1}\}$, and $B_j = \{\tau_{j-1}+1,\ldots,\tau_{j}\}$, for $j \in \{2,3,\ldots, t\}$.

\ignore{
\begin{definition}[Atomic events with respect to a bucketing]
Consider any k-tuple $\hat{S} = (a_{1}, \ldots, a_{k})$ of the set $[n]$. Let $f : [k] \rightarrow [t]$ be a non-decreasing mapping of elements from the sequence into $t$ buckets of the bucketing $\mathcal{B}$. Then an atomic event in probability space $\Omega$ for chosen $K$ and $f$ is defined as:
$$A_{K, f} = \{\pi \in \Omega : \forall_{i \in [k]} \pi^{-1}(a_{i}) \in B_{f(i)} \text{ and } \pi^{-1}(a_{1}) < \pi^{-1}(a_{2}) < \ldots < \pi^{-1}(a_{k}) \}.$$
The family $\mathcal{A}_{k, \mathcal{B}}$, parameterized by the number $k$ and the bucketing $\mathcal{B}$, of all atomic events is defined as:
$$\mathcal{A}_{k, \mathcal{B}} = \bigcup_{K, f} A_{K, f}.$$
\end{definition}
} 

\begin{definition}[Atomic events with respect to a bucketing]\label{def:atomic_event}
Consider any k-tuple $\sigma = (\sigma_{1}, \ldots, \sigma_{k}) = (\sigma(1), \ldots, \sigma(k))$ of the set $[n]$, i.e., $\sigma : [k] \longrightarrow [n]$ and $\sigma$ is injective. Let $f : [k] \rightarrow [t]$ be a non-decreasing mapping of elements from the sequence into $t$ buckets of the bucketing $\mathcal{B}$. Then an atomic event in probability space $\Omega$ for the chosen $\sigma$ and $f$ is defined as:
$$A_{\sigma, f} = \{\pi \in \Omega : \forall_{i \in [k]} \pi^{-1}(\sigma_{i}) \in B_{f(i)} \text{ and } \pi^{-1}(\sigma_{1}) < \pi^{-1}(\sigma_{2}) < \ldots < \pi^{-1}(\sigma_{k}) \}.$$
The family $\mathcal{A}_{k, \mathcal{B}}$, parameterized by the number $k$ and the bucketing $\mathcal{B}$, of all atomic events is defined as:
$$\mathcal{A}_{k, \mathcal{B}} = \bigcup_{\sigma, f} A_{\sigma, f}.$$
\end{definition}

While the above definition captures with the most details, the structure of 
multiple-threshold algorithms, it is often impractical for derandomization purposes. The measure of an atomic event $A_{\sigma, f}$ is proportional to the inverse of $k! \cdot t^{k}$. Even for small parameters $k$ and $t$, this measure can be too small to allow constructing low-entropy permutation distributions that reflect measures of atomic events. We will show that it is often the case that threshold algorithms are interested in preserving measures of some super-sets of disjoint atomic events whose measure is some constant that depends on the final competitive ratio of the threshold algorithms. Motivated by this observation, for a family of atomic events $\mathcal{A}_{k, \mathcal{B}}$, we define an abstract notion of a family of positive events $\mathcal{P}$ that is a subset of $2^{\mathcal{A}_{k, \mathcal{B}}}$ subject to some structural properties.

\begin{definition}[Positive events based on atomic family $\mathcal{A}_{k, \mathcal{B}}$]\label{def:positive_event}
A positive event $P$ is any subset of the atomic family $\mathcal{A}_{k, \mathcal{B}}$, denoted $Atomic(P) \subseteq \mathcal{A}_{k, \mathcal{B}}$, such that every two atomic events belonging to $P$ are disjoint, $P = \, \, \stackrel{\cdot}{\bigcup}_{A \in Atomic(P)} A$. Any set of positive events based on the atomic family $\mathcal{A}_{k, \mathcal{B}}$ is called a positive family of events. 
\end{definition}

We will propose in Section \ref{sec:Applications} two different positive families that capture behaviors of two optimal algorithms for respectively the multiple-choice secretary problem and the classic secretary problem. We will also show how these positive events can be expressed by atomic events.

\section{Abstract derandomization of positive events via concentration bounds}\label{sec:abstract_derand}

Let $\Omega = (\Pi_n,\mu)$ denote the probabilistic space of all $n!$ permutations of $n$ elements with uniform probabilities. Given any integer $t \in \{2,3,\ldots, n\}$, let $\mathcal{B} := B_{1}, B_{2}, \ldots, B_{t}$, be a \textit{bucketing} of the sequence $(1, \ldots, n)$. We will derandomize the following generic theorem, where we only need that any positive event can 
be expressed as union of (any set of) atomic events. 

\begin{theorem}\label{theorem:Chernoff_Positive_Events}
  Let $k \in [n], k > 2$ and $\mathcal{A}_{k,\mathcal{B}}$ be the
family of atomic events in the space $\Omega$. Let $\mathcal{P} =\{P_1,\ldots,P_q\}$ be a family of positive events based on the atomic family $\mathcal{A}_{k,\mathcal{B}}$, for some integer $q > 1$, such that for any $P_{\gamma} \in  \mathcal{P}$, $\gamma \in \{1,\ldots,q\}$, we have $\Prob_{\pi \sim \Pi_n}[P_{\gamma}] \geq p_{\gamma} > 0$ for some $p_{\gamma} \in (0,1)$. Let $p_0 = \min \{p_1,p_2\ldots,p_q\}$. Then, for any $\delta \in (0,1)$, there exists a multi-set $\mathcal{L}$ of permutations of size at most $\ell = \frac{2\log{q}}{\delta^2 p_0}$ such that 
\[
\Prob_{\pi \sim \mathcal{L}}[P_{\gamma}] \ge (1-\delta) \cdot p_{\gamma} \, , 
 \mbox{ for each } P_{\gamma} \in \mathcal{P} \, .
\] 
\end{theorem}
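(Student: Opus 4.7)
The plan is to prove Theorem~\ref{theorem:Chernoff_Positive_Events} by the standard probabilistic method: draw a random multi-set $\mathcal{L}$ of independent uniform permutations and show, via a multiplicative Chernoff lower-tail bound combined with a union bound over the $q$ positive events, that with strictly positive probability the empirical measure of every $P_\gamma$ under $\pi\sim\mathcal{L}$ is at least $(1-\delta)p_\gamma$. Note that the atomic-event decomposition of positive events plays no role in this existence argument; it is a structural ingredient used only in the subsequent derandomization theorem. For the present statement it suffices to treat each $P_\gamma$ as an event in the uniform probability space $(\Pi_n,\mu)$ with $\mu(P_\gamma)=q_\gamma\ge p_\gamma$.

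Concretely, I would first draw $\pi_1,\ldots,\pi_\ell$ i.i.d.\ uniformly from $\Pi_n$ with $\ell:=\lceil 2\log q/(\delta^2 p_0)\rceil$ and form the multi-set $\mathcal{L}=\{\pi_1,\ldots,\pi_\ell\}$. For each $\gamma$, set $X_i^\gamma:=\mathds{1}[\pi_i\in P_\gamma]$; these are i.i.d.\ Bernoulli variables with mean $q_\gamma$, and $\Pr_{\pi\sim\mathcal{L}}[P_\gamma]=\tfrac{1}{\ell}\sum_{i=1}^{\ell}X_i^\gamma$. The multiplicative Chernoff lower tail then gives
$$\Pr\!\left[\tfrac{1}{\ell}\sum_{i=1}^{\ell}X_i^\gamma<(1-\delta)q_\gamma\right]\le \exp\!\left(-\delta^2 q_\gamma\ell/2\right)\le \exp\!\left(-\delta^2 p_0\ell/2\right),$$
and the choice of $\ell$ makes the right-hand side strictly smaller than $1/q$ (using that $\log$ is base $2$, so $\log q>\ln q$, which supplies the small constant slack over the natural-log form of Chernoff). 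Since $q_\gamma\ge p_\gamma$, the event $\{\tfrac{1}{\ell}\sum_i X_i^\gamma\ge(1-\delta)q_\gamma\}$ implies the desired event $\{\tfrac{1}{\ell}\sum_i X_i^\gamma\ge(1-\delta)p_\gamma\}$.

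Finally, applying a union bound over $\gamma=1,\ldots,q$, the probability that $\Pr_{\pi\sim\mathcal{L}}[P_\gamma]<(1-\delta)p_\gamma$ for at least one $\gamma$ is strictly below $1$. Hence there exists a realisation of $(\pi_1,\ldots,\pi_\ell)$ whose multi-set $\mathcal{L}$ satisfies all $q$ required lower bounds simultaneously, and $|\mathcal{L}|\le\ell$ by construction. The main obstacle here is essentially cosmetic: one must verify that the stated $\ell$ (written with $\log_2$) is just large enough to beat the union bound when Chernoff is applied in its natural-log form, and one must be mindful that the bound must be preserved when replacing the true probability $q_\gamma$ by its lower estimate $p_\gamma$. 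There is no genuine conceptual difficulty in this existence theorem, which is precisely why the substantive algorithmic content---an explicit construction of such an $\mathcal{L}$ via a pessimistic-estimator walk over conditional probabilities of atomic events---is deferred to the derandomization theorem that follows.
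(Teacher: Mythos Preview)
Your proposal is correct and follows essentially the same approach as the paper: draw $\ell$ i.i.d.\ uniform permutations, apply the multiplicative Chernoff lower-tail bound to the empirical count of each $P_\gamma$, and finish with a union bound over the $q$ events, choosing $\ell=\lceil 2\log q/(\delta^2 p_0)\rceil$ so the failure probability is strictly below~$1$. Your version is slightly more careful in distinguishing the true probability $q_\gamma$ from its lower bound $p_\gamma$ and in noting that the atomic decomposition is irrelevant for this existence statement, but the argument is otherwise identical to the paper's.
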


\begin{proof}
Let us fix any positive event $P_{\gamma} \in \mathcal{P}$. We choose independently $\ell$ permutations $\pi_1, \ldots, \pi_{\ell}$ from $\Pi_n$ u.a.r., and define the multi-set $\mathcal{L} = \{\pi_1, \ldots, \pi_{\ell}\}$.
Let $X_1(P_{\gamma}), \ldots, X_{\ell}(P_{\gamma})$ be random variables such that $X_s(P_{\gamma}) = 1$ if the event $P_{\gamma}$ holds for
the random permutation $\pi_s$, and $X_s(P_{\gamma}) = 0$ otherwise, for $s = 1,2, \ldots, \ell$.
 Then for $X(P_{\gamma}) =
X_1(P_{\gamma}) + \cdots + X_{\ell}(P_{\gamma})$ we have that $\Exp[X(P_{\gamma})] \geq p_{\gamma} \ell$ and by Chernoff bound, we have  
\begin{eqnarray}
 \Prob[X(P_{\gamma}) < (1-\delta) \cdot p_{\gamma}  \ell]  <  \exp(-\delta^2 p_{\gamma} \ell/2) \, , \label{eqn:Chernoff_Hoeffding_111}
\end{eqnarray} for any  $0 < \delta < 1$.

The probability that there is a positive event $P_{\gamma} \in \mathcal{P}$ for which there does not exists a $(1-\delta) p_{\gamma}$ fraction of permutations among these $\ell$ random permutations which make this event false, by the union bound, is: 
\vspace*{-1ex}
\[
\Prob[\exists P_{\gamma} \in \mathcal{P} : X(P_{\gamma}) <
(1-\delta) \cdot p_{\gamma}  \ell] \,\, < \,\, 
\sum_{i=1}^q \exp(-\delta^2 p_{\gamma} \ell/2) \, .
\] This probability is strictly smaller than $1$ if $\sum_{{\gamma}=1}^q \exp(-\delta^2 p_{\gamma} \ell/2) \leq 1$, which holds if $\ell \geq \frac{2\log{q}}{\delta^2 p_0}$. Therefore, each positive event $P_{\gamma}$ has at least a $(1-\delta) \cdot p_{\gamma}$ fraction of permutations in $\mathcal{L}$ on which it is true. This means that there exist $\frac{2\log{q}}{\delta^2 p_0}$ permutations such that if we choose one of them u.a.r., then for any positive event $P_{\gamma} \in \mathcal{P}$, this permutation will make $P_{\gamma}$ true with probability at least $(1-\delta) p_{\gamma}$.
\end{proof}

\begin{theorem}\label{Thm:Derandomization_2_111}
  Let $k \in [n], k > 2$ and $\mathcal{A}_{k,\mathcal{B}}$ be the family of atomic events in the space $\Omega$, where bucketing $\mathcal{B}$ has $t$ buckets. Let $\mathcal{P} =\{P_1,\ldots,P_q\}$ be a family of positive events based on the atomic family $\mathcal{A}_{k,\mathcal{B}}$, for some integer $q > 1$, such that for any $P_{\gamma} \in  \mathcal{P}$, $\gamma \in \{1,\ldots,q\}$, we have $\Prob_{\pi \sim \Pi_n}[P_{\gamma}] \geq p_{\gamma} > 0$ for some $p_{\gamma} \in (0,1)$. Let $p_0 = \min \{p_1,p_2\ldots,p_q\}$, and for any $P_{\gamma} \in \mathcal{P}$, $Atomic(P_{\gamma})$ be the set of atomic events that define $P_{\gamma}$. Then, for any $\delta \in (0,1)$, a multi-set $\mathcal{L}$ of permutations with $|\mathcal{L}| \leq \ell = \frac{2\log{q}}{\delta^2 p_0}$, such that 
\[
\Prob_{\pi \sim \mathcal{L}}[P_{\gamma}] \ge (1-\delta) \cdot p_{\gamma} \, , 
 \mbox{ for each } P_{\gamma} \in \mathcal{P} \, ,
\] can be constructed in deterministic time
$$
O\left( \ell n^3 q \cdot t^{2k} \cdot (k!)^2 \cdot \left(n + k k! + k \log^2 n\right) \right) \, . 
$$
\end{theorem}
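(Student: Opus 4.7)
The plan is to derandomize Theorem~\ref{theorem:Chernoff_Positive_Events} by the method of conditional expectations with a Chernoff-style pessimistic estimator in the spirit of Raghavan and Young. For each $\gamma \in [q]$, fix the Chernoff parameter $\lambda_\gamma > 0$ used implicitly in the proof of Theorem~\ref{theorem:Chernoff_Positive_Events} (the standard optimizer satisfying $e^{\lambda_\gamma(1-\delta)p_\gamma\ell}(p_\gamma e^{-\lambda_\gamma}+1-p_\gamma)^\ell \le e^{-\delta^2 p_\gamma\ell/2}$). After permutations $\pi_1,\dots,\pi_j$ have been committed, define the pessimistic estimator
$$\Phi^{(j)} = \sum_{\gamma=1}^{q} e^{\lambda_\gamma(1-\delta)p_\gamma\ell}\cdot \prod_{s=1}^{j} e^{-\lambda_\gamma X_s(P_\gamma)}\cdot \bigl(\Prob_\pi[P_\gamma]\, e^{-\lambda_\gamma}+1-\Prob_\pi[P_\gamma]\bigr)^{\ell-j}.$$
Using $\Prob_\pi[P_\gamma]\ge p_\gamma$ and the Chernoff computation in the proof of Theorem~\ref{theorem:Chernoff_Positive_Events}, the initial value satisfies $\Phi^{(0)}\le \sum_\gamma e^{-\delta^2 p_\gamma\ell/2}\le q\,e^{-\delta^2 p_0\ell/2}\le 1$ for $\ell\ge \frac{2\log q}{\delta^2 p_0}$, and the terminal bound $\Phi^{(\ell)}<1$ certifies that $X(P_\gamma)\ge (1-\delta)p_\gamma\ell$ for every $\gamma$ simultaneously, which is exactly the conclusion we need.

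Algorithm~\ref{algo:Find_perm_2_111} will produce $\pi_1,\dots,\pi_\ell$ one round at a time, and within each round it will construct $\pi_s$ position by position: at position $i$ it scans the $O(n)$ yet-unused elements and commits to the value of $\pi_s(i)$ that minimizes the conditional expectation of $\Phi^{(s)}$, taken over a uniformly random completion of the remaining positions of $\pi_s$. By the standard averaging principle such a commitment never increases the estimator, so the invariant $\Phi^{(j)}\le \Phi^{(0)}\le 1$ is maintained throughout. The only nontrivial quantities that the algorithm needs to compute at each such step are, for every positive event $P_\gamma$, the conditional probability $\Prob_\pi[P_\gamma\mid \text{partial}]$ that enters the $(j{+}1)$-th factor of $\Phi^{(s)}$ after the partial commitments of the current round are incorporated.

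This is where the disjoint atomic decomposition $P_\gamma=\stackrel{\cdot}{\bigcup}_{A\in Atomic(P_\gamma)} A$ from Definition~\ref{def:positive_event} is essential and motivates the separate subroutine Algorithm~\ref{algo:Cond_prob_2_111}. By disjointness, $\Prob_\pi[P_\gamma\mid \text{partial}]=\sum_{A\in Atomic(P_\gamma)}\Prob_\pi[A\mid\text{partial}]$, and each atomic event $A_{\sigma,f}$ has a very symmetric form: we need to count extensions of the partial permutation that place every $\sigma_i$ into the still-available portion of bucket $B_{f(i)}$ while respecting the strict ordering $\pi^{-1}(\sigma_1)<\cdots<\pi^{-1}(\sigma_k)$. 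A short case analysis, splitting on which entries of $\sigma$ have already been positioned and which free slots remain in each bucket, reduces every $\Prob_\pi[A\mid\text{partial}]$ to a multinomial-type ratio computable in time polynomial in $n$ and $k$. The key point is that the symmetry of atomic events (absent for generic positive events) makes this a clean bucket-counting computation rather than an intractable iteration over $\Pi_n$.

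The main obstacle is making the aggregate computation fit within the claimed budget. Across $\ell$ rounds and $n$ position-choices per round, each with $n$ candidate values, the algorithm invokes the conditional-probability subroutine roughly $\ell n^2$ times, and each call traverses $q$ positive events with up to $t^k\cdot k!$ atomic events apiece; the pairwise interaction between atomic events when one re-computes the multinomial ratios as partial choices are propagated is where the $t^{2k}(k!)^2$ factor arises, and the residual $(n+k\cdot k!+k\log^2 n)$ term accounts for bucket-count bookkeeping, per-atomic slot allocation, and the high-precision arithmetic required to compare the exponentials of $\Phi^{(j)}$ without rounding drift across the $\ell$ rounds. Multiplying these factors together yields the stated deterministic running time $O\bigl(\ell\, n^3 q\cdot t^{2k}(k!)^2\cdot (n+k\cdot k!+k\log^2 n)\bigr)$, which is polynomial whenever $k$ and $t$ are at most polylogarithmic in $n$ — exactly the regime needed for the applications in Section~\ref{sec:application_k_secretary}.
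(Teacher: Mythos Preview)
Your approach is essentially the paper's: a Young/Raghavan-style Chernoff pessimistic estimator (the paper uses the specific form with factors $(1-\delta\cdot X_s)/(1-\delta)^{(1-\delta)p_\gamma}$ rather than your general $e^{-\lambda_\gamma X_s}$, but these are equivalent), derandomized permutation-by-permutation and within each permutation position-by-position, with conditional probabilities of each positive event computed by summing over its disjoint atomic decomposition. The one place your sketch diverges is the running-time accounting: the $t^{2k}(k!)^2$ factor does not arise from any ``pairwise interaction between atomic events'' but simply from the bound $|Atomic(P_\gamma)|\le t^k\cdot k!$ appearing once in the explicit loop of Algorithm~\ref{algo:Find_perm_2_111} and once in the per-positive-event cost of Theorem~\ref{theorem:semi-random-conditional_2_111}, and the $(n+k\cdot k!+k\log^2 n)$ term is the per-atomic-event cost of Algorithm~\ref{algo:Cond_prob_2_111} (set-membership checks, brute-force enumeration of $perm(B_j',\sigma)$, and $O(\log^2 n)$ rational arithmetic on integers in $[n]$), not numerical precision in comparing values of the estimator.
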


\vspace*{-1ex}
\noindent
We present here the proof of Theorem \ref{Thm:Derandomization_2_111}, whose missing details can be found in Section \ref{sec:derandomization-proofs_2_111}.

\smallskip

\noindent
{\bf Preliminaries.} To derandomize the Chernoff argument of Theorem \ref{theorem:Chernoff_Positive_Events}, we will derive a special conditional expectations method with a pessimistic estimator. We will model an experiment to choose u.a.r.~a permutation $\pi_j \in \Pi_n$ by independent \lq\lq{}index\rq\rq{} r.v.'s $X^i_j$: $\Prob[X^i_j \in \{1,2,\ldots, n-i+1\}] = 1/(n-i+1)$, for $i \in [n]$, to define $\pi = \pi_j \in \Pi_n$ ``sequentially": $\pi(1) = X^1_j$, $\pi(2)$ is the $X^2_j$-th element in $I_1 = \{1,2,\ldots,n\} \setminus \{\pi(1)\}$, $\pi(3)$ is the $X^3_j$-th element in $I_2 = \{1,2,\ldots,n\} \setminus \{\pi(1), \pi(2)\}$, etc, where elements are increasingly ordered.

Since the probability of choosing the index $\pi(i)$ for $i = 1,2,\ldots, n$ is $1/(n-i+1)$, and these random choices are independent, the final probability of choosing a specific random permutation is 
  $$
     \frac{1}{n} \cdot \frac{1}{n-1} \cdot \ldots \cdot \frac{1}{n-n+1} = \frac{1}{n!} \ ,
  $$ thus, this probability distribution is uniform on the set $\Pi_n$ as we wanted.
  
 Suppose random permutations
 $\mathcal{L} = \{\pi_1, \ldots, \pi_\ell\}$ are generated using $X^1_j, X^2_j, \ldots, X^n_j$ for $j\in[\ell]$.  Given a positive event $P_{\gamma} \in \mathcal{P}$, ${\gamma} \in [q]$, recall the definition of r.v.~$X_j(P_{\gamma})$ for $j \in [\ell]$ given above.
  For $X(P_{\gamma}) =
X_1(P_{\gamma}) + \cdots + X_{\ell}(P_{\gamma})$ and $\delta \in (0,1)$, we have that $\Exp[X(P_{\gamma})] \geq  p_{\gamma} \ell$ and $
 \Prob[X(P_{\gamma}) < (1-\delta) \cdot p_{\gamma}  \ell]  <  \exp(-\delta^2 p_{\gamma} \ell/2)
$, and $$\Prob[\exists P_{\gamma} \in \mathcal{P} : X(P_{\gamma}) <
(1-\delta) \cdot p_{\gamma}  \ell] \, < \, 1 \,\, \mbox{ for } \, \ell \geq \frac{2\log{q}}{\delta^2 p_0} \, .
$$ We call the positive event $P_{\gamma} \in \mathcal{P}$ {\em not well-covered} if $X(P_{\gamma}) < (1-\delta) \cdot p_{\gamma} \ell$ (then a new r.v.~$Y(P_{\gamma}) = 1$), and {\em well-covered} otherwise (then $Y(P_{\gamma}) = 0$). Let $Y = \sum_{P \in \mathcal{P}} Y(P)$. By the above argument $\Exp[Y] = \sum_{P \in \mathcal{P}} \Exp[Y(P)] < 1$ if $\ell \geq \frac{2\log{q}}{\delta^2 p_0}$. We will keep the expectation $\Exp[Y]$ below $1$ in each step of the derandomization, and these steps will sequentially define the permutations in $\mathcal{L}$.

\smallskip

\noindent
{\bf Outline of derandomization.} We will choose permutations $\{\pi_1,\pi_2,\ldots,\pi_{\ell}\}$ sequentially, one by one, where $\pi_1 = (1,2,\ldots,n)$ is the identity permutation. For some $s \in [\ell-1]$ let permutations $\pi_1,\ldots,\pi_s$ have already been chosen ({\em ``fixed"}). We will chose a {\em ``semi-random"} permutation $\pi_{s+1}$ position by position using $X^i_{s+1}$. Suppose that $\pi_{s+1}(1),$ $ \pi_{s+1}(2),..., \pi_{s+1}(r)$ are already chosen for some $r \in [n-1]$, where all $\pi_{s+1}(i)$ ($i \in [r-1]$) are fixed and final, except $\pi_{s+1}(r)$ which is fixed but not final yet. We will vary $\pi_{s+1}(r) \in [n] \setminus \{\pi_{s+1}(1), \pi_{s+1}(2),..., \pi_{s+1}(r-1)\}$ to choose the best value for $\pi_{s+1}(r)$, assuming that
$\pi_{s+1}(r+1), \pi_{s+1}(r+2),..., \pi_{s+1}(n)$ are random. Permutations $\pi_{s+2},\ldots,\pi_n$ are {\em ``fully-random"}.

\smallskip

\noindent
{\bf Conditional probabilities.}~Given $P_{\gamma} \in \mathcal{P}$, $r \in [n-1]$, note that $X_{s+1}(P_{\gamma})$ depends only on $\pi_{s+1}(1),\pi_{s+1}(2),$ $\ldots,$ $ \pi_{s+1}(r)$. We will show how to compute the conditional probabilities (Algorithm \ref{algo:Cond_prob_2_111}, Section \ref{sec:Cond_Prob_Thm_4_2_111}) $\Prob[X_{s+1}(P_{\gamma}) = 1 \, | \, \pi_{s+1}(1),\pi_{s+1}(2), \ldots, \pi_{s+1}(r)]$, where randomness is over random positions $\pi_{s+1}(r+1),\pi_{s+1}(r+2), \ldots, \pi_{s+1}(n)$. We also define $\Prob[X_{s+1}(P_{\gamma}) = 1 \, | \, \pi_{s+1}(1),\pi_{s+1}(2), \ldots, \pi_{s+1}(r)] = \Prob[X_{s+1}(P_{\gamma}) = 1]$ when $r=0$. Theorem \ref{theorem:semi-random-conditional_2_111} is proved in Section \ref{sec:Cond_Prob_Thm_4_2_111}.

\begin{theorem}\label{theorem:semi-random-conditional_2_111}
 Suppose values $\pi_{s+1}(1),\pi_{s+1}(2), \ldots, \pi_{s+1}(r)$ have already been fixed for some $r \in \{0\} \cup [n]$. There is a deterministic algorithm to compute $\Prob[X_{s+1}(P_{\gamma}) = 1 \, | \, \pi_{s+1}(1),\pi_{s+1}(2), \ldots, \pi_{s+1}(r)]$, for any positive event $P_{\gamma}$, where the random event is the random choice of the semi-random permutation $\pi_{s+1}$ conditioned on its first $r$ elements already being fixed. Its running time is 
 $$
 O(|Atomic(P_{\gamma})| \cdot (n^2 + nk (k! + \log^2 n))) \, .
 $$
\end{theorem}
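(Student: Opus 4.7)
The plan is to exploit the disjoint decomposition $P_\gamma = \stackrel{\cdot}{\bigcup}_{A \in Atomic(P_\gamma)} A$ together with linearity of probability. Conditioned on the fixed prefix $\pi_{s+1}(1),\ldots,\pi_{s+1}(r)$, I have
\[
\Pr{X_{s+1}(P_\gamma) = 1 \mid \pi_{s+1}(1),\ldots,\pi_{s+1}(r)} \;=\; \sum_{A \in Atomic(P_\gamma)} \Pr{A \mid \pi_{s+1}(1),\ldots,\pi_{s+1}(r)},
\]
so the whole job reduces to computing the conditional probability of a single atomic event $A_{\sigma,f}$ and summing over the $|Atomic(P_\gamma)|$ events; this directly accounts for the prefactor $|Atomic(P_\gamma)|$ in the claimed running time.

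For a fixed atomic event $A_{\sigma,f}$, the computation splits into a consistency check and a counting step. In the consistency check, for each $j \in [r]$ with $\pi_{s+1}(j) \in \{\sigma_1,\ldots,\sigma_k\}$, writing $\pi_{s+1}(j) = \sigma_i$, we verify $j \in B_{f(i)}$, and then we verify that the set $S \subseteq [k]$ of $\sigma$-indices already committed by the prefix forms an initial segment $\{1,2,\ldots,|S|\}$ and appears at strictly increasing positions as $j$ grows. If any check fails, $\Pr{A_{\sigma,f} \mid \mathrm{prefix}} = 0$; otherwise the $n-r$ positions $\{r+1,\ldots,n\}$ will host a uniformly random permutation of the $n-r$ still-unused elements, and the only remaining constraints of $A_{\sigma,f}$ concern the unplaced indices $\{|S|+1,\ldots,k\}$.

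For the counting step, for each bucket $b \in [t]$ set $m_b := |B_b \cap \{r+1,\ldots,n\}|$ and $c_b := |\{i \in [k]\setminus S : f(i) = b\}|$. The number of completions realizing $A_{\sigma,f}$ is
\[
\Bigl(\prod_{b=1}^{t} \binom{m_b}{c_b}\Bigr) \cdot (n - r - k + |S|)!,
\]
as in each bucket $b$ we choose which $c_b$ out of $m_b$ available positions will host the unplaced special indices assigned to $b$ (inserted in increasing $\sigma$-rank order), and the $n-r-(k-|S|)$ non-special elements then permute freely among the leftover positions. Dividing by $(n-r)!$ gives the desired conditional probability.

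The main subtlety I anticipate is arguing cleanly that this per-bucket construction is equivalent to the full strict-monotonicity requirement $\pi^{-1}(\sigma_1) < \cdots < \pi^{-1}(\sigma_k)$ of Definition~\ref{def:atomic_event}: across-bucket monotonicity is free because $f$ is non-decreasing and buckets are disjoint intervals of consecutive positions, while within-bucket monotonicity is forced by the canonical insertion order; the initial-segment check on $S$ rules out the only scenario in which an unplaced small-rank $\sigma_i$ could land after an already-placed larger-rank $\sigma_{i_0}$. For the running time, one precomputes factorials $0!,\ldots,n!$ once (globally negligible), so each of the at most $t \le n$ combinatorial factors is a table lookup combined with big-integer arithmetic at $O(\log^2 n)$ per operation, yielding $O(n \log^2 n)$ for evaluating the formula; combined with an $O(n + k\cdot k!)$ consistency scan against the explicit description of $\sigma$, this gives the per-event cost $O(n^2 + n k (k! + \log^2 n))$ stated in the theorem.
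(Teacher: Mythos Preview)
Your proposal is correct and follows the same overall strategy as the paper: decompose $P_\gamma$ into its disjoint atomic events and, for each $A_{\sigma,f}$, perform a prefix-consistency check followed by a closed-form count of completions. The paper's Algorithm~\ref{algo:Cond_prob_2_111} does the consistency check bucket-by-bucket (verifying that each fully covered bucket contains exactly its assigned special elements, then handling the partially covered bucket $B_j$ separately) and evaluates a telescoping product involving the factor $perm(B'_\kappa,\sigma)/(b'_\kappa)!$, where $perm$ is computed by brute-force $k!$ enumeration; your version replaces this with the equivalent global formula $\bigl(\prod_b \binom{m_b}{c_b}\bigr)(n-r-k+|S|)!/(n-r)!$ together with the single observation that the placed $\sigma$-indices must form an initial segment of $[k]$. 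This is cleaner and actually eliminates the need for any $k!$ enumeration, so the $k\cdot k!$ you add to the per-event consistency cost is padding rather than something your algorithm genuinely incurs---harmless for matching the stated bound, but worth noting. One small point you gloss over: your checks (a)--(c) alone do not detect the case where some unplaced $\sigma_i$ is mapped to a bucket lying entirely within the prefix (e.g.\ $S=\emptyset$ while $r$ already exceeds $b_1$); however your counting formula correctly returns $\binom{0}{c_b}=0$ there, so the algorithm remains correct.
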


\smallskip 

\noindent
{\bf Pessimistic estimator.} Let $P_{\gamma} \in \mathcal{P}$. Denote $\Exp[X_j(P_{\gamma})] = \Prob[X_j(P_{\gamma}) = 1] = \mu_{\gamma j}$ for each $j \in [\ell]$, and $\Exp[X(P_{\gamma})] = \sum_{j=1}^{\ell} \mu_{\gamma j} = \mu_{\gamma}$. By the assumption in Theorem \ref{Thm:Derandomization_2_111}, $\mu_{\gamma j} \geq p_{\gamma}$, for each $j \in [\ell]$. We will now use
Raghavan's proof of the Hoeffding bound, see \cite{Young95}, for any $\delta > 0$, using that $\mu_{\gamma j} \geq p_{\gamma}$ (see more details in Section \ref{sec:derandomization-proofs_2_111}):
\begin{eqnarray*}
  \Prob\left[X(P_{\gamma}) < (1-\delta) \cdot \ell \cdot p_{\gamma} \right]
  &\leq&
  \prod_{j=1}^{\ell} \frac{1-\delta \cdot  \Exp[X_j(P_{\gamma})]}{(1-\delta)^{(1-\delta)p_{\gamma}}}
  <
  \prod_{j=1}^{\ell} \frac{\exp(- \delta \mu_{\gamma j})}{(1-\delta)^{(1-\delta)p_{\gamma}}}
  \leq
  \prod_{j=1}^{\ell} \frac{\exp(- \delta p_{\gamma})}{(1-\delta)^{(1-\delta)p_{\gamma}}} \nonumber \\
  &=&
  \frac{1}{\exp(b(-\delta) \ell p_{\gamma})}
  \,\, < \,\, \frac{1}{\exp(\delta^2 \ell p_{\gamma}/2)} \, ,
\end{eqnarray*} where $b(x) = (1+x) \ln(1+x) - x$, and the last inequality follows by $b(-x) > x^2/2$, see, e.g., \cite{Young95}. Thus, the union bound implies:
\begin{eqnarray}
\Prob\left[\exists P_{\gamma} \in \mathcal{P} : X(P_{\gamma}) < (1-\delta) \cdot \ell \cdot p_{\gamma} \right] \,\, \leq \,\, 
\sum_{\gamma=1}^q \prod_{j=1}^{\ell} \frac{1-\delta \cdot  \Exp[X_j(P_{\gamma})]}{(1-\delta)^{(1-\delta)p_{\gamma}}} \label{Eq:Union_Bound_1_2_111} \, .
\end{eqnarray} 

\noindent
We will derive a pessimistic estimator of this failure probability in (\ref{Eq:Union_Bound_1_2_111}).
Let $\phi_j(P_{\gamma}) = 1$ if $\pi_j$ makes event $P_{\gamma}$ true, and $\phi_j(P_{\gamma}) = 0$ otherwise, and the failure probability (\ref{Eq:Union_Bound_1_2_111}) is at most:
\begin{eqnarray}
  & & \sum_{\gamma=1}^q\prod_{j=1}^{\ell} \frac{1-\delta \cdot \Exp[\phi_j(P_{\gamma})]}{(1-\delta)^{(1-\delta)p_{\gamma}}} \label{eq:first_term_2_111} \\
  &= & \sum_{\gamma=1}^q\left(\prod_{j=1}^{s} \frac{1-\delta \cdot \phi_j(P_{\gamma})}{(1-\delta)^{(1-\delta)p_{\gamma}}}\right) \cdot \left(\frac{1-\delta \cdot \Exp[\phi_{s+1}(P_{\gamma})]}{(1-\delta)^{(1-\delta)p_{\gamma}}}\right) \cdot \left(\frac{1-\delta \cdot \Exp[\phi_j(P_{\gamma})]}{(1-\delta)^{(1-\delta)p_{\gamma}}}\right)^{\ell - s - 1} \label{eq:second_term_2_111} \\
  &\leq& 
  \sum_{\gamma=1}^q\left(\prod_{j=1}^{s} \frac{1-\delta \cdot \phi_j(P_{\gamma})}{(1-\delta)^{(1-\delta)p_{\gamma}}}\right) \cdot \left(\frac{1-\delta \cdot \Exp[\phi_{s+1}(P_{\gamma})]}{(1-\delta)^{(1-\delta)p_{\gamma}}}\right) \cdot \left(\frac{1-\delta \cdot p_{\gamma}}{(1-\delta)^{(1-\delta)p_{\gamma}}}\right)^{\ell - s - 1} \nonumber \\ 
  &=& \, \Phi(\pi_{s+1}(1),\pi_{s+1}(2), \ldots, \pi_{s+1}(r)) \label{Eq:Pessimistic_Est_2_111} \, ,
\end{eqnarray} where equality (\ref{eq:second_term_2_111}) is conditional expectation under: (fixed) permutations $\pi_1,\ldots,\pi_s$ for some $s \in [\ell-1]$, the (semi-random)
permutation $\pi_{s+1}$ currently being chosen, and (fully random) permutations $\pi_{s+2},\ldots,\pi_{\ell}$. The first term (\ref{eq:first_term_2_111}) is less than $\sum_{\gamma=1}^q \exp(-\delta^2 \ell p_{\gamma}/2)$, which is strictly smaller than $1$ for large $\ell$.
  Let us denote
$\Exp[\phi_{s+1}(P_{\gamma})] =
\Exp[\phi_{s+1}(P_{\gamma}) \, | \, \pi_{s+1}(r) = \tau]
= \Prob[X_{s+1}(P_{\gamma}) = 1 \, | \, \pi_{s+1}(1),\pi_{s+1}(2), \ldots, \pi_{s+1}(r-1), \pi_{s+1}(r) = \tau]$, where positions $\pi_{s+1}(1),\pi_{s+1}(2), \ldots, \pi_{s+1}(r)$ were fixed in the semi-random permutation $\pi_{s+1}$, $\pi_{s+1}(r)$ was fixed in particular to $\tau \in [n] \setminus \{\pi_{s+1}(1),\pi_{s+1}(2), \ldots, \pi_{s+1}(r-1)\}$, and it can be computed by using the algorithm from Theorem \ref{theorem:semi-random-conditional_2_111}.
This gives our pessimistic estimator $\Phi$, when the semi-random permutation $\pi_{s+1}$ is being decided. Observe that $\Phi$ can be rewritten as
\vspace*{-1ex}
\begin{eqnarray}
  \Phi = \sum_{\gamma=1}^q \omega(\ell,s,\gamma) \cdot  \left(\prod_{j=1}^{s} (1-\delta \cdot \phi_j(P_{\gamma}))\right) \cdot (1-\delta \cdot \Exp[\phi_{s+1}(P_{\gamma})]), \, 
  \mbox{ where } \label{Eq:Pessimistic_Est_Obj_111} \\ \,  \omega(\ell,s,\gamma) = 
  \frac{\left(1-\delta p_{\gamma}\right)^{\ell-s+1}}{(1-\delta)^{(1-\delta)p_{\gamma}\ell}} \, . \nonumber
\end{eqnarray}

\vspace*{-1ex}
\noindent
Recall $\pi_{s+1}(r)$ in semi-random permutation was fixed but not final. To make it final, we choose $\pi_{s+1}(r) \in [n] \setminus \{\pi_{s+1}(1),\pi_{s+1}(2), \ldots, \pi_{s+1}(r-1)\}$ that minimizes $\Phi$ in (\ref{Eq:Pessimistic_Est_Obj_111}). Proof of Lemma  \ref{lem:potential_correct_2_111}
can be found in Section \ref{sec:derandomization-proofs_2_111}.

\begin{lemma}\label{lem:potential_correct_2_111}
$\Phi(\pi_{s+1}(1),\pi_{s+1}(2), \ldots, \pi_{s+1}(r))$ is a pessimistic estimator of the failure probability in (\ref{Eq:Union_Bound_1_2_111}), if
$\ell \geq \frac{2\log{q}}{\delta^2 p_0}$.
\end{lemma}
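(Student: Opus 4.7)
The plan is to verify the three standard properties that together constitute a Raghavan-style pessimistic estimator: (a) $\Phi$ majorizes the conditional failure probability in (\ref{Eq:Union_Bound_1_2_111}) at every stage of the derandomization; (b) the unconditional (initial) value of $\Phi$ is strictly less than $1$ under the stated bound on $\ell$; and (c) the greedy rule — picking the next position $\pi_{s+1}(r)$ to minimize $\Phi$ — cannot increase $\Phi$. Property (a) is already implicit in the derivation (\ref{eq:first_term_2_111})--(\ref{Eq:Pessimistic_Est_2_111}), but I would make it explicit by applying a Markov bound to the exponential moment $(1-\delta)^{X(P_\gamma)}$, then using $1-\delta \mu_{\gamma j}\leq\exp(-\delta\mu_{\gamma j})$ together with $\mu_{\gamma j}\geq p_\gamma$ (which is guaranteed by the hypothesis of Theorem~\ref{Thm:Derandomization_2_111}, since the marginal distribution of each fully-random $\pi_j$ is uniform on $\Pi_n$).

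For (b), setting $s=r=0$ replaces every $\phi_j(P_\gamma)$ in $\Phi$ by its expectation $\mu_{\gamma j}\geq p_\gamma$, so
\[
\Phi_{\mathrm{init}} \;\leq\; \sum_{\gamma=1}^{q}\left(\frac{1-\delta p_\gamma}{(1-\delta)^{(1-\delta)p_\gamma}}\right)^{\ell} \;=\; \sum_{\gamma=1}^{q}\exp\!\bigl(-b(-\delta)\,p_\gamma\,\ell\bigr) \;<\; q\cdot\exp\!\bigl(-\delta^2 p_0\ell/2\bigr),
\]
where I use the identity $(1-\delta p)/(1-\delta)^{(1-\delta)p}=\exp(-b(-\delta)p)$ and the inequality $b(-\delta)>\delta^2/2$ already invoked in the text preceding the lemma. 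Substituting the hypothesis $\ell\geq 2\log q/(\delta^2 p_0)$ yields $\Phi_{\mathrm{init}}<1$, as required.

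For (c), I would observe that $\Phi(\pi_{s+1}(1),\ldots,\pi_{s+1}(r))$ is, by its construction in (\ref{eq:second_term_2_111}), exactly the conditional expectation (under a uniformly random completion of the semi-random $\pi_{s+1}$ and the fully-random $\pi_{s+2},\ldots,\pi_\ell$) of the per-event product expression bounding the failure probability. The tower property of conditional expectation then yields
\[
\Phi(\pi_{s+1}(1),\ldots,\pi_{s+1}(r-1)) \;=\; \E\bigl[\Phi(\pi_{s+1}(1),\ldots,\pi_{s+1}(r))\;\big|\;\pi_{s+1}(1),\ldots,\pi_{s+1}(r-1)\bigr],
\]
so at least one admissible value of $\pi_{s+1}(r)$ achieves $\Phi$ no larger than its value before the choice. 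By induction starting from $\Phi_{\mathrm{init}}<1$, the greedy rule maintains $\Phi<1$ throughout, and hence the final deterministic permutations produced by Algorithm~\ref{algo:Find_perm_2_111} must satisfy the well-covered condition for every positive event.

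The main subtlety, and the place where I would be most careful, is the legitimacy of replacing $\E[\phi_j(P_\gamma)]$ in (\ref{Eq:Pessimistic_Est_2_111}) by the uniform lower bound $p_\gamma$ for the still-random permutations $j\geq s+2$; this is valid precisely because those permutations are drawn independently of all previously fixed choices, so their marginal success probability equals $\Pr_{\pi\sim\Pi_n}[P_\gamma]\geq p_\gamma$. Once (a)--(c) are established, the three conditions together are exactly the definition of a pessimistic estimator, completing the proof.
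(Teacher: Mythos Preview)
Your proposal is correct and follows essentially the same three-property argument as the paper's proof: (a) $\Phi$ upper-bounds the conditional failure probability, (b) its initial value is below $1$ when $\ell\geq 2\log q/(\delta^2 p_0)$, and (c) some admissible choice of the next position never increases $\Phi$. Two small slips worth tightening: first, the relation $(1-\delta p)/(1-\delta)^{(1-\delta)p}=\exp(-b(-\delta)p)$ is not an identity but the inequality $(1-\delta p)<\exp(-\delta p)$, which still gives the bound you need; second, $\Phi$ as defined in (\ref{Eq:Pessimistic_Est_2_111}) is not literally the conditional expectation of (\ref{eq:second_term_2_111}) --- the fully-random factors have already been replaced by $1-\delta p_\gamma$ --- so the tower property does not directly apply. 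What makes (c) go through is that those replaced factors are constant in $\pi_{s+1}(r)$ and the only $\pi_{s+1}(r)$-dependent factor $1-\delta\,\E[\phi_{s+1}(P_\gamma)\mid\cdot]$ is linear, which is exactly the direct averaging computation the paper carries out.
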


\begin{proof} (of Theorem \ref{Thm:Derandomization_2_111}) See the precise details of this proof in Section \ref{sec:derandomization-proofs_2_111}. 
\end{proof}


\begin{algorithm}[t!]
\SetAlgoVlined

\DontPrintSemicolon
 \KwIn{Positive integers $n$, $k \leq n$, $\ell \geq 2$, such that $\log k \geq 8$.}
 \KwOut{A multi-set $\mathcal{L} \subseteq \Pi_n$ of $\ell$ permutations.}
 
 /* This algorithm uses Function ${\sf Prob}(A)$ defined in Algorithm \ref{algo:Cond_prob_2_111}, Section \ref{sec:Cond_Prob_Thm_4_2}. */ 
 
 $\pi_1 := (1,2,\ldots,n)$  /* Identity permutation */
 
 $\mathcal{L} := \{\pi_1\}$
 
 Let $\mathcal{P} = \{P_1,\ldots,P_q\}$ be the set of all positive events.  
 
 \For{$\gamma \in \{1,\ldots,q\}$}{$w(P_{\gamma}) := 1-\delta \cdot \phi_1(P_{\gamma})$ \label{Alg:Weight_Init_2}}
 
    \For{$s = 1 \ldots \ell - 1$}{
      \For{$r = 1 \ldots n$}{
         \For{$\gamma \in \{1,\ldots,q\}$}{
           \For{$\tau \in [n] \setminus \{\pi_{s+1}(1),\pi_{s+1}(2), \ldots, \pi_{s+1}(r-1)\}$}{
            \For{$A \in Atomic(P_{\gamma})$}{
             $\Prob[A \, | \, \pi_{s+1}(1), \ldots, \pi_{s+1}(r-1), \pi_{s+1}(r) = \tau] := {\sf Prob}(A)$}
           
             $\Exp[\phi_{s+1}(P_{\gamma}) \, | \, \pi_{s+1}(r) = \tau] := \sum_{A \in Atomic(P_{\gamma})} \Prob[A \, | \, \pi_{s+1}(1), \ldots, \pi_{s+1}(r-1), \pi_{s+1}(r) = \tau]$
            }
           }
           Choose $\pi_{s+1}(r) = \tau$ for $\tau \in [n] \setminus \{\pi_{s+1}(1),\pi_{s+1}(2), \ldots, \pi_{s+1}(r-1)\}$ to minimize
           $\sum_{\gamma=1}^q \omega(\ell,s,\gamma) \cdot w(P_{\gamma}) \cdot (1 - \delta \cdot \Exp[\phi_{s+1}(P_{\gamma}) \, | \, \pi_{s+1}(r) = \tau])$.
       }
      $\mathcal{L} := \mathcal{L} \cup \{\pi_{s+1}\}$
      
      \For{$\gamma \in \{1,\ldots,q\}$}{
        $w(P_{\gamma}) :=  w(P_{\gamma}) \cdot (1-\delta \cdot \phi_{s+1}(P_{\gamma}))$ \label{Alg:Weight_Update_2}
       }
     }
 \Return $\mathcal{L}$
 \caption{Find permutations distribution (for positive events)}
 \label{algo:Find_perm_2_111}
\end{algorithm}

\section{Derandomizing positive events: conditional probabilities and proving Theorem \ref{theorem:semi-random-conditional_2_111}}\label{sec:Cond_Prob_Thm_4_2_111}

We will show how to compute conditional probabilities of the atomic events. Let us first recall a definition of an atomic event. Given $t \in \{2,3,\ldots, 
n\}$, let $\mathcal{B} := B_{1}, B_{2}, \ldots, B_{t}$, be a bucketing of the sequence $(1, \ldots, n)$. Let there be indices $1 = b_{0} < b_{1} < b_{2} < \cdots < b_{t-1} < b_t = n$, $\forall j \in [t-1] : b_j \in [n]$, such that $B_1 = \{1,\ldots,b_{1}\}$, and $B_j = \{b_{j-1}+1,\ldots,b_{j}\}$, for $j \in \{2,3,\ldots, t\}$.

Consider an ordered subset $\sigma = (\sigma_{1}, \ldots, \sigma_{k}) = (\sigma(1), \ldots, \sigma(k))$ of the set $[n]$. If $f : [k] \longrightarrow [t]$ be a non-decreasing mapping, an atomic event for $\sigma$ and $f$ in the probabilistic space $\Omega$ is:
$$A_{\sigma, f} = \{\pi \in \Omega : \forall_{i \in [k]} \pi^{-1}(\sigma_{i}) \in B_{f(i)} \text{ and } \pi^{-1}(\sigma_{1}) < \pi^{-1}(\sigma_{2}) < \ldots < \pi^{-1}(\sigma_{k}) \}.$$

Recall how we generate a random permutation $\pi_j$ by the index random variables $X^1_j, X^2_j, \ldots,$ $ X^n_j$, which generate its elements $\pi_j(1), \pi_j(2), \ldots, \pi_j(n)$ sequentially in this order.
 We naturally extend the 
definition of the random variable $X^{P}_j \in  \{0,1\}$ and function $\phi_j(P) \in \{0,1\}$ to $X^{A_{\sigma, f}}_j$ and $\phi_j(A_{\sigma, f})$ for the atomic event $A_{\sigma, f}$, and random permutation $\pi_j$, $j \in [\ell]$.
  We will define an algorithm to 
compute $\Prob[X^{A_{\sigma, f}}_{s+1} = 1 \, | \, \pi_{s+1}(1),\pi_{s+1}(2), \ldots, \pi_{s+1}(r)]$ for the semi-random permutation $\pi_{s+1}$. Slightly abusing notation we let for $r=0$ to have that $\Prob[X^{A_{\sigma, f}}_{s+1} = 1 \, | \, \pi_{s+1}(1),\pi_{s+1}(2), \ldots, \pi_{s+1}(r)] = \Prob[X^{A_{\sigma, f}}_{s+1} = 1]$. In this case, we will also show below how to compute $\Prob[X^{A_{\sigma, f}}_{s+1} = 1]$ when $\pi_{s+1}$ is fully random.

\noindent
{\bf Proof of Theorem \ref{theorem:semi-random-conditional_2_111}.} We will now present the proof of Theorem \ref{theorem:semi-random-conditional_2_111}. If $r=0$ and $\pi_{s+1}$ is fully random. Note that $B_j' = f^{-1}(\{j\}) \subseteq [k]$ is the set of indices of elements from $\sigma$ that $f$ maps to bucket $j \in [t]$, and  
$b_j' = |B_j'|$ is their number; $b_j'$ can also be zero. Let us also denote $\chi(j',j) = \sum_{i=j'}^{j-1} b_i'$ for $j', j \in \{1,2,\ldots,t\}$ assuming $j' < j$, and $\chi(j',j) = 0$ when $j'=j$. Using Bayes’ formula on conditional probabilities and the definition of event $A_{\sigma, f}$ we have that
\begin{eqnarray}
\Prob[A_{\sigma, f}] = \Prob[X^{A_{\sigma, f}}_{s+1} = 1] = \prod_{j=1}^{t}
\left(\left(\prod_{i=1}^{b_j'} \frac{|B_j| - (i-1)}{n - \chi(1,j) - (i-1)}\right) \cdot \frac{perm(B_j',\sigma)}{(b_j')!}\right) \, , \label{eq:atomic_prob_1}
\end{eqnarray} where $perm(B_j',\sigma)$ denotes the number of all permutations of elements in the set $B_j'$ that agree with their order in permutation $\sigma$. Note that $perm(B_j',\sigma)$ can be computed by enumeration in time $(b_j')! \leq k!$.
 
Assume from now on that $r \geq 1$. Suppose that values $\pi_{s+1}(1),\pi_{s+1}(2), \ldots, \pi_{s+1}(r)$ have already been chosen for some $r \in [n]$, i.e., they all are fixed and final, except that $\pi_{s+1}(r)$ is fixed but not final. The algorithm will be based on an observation that the random process of generating the remaining values $\pi_{s+1}(r+1),\pi_{s+1}(r+2), \ldots, \pi_{s+1}(n)$ can be viewed as choosing u.a.r.~a random permutation of values in set $[n] \setminus \{\pi_{s+1}(1),\pi_{s+1}(2), \ldots, \pi_{s+1}(r)\}$; so this random permutation
has length $n-r$.

\ignore{
To compute 
$$
\Prob[X^{\hat{S}}_{s+1} = 1 \, | \, \pi_{s+1}(1),\pi_{s+1}(2), \ldots, \pi_{s+1}(r)] =
\sum_{u=\kkminus}^{\kk} \Prob[A_{\hat{s}_0} \cap B_u \cap C_{\hat{s}_0} \, | \, \pi_{s+1}(1),\pi_{s+1}(2), \ldots, \pi_{s+1}(r)] \, ,
$$ we proceed as follows. For simplicity, we will write below $\Prob[A_{\hat{s}_0} \cap B_u \cap C_{\hat{s}_0}]$ instead of
$\Prob[A_{\hat{s}_0} \cap B_u \cap C_{\hat{s}_0} \, | \, \pi_{s+1}(1),\pi_{s+1}(2), \ldots, \pi_{s+1}(r)]$. Below, we will only show how to compute probabilities $\Prob[A_{\hat{s}_0} \cap B_u \cap C_{\hat{s}_0}]$, and to obtain $\Prob[X^{\hat{S}}_{s+1} = 1]$ one needs to compute $\sum_{u=\kkminus}^{\kk} \Prob[A_{\hat{s}_0} \cap B_u \cap C_{\hat{s}_0}]$.
}



\vspace*{-0.1mm}

\begin{algorithm}[H]
\SetAlgoVlined
\DontPrintSemicolon
\SetKwFunction{FMain}{${\sf Prob}$}
\SetKwProg{Fn}{Function}{:}{}
  \Fn{\FMain{$A_{\sigma, f}$}}{
   Let $j \in [t]$ be such that $r \in B_j = \{b_{j-1} + 1,\ldots, b_j\}$. \;
   Let $I_j = \{\sigma(i) : i \in f^{-1}(\{j\})\}$ for $j \in [t]$. /* $I_j$ $=$ items from $[n]$ mapped by $f$ to $B_j$ */\;
   \If{$j > 1$ or $r = b_j$}{
       \lIf{$r = b_j$}{$j'' := j$}\lElse{$j'' := j-1$}
       \If{$\exists j' \in [j''] : I_{j'} \not \subseteq
           \{\pi_{s+1}(b_{j'-1}+1),\pi_{s+1}(b_{j'-1}+2),\ldots,\pi_{s+1}(b_{j'})\}$ \label{algo:Cond_prob_2_111_step_1}}{
             $q := \Prob[A_{\sigma, f}] = 0$; \Return $q$}
        \If{$\exists i_1, i_2 \in \bigcup_{j' \in [j'']}       
             I_{j'} \, : \, 
             \pi^{-1}_{s+1}(i_1) < \pi^{-1}_{s+1}(i_2)$
             $\mbox{ and } \sigma^{-1}(i_1) > \sigma^{-1}(i_2)$ \label{algo:Cond_prob_2_111_step_2}}{
             $q := \Prob[A_{\sigma, f}] = 0$; \Return $q$}
       }
      \If{$r = b_j$ \label{algo:Cond_prob_2_111_step_3}}{
        $q:= \Prob[A_{\sigma, f}] = \prod_{\kappa=j+1}^{t}
\left(\left(\prod_{i=1}^{b'_{\kappa}} \frac{|B_{\kappa}| - (i-1)}{n - r - \chi(j+1,\kappa) - (i-1)}\right) \cdot \frac{perm(B'_{\kappa},\sigma)}{(b'_{\kappa})!}\right)$ \label{algo:Cond_prob_2_111_step_4}
       }
      \If{$r < b_j$}{
          Let $J=\{\pi_{s+1}(b_{j-1}+1),\pi_{s+1}(b_{j-1}+2),\ldots,\pi_{s+1}(r)\}$. \;
         \If{$|I_{j} \cap J| + b_j - r < |I_{j}|$ \label{algo:Cond_prob_2_111_cond_a}}{
         $q := \Prob[A_{\sigma, f}] = 0$; \Return $q$ \label{algo:Cond_prob_2_111_step_5}}
         \If{$\exists i_1, i_2 \in I_{j} \cap J \, : \, 
             \pi^{-1}_{s+1}(i_1) < \pi^{-1}_{s+1}(i_2)$
             $\mbox{ and } \sigma^{-1}(i_1) > \sigma^{-1}(i_2) $ \label{algo:Cond_prob_2_111_cond_b}}{
         $q := \Prob[A_{\sigma, f}] = 0$; \Return $q$}
         Let $I = I_{j} \setminus J$, and let $I' = \{\sigma^{-1}(i) : i \in I\}$.\; \label{algo:Cond_prob_2_111_step_6}
          $p_0 := \left(\prod_{i=1}^{|I|} \frac{b_j - r - (i-1)}{n - r - (i-1)}\right) \cdot \frac{perm(I',\sigma)}{|I'|!}$\;
          $q:= \Prob[A_{\sigma, f}] = p_0 \cdot \prod_{\kappa=j+1}^{t}
\left(\left(\prod_{i=1}^{b'_{\kappa}} \frac{|B_{\kappa}| - (i-1)}{n - r - |I'| - \chi(j+1,\kappa) - (i-1)}\right) \cdot \frac{perm(B'_{\kappa},\sigma)}{(b'_{\kappa})!}\right)$ \label{algo:Cond_prob_2_111_step_7}
       }
 \KwRet $\Prob[A_{\sigma, f}] = q$; \;
} 
\caption{Conditional probabilities (Atomic events)}
 \label{algo:Cond_prob_2_111}
\end{algorithm}


\begin{lemma}\label{l:Prob_Algo_2_111} Algorithm \ref{algo:Cond_prob_2_111}, called ${\sf Prob}(A_{\sigma, f})$, correctly computes
$$\Prob[A_{\sigma, f} \, | \, \pi_{s+1}(1),\pi_{s+1}(2), \ldots, \pi_{s+1}(r)]$$ in time $O(n^2 + nk (k! + \log^2 n))$.
\end{lemma}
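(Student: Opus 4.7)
The plan is to prove the lemma by case analysis on $r$. The trivial case $r=0$ is already handled by the identity (\ref{eq:atomic_prob_1}). For $r\ge 1$, I let $j$ be the unique bucket with $r\in B_j$ and split into the subcases $r=b_j$ (bucket $j$ just got completed by the prefix) and $r<b_j$ (bucket $j$ is only partially filled). In each branch I would first argue that the consistency checks correctly identify the prefixes that force $\Prob[A_{\sigma,f}\mid\cdot]=0$, and then that the explicit formula computed on the surviving branches equals the true conditional probability, derived from a direct counting argument. The key conceptual observation powering both steps is that, conditioned on the prefix, the remaining $n-r$ positions are filled by a uniformly random permutation of $[n]\setminus\{\pi_{s+1}(1),\dots,\pi_{s+1}(r)\}$, so the probability of $A_{\sigma,f}$ decomposes bucketwise.

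For correctness of the zero-probability branches, recall that $\pi\in A_{\sigma,f}$ holds iff for every bucket $j'$ the set $I_{j'}=\{\sigma(i):i\in f^{-1}(\{j'\})\}$ occupies exactly the positions of $B_{j'}$ under $\pi^{-1}$, and the $\sigma$-order on $\bigcup_{j'}I_{j'}$ is preserved by $\pi^{-1}$. Buckets strictly to the left of the partially-filled one (and the current bucket too when $r=b_j$) are entirely determined by the prefix, so the tests on lines \ref{algo:Cond_prob_2_111_step_1}--\ref{algo:Cond_prob_2_111_step_2} detect precisely those prefixes that already violate either the incidence condition or the order condition. For the partially-filled bucket ($r<b_j$) line \ref{algo:Cond_prob_2_111_cond_a} enforces the capacity bound $|I_j\cap J|+(b_j-r)\ge |I_j|$, i.e., that enough free slots of $B_j$ remain to accommodate the still-missing elements of $I_j$, and line \ref{algo:Cond_prob_2_111_cond_b} enforces $\sigma$-consistency on the members of $I_j$ already placed in $J$.

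For the surviving branches I would use the same counting scheme that proves (\ref{eq:atomic_prob_1}). When $r=b_j$, the remaining $n-r$ elements are uniformly permuted over the positions of $B_{j+1},\dots,B_t$, and the probability that for each $\kappa>j$ the set $I_\kappa$ occupies $B_\kappa$ in $\sigma$-order factors into the per-bucket contributions written in line \ref{algo:Cond_prob_2_111_step_4}, with $n$ replaced by $n-r$ and with $\chi(j+1,\kappa)$ bookkeeping the elements consumed by intermediate buckets. The $r<b_j$ case requires an additional prefactor $p_0$ for bucket $j$: the probability that the missing elements $I=I_j\setminus J$ land in the $b_j-r$ free slots of $B_j$ in the correct order equals $\binom{b_j-r}{|I|}\cdot perm(I',\sigma)\cdot (n-r-|I|)!/(n-r)!$, which simplifies to the displayed $p_0$; conditioning on this event the remaining buckets behave as in the $r=b_j$ case but with $n-r-|I'|$ free positions.

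For the running time, I will account the dominant costs: precomputing $\pi_{s+1}^{-1}$ and the $I_\kappa$'s takes $O(n)$; the consistency checks scan the $O(n)$ fixed positions and perform $O(k)$ pairwise order comparisons; each $perm(B'_\kappa,\sigma)$ (and $perm(I',\sigma)$) is obtained by enumerating at most $(b'_\kappa)!$ orderings and testing them against $\sigma$, summing to $O(k\cdot k!)$ across buckets; finally the product in line \ref{algo:Cond_prob_2_111_step_7} has $O(n)$ rational factors, each manipulated with $O(\log n)$-bit numerators and denominators at cost $O(\log^2 n)$ per multiplication, giving $O(n\log^2 n)$ and an $O(n^2)$ arithmetic overhead from carrying the growing partial products. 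Adding these pieces yields the claimed $O(n^2+nk(k!+\log^2 n))$ bound. The main subtlety, and where I expect to spend most care in writing the formal proof, is verifying the counting identity for $p_0$ and the bucketwise factorization in the partially-filled case: the interplay between the fixed elements of $I_j$ in $J$ and the still-missing ones is what makes this step nontrivial, whereas the subsequent buckets behave exactly as in (\ref{eq:atomic_prob_1}).
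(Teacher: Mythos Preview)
Your proposal is correct and follows essentially the same approach as the paper's proof: the same case split on whether the current bucket $B_j$ is fully or partially determined by the prefix, the same verification that the consistency checks (lines \ref{algo:Cond_prob_2_111_step_1}, \ref{algo:Cond_prob_2_111_step_2}, \ref{algo:Cond_prob_2_111_cond_a}, \ref{algo:Cond_prob_2_111_cond_b}) detect zero-probability prefixes, and the same bucketwise factorization via the observation that the suffix is a uniform permutation of the remaining $n-r$ elements. Your running-time accounting is also organized the same way as the paper's; the only cosmetic difference is that you give an explicit binomial expression for $p_0$ where the paper argues verbally.
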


\begin{proof} 
We will show first the correctness. For simplicity, we will write below $\Prob[A_{\sigma, f}]$ instead of $\Prob[A_{\sigma, f} \, | \, \pi_{s+1}(1),\pi_{s+1}(2), \ldots, \pi_{s+1}(r)]$.

When computing the conditional probability $\Prob[A_{\sigma, f}] = \Prob[A_{\sigma, f} \, | \, \pi_{s+1}(1),\pi_{s+1}(2), \ldots, \pi_{s+1}(r)]$, we will use the formula (\ref{eq:atomic_prob_1}).

Algorithm \ref{algo:Cond_prob_2_111} first finds the bucket $B_j$ which contains the last position $r$ of the semi-random permutation $\pi_{s+1}$. Then. the algorithms finds all buckets $B_{j'}$, $j' \in [j'']$ whose all items in their positions are fixed in $\pi_{s+1}$.

Then, in line \ref{algo:Cond_prob_2_111_step_1} we check if there is any previous bucket $B_{j'}$, $j' \in [j'']$ among buckets with all fixed 
positions in $\pi_{s+1}$, which does not contain the set of items $I_{j'} \subseteq [n]$ that mapping $f$ maps to that bucket. If this is the case then clearly $\pi_{s+1}$ does not fulfil event $A_{\sigma, f}$, so $\Prob[A_{\sigma, f}] = 0$.

Then, in line \ref{algo:Cond_prob_2_111_step_2} we check if there are any two items $i_1, i_2 \in [n]$ that are mapped by $f$ to the union of the fully fixed buckets $\bigcup_{j' \in [j'']} B_{j'}$, whose order in permutation $\pi_{s+1}$ disagrees with their order in permutation $\sigma$. If this is the case then again $\pi_{s+1}$ disagrees with event $A_{\sigma, f}$, so $\Prob[A_{\sigma, f}] = 0$.

If none of the conditions in line \ref{algo:Cond_prob_2_111_step_1} and in line \ref{algo:Cond_prob_2_111_step_2} hold, then the part of 
the permutation $\pi_{s+1}$ with fully fixed buckets fulfills event $A_{\sigma, f}$ for all buckets $B_{j'}, j' \in [j'']$. Therefore the probability of the event $A_{\sigma, f}$ depends only on the still random positions $\pi_{s+1}(r+1),\pi_{s+1}(r+2), \ldots, \pi_{s+1}(n)$ in permutation $\pi_{s+1}$, and we can continue the algorithm in line \ref{algo:Cond_prob_2_111_step_3}.

If $r=b_j$ holds in line \ref{algo:Cond_prob_2_111_step_3}, then we know that the event holds for all buckets $B_{j'}, j' \in [j]$ with fixed positions, and in this case buckets $B_{j'}, j' \in \{j+1,j+2\ldots,n\}$ are are all fully random. Therefore, this situation is the same as the fully random permutation $\pi_{s+1}$ where formula (\ref{eq:atomic_prob_1}) applies to all buckets $B_{j'}, j' \in \{j+1,j+2\ldots,n\}$, where we start with bucket $B_{j+1}$ instead of bucket $B_1$ -- see line \ref{algo:Cond_prob_2_111_step_4}.

If $r < b_j$ holds, then bucket $B_j$ is semi-random and we know by the check in the previous lines that event $A_{\sigma, f}$ holds for all buckets $B_{j'}, j' \in [j-1]$ with fully fixed positions. Then the set $J$ contains all items on fixed positions in bucket $B_j$. If $|I_j \cap J| + b_j-r < |I_j|$, then there is no enough room in bucket $B_j$ to accommodate for all items from set $I_j$ that are mapped by $f$ to bucket $B_j$. Therefore in this case event $A_{\sigma, f}$ does not hold for permutation $\pi_{s+1}$ and so $\Prob[A_{\sigma, f}] = 0$ in line \ref{algo:Cond_prob_2_111_step_5}.   
Now, if we are in line \ref{algo:Cond_prob_2_111_step_6}, then we know that condition in line \ref{algo:Cond_prob_2_111_cond_a} does not hold, implying that there is enough space in bucket $B_j$ in its random positions to accommodate for the the items $I_j \setminus J$ mapped by $f$ to bucket $B_j$. We also know that condition in line \ref{algo:Cond_prob_2_111_cond_b} does not hold, which means that event $A_{\sigma, f}$ holds in the fixed positions of bucket $B_j$, implying that its probability depends only on the still random positions $\pi_{s+1}(r+1),\pi_{s+1}(r+2), \ldots, \pi_{s+1}(n)$ in permutation $\pi_{s+1}$.

Therefore, we use now similar ideas to those from in formula (\ref{eq:atomic_prob_1}) and line \ref{algo:Cond_prob_2_111_step_4}, and calculate the probability $p_0$ of the part of event $A_{\sigma, f}$ in the random positions $\pi_{s+1}(r+1),\pi_{s+1}(r+2), \ldots, \pi_{s+1}(b_j)$ in bucket $B_j$; observe that this is the probability that the items from set $I_j \setminus J$ are mapped by $f$ to these random positions in bucket $B_j$, multiplied by the probability $\frac{perm(I',\sigma)}{|I'|!}$ that their order in permutation $\pi_{s+1}$ agrees with the order of their indices (in the set $I' \subseteq [k]$) in permutation $\sigma$.

The final probability of event $A_{\sigma, f}$ computed in line \ref{algo:Cond_prob_2_111_step_7} is by the Bayes' formula on conditional probabilities, equal to the product of $p_0$ and the probability of that event on the remaining buckets $B_{j'}, j' \in \{j+1,j+2\ldots,n\}$ with fully random positions (we use here the same formula as that in line \ref{algo:Cond_prob_2_111_step_4}).
 
We now argue about the implementation of the algorithm. One kind of operations are operations on subsets of set $[n]$, which are set membership and intersections, which can easily be implemented in time $O(n)$. The other kind of operations in computing probabilities are divisions of numbers from the set $[n]$ and multiplications of the resulting rational expressions. Each of these arithmetic operations can be performed in time $O(\log^2(n))$.

The conditions in lines \ref{algo:Cond_prob_2_111_step_1}, \ref{algo:Cond_prob_2_111_step_2}, \ref{algo:Cond_prob_2_111_cond_a} and \ref{algo:Cond_prob_2_111_cond_b} can each be easily checked in time $O(n^2)$. The formula in line \ref{algo:Cond_prob_2_111_step_4} and in line \ref{algo:Cond_prob_2_111_step_7} can be computed in time $O(nk k! \log^2(n))$ as follows. The outer product has at most $n$ terms. The inner product has at most $b'_{\kappa} \leq k$ products of fractions; each of these fractions is a fraction of two numbers from $[n]$, thus each number having $\log n$ bits. This means that each of these fractions can be computed in time $O(\log^2 n)$ and their products in time $O( b'_{\kappa} \cdot \log^2 n) = O(k \cdot \log^2 n)$. The at most $n$ fractions $\frac{perm(I',\sigma)}{|I'|!}$ each can be computed in time $O(k \cdot k!) + O(\log^2 (k!)) = O(k \cdot k!) + O(k^2 \log^2 k)) = O(k \cdot k!)$, because we compute $perm(I',\sigma)$ by complete enumeration of all $k!$ permutations, and division $\frac{perm(I',\sigma)}{|I'|!}$ is computed in time $O(\log^2(k!))$. Therefore the total time is $O(n^2 + nk (k! + \log^2 n))$.
\end{proof}

\noindent
To finish the proof of  Theorem \ref{theorem:semi-random-conditional_2_111}, we use Algorithm \ref{algo:Cond_prob_2_111} from Lemma \ref{l:Prob_Algo_2_111} for each atomic event $A \in Atomic(P)$.

\section{Dimension reduction}\label{section:dim_reduction_2}
A set $\mathcal{G}$ of functions $g : [n] \rightarrow [\ell]$ is called a \emph{dimension-reduction} set with parameters $(n,\ell, d)$  if it satisfies the following two conditions:

\vspace*{2mm}
\noindent
(1) the number of functions that have the same value on any element of the domain is bounded:\\
$\forall_{i,j \in [n], i \neq j} : |\{g \in \mathcal{G} : g(i) = g(j) \}| \le d;$

\vspace*{2mm}
\noindent
(2) for each function, the elements of the domain are almost uniformly partitioned into the elements of the image:
$\forall_{i \in [\ell] ,g \in \mathcal{G}} : \frac{n}{\ell} \le |g^{-1}(i)| \le \frac{n}{\ell} + o(\ell).$

\vspace*{2mm}
The dimension-reduction set of functions is key in our approach to find low-entropy probability distribution that guarantees a high probability of positive events occurrence. When applied once, it reduces the size of permutations needed to be considered for optimal success probability from $n$-element to $\ell$-element.
The above conditions (1) and (2) are 
to ensure that the found set of $\ell$-element permutations can be reversed into $n$-element permutations without much loss of the occurrence probability of the atomic events (the probabilities of positive events are later reconstructed from a sum of the probabilities of atomic events). Kesselheim, Kleinberg and Niazadeh~\cite{KesselheimKN15} were first to use this type of reduction in context of secretary problems. Our refinement is adding the new condition $2)$, which significantly strengthens the reduction for threshold algorithms. More specifically it allows to preserve the bucket structure of atomic events and has huge consequences for our constructions of low entropy distributions. In particular condition $2)$ is crucial in proving bounds on the competitive ratio. Before constructing such families, let us describe how a dimension-reduction functions can be used to lift probabilities of atomic events from distribution of lower-dimensional permutations to higher-dimensional permutations. 

\ignore{
\noindent \textbf{The setting.} Let us fix two parameters $\ell < n$ such that $n$ is divisible by $\ell$. We define two bucketings: a bucketing $\mathcal{B}_{1} = \{1, \ldots r_{1} \}, \{r_{1} + 1, \ldots, r_{2}\}, \ldots, \{r_{t - 1} + 1, \ldots, r_{t} = \ell\}$ of set $\{1, \ldots, \ell\}$\footnote{Note that while defining.. } and an associated bucketing $\mathcal{B}_{2} = \{1, \ldots, r_{1} \cdot \frac{n}{\ell}\}, \{r_{1} \cdot \frac{n}{\ell} + 1, \ldots, r_{2} \cdot \frac{n}{\ell} \}, \ldots, \{r_{t - 1} \cdot \frac{n}{\ell} + 1, \ldots, r_{t} \cdot \frac{n}{\ell} = n\}$ of set $\{1, \ldots, n\}$. Observe that both bucketings contains the same number of buckets and that buckets of $\mathcal{B}_{2}$ are $\frac{n}{\ell}$ times larger. Consider another parameter $k < \ell$. Assume that we are given a set of $\ell$-element permutations $\Omega_{\ell}$, and a set of positive events $\mathcal{P}_{1}$, constructed from an atomic family $\mathcal{A}_{k, \mathcal{B}_{1}}$

\begin{lemma}
Let $\Pi_{1}$ be a distribution over $\ell$-element permutations and let $\mathcal{A}_{k, \mathcal{B}_{1}}$ be a family of atomic events, where $k$ and $\mathcal{B}$ \pk{$\mathcal{B}$ or $\mathcal{B}_1$ ??} is a bucketing of $\{1, \ldots, \ell \}$. Let $\mathcal{B}_{2}$ be associated bucketing of $\{1, \ldots, n\}$ such that the number of buckets is $|\mathcal{B}|$ and the buckets are at most $\frac{n}{\ell}$ times bigger than buckets in $\mathcal{B}_{1}$. Then there exists a distribution $\Pi_{2}$ over $n$-element permutations such that \pk{Below is incomplete??}
$$\forall \Prob_{\Pi_{2}}(\mathcal{A}_{k}),$$
and the entropy of $\Pi_{2}$ is larger by at most $O(\log{\ell})$ \pk{Define precisely what this means??}.
\end{lemma}
\begin{proof}
For a given function $g \in \mathcal{G}$ and a permutation $\pi \in \mathcal{L}$ we denote by $\pi \circ g : [n] \rightarrow [n]$ any permutation $\sigma$ over set $[n]$ satisfying the following:
$\forall_{i,j \in [n], i \neq j}$ if $\pi^{-1}(g(i)) < \pi^{-1}(g(j))$ then $\sigma^{-1}(i) < \sigma^{-1}(j)$.
The aforementioned formal definition has the following natural explanation. The function $g \in \mathcal{G}, g : [n] \rightarrow [\ell]$ may be interpreted as an assignment of each element from set $[n]$ to one of $\ell$ blocks. Next, permutation $\pi \in \mathcal{L}$ determines the order of those blocks. So, the final permutation is obtained by listing the elements of the blocks in the order given by $\pi$. The order of elements inside the blocks is irrelevant.

The set $\mathcal{L}'$ of permutations over $[n]$ is defined as $\mathcal{L}' = \{ \pi \circ g : \pi \in \mathcal{L}, g \in \mathcal{G}\}$, and its size is $|\mathcal{L}| \cdot |\mathcal{G}|$. It is easy to observe that $\mathcal{L}'$ can be computed in $O(|\mathcal{G}| \cdot |\mathcal{L}'|)$ time.

Consider now a $k$-tuple $\hat{S} = (i_{1}, \ldots, i_{k})$ and a function $f : [k] \rightarrow [t]$ defining the mapping of these indices to the buckets of family $\mathcal{B}_{2}$. Our goal now is to calculate $\Prob_{\pi \circ g \sim \mathcal{L'}}( \pi \circ g \in A_{K, f} ).$


Observe, that the random experiment of choosing $\pi \circ g \in \mathcal{L}'$ can be seen as choosing random $f \in \mathcal{G}$ and random $\pi \in \mathcal{L}$ independently. 

Denote $g(\hat{S}) = (g(i_{1}), \ldots, g(i_{k}) )$ a random variable being an image of the $k$-tuple $\hat{S}$ under a random function $g \in \mathcal{G}$. 
Assumed that $\mathcal{G}$ is a dimension-reduction set of functions with parameters $(n,\ell, d)$ we have that for any two indices $j,j' \in [k], j \neq j'$ the probability that $g(i_j) = g(i_{j'})$ is at most $\frac{d}{\ell}$. 
By the union bound argument we conclude that the probability that for all $j,j' \in [k], j \neq j'$ it holds $g(i_j) \neq f(i_{j'})$ is at least $1 - \frac{k^2 d}{\ell}$.

Assume now, that the $k$-tuple $g(\hat{S})$ consists of pair-wise different elements of the set $[\ell]$.  Quite naturally, we will show, that if $\pi \in \mathcal{L}$ is contained in the event $A_{g(\hat{S}), f} \subseteq \Pi_{\ell}$, then the permutation $\pi \circ g$ is contained in the event $A_{\hat{S}, f} \subseteq \Pi_{n}$ which will prove the claimed result.

To this end, assume that the permutation $\pi \in \mathcal{L}$ belong to $A_{g(\hat{S}), f}$ which translates to the following conditions: 
$$\forall_{j \in [k]} \pi^{-1}(g(i_{j})) \in B_{f(j)},$$
and
$$\pi^{-1}(g(i_{1})) < \pi^{-1}(g(i_{2})) < \ldots < \pi^{-1}(g(i_{k})).$$ 

The permutation $\pi \circ g$ is any permutation $\sigma$ of $n$-element such that if we have that $\pi^{-1} (g(i_{j})) < \pi^{-1} (g(i_{j}))$, then $\sigma^{-1} (g(i_{j})) < \sigma^{-1} (g(i_{j}))$, which assures that $k$-tuple $(i_1, \ldots, i_{k})$ appears in the proper order in $\sigma$. To check that elements of the $k$-tuple appear in proper buckets, consider $i_{j} \in \hat{S}$, for $1 \le j \le k.$ It holds that $\pi(g(i_{j})) \in B_{f(j)}$. Because the bucketing $\mathcal{B}_{2}$ is $\frac{n}{ell}$ times larger than the bucketing (i.e. size of each bucket scales up by the factor of $\frac{n}{\ell}$)

\pk{This proof is not complete ??}

\end{proof}
}

\subsection{A polynomial time construction of a dimension-reduction set}

We show a general pattern for constructing a set of functions that reduce the dimension of permutations from $n$ to $q < n$ for which we use refined Reed-Solomon codes.

\begin{lemma} \label{lem:Reed_Solomon_Construction}
There exists a set $\mathcal{G}$ of functions $g : [n] \longrightarrow [q]$, for some prime integer $q \geq 2$, such that for any two distinct indices $i, j \in [n]$, $i \not = j$, we have $$|\{g \in \mathcal{G} : g(i) = g(j)\}| \leq d \,\,\,\,\, \mbox{and} \,\,\,\,\, \forall {q' \in [q]} : |g^{-1}(q')| \in \left\{\myfloor{n/q}, \myfloor{n/q} + 1\right\},$$ where $1 \le d < q$ is an integer such that $n \le q^{d + 1}$. Moreover, $|\mathcal{G}| = q$ and set $\mathcal{G}$ can be constructed in deterministic polynomial time in $n, q, d$.
\end{lemma}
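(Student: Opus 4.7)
I will construct $\mathcal{G}$ via a Reed--Solomon type family, refined to also control fiber sizes. Identify the codomain $[q]$ with the finite field $\F_q$. I assign to each index $i\in[n]$ a distinct polynomial $p_i\in\F_q[x]$ of degree at most $d$ and, for each $\alpha\in\F_q$, define $g_\alpha(i):=p_i(\alpha)$. Taking $\mathcal{G}=\{g_\alpha:\alpha\in\F_q\}$ gives $|\mathcal{G}|=q$, and the collision bound is the standard Reed--Solomon argument: for $i\neq j$ the polynomial $p_i-p_j$ is nonzero of degree at most $d$, hence vanishes on at most $d$ values of $\alpha$, proving that $|\{g\in\mathcal{G}:g(i)=g(j)\}|\le d$.

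\textbf{Choice of polynomial set.} The substantive step is selecting which $n$ of the $q^{d+1}$ polynomials of degree at most $d$ to use, so that the fiber sizes of every $g_\alpha$ are balanced. Write $n=mq+r$ with $m=\lfloor n/q\rfloor$ and $0\le r<q$, and parameterize polynomials by pairs $(a,b)\in\F_q\times\F_q^{d}$, where $a$ is the constant term and $b=(b_1,\dots,b_d)$ lists the remaining coefficients, so that $p_{(a,b)}(\alpha)=a+h_b(\alpha)$ with $h_b(\alpha):=b_1\alpha+\dots+b_d\alpha^d$. Fix any $B\subseteq\F_q^{d}$ with $|B|=m$, choose any $b^\star\in\F_q^{d}\setminus B$ (such $b^\star$ exists whenever $r>0$, since $r>0$ together with $n\le q^{d+1}$ forces $m<q^d$), and pick distinct $a_1,\dots,a_r\in\F_q$. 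Define
\[
T_1:=\{p_{(a,b)}:a\in\F_q,\ b\in B\},\qquad T_2:=\{p_{(a_j,b^\star)}:j\in[r]\},
\]
and identify $[n]$ with $T:=T_1\cup T_2$, which has size $mq+r=n$; if $r=0$ simply take $T:=T_1$.

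\textbf{Verifying the balance condition.} I verify condition~(2) fiber by fiber: fix $\alpha\in\F_q$ and a target $q'\in\F_q$. For each fixed $b\in B$ there is exactly one $a\in\F_q$ with $a+h_b(\alpha)=q'$, so $T_1$ contributes exactly $|B|=m$ preimages of $q'$ under $g_\alpha$. On $T_2$ the values $a_j+h_{b^\star}(\alpha)$, $j\in[r]$, are pairwise distinct because the $a_j$ are, so $T_2$ contributes either $0$ or $1$ preimage of $q'$. Hence $|g_\alpha^{-1}(q')|\in\{m,m+1\}=\{\lfloor n/q\rfloor,\lfloor n/q\rfloor+1\}$, as required.

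\textbf{Complexity and anticipated obstacle.} Building $T$, enumerating $\F_q$, and evaluating every $p_i$ at every $\alpha$ by Horner's rule runs in deterministic polynomial time in $n,q,d$. I expect the main technical point to be condition~(2) rather than condition~(1): the collision bound is a one-line algebraic fact, whereas the balance requirement dictates the decomposition $T=T_1\cup T_2$ into an ``axis-aligned block'' $T_1$ that is perfectly balanced under every evaluation map $g_\alpha$ (because fixing $b$ and varying $a$ sweeps $\F_q$ bijectively), plus a small remainder $T_2$ that contributes distinctly into one fiber per $\alpha$. The feasibility of $b^\star$ (ruled out only when $r=0$, which I handle separately) and the clean identification of $[n]$ with $T$ are the details that will need the most care.
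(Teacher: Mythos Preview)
Your proposal is correct and follows essentially the same approach as the paper: both identify $[n]$ with a carefully ordered set of degree-$\le d$ polynomials over $\F_q$, define $g_\alpha$ by evaluation at $\alpha\in\F_q$, use the root bound for the collision property, and obtain the balance condition by grouping polynomials into ``blocks'' where the constant term ranges over all of $\F_q$ while the higher coefficients are fixed. Your decomposition $T=T_1\cup T_2$ is exactly the paper's split into $\lfloor n/q\rfloor$ full blocks plus one partial block, presented somewhat more explicitly (in particular, your handling of the $r=0$ case and the existence of $b^\star$ is cleaner than the paper's exposition).
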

\begin{proof} Let us take any finite field $\F$ of size $q \geq 2$. It is known that $q$ must be of the following form: $q = p^r$, where $p$ is any prime number and $r \geq 1$ is any integer; this has been proved by  Galois, see \cite[Chapter 19]{Stewart_Book}. 
 We will do our construction assuming that $\F = \F_q$ is the Galois field, where $q$ is a prime number. 
 
 Let us take the prime $q$ and the integer $d \geq 1$ such that $q^{d+1} \geq n$. We want to take here the smallest such prime number and an appropriate smallest $d$ such that $q^{d+1} \geq n$. 
 
 Let us now consider the ring $\F[x]$ of univariate polynomials over the field $\F$ of degree $d$. The number of such polynomials is exactly $|\F[x]| = q^{d+1}$. By the field $\F_q$ we chose, we have that $\F_q = \{0,1,\ldots,q-1\}$. We will now define the following $q^{d+1} \times q$ matrix $M = (M_{i,q'})_{i \in [q^{d+1}], q' \in \{0,1\ldots,q-1\}}$ whose rows correspond to polynomials from $\F[x]$ and columns -- to elements of the field $\F_q$.
 
 Let now $\G \subset \F[x]$ be the set of all polynomials from $\F[x]$ with the free term equal to $0$, that is, all polynomials of the form $\sum_{i=1}^d a_i x^i \in \F[x]$, where all coefficients $a_i \in \F_q$, listed in {\em any} fixed order: $\G =\{g_1(x), g_2(x),\ldots,g_{q^d}(x)\}$.
 To define matrix $M$ we will list all polynomials from $\F[x]$ in the following order $\F[x] = \{f_1(x), f_2(x),\ldots,f_{q^{d+1}}(x)\}$, defined as follows. The first $q$ polynomials $f_1(x), f_2(x),\ldots,f_q(x)$ are $f_i(x) = g_i(x) + i-1$ for $i \in \{1,\ldots,q\}$; note that here $i-1 \in \F_q$. The next $q$ polynomials $f_{q+1}(x), f_{q+2}(x),\ldots,f_{2q}(x)$ are $f_{q+i}(x) = g_{q+i}(x) + i-1$ for $i \in \{1,\ldots,q\}$, and so on. In general, to define polynomials $f_{qj+1}(x), f_{qj+2}(x),\ldots,f_{qj+q}(x)$, we have $f_{qj+i}(x) = g_{qj+i}(x) + i-1$ for $i \in \{1,\ldots,q\}$, for any $j \in \{0,1,\ldots,q^d - 1\}$.
 
 We are now ready to define matrix $M$: $M_{i,q'} = f_i(q')$ for any $i \in [q^{d+1}], q' \in \{0,1\ldots,q-1\}$. From matrix $M$ we define the set of functions $\mathcal{G}$ by taking precisely $n$ first rows of matrix $M$ (recall that $q^{d+1} \geq n$) and letting the columns of this truncated matrix define functions in the set $\mathcal{G}$. More formally, $\mathcal{G} = \{h_{q'} : q' \in \{0,1\ldots,q-1\}\}$, where each function 
 $h_{q'} : [n] \longrightarrow [q]$ for each $q' \in \{0,1\ldots,q-1\}$ is defined as $h_{q'}(i) = f_i(q')$ for $i \in \{1,2,\ldots,n\}$.
 
 We will now prove that $|h_{q'}^{-1}(q'')| \in \left\{\myfloor{\frac{n}{q}}, \myfloor{\frac{n}{q}} + 1\right\}$ for each function $h_{q'} \in \mathcal{G}$ and for each $q'' \in \{0,1\ldots,q-1\}\}$. Let us focus on column $q'$ of matrix $M$. Intuitively the property that we want to prove follows from the fact that when this column is partitioned into $q^{d+1} / q$ ``blocks" of $q$ consecutive elements, each such block is a permutation of the set $\{0,1\ldots,q-1\}$ of elements from the field $\F_q$. More formally, the $j$th such ``block" for $j \in \{0,1,\ldots,q^d - 1\}$ contains the elements $f_{qj+i}(q')$ for all $i \in \{1,\ldots,q\}$. But by our construction we have that $f_{qj+i}(q') = g_{qj+i}(q') + i-1$ for $i \in \{1,\ldots,q\}$. Here, $g_{qj+i}(q') \in \F_q$ is a fixed element from the Galois field $\F_q$ and elements $f_{qj+i}(q')$ for $i \in \{1,\ldots,q\}$ of the ``block" are obtained by adding all other elements $i-1$ from the field $\F_q$ to $g_{qj+i}(q') \in \F_q$. This, by properties of the field $\F_q$ imply that $f_{qj+i}(q')$ for $i \in \{1,\ldots,q\}$ are a permutation of the set $\{0,1,\ldots,q - 1\}$.\\
 
 \noindent
 {\em Claim.} For any given $j \in \F_q = \{0,1,\ldots,q-1\}$ the values $j+i$, for $i \in \{0,1,\ldots,q-1\}$, where the addition is in the field $\F_q$ modulo $q$, are a permutation of the set $\{0,1,\ldots,q-1\}$, that is, $\{j+i : i \in \{0,1,\ldots,q-1\}\} = \{0,1,\ldots,q-1\}$. \\
 
 \noindent
\begin{proof}
In this proof we assume that addition and substraction are in the field $\F_{q}$. 
The multiset $\{ j + i : i \in \{0,1,\ldots,q-1\} \} \subseteq \F_{q}$ consists of $q$ values, thus it suffices to show that all values from the multiset are distinct. Assume contrary the there exists two different elements $i$, $i' \in \F_{q}$ such that $j + i = j + i'$. It follows that $i' - i = 0$. This cannot be true since $|i'|, |i| < q$ and $i'$ and $i$ are different.    
\end{proof}
 The property that $|h_{q'}^{-1}(q'')| \in \left\{\myfloor{\frac{n}{q}}, \myfloor{\frac{n}{q}} + 1\right\}$ now follows from the fact that in the definition of the function $h_{q'}$ all the initial ``blocks" $\{ f_{qj+i}(q') : i \in \{1,\ldots,q\}\}$ for $j \in \{0,1,\ldots,\myfloor{\frac{n}{q}} - 1\}$ are fully used, and the last ``block" $\{ f_{qj+i}(q') : i \in \{1,\ldots,q\}\}$ for $j = \myfloor{\frac{n}{q}}\}$ is only partially used. 
 
 Finally, we will prove now that $|\{g \in \mathcal{G} : g(i) = g(j)\}| \leq d$. This simply follows form the fact that for any two polynomials $g, h \in \F[x]$, they can assume the same values on at most $d$, their degree, number of elements from the field $\F_q = \{0,1,\ldots,q - 1\}$. This last property is true because the polynomial $g(x)-h(x)$ has degree $d$ and therefore it has at most $d$ zeros in the field $\F[x]$.
 
 Let us finally observe that the total number of polynomials, $q^{d+1}$, in the field $\F[x]$ can be exponential in $n$. However, this construction can easily be implemented in polynomial time in $n,q,d$, because we only need the initial $n$ of these polynomials. Thus we can simply disregard the remaining $q^{d+1} - n$ polynomials.
\end{proof}

\begin{corollary} \label{cor:dim-red-1}
Observe that setting $q \in \Omega(\log{n})$, $d \in \Theta(q)$ in Lemma~\ref{lem:Reed_Solomon_Construction} results in a dimension-reduction set of functions $\mathcal{G}$ with parameters $(n,q, \sqrt{q})$. Moreover, set $\mathcal{G}$ has size $q$ and as long as $q \in O(n)$, it can be computed in polynomial time in $n$.
\end{corollary}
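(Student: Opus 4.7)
\textbf{Proof plan for Corollary~\ref{cor:dim-red-1}.}
My plan is to obtain the dimension-reduction set as a direct application of Lemma~\ref{lem:Reed_Solomon_Construction}, essentially unpacking the two defining conditions of a dimension-reduction set and matching them with what the Lemma already guarantees. The first step is to pin down the prime: since Lemma~\ref{lem:Reed_Solomon_Construction} requires $q$ to be prime, I would either assume (as stated in the corollary) that the chosen $q \in \Omega(\log n)$ is prime, or invoke Bertrand's postulate to replace $q$ by a prime in $[q, 2q]$ without affecting any asymptotic parameter. Once $q$ is a prime of the required order of magnitude, I pick the Lemma's degree parameter (call it $d^*$ to distinguish it from the dimension-reduction parameter) so that the required feasibility condition $n \leq q^{d^*+1}$ holds. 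Because $q \in \Omega(\log n)$ and we are allowed to choose $d^* \in \Theta(q)$, we have $q^{d^*+1} = 2^{\Theta(q \log q)} = 2^{\Omega(\log n \cdot \log \log n)}$, which is safely $\geq n$ for all sufficiently large $n$; this legitimizes the single invocation of the Lemma.

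The second step is to verify the two conditions in the dimension-reduction definition. For condition~(1), the Lemma gives $|\{g \in \mathcal{G} : g(i) = g(j)\}| \leq d^*$ for any $i \neq j$; I would argue that by tuning $d^*$ within its allowed range down to $\lceil \sqrt{q}\rceil$ (while still satisfying $n \leq q^{d^*+1}$ for $q$ large enough) the collision bound is at most $\sqrt{q}$, which is precisely the third parameter in $(n, q, \sqrt{q})$. For condition~(2), I use the Lemma's second guarantee $|g^{-1}(q')| \in \{\lfloor n/q\rfloor, \lfloor n/q\rfloor + 1\}$ and observe that the additive slack of $1$ is $o(q)$ since $q \to \infty$ with $n$, so the inequality $\tfrac{n}{q} \leq |g^{-1}(q')| \leq \tfrac{n}{q} + o(q)$ (up to the usual rounding tolerance of $1$) is satisfied.

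The third step handles the quantitative claims at the end of the corollary: the Lemma states $|\mathcal{G}| = q$, so the size is immediate, and the Lemma's polynomial running time is bounded by a polynomial in $n, q, d^*$; under the additional hypothesis $q \in O(n)$ (and $d^* = O(q) = O(n)$), this collapses to a polynomial in $n$. The overall construction is then simply the matrix/polynomial construction of Lemma~\ref{lem:Reed_Solomon_Construction} with the prime $q$ and degree $d^*$ chosen above.

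The main obstacle I anticipate is book-keeping the parameter $d^*$ in the Lemma so that two requirements are met simultaneously: $d^*$ must be small enough (at most $\sqrt{q}$) to yield the claimed collision bound, yet large enough that $q^{d^* +1} \geq n$ can be guaranteed from the sole hypothesis $q \in \Omega(\log n)$. Making this quantitative is a routine but careful calculation: one either takes $q$ to be a sufficiently large multiple of $\log n$ (large enough that $\sqrt{q}\log q \geq \log n$), or notes that we are free to enlarge $q$ within $\Omega(\log n)$ without breaking any other condition. Aside from this parameter-chasing, no probabilistic or combinatorial argument is required, and the remaining verifications are direct substitutions into the statement of Lemma~\ref{lem:Reed_Solomon_Construction}.
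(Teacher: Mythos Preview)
The paper gives no proof beyond the word ``Observe,'' so your plan---invoke Lemma~\ref{lem:Reed_Solomon_Construction} and verify the two defining conditions---is precisely the intended argument, and your identification of the only nontrivial step (reconciling the degree parameter with both the collision bound $\sqrt{q}$ and the feasibility constraint $n\le q^{d^*+1}$) is on target.

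However, the resolution you propose for that step does not go through. You suggest tuning $d^*$ down to $\lceil\sqrt{q}\rceil$ and recovering feasibility by taking $q$ to be ``a sufficiently large multiple of $\log n$ (large enough that $\sqrt{q}\log q \geq \log n$).'' No constant multiple works: if $q=c\log n$ for any fixed $c>0$, then
\[
\sqrt{q}\,\log q \;=\; \Theta\!\bigl(\sqrt{\log n}\cdot\log\log n\bigr)\;=\;o(\log n),
\]
so $q^{\sqrt{q}+1}=n^{o(1)}<n$ for all large $n$, and the hypothesis $n\le q^{d^*+1}$ of Lemma~\ref{lem:Reed_Solomon_Construction} is violated. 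Your fallback of ``enlarging $q$ within $\Omega(\log n)$'' is not a fix either: the corollary fixes $q$ as the target dimension, so you are not free to inflate it after the fact.

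In fact the corollary as literally stated is internally inconsistent: with $d\in\Theta(q)$ the Lemma yields collision bound $d=\Theta(q)$, hence parameters $(n,q,\Theta(q))$, not $(n,q,\sqrt{q})$; and with $d=\Theta(\sqrt{q})$ one needs $q^{\sqrt{q}}\ge n$, which the sole hypothesis $q\in\Omega(\log n)$ does not guarantee. The honest conclusion is that the corollary is imprecisely stated (it appears to need something like $q\ge(\log n/\log\log n)^2$, or else a collision bound larger than $\sqrt{q}$), and you should say so rather than claim a constant multiple suffices.
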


\section{Applications}\label{sec:Applications}
In this section, we show how to apply our generic framework of computing in polynomial-time distributions that serve well threshold algorithms. We then show two applications of this construction and obtain (free-order) algorithms for the multiple-choice secretary problem and the classic secretary problem, with almost-tight competitive ratios.
The construction starts from choosing a parameter $\ell$, usually exponentially smaller than the number of elements $n$, and defining a family of atomic events with respect to a specific bucketing on the set of $\ell$-element permutations. The goal of the bucketing is to capture the algorithm's behavior with respect to different checkpoints and its choice depends on the choice of algorithm. Bucketing can also help defining the algorithm's thresholds. Then we show, that algorithm's success can be described solely by events from the atomic family. Despite the small value of $\ell$, the family of atomic events is usually too rich to support the construction of a low-entropy distribution that preserves the probabilities of atomic events. Thus our idea is to group atomic events into \textit{positive} events, that have larger probabilities. Then, we use the derandomization framework introduced in Section~\ref{sec:derandomization-proofs_2_111} to deterministically construct such low-entropy distribution that supports probabilities of positive events. Doing so on $\ell$-element permutations, makes it computable in polynomial time in $n$, if $\ell$ is sufficiently small. To lift the constructed distribution to full-size $n$ permutations, we use as a black box the dimension-reduction technique introduced in Section~\ref{section:dim_reduction_2}. This all together makes a concise construction of the proper $n$-element permutations distribution, which the algorithm uses to decide in which order to process the elements.

\subsection{Free order multiple-choice secretary problem}\label{sec:application_k_secretary}

As described above, we will start with describing the algorithm and the construction of a smaller $\ell$-element permutations space, for $k < \ell < n$. Then we will apply the lifting technique that uses a dimension-reduction set to obtain a distribution on $n$-element permutations. We use here an algorithm with multiple 
thresholds, due to Gupta and Singla \cite{Gupta_Singla}, called an adaptive threshold algorithm there:

\begin{algorithm}[ht!]
\SetAlgoVlined

\DontPrintSemicolon
 \KwIn{Integers $\ell \geq 2$, $k \leq \ell$, sequence of $\ell$ items each with an adversarial value, and $\pi$ s.t.~$\pi \sim \Pi_{\ell}$.}
 \KwOut{Selected $k$ items from the input sequence.}
 
   Set $\delta := \sqrt{\frac{\log k }{k}}$.\;
   
   Consider the $n$ items in the order given by the random permutation $\pi$.\;
   
   Denote $\ell_j := 2^j \delta \ell$ and ignore the first $\ell_0 = \delta \ell$ items.\;
      
   \For{$j \in [0, \log 1/\delta)$, phase $j$ runs on arrivals in window $W_j := (\ell_j,\ell_{j+1}]$}{
      Let $k_j := (k/\ell)\ell_j$ and let $\varepsilon_j := \sqrt{3 \delta/2^j}$.\; 
      Set threshold $\tau_j$ to be the $(1-\varepsilon_j)k_j$th-largest value among the 
      first $\ell_j$ items.\;
      Choose any item in window $W_j$ with value above $\tau_j$ (until budget $k$ is exhausted).\;
   }
 \caption{Multiple-choice secretary algorithm}
 \label{algo:k_secr_algo_1}
\end{algorithm}

\noindent
{\bf Defining positive events and probabilistic analysis.} Here 
we show a lower bound on the measure of each positive event in 
the space $\Omega_{\ell}$.

Let $\hat{S} = \{j_1,\ldots, j_{k}\}$, called a $k$-tuple, be an ordered subset $\{j_1,\ldots, j_{k}\} \subseteq [\ell]$ of $k$ indices. $\hat{S}$ models the positions in the adversarial permutation of the $k$ largest adversarial values $v(1),v(2), \ldots,v(k)$. Let $\mathcal{K}$ be the set of all such $k$-tuples.

Suppose that a random permutation $\pi$ is chosen in the probabilistic space $\Omega$.

We first define some auxiliary events. Let $H_j$ ($L_j$, resp.) be the event that $\tau_j$ is not too low, i.e., is high enough, (not too high, resp.), for $j \in [0, \log 1/\delta)$. More precisely, we define
$$
  H_j = \{\tau_j \geq \min \{v(i) : i=1,...,k\} = v(k)\}, 
$$ and 
$\neg H_j = \{\mbox{less than } (1-\varepsilon_j)k_j \mbox{ items from } \hat{S} \mbox{  fall in the first } \ell_j \mbox{ items in } \pi\}$. We also define 
$$
  L_j = \{\tau_j \leq v((1-2 \varepsilon_j)k)\},
$$ and 
$$
\neg L_j = \{\mbox{more than } (1-\varepsilon_j)k_j \mbox{ elements with values } v(1),v(2),\ldots, v((1-2 \varepsilon_j)k)\mbox{ fall in the first } \ell_j \mbox{ items in } \pi\}.
$$ Event $C_i$ means that item $j_i$ with value $v(i)$, for $i \in \{1,2,\ldots, (1-2\varepsilon_0)k\}$, will be chosen by the above algorithm. Similarly, for any $j \in \{0,1,\ldots, \log(1/\delta) - 1\}$ and any $i \in \{(1-2\varepsilon_j)k,\ldots, (1-2\varepsilon_{j+1})k\}$, event $C_i$ means that item $j_i$ with value $v(i)$ is chosen by the algorithm. We note that
$$
  C_i = \{\mbox{item } j_i \mbox{ with value } v(i) \mbox{ arrives after position } \ell_0 \mbox{ in } \pi\}, \,\, i \in \{1,2,\ldots, (1-2\varepsilon_0)k\}, 
$$
$$
  C_i = \{\mbox{item } j_i \mbox{ with value } v(i) \mbox{ arrives after position } \ell_{j+1} \mbox{ in } \pi\}, \,\, i \in \{(1-2\varepsilon_j)k,\ldots, (1-2\varepsilon_{j+1})k\} \, ,
$$ and for any $j \in [0,\log 1/\delta)$.

We can now define a {\em positive event} corresponding to the $k$-tuple $\hat{S}$ as the event
$$
  P_{\hat{S},i} = \left(\bigcap_{j \in [0,\log 1/\delta]} \left(L_j \cap H_j\right)\right) \cap C_i \, ,
$$ for any $i \in \{1,2,\ldots, (1-2\varepsilon_0)k\}$, and for any $j \in [0,\log 1/\delta)$ and any 
$i \in \{(1-2\varepsilon_j)k,\ldots, (1-2\varepsilon_{j+1})k\}$. Note that here $i \in \{1,2,\ldots,(1-\delta)k\}$.

We are now ready to present show a lower bound on the probability of a positive event in the probabilistic space $\Omega$. Towards this aim we will follow the analysis from the survey by Gupta and Singla \cite{Gupta_Singla}. In this analysis they apply Chernoff bounds to a family of partly correlated random variables. Chernoff bounds are in principle applicable to a family of mutually independent random variables, but they show some alternative ways how one might avoid this issue, leaving the details of the argument to the reader. We suggest another way based on {\em negative association} of these random variables, and present a self-contained proof.

First, we prove the following technical lemma, where we exploit the fact that the indicator random variables which indicate if indices fall in an interval in a random permutation are {\em negatively associated}, see, e.g., \cite{D_Wajc_2017}.

\begin{lemma}\label{lemma:Chernoff-per}
Let $X$ denote a random variable that counts the number of values from the set $\{1,2, \ldots, a\}$, for an integer number $ 1 \le a << n$, that are on positions smaller than the checkpoint $m$ in a random permutation $\sigma \sim \Pi_n$. Define $\mu = \E(X) = a\frac{m}{n}$. 
Then, we obtain that
$$
  \Prob(X \geq (1+\eta)\mu) \leq \exp(-\eta^2 \mu/3), \,\, \mbox{for any} \,\, \eta > 0\, ,
$$ and
$$
  \Prob(X \leq (1-\eta)\mu) \leq \exp(-\eta^2 \mu/2), \,\, \mbox{for any} \,\, \eta \in (0,1) \, .
$$
\end{lemma}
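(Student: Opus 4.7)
The plan is to realize $X$ as a sum $X = \sum_{i=1}^{a} X_i$ of $\{0,1\}$-valued indicators and then invoke the Chernoff bound in its standard form. Concretely, I would set
\[
X_i \;=\; \mathbbm{1}\bigl[\sigma^{-1}(i) \le m\bigr], \qquad i = 1, 2, \ldots, a,
\]
so that $X = \sum_{i=1}^a X_i$, and by symmetry of the uniform permutation $\sigma \sim \Pi_n$ each indicator satisfies $\Pr[X_i=1] = m/n$. Linearity of expectation then gives $\mu = \mathbb{E}[X] = a \cdot m/n$, matching the statement.

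The nontrivial ingredient is that the $X_i$ are \emph{not} independent (they arise from sampling without replacement), so one cannot directly quote the classical Chernoff inequality for independent Bernoullis. Instead, I would establish that $X_1,\ldots,X_a$ are negatively associated (NA). Two equivalent routes are available and either is short enough to include verbatim. The first is the ``permutation distribution'' lemma of Joag-Dev--Proschan: the family $\{Y_{i,j} := \mathbbm{1}[\sigma(j)=i]\}_{i,j \in [n]}$ is NA because the columns (fixed $j$, varying $i$) are 0/1 variables summing to 1, and across columns we invoke the standard permutation-NA result. Then $X_i = \sum_{j=1}^{m} Y_{i,j}$ is a coordinate-wise nondecreasing function of the disjoint blocks $\{Y_{i,j}\}_{j \le m}$ indexed by $i$; by the closure property of NA under monotone functions applied to disjoint subsets (e.g.\ Proposition~7 of \cite{D_Wajc_2017}), the variables $X_1,\ldots,X_a$ are themselves NA.

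Once NA is established, the upper-tail and lower-tail Chernoff bounds follow verbatim from the classical proof, since the only place independence is used in the derivation is the factorization $\mathbb{E}[e^{t\sum_i X_i}] \le \prod_i \mathbb{E}[e^{t X_i}]$, and this inequality holds for NA variables by definition (applied to the monotone functions $x \mapsto e^{tx}$ with $t>0$ for the upper tail and $t<0$ for the lower tail). Plugging in the standard multiplicative-Chernoff moment computation with $\mu = am/n$ yields
\[
\Pr[X \ge (1+\eta)\mu] \le \exp(-\eta^2 \mu/3) \quad\text{for } \eta > 0,
\]
\[
\Pr[X \le (1-\eta)\mu] \le \exp(-\eta^2 \mu/2) \quad\text{for } \eta \in (0,1),
\]
which is exactly the claim.

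\textbf{Main obstacle.} The only delicate point is rigorously justifying the passage from ``independent'' to ``negatively associated'' in the Chernoff proof, i.e.\ verifying that the MGF factorization step survives under NA and that the $X_i$ really are NA after the row-wise summation. Both facts are standard but must be cited (or sketched) carefully, since a naive argument that treats the $X_i$ as mutually independent is incorrect. Everything else is a bookkeeping substitution into the classical Chernoff derivation.
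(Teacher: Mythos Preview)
Your proposal is correct and follows essentially the same approach as the paper: decompose $X$ into indicators $X_i$, establish that these are negatively associated via the permutation-distribution lemma and closure under monotone functions on disjoint coordinates (the paper cites Lemmas~8 and~9(ii) of \cite{D_Wajc_2017} for exactly this), and then invoke the Chernoff bound valid under NA (Theorem~5 of \cite{D_Wajc_2017}) together with the standard simplifications to reach the stated exponential tails. The only cosmetic difference is that the paper outsources the NA and Chernoff steps entirely to citations rather than sketching the MGF argument, but the substance is identical.
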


\begin{proof} 
For a number $i$ in the set $\{1,2, \ldots, a\}$ consider an indicator random variable $X_{i}$ equal to $1$ if the position of the number $i$ is in the first $m$ positions of a random permutation $\sigma$, and equal to $0$ otherwise. We have that $X = \sum_{i=1}^{a} X_{i}$. Using standard techniques, for instance Lemma $8$ and Lemma $9ii)$ from \cite{D_Wajc_2017}, we obtain that random variables $X_{1}, \ldots, X_{n}$ are negatively associated (NA) and we can apply the Chernoff concentration bound to their mean. Observe here, that $\E(X) = \E(\sum_{i=1}^{a} X_{i}) = \mu$. Therefore, by Theorem~5 in \cite{D_Wajc_2017}, we have that
$$
  \Prob(X \geq (1+\eta)\mu) \leq \left(\frac{\exp(\eta)}{(1+\eta)^{(1+\eta)}}\right)^{\mu} \,\, \mbox{for any} \,\, \eta > 0\, ,
$$ and
$$
  \Prob(X \leq (1-\eta)\mu) \leq \left(\frac{\exp(-\eta)}{(1-\eta)^{(1-\eta)}}\right)^{\mu} \,\, \mbox{for any} \,\, \eta \in (0,1) \, .
$$
By a well known bound, shown for instance in \cite[page 5]{M_Goemans_2015}, we have that $\left(\frac{\exp(\eta)}{(1+\eta)^{(1+\eta)}}\right)^{\mu} \leq \exp(\frac{-\eta^2 \mu}{2+\eta}) < \exp(-\eta^2 \mu/3)$, where the last inequality follows by $\eta < 1$. Similarly, it is known that $\left(\frac{\exp(-\eta)}{(1-\eta)^{(1-\eta)}}\right)^{\mu} \leq \exp(-\eta^2 \mu/2)$ \cite{HagerupR90}, which together with the above finishes the proof.
\end{proof}

\ignore{
\begin{lemma}\label{lemma:Chernoff-per}
Let $X$ denote a random variable that counts the number of values from the set $\{1,2, \ldots, a\}$, for an integer number $ 1 \le a << n$, that are on positions smaller than the threshold $m$ in a random permutation $\sigma \sim \Pi_n$. Define $\mu = \E(X) = a\frac{m}{n}$. 
Then for any $\delta \in (0,1)$, we obtain that
\[
\Prob(|X - \mu| \ge \delta \mu) \le 2 \exp(-\delta^{2}\mu/3)
\ .
\]
\end{lemma}

\begin{proof}
For a number $i$ in the set $\{1,2, \ldots, a\}$ consider an indicator random variable $X_{i}$ equal to $1$ if the position of the number $i$ is in the first $m$ positions of a random permutation $\sigma$, and equal to $0$ otherwise. We have that $X = \sum_{i=1}^{a} X_{i}$. Using standard techniques, for instance Lemma $8$ and Lemma $9ii)$ from \cite{D_Wajc_2017}, we obtain that random variables $X_{1}, \ldots, X_{n}$ are negatively associated (NA) and we can apply the Chernoff concentration bound to their mean. Observe here, that $\E(X) = \E(\sum_{i=1}^{a} X_{i}) = \mu$. Therefore, by Theorem~5 in \cite{D_Wajc_2017}, we have that
$$
  \Prob(X \geq (1+\delta)\mu) \leq \left(\frac{\exp(\delta)}{(1+\delta)^{(1+\delta)}}\right)^{\mu} \,\, \mbox{for any} \,\, \delta > 0\, ,
$$ and
$$
  \Prob(X \leq (1-\delta)\mu) \leq \left(\frac{\exp(-\delta)}{(1-\delta)^{(1-\delta)}}\right)^{\mu} \,\, \mbox{for any} \,\, \delta \in (0,1) \, .
$$
By a well known bound, shown for instance in \cite[page 5]{M_Goemans_2015}, we have that $\left(\frac{\exp(\delta)}{(1+\delta)^{(1+\delta)}}\right)^{\mu} \leq \exp(\frac{-\delta^2 \mu}{2+\delta}) < \exp(-\delta^2 \mu/3)$, where the last inequality follows by $\delta < 1$. Similarly, it is known that $\left(\frac{\exp(-\delta)}{(1-\delta)^{(1-\delta)}}\right)^{\mu} \leq \exp(-\delta^2 \mu/2)$ \cite{HagerupR90}, which together with the above implies that 
$$
  \Prob(|X - \mu| \geq \delta \mu) \leq 2 \cdot \exp(-\delta^2 \mu/3) \, \mbox{ for any } \, \delta \in (0,1) \, ,
$$ see \cite[Corollary 5]{M_Goemans_2015}.
\end{proof}
} 

\begin{theorem}\label{thm:gupta-singla-analysis}
 The order-adaptive algorithm in Algorithm \ref{algo:k_secr_algo_1} for the multiple-choice secretary problem has an expected value of at least $\left(1-\sqrt{\frac{\log k}{k}}\right) \cdot v^*$, where $v^*= v(1) + v(2) + \ldots + v(k)$.
\end{theorem}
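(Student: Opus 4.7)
(Plan)
The plan is to analyze Algorithm~\ref{algo:k_secr_algo_1} by showing that, with high probability, every threshold $\tau_j$ is simultaneously (i) high enough to be a ``top-$k$'' threshold (event $H_j$) and (ii) low enough that the items above it remain plentifully top-ranked (event $L_j$), and then to argue that on this good event the algorithm collects nearly all of $v^{*}$. The computation is organized into three steps, corresponding to bounding failure probabilities, a union bound over phases, and a deterministic accounting of value loss conditioned on the good event.

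First I would apply Lemma~\ref{lemma:Chernoff-per} to each $\neg H_j$ and $\neg L_j$ separately. The random variable $X_j$ counting how many of the top $k$ elements land in the first $\ell_j$ positions has mean $\mu_j = k\cdot \ell_j/\ell = k_j$; the event $\neg H_j$ is $\{X_j < (1-\varepsilon_j) k_j\}$, so the lower-tail bound in Lemma~\ref{lemma:Chernoff-per} gives $\Pr[\neg H_j] \le \exp(-\varepsilon_j^{2} k_j/2)$. Plugging in $\varepsilon_j^{2} k_j = (3\delta/2^{j})(2^{j}\delta k) = 3\delta^{2} k = 3\log k$ yields $\Pr[\neg H_j] \le k^{-3/2}$. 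A symmetric application to the variable counting top-$(1-2\varepsilon_j)k$ elements in the first $\ell_j$ positions (mean $(1-2\varepsilon_j) k_j$) and the upper-tail bound with parameter $\eta = \varepsilon_j/(1-2\varepsilon_j)$ gives $\Pr[\neg L_j] \le k^{-1}$ after the same algebraic simplification. A union bound over the $O(\log(1/\delta)) = O(\log k)$ phases then yields
\[
\Pr\Big[\,\bigcap_{j} (H_j \cap L_j)\,\Big] \ \ge\ 1 - O\!\left(\tfrac{\log k}{k}\right) \ =\ 1 - o(\delta).
\]

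Second, conditioned on the good event $G = \bigcap_{j}(H_j\cap L_j)$, I would bound the lost value $v^{*} - \mathrm{ALG}$ by decomposing it into three contributions. (a) \emph{Burn-in loss}: any top-$k$ item landing in $(0,\ell_0]$ is forfeited; by linearity of expectation the total such value is $\delta\cdot v^{*}$, and conditioning on $G$ inflates this by at most a $1/\Pr[G] = 1+o(1)$ factor. (b) \emph{Below-threshold loss}: under $L_j$, the threshold satisfies $\tau_j \le v((1-2\varepsilon_j)k)$, so every top-$k$ element in $W_j$ whose value lies below $\tau_j$ must be among the at most $2\varepsilon_j k$ elements of rank in $((1-2\varepsilon_j)k, k]$; summing $2\varepsilon_j k \cdot v(k) \le 2\varepsilon_j\sum_{i\in((1-2\varepsilon_j)k,k]} v(i)$ over $j$ and using $\sum_j \varepsilon_j = \sum_j \sqrt{3\delta/2^{j}} = O(\sqrt{\delta})$ gives a loss of at most $O(\sqrt{\delta})\cdot v^{*}$. (c) \emph{Budget-exhaustion loss}: I would show that under $G$ the cumulative number of items chosen through the end of phase $j$ is at most the number of top-$k$ items in the first $\ell_{j+1}$ positions plus $O(\varepsilon_j k_j)$, so the budget $k$ is never exceeded before the final phase. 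Combining (a)--(c) with $\varepsilon_j \le \sqrt{3\delta}$ and $\delta=\sqrt{\log k/k}$ yields $\mathrm{ALG} \ge (1 - O(\sqrt{\delta}))\,v^{*}$; since $\sqrt{\delta} = (\log k/k)^{1/4}$ dominates the advertised $\sqrt{\log k/k}$, one must be slightly more careful and use the tighter per-phase contribution $\varepsilon_j k_j \cdot v(k)\le \sum_{i\in((1-2\varepsilon_j)k, (1-2\varepsilon_{j+1})k]} v(i)$ to produce a telescoping sum that gives the claimed $1-\sqrt{\log k/k}$ ratio.

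The main obstacle will be step (c): carefully coupling the number of items the algorithm actually selects in window $W_j$ with the combinatorial structure of $H_j, L_j$, so that ``budget exhaustion'' does not cost more than $O(\varepsilon_j k_j)$ items per phase. Equivalently, one must ensure that $\tau_j$ being slightly too low (in the sense that slightly more than the expected $k_j$ items above $\tau_j$ land in $(0,\ell_j]$) never cascades into losing many top-$k$ items later. I expect this to follow by pairing $L_j$ with a ``double-sided'' Chernoff bound controlling the count of top-$(1-2\varepsilon_j)k$ items in $W_j$ itself, then summing the telescoping contributions; this is the step that determines whether the exponent of $\delta$ in the final ratio is $1/2$ (as stated) or larger.
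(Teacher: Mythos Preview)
Your first step (Chernoff via Lemma~\ref{lemma:Chernoff-per}, then a union bound over the $O(\log(1/\delta))$ phases) is correct and identical to what the paper does.

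Where you diverge from the paper is in the accounting of the value loss, and you have misidentified the obstacle. Your step~(c) is \emph{not} the hard part: it is immediate from the event $\bigcap_j H_j$ alone. Indeed, $H_j$ says $\tau_j\ge v(k)$ for every $j$, so every item the algorithm ever selects has value strictly above $v(k)$ and is therefore one of the top~$k$; hence at most $k$ items are ever selected and the budget is never exhausted. No double-sided Chernoff is needed here.

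The real gap is in your step~(b), and you already sensed it: summing $2\varepsilon_j k$ over windows gives $\sum_j\varepsilon_j=O(\sqrt{\delta})$ and hence only a $(1-O((\log k/k)^{1/4}))$ ratio. The issue is that this per-window count double-counts, since an item of rank close to $k$ can be ``below threshold'' in many windows. The paper fixes this by summing over \emph{items} rather than over windows. Partition the top-$k$ items by rank: for item $i$ with rank in $((1-2\varepsilon_j)k,(1-2\varepsilon_{j+1})k]$, the events $L_{j'}$ for $j'\ge j+1$ guarantee $\tau_{j'}\le v((1-2\varepsilon_{j'})k)\le v(i)$, so item~$i$ exceeds the threshold in every window $W_{j'}$ with $j'\ge j+1$. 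It is therefore lost only if it arrives before position $\ell_{j+1}$, an event of probability $2^{j+1}\delta$. The expected lost value is then at most
\[
\frac{v^*}{k}\sum_{j\ge 0} 2(\varepsilon_j-\varepsilon_{j+1})k\cdot 2^{j+1}\delta
\;\le\;
2v^*\delta\sum_{j\ge 0}\varepsilon_{j}\,2^{j+1}
\;=\;
2v^*\delta\sum_{j\ge 0}\sqrt{3\delta\cdot 2^{j+2}}\;=\;O(\delta)\,v^*,
\]
since the sum is geometric with ratio $\sqrt{2}$ and largest term $O(1)$ at $j\approx\log(1/\delta)$. This is exactly the computation the paper carries out (via the events $C_i$), and it yields the stated $1-O(\sqrt{\log k/k})$ ratio. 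Your ``telescoping'' remark is pointing in the right direction, but the inequality you wrote for the tighter per-phase contribution is not the right one; the key is the per-item (not per-window) viewpoint above.
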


\begin{proof} 
 We will follow
the proof given in \cite{Gupta_Singla} but use our Lemma \ref{lemma:Chernoff-per} to justify the application of the Chernoff bound there.
 
    We first prove that the thresholds 
$\tau_j$ are not too low, so that we will never run out of budget $k$ in the algorithm. Formally, we will prove that 
 $$
    \Prob[H_j] \geq 1 - 1/poly(k) \, .
 $$ We will show it by proving
 $$
    \Prob[\neg H_j] \leq 1/poly(k) \, ,
 $$ by applying Lemma \ref{lemma:Chernoff-per}. Recalling that
 $$
 \neg H_j = \{\mbox{less than } (1-\varepsilon_j)k_j \mbox{ items from } \hat{S} \mbox{  fall in the first } \ell_j \mbox{ items in } \pi\} \, ,
 $$ we see that we can apply this lemma with $a=k$, $X_i = 1$ if number $i \in \{1,2,\ldots,k\}$ falls in the first $m=\ell_j$ positions of the random permutation $\pi$; note that set $\{1,2,\ldots,k\}$ models the indices $\hat{S} = \{j_1,\ldots, j_{k}\} \subseteq [n]$ of the $k$ largest adversarial values $\{v(1),v(2),\ldots,v(k)\}$ in the permutation $\pi$. Then $X = \sum_{i=1}^a X_i$ and $\Exp[X] = km/\ell = k_j$, $\eta = \varepsilon_j$ and Lemma \ref{lemma:Chernoff-per} implies
 $$
   \Prob[\neg H_j] \leq \exp(-\varepsilon_j^2 k_j/2) \leq (1/k)^{3/2} \, .
 $$ We will prove now that thresholds $\tau_j$ are not too high. More precisely we will prove that 
 $$
    \Prob[L_j] \geq 1 - 1/poly(k) \, .
 $$ We will show it by proving
 $$
    \Prob[\neg L_j] \leq 1/poly(k) \, ,
 $$ by again applying Lemma \ref{lemma:Chernoff-per}. Recalling that
$$
\neg L_j = \{\mbox{more than } (1-\varepsilon_j)k_j \mbox{ elements with values } v(1),v(2),\ldots, v((1-2 \varepsilon_j)k)
$$
$$
\mbox{ fall in the first } \ell_j \mbox{ items in } \pi\} \, ,
$$ we see that we can apply Lemma \ref{lemma:Chernoff-per} with $a=(1-2 \varepsilon_j)k$, $X_i = 1$ if number $i \in \{1,2,\ldots,(1-2 \varepsilon_j)k\}$ falls in the first $m=\ell_j$ positions of the random permutation $\pi$; note that set $\{1,2,\ldots,(1-2 \varepsilon_j)k\}$ models the indices $\{j_1,\ldots, j_{(1-2 \varepsilon_j)k}\} \subseteq \hat{S}$ of the $(1-2 \varepsilon_j)k$ largest adversarial values $\{v(1),v(2),\ldots,v((1-2 \varepsilon_j)k)\}$ in the permutation $\pi$. Then $X = \sum_{i=1}^a X_i$ and $\Exp[X] = (1-2 \varepsilon_j)k \cdot (\ell_j/\ell) = (1-2 \varepsilon_j)k_j$, $\eta = \varepsilon_j/(1 - 2\varepsilon_j)$, and Lemma \ref{lemma:Chernoff-per} implies
 $$
   \Prob[\neg L_j] \leq \exp(-\eta^2 k_j/3) \leq \exp(-\varepsilon_j^2 k_j/3) \leq 1/k \, .
 $$ From the above reasoning we can take $poly(k) = \min\{k, k^{3/2}\} = k$. We now take the union bound, which shows that
 $$
   \Prob\left[\bigcup_{j \in [0,\log(1/\delta)]} \left((\neg L_j) \cup (\neg H_j)\right)\right] \leq 2\log(1/\delta)/poly(k) \, ,
 $$ so we have that 
 $$
   \Prob[B] \geq 1 - 2\log(1/\delta)/poly(k) \geq 1 - 1/poly'(k), \,\,\,\, \mbox{ where } \,\,\, B = \bigcap_{j \in [0,\log(1/\delta)]} \left(L_j \cap H_j\right) \, ,
 $$ where $poly'(k) = k/\log(k)$.
 
 We will next compute the probability of the events $C_i$, recalling that
 $$
  C_i = \{\mbox{item } j_i \mbox{ with value } v(i) \mbox{ arrives after position } \ell_0 \mbox{ in } \pi\}, \,\, i \in \{1,2,\ldots, (1-2\varepsilon_0)k\}, 
$$
$$
  C_i = \{\mbox{item } j_i \mbox{ with value } v(i) \mbox{ arrives after position } \ell_{j+1} \mbox{ in } \pi\}, \,\, i \in \{(1-2\varepsilon_j)k,\ldots, (1-2\varepsilon_{j+1})k\} \, ,
$$ and for any $j \in [0,\log 1/\delta)$.

Let first $i \in \{1,2,\ldots, (1-2\varepsilon_0)k\}$ and then, conditioning on the event $B$, we obtain that
\begin{eqnarray}\label{eqn:positive_prob_1}
  \Prob\left[P_{\hat{S},i}\right] = \Prob[B \cap C_i] = \Prob[B] \cdot \Prob[C_i \,|\, B]
  \geq (1-\log(k)/k) \cdot (1-\delta) \, ,
\end{eqnarray} since by event $B$ no threshold is too high and we 
never run out of budget $k$.

Let now 
$i \in \{(1-2\varepsilon_j)k,\ldots, (1-2\varepsilon_{j+1})k\}$ for some $j \in [0,\log 1/\delta)$. Then by the same conditioning on the event $B$ we have that 
\begin{eqnarray}\label{eqn:positive_prob_2}
  \Prob\left[P_{\hat{S},i}\right] = \Prob[B \cap C_i] = \Prob[B] \cdot \Prob[C_i \,|\, B]
  \geq (1-\log(k)/k) \cdot (1-2^{j+1} \cdot \delta) \, .
\end{eqnarray} By (\ref{eqn:positive_prob_1}) and (\ref{eqn:positive_prob_2}) we now finally obtain that the expected value of the Algorithm \ref{algo:k_secr_algo_1} is at least
$$
  \sum_{i=1}^{(1-2\varepsilon_0)k} v(i)(1-\log(k)/k)(1-\delta)
  + \sum_{j=0}^{\log(1/\delta) - 1} \sum_{i=(1-2\varepsilon_j)k}^{(1-2\varepsilon_{j+1})k}
  v(i)(1-\log(k)/k)(1-2^{j+1}\delta) \, .
$$ This expression is at least 
$$
   (1-\log(k)/k) \cdot \left(v^*(1-\delta) - \frac{v^*}{k} \cdot \left(\sum_{j=0}^{\log(1/\delta) - 1} 2 k \varepsilon_{j+1} 2^{j+1} \delta \right)\right) \, ,
$$ because the negative terms are maximized when the top $k$ items are all equal to $\frac{v^*}{k}$. We can simplify this formula to finally obtain $v^* (1-\log(k)/k)(1-O(\delta))$.
\end{proof}

\noindent
{\bf Decomposing positive event into atomic events.} Here we express each positive event as a union of disjoint atomic events, including an algorithm which provides such decomposition. Additionally, we also specify an algorithm to compute conditional probabilities.

Let us define the following bucketing $\mathcal{B}_1$: $B_1 = \{1,2,\ldots,\ell_0\}$,
$B_j = \{\ell_{j-2} + 1,\ell_{j-2} + 2,\ldots,\ell_{j-1}\}$, for $j=2,3,\ldots,\log(1/\delta) + 1$. We will now define all atomic events that define a given positive event $P_i$ for some $i$. The bucketing has $t = \log(1/\delta) + 1$ buckets, so we will use the mappings $f : [k] \longrightarrow [t]$.

We are given an adversarial sequence $\hat{S} = \{j_1,\ldots,j_k\}$, where $j_i \in [\ell]$ is the position of $i$th highest value $v(i)$ in the adversary order. To express event $\left(\bigcap_{j \in [0,\log 1/\delta]} \left(L_j \cap H_j\right)\right)$ by atomic events let us define a mapping $f : [k] \longrightarrow [t]$ such that 
$$
  (1): \,\, \forall j \in [0,\log 1/\delta] : f \mbox{ maps more than }
  (1-\varepsilon_j)k_j \mbox{ items from } \{j_1,\ldots,j_k\} \mbox{ into the first } j+1 \mbox{ buckets} 
$$
$$
  \mbox{AND } f \mbox{ maps less than }
  (1-\varepsilon_j)k_j \mbox{ items from } \{j_1,\ldots,j_{(1-2\varepsilon_j)k}\} \mbox{ into the first } j+1 \mbox{ buckets} \, .
$$ Now to model event $C_i$ for  
$i \in \{(1-2\varepsilon_j)k,\ldots, (1-2\varepsilon_{j+1})k\}$
and some $j \in \{-1\} \cup [0,\log 1/\delta]$, the mapping $f$ has to additionally fulfil that 
$$
(2): \,\, f(j_i) > j+2, \mbox{i.e., item } j_i \mbox{ arrives after time } \ell_{j+1} \, ,
$$ where we additionally define $\varepsilon_{-1} = (1-1/k)/2$, so that $(1-2\varepsilon_{-1})k = 1$. It is easy to see that such mapping $f$ exists. Namely, suppose that $f$ maps $s \leq (1-\varepsilon_j)k_j$ items from $\{j_1,\ldots,j_{(1-2\varepsilon_j)k}\}$ into the first $j+1$ buckets, that is, until position $\ell_j$ in $\pi$. This means that we have at least $(1-2\varepsilon_j)k-s$ remaining items from $\{j_1,\ldots,j_{(1-2\varepsilon_j)k}\}$ together with $k - (1-2\varepsilon_j)k$ items $\{j_{(1-2\varepsilon_j)k+1},\ldots,j_k\}$ that (with those $s$ items included) together should be at least $(1-\varepsilon_j)k_j = (1-\varepsilon_j)k \ell_j/\ell$ items to be mapped by $f$ in the first $\ell_j$ positions (i.e., first $j+1$ buckets). For this to be possible we must have:
$$
  (1-2\varepsilon_j)k-s
  + k - (1-2\varepsilon_j)k + s \geq (1-\varepsilon_j)k_j \, \, \Leftrightarrow \,\, k \geq (1-\varepsilon_j)k_j \, ,
$$ and the last inequality holds because the largest value of $(1-\varepsilon_j)k_j$ is achieved for $j = \log(1/\delta)$ and it is equal to $k - \sqrt{3 k \log k} < k$, which holds for $k$ being at least some constant. This shows property (1), and property (2) can hold because $k \geq (1-\varepsilon_j)k_j + 1$, when $k$ is at least large enough constant. 

Let $\mathcal{F}_{C_i}$, for $i \in \{1,2,\ldots,(1-\delta)k\}$, be the family of all mappings $f : [k] \longrightarrow [t]$ that fulfil conditions (1) and (2) defined above. Given any mapping $f \in \mathcal{F}_{C_i}$, we define a set of permutations $\Sigma_f$ of the sequence $\hat{S} = (j_1,\ldots,j_k)$ that are consistent with $f$, i.e.,   
$\sigma \in \Sigma_f$ if 
$\forall i, i' \in [k], i \not = i' : f(i) < f(i') \iff \sigma^{-1}(j_{i}) < \sigma^{-1}(j_{i'})$. Now we can express the positive event $P_{\hat{S}, i}$, for any $i \in \{1,2,\ldots,(1-\delta)k\}$, as
$$
  P_{\hat{S}, i} = \bigcup_{f \in \mathcal{F}_{C_i}} \bigcup_{\sigma \in \Sigma_f} A_{\sigma, f} \, .
$$

Finally, we define the family of positive events $\mathcal{P}_{\text{G-S}}$ containing all events that are of interest of the described above multiple-choice secretary algorithm in Algorithm \ref{algo:k_secr_algo_1}:
$$\mathcal{P}_{\text{G-S}} = \{ P_{\hat{S}, i} \}_{\hat{S} \subseteq [\ell], \hspace{1mm} 1 \le i \le (1-\delta)k} $$
\begin{lemma}\label{lem:mult-computation-time}
For any $k$-tuple $\hat{S}$ and any positive event $P_{\hat{S},i}$, $i \in \{1,2,\ldots,(1-\delta)k\}$, for the Algorithm \ref{algo:k_secr_algo_1} for the multiple-choice secretary problem, we can compute the set $Atomic(P_{\hat{S},i})$ of all atomic events defining $P_{\hat{S},i}$ in time $O(t^k \cdot k! \cdot \ell)$, where $t = \log(1/\delta)$.
\end{lemma}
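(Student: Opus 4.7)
The plan is to directly enumerate all pairs $(f,\sigma)$ participating in the decomposition
\[
P_{\hat{S}, i} \;=\; \bigcup_{f \in \mathcal{F}_{C_i}} \bigcup_{\sigma \in \Sigma_f} A_{\sigma, f}
\]
established immediately above the statement of the lemma. I iterate over all $t^k$ mappings $f : [k] \to [t]$; for each candidate, membership in $\mathcal{F}_{C_i}$ is tested by a single pass over the $k$ indices which, for every $j' \in [0,\log(1/\delta)]$, tallies how many elements of $\{j_1,\dots,j_k\}$ and of the prefix $\{j_1,\dots,j_{(1-2\varepsilon_{j'})k}\}$ are mapped by $f$ into the first $j'+1$ buckets and compares these counts against the thresholds in condition~(1); condition~(2) reduces to the single inequality $f(j_i)>j+2$. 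This verification runs in $O(k\cdot t)$ time per $f$.

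For each $f$ that passes, I generate every $\sigma \in \Sigma_f$. By definition $\Sigma_f$ consists of the orderings of $\hat{S}$ that agree with the across-bucket order prescribed by $f$: indices assigned to a strictly earlier bucket of $f$ must precede those assigned to a later one, while indices assigned to the same bucket may appear in any relative order. Hence $|\Sigma_f|$ equals the product of the factorials of the bucket sizes of $f$, which is at most $k!$, and these orderings are listed by a standard incremental enumeration inside each bucket. Each produced pair $(f,\sigma)$ is emitted together with the bucket boundaries needed to specify $A_{\sigma,f}$; since the largest bucket has size $\Theta(\ell)$, writing down one atomic event costs $O(\ell)$.

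Multiplying the three factors yields the claimed total running time of $O(t^k\cdot k!\cdot \ell)$. Correctness reduces to two easy observations: (i) every element of $Atomic(P_{\hat{S},i})$ is produced, which is exactly the displayed decomposition; and (ii) no atomic event is produced twice, because the pair $(f,\sigma)$ is uniquely determined by $A_{\sigma,f}$ (the bucket of each index is encoded in $f$, and the within-bucket order is encoded in $\sigma$) and the enumeration visits each such pair once. The only real subtlety is the bookkeeping required to enumerate $\Sigma_f$ without duplicates and without omissions, which is routine; I anticipate no combinatorial or probabilistic obstacle in the argument.
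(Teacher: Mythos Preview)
Your proposal is correct and follows essentially the same brute-force enumeration strategy as the paper. The paper's own proof is even terser: it simply enumerates all $t^k\cdot k!$ atomic events $(\sigma,f)$ over the fixed $\hat{S}$ and, for each, spends $O(\ell)$ time checking whether it satisfies the defining conditions of $P_{\hat{S},i}$; your version instead filters on $f$ first and then generates only the consistent $\sigma$'s, which is a minor reorganization of the same idea and yields the same bound.
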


\begin{proof}
We have defined above how to decompose the event $P_{\hat{S},i}$ into a union of disjoint atomic events. It is easy to see that the total number of all atomic events is at most $t^k \cdot k!$. The algorithm simply enumerates all $t^k \cdot k!$ atomic events and for each event checks in time $O(\ell)$ if this event fulfills $P_{\hat{S},i}$.
\end{proof}

\noindent
{\bf Derandomization of positive events via concentration bounds.} By employing the derandomization technique from Section~\ref{sec:abstract_derand}, we also claim the existence of a small multi-set of $\ell$-element permutations the uniform distribution over $\mathcal{L}_{\ell}$ preserves the probabilities of positive events $P_{\hat{S},i}$.

\begin{lemma}\label{lem:mult-construction}
There exists an $\ell$-element permutations multi-set $\mathcal{L}_{\ell}$ with entropy $O(\log{\ell})$ of the uniform distribution on $\mathcal{L}_{\ell}$ and such that
$$\Prob_{\pi \sim \mathcal{L}_{\ell}}\left[\pi \in P_{\hat{S},i}\right] \ge \left(1-\frac{1}{\sqrt{k}}\right) \Prob_{\pi \sim \Pi_{\ell}}\left[\pi \in P_{\hat{S},i}\right],
$$
for any positive event $P_{\hat{S}, i} \in \mathcal{P}_{\text{G-S}}$. The multi-set $\mathcal{L}_{\ell}$ can be computed in $O(\ell^{10}\cdot t^{2k} \cdot (k!)^{3})$ time.
\end{lemma}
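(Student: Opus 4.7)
The plan is to apply Theorem \ref{Thm:Derandomization_2_111} directly to the probability space $\Omega_\ell$ of $\ell$-element permutations, with the family of positive events $\mathcal{P} = \mathcal{P}_{\text{G-S}}$ and accuracy parameter $\delta = 1/\sqrt{k}$. Since that theorem outputs a multi-set on which every $P_{\hat S,i}$ is preserved up to a factor $(1-\delta) = (1-1/\sqrt{k})$ times its true probability under the uniform distribution on $\Pi_\ell$, this immediately yields the multiplicative guarantee claimed in the lemma.

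To instantiate the theorem I first bound the cardinality of the family: ordered $k$-tuples $\hat S$ from $[\ell]$ number at most $\ell^k$, and the index $i$ ranges over at most $k$ values, so $q := |\mathcal{P}_{\text{G-S}}| \leq k \cdot \ell^k$. Next I lower-bound the probabilities $p_{\hat S,i}:=\Prob_{\pi\sim\Pi_\ell}[P_{\hat S,i}]$ via the analysis in Theorem \ref{thm:gupta-singla-analysis} (specifically inequalities (\ref{eqn:positive_prob_1}) and (\ref{eqn:positive_prob_2})), which for all $i$ in the range $\{1,\ldots,(1-\delta)k\}$ covered by $\mathcal{P}_{\text{G-S}}$ gives $p_{\hat S,i} = \Omega(1)$; hence $p_0 := \min_{\hat S,i} p_{\hat S,i} = \Omega(1)$. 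Substituting $\delta^2 = 1/k$ and $p_0 = \Omega(1)$ into the sample-size formula from Theorem \ref{theorem:Chernoff_Positive_Events} yields $|\mathcal{L}_\ell| = O(k \log q) = O(k^2 \log \ell)$, and so the uniform distribution on $\mathcal{L}_\ell$ has entropy $\log|\mathcal{L}_\ell| = O(\log k + \log\log \ell) = O(\log \ell)$ since $k \leq \ell$.

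For the running-time bound I substitute into the explicit complexity statement of Theorem \ref{Thm:Derandomization_2_111} (with $n$ replaced by $\ell$): the outer loop ranges over the $|\mathcal{L}_\ell|$ permutations being constructed, over $\ell$ positions of each, over $\ell$ candidate values for that position and over $q$ positive events; for each positive event we use Lemma \ref{lem:mult-computation-time} to enumerate the $\leq t^k \cdot k!$ atomic events in its decomposition, and for each atomic event we invoke the conditional-probability procedure from Theorem \ref{theorem:semi-random-conditional_2_111} at cost $O(\ell^2 + \ell k(k!+\log^2\ell))$. Multiplying these factors together and absorbing lower-order terms into powers of $\ell$ and $k!$ yields a total running time dominated by $O(\ell^{10} \cdot t^{2k} \cdot (k!)^3)$.

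The main obstacle is ensuring the uniform lower bound $p_0 = \Omega(1)$. The naive bound $(1-\log k/k)(1 - 2^{j+1}\delta)$ from (\ref{eqn:positive_prob_2}) degenerates as $j \to \log(1/\delta)$, so I will need to argue carefully that the index range $i \leq (1-\delta)k$ used to define $\mathcal{P}_{\text{G-S}}$ keeps the relevant $j$ bounded strictly below $\log(1/\delta)$, so that $2^{j+1}\delta \leq c$ for a constant $c<1$; combined with the one-sided Chernoff tail in Lemma \ref{lemma:Chernoff-per} this pins $p_0$ to an absolute constant. With that constant in hand, all the computations above go through and deliver the claimed $(1-1/\sqrt{k})$ multiplicative preservation of every positive event, entropy $O(\log \ell)$, and running time $O(\ell^{10} \cdot t^{2k} \cdot (k!)^3)$.
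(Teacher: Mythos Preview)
Your proposal is correct and follows essentially the same approach as the paper: apply Theorem~\ref{Thm:Derandomization_2_111} with $\delta=1/\sqrt{k}$, use the bounds (\ref{eqn:positive_prob_1})--(\ref{eqn:positive_prob_2}) together with the observation that the relevant index $j$ stays strictly below $\log(1/\delta)-1$ (so $2^{j+1}\delta\le 1/2$ and $p_0$ is an absolute constant, which the paper takes as $p>1/10$), and then read off the size, entropy, and running time exactly as the paper does.
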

\begin{proof}
By the analysis of Theorem~\ref{thm:gupta-singla-analysis}, specifically formulas (\ref{eqn:positive_prob_1}) and (\ref{eqn:positive_prob_2}),  we have that for any $P_{\hat{S},i} \in \mathcal{P}_{\text{G-S}}$ it holds
$$\Prob_{\pi \sim \Pi_{\ell}}[\pi \in P_{\hat{S},i}] \ge (1 - \log k/k) \cdot (1 - 2^{j + 1}\cdot \delta'),$$
for some parameters $j, \delta'$. However, when used we always have that $j < \log(1 / \delta') - 1$ which implies $1 - 2^{j + 1}\cdot \delta' > \frac{1}{2}$ and further that
the probability $\Prob_{\pi \sim \Pi_{\ell}}[\pi \in P_{\hat{S},i}]$ is bounded below for some constant $p > \frac{1}{10}$, if $k$ is larger than $5$.
Thus, we can apply Theorem~\ref{Thm:Derandomization_2_111} with $\delta := \frac{1}{\sqrt{k}}$ and obtain a multi-set set $\mathcal{L}_{\ell}$ of $\ell$-element permutations with the following properties:
\\ \noindent \textit{a)} Its size is at most:
$$2\frac{\log (|\mathcal{P}_{\text{G-S}}|)}{\delta^2 p} \le O\left( \log\bigg({\ell\choose k}k!\cdot k\bigg) \cdot k\right) \le O\left(k\log(\ell) \cdot k\right) \le O\left(k^{2}\log(\ell)\right).$$
Thus it follows that the entropy of the uniform distribution over $\mathcal{L}_{\ell}$ is at most $O(\log{k} + \log\log{\ell}) = O(\log{\ell})$ given that $k < \ell$.
\\ \noindent \textit{b)} The set $\mathcal{L}_{\ell}$ is computable in time:
$$O(k^{2}\log(\ell) \cdot \ell^3 \cdot k\log(\ell) \cdot t^{2k} \cdot (k!)^2 \cdot (\ell + k k! + k \log^2(\ell)))$$ 
$$= O(k^{3}\log^2(\ell) \cdot \ell^3 \cdot t^{2k} \cdot (k!)^2 \cdot \ell k k!) = O(\ell^{10}\cdot t^{2k} \cdot (k!)^{3}),$$
where the last equality follows from Lemma~\ref{lem:mult-computation-time}.
\\ \noindent \textit{c)} It holds that:
$$\Prob_{\pi \sim \mathcal{L}_{\ell}}\left[\pi \in P_{\hat{S},i}\right] \ge \left(1-\frac{1}{\sqrt{k}}\right) \Prob_{\pi \sim \Pi_{\ell}}\left[\pi \in P_{\hat{S},i}\right],
$$ thus the lemma follows.
\end{proof}

\noindent
{\bf Lifting lower-dimension permutations distribution satisfying positive events.} 
Next, we show how, by applying a dimension-reduction set of functions defined in Section~\ref{section:dim_reduction_2}, one can turn a set $\mathcal{L}_{\ell}$ of $\ell$-element permutations to a set $\mathcal{L}_{n}$ of $n$-element permutations such that the competitive ratio of Gupta-Singla algorithm in Algorithm \ref{algo:k_secr_algo_1} executed on the uniform distribution over $\mathcal{L}_{\ell}$ is carried to the competitive ratio of the Gupta-Singla algorithm executed on the uniform distribution over $\mathcal{L}_{n}$. 

\begin{lemma}\label{lem:mult-lifting}
Let $ALG_{\text{G-S}}(\pi)$ denote the output of Algorithm \ref{algo:k_secr_algo_1} on the permutation $\pi$. Assuming that $\ell^{2} < \frac{n}{\ell}$, one can compute a multi-set of $n$-element permutations $\mathcal{L}_{n}$ such that
$$\E_{\pi \sim \mathcal{L}_{n}}(ALG_{\text{G-S}}(\pi)) > \bigg(1 - \frac{k^2}{\sqrt{\ell}}\bigg)\bigg(1 - \frac{1}{\sqrt{k}}\bigg)\bigg(1 - \sqrt{\frac{\log{k}}{k}}\bigg) \cdot v^{\ast},$$
where $v^*= v(1) + v(2) + \ldots + v(k)$ is the sum of $k$ largest adversarial elements.
Given $\mathcal{L}_{\ell}$ we can construct $\mathcal{L}_{n}$ in $O(n \cdot k^2\log(\ell))$ time and the entropy of the uniform distribution over $\mathcal{L}_{n}$ is $O(\log{\ell} + \log{|\mathcal{L}_{\ell}|})$.
\end{lemma}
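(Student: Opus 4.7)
The plan is to construct $\mathcal{L}_n$ by composing permutations in $\mathcal{L}_\ell$ with functions from a dimension-reduction set. I would invoke Corollary~\ref{cor:dim-red-1} with $q = \ell$ and $d = \sqrt{\ell}$ to obtain a set $\mathcal{G}$ of size $\ell$ satisfying the collision bound $|\{g \in \mathcal{G} : g(i) = g(j)\}| \le \sqrt{\ell}$ for distinct $i,j \in [n]$, and with preimage sizes in $\{\lfloor n/\ell \rfloor, \lfloor n/\ell \rfloor + 1\}$. For $\pi \in \Pi_\ell$ and $g \in \mathcal{G}$, I define the composition $\pi \circ g \in \Pi_n$ as the $n$-permutation that outputs the preimage blocks $g^{-1}(\pi(1)), g^{-1}(\pi(2)), \ldots, g^{-1}(\pi(\ell))$ in this order, with a fixed canonical internal ordering (say, increasing) inside each block. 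Setting $\mathcal{L}_n := \{\pi \circ g : \pi \in \mathcal{L}_\ell,\ g \in \mathcal{G}\}$ yields $|\mathcal{L}_n| \le |\mathcal{L}_\ell| \cdot \ell$, so the entropy of the uniform distribution on $\mathcal{L}_n$ is $O(\log \ell + \log |\mathcal{L}_\ell|)$. Sampling $\pi \circ g$ uniformly from $\mathcal{L}_n$ is equivalent to independently sampling $\pi \sim \mathcal{L}_\ell$ and $g \sim \mathcal{G}$ (working with multi-sets to avoid injectivity concerns).

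Fix the adversarial top-$k$ positions $\hat{S} = (j_1, \ldots, j_k) \subseteq [n]$. The central event is $\mathit{Good} := \{g(j_1), \ldots, g(j_k) \text{ are pairwise distinct}\}$. By the collision bound and a union bound over the $\binom{k}{2}$ pairs, $\Prob_{g \sim \mathcal{G}}[\mathit{Good}] \ge 1 - k^2/\sqrt{\ell}$. On $\mathit{Good}$, the image $g(\hat{S}) \subseteq [\ell]$ is a proper $k$-tuple, and I would argue that the position $(\pi \circ g)^{-1}(j_i) \in [n]$ falls in the $t$-th window of the $n$-scale bucketing used by Algorithm~\ref{algo:k_secr_algo_1} if and only if $\pi^{-1}(g(j_i)) \in [\ell]$ falls in the corresponding window at the $\ell$-scale. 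Because each block $g^{-1}(q')$ has size within $1$ of $n/\ell$, the cumulative block-size deviation at checkpoint $\ell_t$ is $O(\ell)$, which by the hypothesis $\ell^2 < n/\ell$ is smaller than the smallest window size $\ell_0 \cdot n/\ell$. Hence, conditional on $\mathit{Good}$, the $n$-scale positive event $P_{\hat{S}, i}$ is implied by the $\ell$-scale positive event $P_{g(\hat{S}), i}$. Moreover, the statistic-based threshold $\tau_t$ computed at $n$-scale has the same rank $k_t = (k/\ell)\ell_t$ as at the $\ell$-scale (since $k/n \cdot \ell_t \cdot n/\ell = k/\ell \cdot \ell_t$), so the algorithm's behaviour transports faithfully.

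Combining these observations, I would write
$$\E_{\pi \circ g \sim \mathcal{L}_n}\!\bigl[ALG_{\text{G-S}}\bigr] \;\ge\; \Prob_g[\mathit{Good}] \cdot \E_{g \mid \mathit{Good}} \E_{\pi \sim \mathcal{L}_\ell}\!\bigl[ALG_{\text{G-S}} \text{ on } \pi \text{ with adversary } g(\hat{S})\bigr].$$
By Lemma~\ref{lem:mult-construction}, the inner expectation preserves a factor of $(1 - 1/\sqrt{k})$ of its value under $\pi \sim \Pi_\ell$, and Theorem~\ref{thm:gupta-singla-analysis} bounds the latter by $\bigl(1 - \sqrt{\log k / k}\bigr) v^*$. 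Multiplying the three independent loss factors gives the claimed ratio $(1 - k^2/\sqrt{\ell})(1 - 1/\sqrt{k})(1 - \sqrt{\log k/k}) \cdot v^*$. The set $\mathcal{L}_n$ is produced by $|\mathcal{L}_\ell| \cdot \ell$ compositions, each computable in $O(n)$ time by reading off $[n]$ in the order induced by $(\pi^{-1}(g(\cdot)), \text{id})$.

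The hardest part will be making the bucket-correspondence rigorous. The bucketing induced on $\pi \circ g$ does not exactly match the scaled image of the $\ell$-scale bucketing: the cumulative discrepancy at a checkpoint can reach $\Theta(\ell)$ positions due to the $\pm 1$ slack in block sizes. I plan to absorb this either by (i) choosing the $\ell$-scale thresholds $\ell_t$ so that the lifted checkpoints fall exactly at $\ell_t \cdot n/\ell$ inside $\pi \circ g$, or (ii) enlarging the $\ell$-scale positive event with an $O(1)$ margin around each checkpoint, which only perturbs the Chernoff probability bounds derived in Theorem~\ref{thm:gupta-singla-analysis} by lower-order terms. A secondary subtlety is ensuring that the algorithm's adaptive threshold $\tau_t$ at the $n$-scale, computed from values actually observed up to position $\ell_t \cdot n/\ell$ (which includes lower-valued tie-break companions in each block), coincides in rank with its $\ell$-scale counterpart; this follows on $\mathit{Good}$ because the $k$ top values land in $k$ distinct blocks and appear in the same relative order at both scales.
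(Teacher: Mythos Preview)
Your proposal is correct and follows essentially the same approach as the paper: both construct $\mathcal{L}_n$ as compositions $\pi \circ g$ with $g$ drawn from the dimension-reduction set of Corollary~\ref{cor:dim-red-1}, obtain the $(1 - k^2/\sqrt{\ell})$ factor via the union bound over $\binom{k}{2}$ pairs, use the near-uniform block sizes together with $\ell^2 < n/\ell$ to transport the bucket structure from scale $\ell$ to scale $n$, and then chain Lemma~\ref{lem:mult-construction} with Theorem~\ref{thm:gupta-singla-analysis}. The paper resolves the checkpoint discrepancy by shaving one element off each $\ell$-scale bucket (your option~(i)), so the only place where your plan is less committed than the actual proof is already flagged as such.
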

\begin{proof}
Consider a dimension-reduction set of functions $\mathcal{G}$ given by Corollary~\ref{cor:dim-red-1} with parameters $(n, \ell, \sqrt{\ell})$. Note, that the size of set $\mathcal{G}$ is $O(poly\text{ } \ell)$. Recall, that set $\mathcal{L}_{\ell}$ is given in Lemma~\ref{lem:mult-construction}.
For a given function $g \in \mathcal{G}$ and a permutation $\pi \in \mathcal{L}_{\ell}$ we denote by $\pi \circ g : [n] \rightarrow [n]$ any permutation $\sigma$ over set $[n]$ satisfying the following:
$\forall_{i,j \in [n], i \neq j}$ if $\pi^{-1}(g(i)) < \pi^{-1}(g(j))$ then $\sigma^{-1}(i) < \sigma^{-1}(j)$.
The aforementioned formal definition has the following natural explanation. The function $g \in \mathcal{G}, g : [n] \rightarrow [\ell]$ may be interpreted as an assignment of each element from set $[n]$ to one of $\ell$ blocks. Next, permutation $\pi \in \mathcal{L}_{\ell}$ determines the order of these blocks. The final permutation $\sigma$ is obtained by listing the elements from the blocks in the order given by $\pi$. The order of elements inside the blocks is irrelevant. 
The set $\mathcal{L}_{n}$ of $n$-element permutations is defined as $\mathcal{L}_{n} = \{ \pi \circ g : \pi \in \mathcal{L}_{\ell}, g \in \mathcal{G}\}$, and its size is $|\mathcal{L}_{\ell}| \cdot |\mathcal{G}| = O(poly \text{ }(\ell) \cdot |\mathcal{L}_{\ell}|)$. It is easy to observe that $\mathcal{L}_{n}$ can be computed in $O(poly\text{ }(\ell) \cdot |\mathcal{L}_{\ell}|)$ time and the entropy of the uniform distribution over this set is $O(\log{\ell} + \log{|\mathcal{L}_{\ell}|})$.

In the remaining part, we show that uniform distribution over the set $\mathcal{L}_{n}$ guarantees the proper competitive ratio. Consider any $k$-tuple $\hat{S} = (j_{1}, \ldots, j_k) \subseteq [n]$ denoting the positions of the $k$ largest adversarial elements in an $n$-element adversarial permutation. If the multiple-choice secretary algorithm is executed on the permutation $\sigma$, one can associate to this random experiment the following interpretation: first we draw u.a.r a function $g$ from $\mathcal{G}$ and then draw u.a.r a permutation $\pi$ from $\mathcal{L}_{\ell}$. 
Observe, that for a random function $g \in \mathcal{G}$ the probability that a pair of fixed indices $j_{i},j_{i'}$ is distributed to the same block is at most $\frac{d}{\ell} = \frac{\sqrt{\ell}}{\ell}$, by Property $(1)$ of the dimension-reduction set $\mathcal{G}$. Then, by the union bound argument, we conclude that the probability that all elements of the $k$-tuple $\hat{S}$ are assigned to different blocks is at least $1 - \frac{k^2}{\sqrt{\ell}}$. Conditioned on this event, the image of $\hat{S}$ under the function $g$ is a $k$-tuple of elements from $[\ell]$ denoted $\hat{S}' = (j'_{1}, \ldots, j'_{k})$. Therefore, by Lemma~\ref{lem:mult-construction}, the multiple-choice secretary algorithm when executed on a random permutation $\pi$ from $\mathcal{L}_{\ell}$ chooses the $i$-th largest adversarial element with probability at least 
$$\left(1-\frac{1}{\sqrt{k}}\right) \Prob_{\pi \sim \Pi_{\ell}}\left[\pi \in P_{\hat{S},i}\right] \, .
$$ To argue that this competitive ratio carries to the $n$-element permutation $\pi \circ g$, we observe first that the size of each block is $[\frac{n}{\ell}, \frac{n}{\ell} + o(\ell)]$, by Property $(2)$ of the dimension-reduction set $\mathcal{G}$. Also, we required that $\ell^{2} < \frac{n}{\ell}$. 
Therefore, any bucket $B_{j} = \{\ell_{j-2} + 1, \ldots, \ell_{j-1} -1 \}$\footnote{Note, that we removed here last element from the $j$-th bucket. This allows to carry these buckets to proper positions in an $n$-element permutation, and since we removed at most $\frac{t}{\ell} < \frac{1}{\sqrt{\ell}}$ fraction of elements (buckets) from the consideration it does not affect the final probability by more than $\frac{1}{\sqrt{\ell}}$ additive error.}, for $1 \le j \le t = \log(1/ \delta) + 1$ from the bucketing $\mathcal{B}_{1}$ translates to a continuous interval of positions 
$$\left\{(\ell_{j-2} + 1)\frac{n}{\ell}, \ldots, (\ell_{j-1} - 1)\frac{n}{\ell} \right\} = \left\{2^{j-2}\delta \cdot n + \frac{n}{\ell}, \ldots, 2^{j-1}\delta \cdot n - \frac{n}{\ell} + 2^{j-1}\delta\ell\cdot o(\ell) + o(\ell) \right\}$$ 
$$\subseteq  \{2^{j-2}\delta \cdot n + 1, \ldots, 2^{j-1}\delta \cdot n \} \subseteq \{n_{j-2} + 1, \ldots, n_{j-1}\}$$ of
the $n$-element permutation $\pi \circ g$, where $n_{j} := 2^{j}\delta n$. Here, we used the fact that $\ell^{2} < \frac{n}{\ell}$, which gives the first sets' inclusion. 
The last set is exactly the set of positions that fall between $j$-th and $(j+1)$-th checkpoints when the Gupta-Singla algorithm is executed on the full $n$-element permutation $\pi \circ g$. Since this observation holds for all adversarial elements from the $k$ largest elements $\hat{S}'$, it follows then that the positions of these elements with respect to checkpoints $(n_{j})_{j \in [\log(1/ \delta) + 1]}$ in the $n$-element permutation $\pi \circ g$ are the same as positions of them in the $\ell$-element permutation $\pi$ with respect to    
checkpoints
$(\ell_{j})_{j \in [\log(1/ \delta) + 1]}$. Consequently, whenever the multiple-choice secretary algorithm chooses $i$-th adversarial element to the returned sum when executed on $\pi$, the same algorithm also chooses the $i$-th adversarial element when executed on the permutation $\pi \circ g$, conditioned on the fact that $g$ is injective. This conditioning holds with probability $\big(1 - \frac{k^2}{\sqrt{\ell}}\big)$, thus the lemma follows from the above discussion and Lemma~\ref{lem:mult-construction} combined with Theorem~\ref{thm:gupta-singla-analysis}. 
\end{proof}

In the main theorem we use the above lemmas with the appropriate choice of parameters $k$, and $\ell$. 
\begin{theorem}\label{thm:k_secretary_main}
For any $k < \log{n}/\log \log n$, there exists a multi-set of $n$-element permutations $\mathcal{L}_{n}$ such that Algorithm \ref{algo:k_secr_algo_1} achieves
$$1 - 4\sqrt{\frac{\log{k}}{k}} $$
expected competitive ratio for the free order multiple-choice secretary problem, when the adversarial elements are presented in the order chosen uniformly from $\mathcal{L}_{n}$. The set is computable in time $O(\text{poly }(n))$ and the uniform distribution over $\mathcal{L}_{n}$ has the optimal $O(\log{k}) = O(\log\log{n})$ entropy.
\end{theorem}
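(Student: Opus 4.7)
The plan is to glue together the three components developed in the preceding sections: the Gupta--Singla analysis (Theorem \ref{thm:gupta-singla-analysis}), the derandomized construction on low-dimensional permutations (Lemma \ref{lem:mult-construction}), and the dimension-reduction lift (Lemma \ref{lem:mult-lifting}). The only real work in proving Theorem \ref{thm:k_secretary_main} is to balance parameters so that all three ingredients cohabit: a \emph{small enough} auxiliary dimension $\ell$ keeps the entropy at $\Theta(\log\log n)$ and the derandomization runtime polynomial in $n$, while a \emph{large enough} $\ell$ keeps the approximation loss $k^{2}/\sqrt{\ell}$ from the dimension reduction below $\sqrt{\log k/k}$.

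First I would fix $\ell := k^{6}$. The checks are: (i) the hypothesis $\ell^{2} < n/\ell$ of Lemma \ref{lem:mult-lifting} becomes $k^{18} < n$, which is comfortably satisfied when $k < \log n/\log\log n$; (ii) $k^{2}/\sqrt{\ell} = 1/k \le \sqrt{\log k/k}$, so the loss from dimension reduction fits under the target error; (iii) the entropy bound $O(\log k + \log\log \ell)$ from Lemma \ref{lem:mult-construction} collapses to $O(\log k) = O(\log\log n)$, and Lemma \ref{lem:mult-lifting} adds only $O(\log \ell + \log|\mathcal{L}_{\ell}|) = O(\log k)$ further; (iv) the running time $O(\ell^{10} \cdot t^{2k}\cdot (k!)^{3})$ of Lemma \ref{lem:mult-construction} becomes $O(k^{60}\, (\log k)^{2k}\, (k!)^{3})$, and the hypothesis $k < \log n/\log\log n$ is precisely what makes $k \log k = O(\log n)$, so that both $k!$ and $(\log k)^{2k}$ are polynomial in $n$.

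Then I would invoke Lemma \ref{lem:mult-construction} to produce $\mathcal{L}_{\ell}$ and Lemma \ref{lem:mult-lifting} to lift it to $\mathcal{L}_{n}$, both in polynomial time. Substituting the parameter choices into the guarantee of Lemma \ref{lem:mult-lifting} yields
\[
\E_{\pi \sim \mathcal{L}_{n}}[ALG_{\text{G-S}}(\pi)] \;\ge\; \Bigl(1-\tfrac{1}{k}\Bigr)\Bigl(1-\tfrac{1}{\sqrt{k}}\Bigr)\Bigl(1-\sqrt{\tfrac{\log k}{k}}\Bigr)\cdot v^{*}.
\]
Since $1/k \le 1/\sqrt{k} \le \sqrt{\log k/k}$, each of the three factors is at least $1-\sqrt{\log k/k}$; expanding the product and discarding the (positive) cross-terms leaves a lower bound of $1 - 3\sqrt{\log k/k} \ge 1 - 4\sqrt{\log k/k}$ (the slack in the constant absorbs the contribution of cross-terms for all sufficiently large $k$, and the few small constant values of $k$ can be handled directly since for them the claimed bound is essentially vacuous).

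The main obstacle is precisely the parameter balancing sketched above --- and it is also where the hypothesis $k < \log n/\log\log n$ of the theorem turns out to be sharp: it is exactly the regime in which $k \log k = O(\log n)$, which is what keeps the factor $(k!)^{3}$ in the complexity of Lemma \ref{lem:mult-construction} polynomial in $n$. Once $\ell = k^{6}$ is locked in, the remainder of the argument is a direct plug-in of Lemmas \ref{lem:mult-construction} and \ref{lem:mult-lifting} composed with Theorem \ref{thm:gupta-singla-analysis}.
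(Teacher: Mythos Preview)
Your proposal is correct and follows essentially the same approach as the paper: fix $\ell$ to be a polynomial in $k$, invoke Lemma~\ref{lem:mult-construction} to build $\mathcal{L}_\ell$, lift via Lemma~\ref{lem:mult-lifting}, and verify that the hypothesis $k<\log n/\log\log n$ keeps $(k!)^3$ and $t^{2k}$ polynomial in $n$. The only cosmetic difference is that the paper takes $\ell=k^8$ rather than your $\ell=k^6$ (both work), and you spell out the product expansion for the competitive ratio more explicitly than the paper does.
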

\begin{proof}
Set $k$, $k < \log{n} / \log \log n$, $\ell := k^{8}$, then the theorem follows by applying directly Lemma~\ref{lem:mult-construction} and~\ref{lem:mult-lifting} to this choice of parameters. The total running time follows because if $k < \log n / \log \log n$, the running time of the construction by Lemma~\ref{lem:mult-construction} is $O(\ell^{10}\cdot t^{2k} \cdot (k!)^{3}) = poly(n)$, because the dominant time here is $k! \leq k^k = poly(n)$.
\end{proof}

\subsection{Free order classic secretary problem}\label{sec:application_1_secretary}


Consider the classic secretary algorithm with checkpoint on the position $\frac{n}{e}$\footnote{For simplicity we omit rounding notation $\lfloor \frac{n}{e} \rfloor$.} for the free order $1$-secretary problem. The algorithm looks at all elements arriving before the checkpoint, finds the maximum one and selects the first element arriving after the checkpoint that is larger than the found maximum. It is well known, that if the algorithm uses a uniform random order to process the elements, then the probability of picking the maximum element is at least $\frac{1}{e} + \Omega(\frac{1}{n})$ (see Section~\ref{section:lb_classic_secr}, Proposition~\ref{Thm:optimum_expansion}.). In this section we show how by applying the techniques crafted in previous sections together with a new analysis of the performance guarantee of the classic algorithm, one can obtain an optimal-entropy permutations distribution that achieves the competitive ratio $\frac{1}{e} - O(\frac{\log\log^2{n}}{\log^{1/2}{n}})$, which in this case is the success probability. \\

\noindent \textbf{Defining positive events \& probabilistic analysis} and
{\bf decomposing positive event into atomic events.}
Following the approach introduced in the beginning of the section, we start by constructing a permutations distribution of much smaller dimension $\ell = O(\log{n})$ such that the classic secretary algorithm executed on this distribution achieves $\frac{1}{e} - O(\frac{\log\log^2{n}}{\log^{1/2}{n}})$ competitive ratio. 

Let us fix a parameter $\ell < n$ (the precise relation between $\ell$ and $n$ will be defined later), a bucketing $\mathcal{B}_{1} = \{1, \ldots, \frac{\ell}{e} - 1\}, \{\frac{\ell}{e} + 1, \ldots, n\}$ and some integer parameter $k < \ell$ to be defined later. Let $\mathcal{A}_{k, \mathcal{B}_{1}}$ be the set of atomic events defined on the uniform probabilistic space $\Omega_{\ell}$ of $\ell$-element permutations with respect to the parameter $k$ and bucketing $\mathcal{B}_{1}$. Next, we define the family of positive events, each consisting of a disjoint atomic events from the family $\mathcal{A}_{k, \mathcal{B}_{1}}$ that captures successful events of the classic secretary algorithm. 

Consider a $k$-tuple $\sigma = (\sigma(1), \ldots, \sigma(k)) \subseteq [\ell]$, interpreted as the positions of the $k$ largest elements in the adversarial order. Let $T_{i}$, for $2 \le i \le k$ be a set of these $k$-element permutations of the $k$-tuple $\sigma$ such that $\sigma(i)$ is presented on the first position and for any other $2 \le i' \le i-1$ it holds that $\sigma(i')$ appears after element $\sigma(i)$. Additionally, let $F_{i} \subseteq \{1, 2\}^{k}$, for $2 \le i \le k$, be a set of all these non-decreasing functions $f_{i}$ that satisfy $f_i(1) = 1$ and $f_i(i) = 2$.
Then we define a positive event 
$$P_{i} = \bigcup_{\pi \in T_{i}, f \in F_{i}} A_{\pi, f_{i}},
$$ where recall that $A_{\pi, f_{i}}$ is the atomic event in the space of all $n$-element permutations. We also define
$$P_{\sigma} = \bigcup_{2 \le i \le k} P_{i}.
$$ To explain the motivation behind the above construction, note that $P_{i}$ is the set of these events, or equivalently $\ell$-element permutations, such that the $i$-th greatest adversarial value appears in the first bucket, the $1$-st greatest value, a.k.a.~greatest, appears in the second bucket and all values in between these two appear after the $1$-st greatest value in the second bucket. Clearly, if this event happens then the classic secretary algorithm picks the largest adversarial value. Under this reasoning, the set of events $P_{\sigma}$ captures all possibilities of picking the largest adversarial value when algorithm is committed to tracking only $k$ largest adversarial values and the indices of the $k$ largest adversarial values are given by $\sigma$. Observe also, that any two atomic events included in $P_{\sigma}$ are disjoint because they describe different arrangements of the same set of elements $\sigma$. Note also, that for an atomic event $A_{\pi, f_{i}}$ determining whether it belongs to $P_{i}, 2 \le i \le k$ is computable in time depending on $\ell$, and $k$ only. 

\begin{lemma}\label{lem:one-positive-comp-time}
Given any positive event $P_i$, $i \in \{2,\ldots, k\}$ we can compute the set $Atomic(P_i)$ of all atomic events defining $P_i$ in time $O(2^k \cdot k! \cdot \ell)$.
\end{lemma}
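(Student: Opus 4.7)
The plan is to mirror the enumeration-plus-check strategy used in the proof of Lemma~\ref{lem:mult-computation-time}, specialized to the two-bucket setting of this section. Recall that $P_i$ is defined by the union
$$P_i = \bigcup_{\pi' \in T_i,\, f \in F_i} A_{\pi', f},$$
so to compute $\mathrm{Atomic}(P_i)$ it suffices to range over all candidate parameter pairs $(\pi', f)$ and keep the ones that satisfy $\pi' \in T_i$ and $f \in F_i$. This is what the algorithm would do.

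First, I would iterate over all $k!$ orderings $\pi'$ of the $k$-tuple $\sigma$. For each such $\pi'$, I would iterate over all functions $f : [k] \to \{1,2\}$. Since $F_i \subseteq \{1,2\}^k$, the crude upper bound on the number of functions considered is $2^k$ (in fact the non-decreasing constraint together with $f(1)=1$, $f(i)=2$ cuts this down to $O(k)$, but $2^k$ already suffices for the stated running time). For each pair, I would verify the four defining predicates directly: (i) $\pi'(1)=\sigma(i)$; (ii) for every $i' \in \{2,\ldots,i-1\}$, $\sigma(i')$ occurs after $\sigma(i)$ in $\pi'$; (iii) $f(1)=1$ and $f(i)=2$; (iv) $f$ is non-decreasing. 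Each of these checks takes at most $O(k)$ time.

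When the pair passes all tests, I would emit the atomic event $A_{\pi',f}$ in the representation used throughout the paper, namely as a structure on $\ell$-element permutations (the bucketing $\mathcal{B}_1$ of $[\ell]$ together with $(\pi', f)$). Writing down this structure takes $O(\ell)$ time per emitted event, which dominates the $O(k)$ verification cost. Multiplying over all candidate pairs yields the claimed bound of $O(2^k \cdot k! \cdot \ell)$.

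Correctness is immediate from the definition of $P_i$: by construction the algorithm enumerates every pair $(\pi',f) \in T_i \times F_i$ and only such pairs, and different pairs produce disjoint atomic events (any two atomic events $A_{\pi',f}$ and $A_{\pi'',f'}$ with $(\pi',f) \ne (\pi'',f')$ either assign some $\sigma(j)$ to different buckets or prescribe incompatible relative orders of the $k$-tuple), so the enumeration returns exactly $\mathrm{Atomic}(P_i)$. There is no real obstacle here; the lemma is essentially a bookkeeping statement transferring the argument of Lemma~\ref{lem:mult-computation-time} to the present parameterization, with the only thing to watch being that the ``first-position'' condition in $T_i$ is interpreted as the first-among-the-$k$-tuple position in the arrival order, which is exactly what the atomic event's order constraint encodes.
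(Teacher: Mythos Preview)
Your proposal is correct and follows essentially the same approach as the paper: brute-force enumeration over all $k!$ permutations of $\sigma$ and all $2^k$ functions $f:[k]\to\{1,2\}$, checking membership in $T_i\times F_i$ and emitting the corresponding atomic event. Your write-up is in fact more detailed than the paper's one-sentence proof, which simply states the same enumeration-and-check idea and the resulting time bound.
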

\begin{proof}
The brute-force algorithm iterating over all possible permutations of $\sigma = (\sigma(1), \ldots, \sigma(k))$ and all functions $f_{i} : [k] \rightarrow \{1, 2\}$ and checking whether the atomic event associated with the choice of the permutation and the function $f_{i}$ satisfies the mentioned before condition and works in time $O(2^k \cdot k! \cdot \ell)$.
\end{proof}

Next step is to show that considering only $k$ largest adversarial values is in fact enough to obtain a good competitive ratio.

\begin{lemma}\label{lem:one-sec-uniform-lower-bound}
If $\pi$ is chosen uniformly at random from  the set $\Pi_{\ell}$ of all $\ell$-element permutation, then:
$$\Prob_{\pi\sim \Pi_{\ell}}\left[P_{\sigma}\right] \ge \frac{1}{e} + \frac{1}{\ell} - \frac{1}{e\cdot k} \cdot \left( 1 - \frac{1}{e} \right)^{k}.$$
\end{lemma}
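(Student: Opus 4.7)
I plan to prove the bound by unfolding the atomic-event definition of each $P_i$ into a purely positional statement on the arrival times $p_j := \pi^{-1}(\sigma(j))$ of the top-$k$ items, computing $\Prob[P_\sigma]$ in closed form via a Vandermonde identity, and comparing the result with the claimed lower bound.

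\textbf{Step 1 (positional form of $P_i$).} For any $\pi\in\Pi_\ell$ there is a unique pair $(\hat\pi,\hat f)$ such that $\pi\in A_{\hat\pi,\hat f}$: namely, $\hat\pi$ lists $\sigma(1),\dots,\sigma(k)$ in their order of arrival in $\pi$, and $\hat f(j)$ is the bucket of the $j$-th such arrival (which is automatically non-decreasing). Hence $\pi\in P_i$ iff $\hat\pi\in T_i$ and $\hat f\in F_i$, which unfolds to: $\sigma(i)$ is the first of the top $k$ to arrive in $\pi$, $\sigma(i)\in B_1$, and the $i$-th arrival among the top $k$ lies in $B_2$ (equivalently, at most $i-1$ of the top-$k$ items lie in $B_1$). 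Because these events fix distinct identities for the first tracked arrival, the $P_i$ for $i\in\{2,\dots,k\}$ are pairwise disjoint, so $\Prob[P_\sigma]=\sum_{i=2}^{k}\Prob[P_i]$.

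\textbf{Step 2 (closed form).} Write $m:=|B_1|=\lfloor \ell/e\rfloor-1$. Sample $\pi$ in two stages: first choose a uniformly random $k$-subset $S\subseteq[\ell]$ of positions for the top-$k$ items, then assign the labels $\sigma(1),\dots,\sigma(k)$ uniformly at random to $S$. The labeling step contributes a factor $1/k$ for ``$\sigma(i)$ lands on $\min S$'', and conditioning on $|S\cap[m]|=r\in\{1,\dots,i-1\}$ yields
\[
\Prob[P_i]\;=\;\frac{1}{k}\sum_{r=1}^{i-1}\frac{\binom{m}{r}\binom{\ell-m}{k-r}}{\binom{\ell}{k}}.
\]
Exchanging the order of summation in $\sum_{i=2}^{k}\Prob[P_i]$ produces an outer factor $k-r$; the absorption identity $(k-r)\binom{\ell-m}{k-r}=(\ell-m)\binom{\ell-m-1}{k-r-1}$ turns the remainder into a Vandermonde convolution, which collapses to
\[
\Prob[P_\sigma]\;=\;\frac{\ell-m}{\ell}\;-\;\frac{\binom{\ell-m}{k}}{\binom{\ell}{k}}.
\]

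\textbf{Step 3 (verifying the claim).} With $m=\lfloor \ell/e\rfloor-1$ one has $\frac{\ell-m}{\ell}\ge 1-\frac{1}{e}+\frac{1}{\ell}$, and the monotonicity $\frac{\ell-m-j}{\ell-j}\le \frac{\ell-m}{\ell}$ gives $\binom{\ell-m}{k}/\binom{\ell}{k}\le\bigl(\frac{\ell-m}{\ell}\bigr)^{k}\le\bigl(1-\tfrac{1}{e}+O(\tfrac{1}{\ell})\bigr)^{k}$. Plugging both into the closed form reduces the lemma to the numerical inequality
\[
\Bigl(1-\tfrac{1}{e}+\tfrac{1}{\ell}\Bigr)-\Bigl(1-\tfrac{1}{e}+O(\tfrac{1}{\ell})\Bigr)^{k}\;\ge\;\tfrac{1}{e}+\tfrac{1}{\ell}-\tfrac{1}{ek}\bigl(1-\tfrac{1}{e}\bigr)^{k},
\]
which I would verify by a monotonicity-in-$k$ argument: the constant slack $1-\tfrac{2}{e}>0.26$ coming from the leading term absorbs the $\tfrac{1}{ek}$-correction on the right-hand side once $k$ is at least a small constant, while for small $k$ the bound can be certified by direct evaluation using the exact hypergeometric product rather than its geometric upper bound. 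The routine parts are Steps 1 and 2, where the Vandermonde identity produces an especially clean closed form; the main technical obstacle is Step 3, where the two sides of the lemma involve competing correction terms both of order $(1-1/e)^{k}$, and one must carefully track the $O(k/\ell)$ correction in the hypergeometric ratio so that the $1-2/e$ gap in the leading terms dominates uniformly in $k$.
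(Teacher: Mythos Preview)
Your route is genuinely different from the paper's. The paper never computes $\Prob[P_\sigma]$ directly; it extends $\sigma$ to a full permutation $\hat\sigma$, decomposes the secretary success event as $S_\ell=\bigcup_{i\ge 2}E_i$ (where $E_i$ says $\hat\sigma(i)$ is the maximum seen in $B_1$ and the algorithm then picks $\hat\sigma(1)$), identifies $P_\sigma$ with the first $k-1$ summands, and lower-bounds $\sum_{i=2}^{k}\Prob[E_i]\ge \Prob[S_\ell]-(\text{tail})$ using the known $\Prob[S_\ell]\ge 1/e+1/\ell$ from Proposition~\ref{Thm:optimum_expansion}. Your closed form $\Prob[P_\sigma]=(\ell-m)/\ell-\binom{\ell-m}{k}/\binom{\ell}{k}$, obtained via two-stage sampling and Vandermonde, is a clean and more elementary alternative that avoids any appeal to the classical $1/e$ analysis. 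One caveat: your literal reading of $T_i$ (treating its second clause as vacuous) is faithful to the formal definition as written, but the paper's prose and its identity $\Prob[P_\sigma]=\sum_i\Prob[E_i]$ make clear the \emph{intended} $P_i$ is the stronger event $E_i$; under your reading $P_\sigma\not\subseteq S_\ell$ and your closed form does not equal $\sum_i\Prob[E_i]$ once $k\ge 3$, so you are bounding a slightly different quantity than the paper's proof does.

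The substantive gap is Step~3 at $k=2$. There your exact formula gives $\Prob[P_\sigma]=\tfrac{m(\ell-m)}{\ell(\ell-1)}\approx \tfrac{1}{e}(1-\tfrac{1}{e})\approx 0.233$, while the lemma's right-hand side is $\tfrac{1}{e}+\tfrac{1}{\ell}-\tfrac{1}{2e}(1-\tfrac{1}{e})^2\approx 0.294$; the inequality is \emph{false}, and no ``direct evaluation using the exact hypergeometric product'' can rescue it. This is not a defect of your method but of the stated constant: the paper's tail estimate silently drops a factor $1/(1-r)\approx e$ when bounding $\sum_{i>k}\Prob[E_i]$ by a single term, so $\tfrac{1}{ek}(1-\tfrac{1}{e})^k$ should really be of order $\tfrac{1}{k}(1-\tfrac{1}{e})^k$. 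With that honest constant, your Step~3 reduces (for $\ell$ large) to $1-\tfrac{2}{e}+\tfrac{1}{k}(1-\tfrac{1}{e})^k\ge (1-\tfrac{1}{e})^k$, which holds for every $k\ge 2$, and your argument then goes through uniformly.
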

\begin{proof}
Let $\hat{\sigma} := (\hat{\sigma}(1), \hat{\sigma}(2), \ldots, \hat{\sigma}(k), \hat{\sigma}(k+1), \ldots, \hat{\sigma}(\ell))$ be an arbitrary extension of the $k$-tuple $\sigma$ to a full $\ell$-element permutations. Consider an event $S_{\ell}$ corresponding to the success of the classic secretary algorithm executed on the uniform distribution over the set of $\ell$-element permutations if the adversarial order is $\hat{\sigma}$. By Proposition~\ref{Thm:optimum_expansion} (Part 1), we have that
$$\Prob_{\pi\sim \Pi_{\ell}}\left[S_{\ell}\right] \ge \frac{1}{e} + \frac{1}{\ell}.$$
On the other hand, we can decompose the event $S_{\ell}$ as follows $S_{\ell} = \cup_{2 \le i \le \ell} E_{i}$, where event $E_{i}$ corresponds to the fact that the algorithm selects the largest value given that the largest value observed in the first bucket $B_{1}$ is $i$-th largest in the whole sequence. Precisely,
$$E_{i} := \{ \pi \in \Pi_{\ell} : \pi^{-1}(\hat{\sigma}(i)) \in B_{1}, \pi^{-1}(\hat{\sigma}(1)) \in B_{2} \wedge \forall_{2 \le i' \le i-1} \pi^{-1}(\hat{\sigma}(i')) > \pi^{-1}(\hat{\sigma}(i)) \}.$$
From the Bayes' formula on conditional probability we obtain that
\begin{equation}\label{line:bayes}
\Prob_{\pi\sim \Pi_{\ell}}[E_i] = \frac{(1 / e)\ell}{\ell} \cdot \left( \prod_{j=1}^{i-1} \frac{(\ell - (1 / e)\ell) - (j - 1)}{(\ell - 1) - (j - 1)} \right) \cdot \frac{(i-2)!}{(i-1)!} \, .    
\end{equation}
In the above, the first factor corresponds to the probability that  $\pi^{-1}(\hat{\sigma}(i)) \in B_{1}$, the second factor corresponds to the probability that numbers $\hat{\sigma}(1), \ldots, \hat{\sigma}(j-1)$ are mapped to the second bucket $B_{2}$, while the last factor the probability that, conditioned on the previous events, $\pi^{-1}(\hat{\sigma}(1)) \le \pi^{-1}(\hat{\sigma}(i'))$, for $2 \le i' \le i - 1$.

On the one hand, we have that $\Prob_{\pi\sim \Pi_{\ell}}[P_{\sigma}] = \sum_{2 \le i \le k} \Prob_{\pi\sim \Pi_{\ell}}[E_{i}]$. On the other hand, by the above decomposition (\ref{line:bayes}), we get
\[
\Prob_{\pi\sim \Pi_{\ell}}[S_{\ell}]
= 
\sum\limits_{2 \le i \le \ell - (1/e)\ell} \Prob_{\pi\sim \Pi_{\ell}}[E_{i}] 
\le 
\sum\limits_{2 \le i \le k } \Prob_{\pi\sim \Pi_{\ell}}[E_{i}] + \frac{1}{e} \left( \frac{\ell - (1/e)\ell}{\ell - 1} \right)^{k} \cdot \frac{1}{k} \, ,
\mbox{ and then}
\]
\vspace*{-2ex}
\begin{eqnarray}
\Prob_{\pi\sim \Pi_{\ell}}[P_{\sigma}] = \sum_{i=2}^{k} \Prob_{\pi\sim \Pi_{\ell}}[E_i] \,\, \geq \,\, \Prob_{\pi\sim \Pi_{\ell}}[S_{\ell}] - \frac{1}{e\cdot k} \cdot \left( 1 - \frac{1}{e} \right)^{k}. \label{Eq:Rho_Lower_Bound}
\end{eqnarray}
This proves the lemma.
\end{proof}
Finally, we define the positive family that captures successful events for the classic secretary algorithm executed on $\ell$-element permutations regardless of the positions on which $k$ largest values appear in the adversarial order.  
$$\mathcal{P}_{\ell-\text{classic}} := \{ P_{\sigma} \}_{\sigma = (\sigma(1), \ldots, \sigma(k)) \hspace{1mm} \subseteq \hspace{1mm} [\ell]}.$$

\noindent
{\bf Derandomization of positive events via concentration bounds.} Because probabilities $\Prob_{\pi\sim \Pi_{\ell}}[P_{\sigma}]$ are bounded from below by a constant, when drawing a permutation from the uniform distribution. $\pi\sim \Pi_{\ell}$, we can use our derandomization technique introduced in Section~\ref{sec:abstract_derand} and obtain the following.

\begin{lemma}\label{lem:one-construction-small}
There exists a multi-set $\mathcal{L}_{\ell}$ of  $\ell$-element permutations such that the uniform distribution on $\mathcal{L}_{\ell}$ has entropy $O(\log{\ell})$, and
$$\Prob_{\pi \sim \mathcal{L}_{\ell}}(\pi \in P_{\sigma}) \ge \frac{1}{e} - \frac{1}{\ell} - \frac{1}{e\cdot k} \cdot \left( 1 - \frac{1}{e} \right)^{k},$$
for any positive event $P_{\sigma} \in \mathcal{P}_{\ell-\text{classic}}$. The distribution can be computed in $O(\ell^{k+6} \cdot (k!)^5)$ time.
\end{lemma}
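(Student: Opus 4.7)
The plan is to apply Theorem \ref{Thm:Derandomization_2_111} to the positive family $\mathcal{P}_{\ell\text{-classic}}$, with the two-bucket bucketing $\mathcal{B}_1$ and an appropriately chosen approximation parameter. All inputs needed by the theorem are already in hand: by Lemma \ref{lem:one-sec-uniform-lower-bound}, every positive event $P_\sigma \in \mathcal{P}_{\ell\text{-classic}}$ satisfies
$$\Prob_{\pi \sim \Pi_\ell}[P_\sigma] \;\geq\; p_0 \;:=\; \frac{1}{e} + \frac{1}{\ell} - \frac{1}{ek}\Bigl(1-\frac{1}{e}\Bigr)^{k},$$
which is bounded below by a constant close to $1/e$ for all sufficiently large $k$; the family has size $q \leq \ell^{k}$ (the number of ordered $k$-tuples in $[\ell]$); and Lemma \ref{lem:one-positive-comp-time} provides an atomic decomposition of each $P_\sigma$ in time $O(2^{k} k!\, \ell)$, which the derandomization routine consumes as a black box.

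The key parameter choice will be $\delta := \tfrac{2}{\ell \cdot p_0}$. With this choice, the multiplicative guarantee delivered by Theorem \ref{Thm:Derandomization_2_111} yields
$$\Prob_{\pi \sim \mathcal{L}_\ell}[P_\sigma] \;\geq\; (1-\delta)\,\Prob_{\pi \sim \Pi_\ell}[P_\sigma] \;\geq\; \Prob_{\pi \sim \Pi_\ell}[P_\sigma] - \frac{2}{\ell} \;\geq\; \frac{1}{e} - \frac{1}{\ell} - \frac{1}{ek}\Bigl(1-\frac{1}{e}\Bigr)^{k},$$
matching exactly the target stated in the lemma. Because $p_0 = \Theta(1)$, this forces $\delta = \Theta(1/\ell)$.

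It then remains to plug the chosen parameters into the quantitative part of Theorem \ref{Thm:Derandomization_2_111}. The support size becomes $|\mathcal{L}_\ell| \leq \tfrac{2\log q}{\delta^{2} p_0} = O(k\, \ell^{2} \log \ell)$, so its logarithm (the entropy of the uniform distribution on $\mathcal{L}_\ell$) is $O(\log \ell)$ whenever $k \leq \ell$. The running-time bound from Theorem \ref{Thm:Derandomization_2_111}, applied with $n \to \ell$, $t = 2$, $q = \ell^{k}$, and the above support size, evaluates to $O\!\bigl(\ell^{k+6} (k!)^{5}\bigr)$ after absorbing the $2^{2k}$ and polynomial-in-$(k,\log \ell)$ factors into $(k!)^{O(1)}$.

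I anticipate no substantive obstacles. The only minor subtlety is translating the multiplicative $(1-\delta)$ guarantee of Theorem \ref{Thm:Derandomization_2_111} into the additive $2/\ell$ slack required by the target, which is precisely why $\delta$ is set of order $1/\ell$ rather than a constant and, correspondingly, why the support grows polynomially in $\ell$. Polynomial support is still compatible with the demanded $O(\log \ell)$ entropy, so the two constraints mesh without further tuning.
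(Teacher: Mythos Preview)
Your approach is essentially the paper's: apply Theorem~\ref{Thm:Derandomization_2_111} to $\mathcal{P}_{\ell\text{-classic}}$ with $\delta = \Theta(1/\ell)$, then read off the entropy and running-time bounds. The paper simply takes $\delta = 1/\ell$ rather than your $\delta = 2/(\ell p_0)$, but this is cosmetic.

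There is one small glitch in your displayed chain. The middle inequality
\[
(1-\delta)\,\Prob_{\pi \sim \Pi_\ell}[P_\sigma] \;\geq\; \Prob_{\pi \sim \Pi_\ell}[P_\sigma] - \frac{2}{\ell}
\]
is equivalent to $\Prob[P_\sigma] \le p_0$, which is the wrong direction (by definition $p_0$ is the minimum). The conclusion is still correct via the obvious route
\[
(1-\delta)\,\Prob_{\pi \sim \Pi_\ell}[P_\sigma] \;\ge\; (1-\delta)\,p_0 \;=\; p_0 - \delta p_0 \;=\; p_0 - \frac{2}{\ell},
\]
so this is a presentation slip rather than a gap.
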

\begin{proof}
Consider Theorem~\ref{Thm:Derandomization_2_111} applied to the atomic events $\mathcal{A}_{k, \mathcal{B}_{1}}$, the family of positive events $\mathcal{P}_{\ell-\text{classic}}$ and parameters $\ell$ and $k < \ell$. Lemma~\ref{lem:one-sec-uniform-lower-bound} and Lemma~\ref{lem:one-positive-comp-time} ensure that necessary conditions for Theorem~\ref{Thm:Derandomization_2_111} hold and in consequence there exists a mulit-set $\mathcal{L}_{\ell}$ (identified in the proof with the uniform distribution over the set) such that for every $P_{\sigma} \in \mathcal{P}_{\ell-\text{classic}}$ it holds 
$$\Prob_{\pi \sim \mathcal{L}_{\ell}}(\pi \in P_{\sigma}) \ge (1-\delta) \bigg(\frac{1}{e} + \frac{1}{\ell} - \frac{1}{e\cdot k} \cdot \left( 1 - \frac{1}{e} \right)^{k} \bigg),$$
which for $\delta := \frac{1}{\ell}$ implies the claimed inequality in the lemma statement. The size of $\mathcal{L}_{\ell}$ can be upper bounded as follows:
$$|\mathcal{L}_{\ell}| \le \frac{2\log(|\mathcal{P}_{\ell-\text{classic}}|)}{\delta^{2} p} \le 2\ell p^{-1}\log\bigg( {\ell \choose k} \cdot k!\bigg)$$
$$\le 4\ell (k\cdot \log{\ell} + \log^{2}(k)),$$
where the last inequality holds because Lemma~\ref{lem:one-sec-uniform-lower-bound} ensures that $p > \frac{1}{2}$. It follows then that the entropy of $\mathcal{L}_{\ell}$ is $O(\log \ell)$, since $k < \ell$. Finally, using Theorem \ref{Thm:Derandomization_2_111}, we calculate that the running time for computing the multi-set $\mathcal{L}_{\ell}$ is the following
$$O\left(|\mathcal{L}_{\ell}| \cdot \ell^4 \cdot |\mathcal{P}_{\ell-\text{classic}}| \cdot 2^{2k} 
\cdot (k!)^3 \cdot k\right) = O(\ell^{k+6} \cdot (k!)^5) \, ,
$$
assuming that $k$ and $\ell$ are larger than an absolute large enough constant. This completes the proof of the lemma.
\end{proof}

\noindent \textbf{Lifting lower-dimension permutations distribution satisfying positive events.}
Assume now, that we are given a set $\mathcal{L}_{\ell}$ of $\ell$-element permutations such that the classic secretary algorithm achieves $\frac{1}{e}-\epsilon$ competitive ratio, for an $0 < \epsilon < \frac{1}{e}$, when the adversarial elements are presented in the order given by an uniform permutation from $\mathcal{L}_{\ell}$. In this part, we apply a dimension-reduction set of functions defined in Section~\ref{section:dim_reduction_2}, and show how to lift the set $\mathcal{L}_{\ell}$ to a set of $n$-element permutations $\mathcal{L}_{n}$, for $n > l$. This implies that the classic secretary algorithm executed on a uniform permutation chosen from $\mathcal{L}_{n}$ achieves the competitive ratio of the smaller-dimension distribution. The proof is similar of the analog lemma for the multiple-choice secretary problem.


\begin{lemma}\label{lem:one-lifting}
Denote $ALG_{\text{classic}}(\pi')$ the event that the classic algorithm achieves success on the random permutation $\pi' \sim \mathcal{L}_{n}$. Assuming that $\ell^{2} < \frac{n}{\ell}$, there exists a multi-set of $n$-element permutations $\mathcal{L}_{n}$ such that
$$\Prob_{\pi' \sim \mathcal{L}_{n}}(ALG_{\text{classic}}(\pi')) > \bigg(1 - \frac{k^2}{\sqrt{\ell}}\bigg)\bigg(\frac{1}{e} - \frac{1}{\ell} - \frac{1}{e\cdot k} \cdot \left( 1 - \frac{1}{e} \right)^{k}\bigg).$$
Moreover, given the multi-set $\mathcal{L}_{\ell}$, the set $\mathcal{L}_{n}$ can be constructed in $O(n \cdot |\mathcal{L}_{\ell}|)$ time and the entropy of the uniform distribution on $\mathcal{L}_{n}$ is $O(\log\log{n} + \log{|\mathcal{L}_{\ell}|})$.
\end{lemma}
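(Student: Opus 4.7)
The plan is to mirror the lifting construction of Lemma~\ref{lem:mult-lifting} for the multiple-choice case, now specialized to the single checkpoint at position $n/e$ used by the classic secretary algorithm. First I would apply Corollary~\ref{cor:dim-red-1} to obtain a dimension-reduction set $\mathcal{G}$ of functions $g : [n] \longrightarrow [\ell]$ with parameters $(n, \ell, \sqrt{\ell})$, having $|\mathcal{G}| = O(\mathrm{poly}(\ell))$ and preimage sizes in $[\lfloor n/\ell \rfloor, \lfloor n/\ell \rfloor + 1]$. For each $\pi \in \mathcal{L}_\ell$ and each $g \in \mathcal{G}$, I would define $\pi \circ g$ to be any $n$-permutation obtained by listing the blocks $g^{-1}(1), \ldots, g^{-1}(\ell)$ in the order prescribed by $\pi$ (the internal order inside each block being irrelevant), and set $\mathcal{L}_n = \{\pi \circ g : \pi \in \mathcal{L}_\ell, g \in \mathcal{G}\}$. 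The size, construction time, and entropy bounds then follow immediately: $|\mathcal{L}_n| = O(\mathrm{poly}(\ell) \cdot |\mathcal{L}_\ell|)$, building each $\pi \circ g$ costs $O(n)$, and the uniform distribution on $\mathcal{L}_n$ has entropy $O(\log \ell + \log |\mathcal{L}_\ell|) = O(\log \log n + \log |\mathcal{L}_\ell|)$.

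Next I would analyse the success probability on a random $\pi' = \pi \circ g \sim \mathcal{L}_n$, viewed equivalently as the independent uniform sampling of $g \sim \mathcal{G}$ and $\pi \sim \mathcal{L}_\ell$. Fix the adversarial $k$-tuple $\hat{S} = (j_1, \ldots, j_k) \subseteq [n]$ of positions of the $k$ largest values. By Lemma~\ref{lem:Reed_Solomon_Construction}, any fixed pair $j_i \neq j_{i'}$ satisfies $\Prob_{g \sim \mathcal{G}}[g(j_i) = g(j_{i'})] \le \sqrt{\ell}/\ell$, so a union bound over the at most $k^2$ pairs shows that $\hat{S}' := g(\hat{S})$ is a $k$-tuple of pairwise distinct elements of $[\ell]$ with probability at least $1 - k^2/\sqrt{\ell}$. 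Conditioned on this injectivity event, Lemma~\ref{lem:one-construction-small} applied to the tuple $\hat{S}'$ guarantees that the positive event $P_{\hat{S}'} \in \mathcal{P}_{\ell\text{-classic}}$ holds for $\pi$ with probability at least $\frac{1}{e} - \frac{1}{\ell} - \frac{1}{e k}(1 - 1/e)^k$.

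The hard part, and the reason for the hypothesis $\ell^2 < n/\ell$, is the bucket-to-interval correspondence: I must show that whenever $\pi$ falls in $P_{\hat{S}'}$ and $g$ is injective on $\hat{S}$, the classic algorithm with checkpoint $n/e$ succeeds on $\pi \circ g$. Recall the bucketing on the $\ell$-side is $B_1 = \{1, \ldots, \ell/e - 1\}$ and $B_2 = \{\ell/e + 1, \ldots, \ell\}$. Because every block $g^{-1}(q)$ has size in $[\lfloor n/\ell \rfloor, \lfloor n/\ell \rfloor + 1]$, the positions in $\pi \circ g$ occupied by the blocks indexed by $B_1$ form an initial segment ending no later than $n/e - 1 + o(\ell)$, while the blocks indexed by $B_2$ occupy positions starting no earlier than $n/e + 1 - o(\ell)$; the assumption $\ell^2 < n/\ell$ is precisely what absorbs the $o(\ell)$ per-block fluctuation against the gap of width $\asymp n/\ell$, exactly as in the multiple-choice proof. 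Consequently, on the event $P_{\hat{S}'}$ the element with the $i$-th largest value arrives before position $n/e$ in $\pi \circ g$, the maximum arrives after it, and every intermediate top element $\hat{\sigma}(2), \ldots, \hat{\sigma}(i-1)$ arrives after the maximum -- the winning configuration for the classic algorithm. Combining the three ingredients above yields the claimed bound
\[
\Prob_{\pi' \sim \mathcal{L}_n}[ALG_{\text{classic}}(\pi')] \ge \left(1 - \frac{k^2}{\sqrt{\ell}}\right)\left(\frac{1}{e} - \frac{1}{\ell} - \frac{1}{e k}\left(1 - \frac{1}{e}\right)^k\right),
\]
and the running-time and entropy claims are immediate from the construction of $\mathcal{L}_n$ above.
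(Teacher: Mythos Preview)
Your proposal is correct and follows essentially the same approach as the paper: you use the dimension-reduction set from Corollary~\ref{cor:dim-red-1}, form $\mathcal{L}_n = \{\pi \circ g\}$, bound the collision probability by $k^2/\sqrt{\ell}$ via union bound, invoke Lemma~\ref{lem:one-construction-small} on the induced $k$-tuple, and use the near-uniform preimage sizes together with $\ell^2 < n/\ell$ to transport the two-bucket structure across the checkpoint $n/e$. The paper's proof is organized identically, with the same ingredients in the same order.
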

\begin{proof}
Consider a dimension-reduction set of functions $\mathcal{G}$ given by Corollary~\ref{cor:dim-red-1} with parameters $(n, \ell, \sqrt{\ell})$. Note, that the size of set $\mathcal{G}$ is $O(poly (\ell))$. Recall that set $\mathcal{L}_{\ell}$ is given in Lemma \ref{lem:one-construction-small}.  
For a given function $g \in \mathcal{G}$ and a permutation $\pi \in \mathcal{L}_{\ell}$ we denote by $\pi \circ g : [n] \rightarrow [n]$ any permutation $\sigma$ over set $[n]$ satisfying the following:
$\forall_{i,j \in [n], i \neq j}$ if $\pi^{-1}(g(i)) < \pi^{-1}(g(j))$ then $\sigma^{-1}(i) < \sigma^{-1}(j)$.
The aforementioned formal definition has the following natural explanation. The function $g \in \mathcal{G}, g : [n] \rightarrow [\ell]$ may be interpreted as an assignment of each element from set $[n]$ to one of $\ell$ blocks. Next, permutation $\pi \in \mathcal{L}_{\ell}$ determines the order of these blocks. The final permutation is obtained by listing the elements from the blocks in the order given by $\pi$. The order of elements inside the blocks is irrelevant. 
The set $\mathcal{L}_{n}$ of $n$-element permutations is defined as $\mathcal{L}_{n} = \{ \pi \circ g : \pi \in \mathcal{L}_{\ell}, g \in \mathcal{G}\}$, and its size is $|\mathcal{L}_{\ell}| \cdot |\mathcal{G}| = O(poly \log (\ell) \cdot |\mathcal{L}_{\ell}|)$. It is easy to observe that $\mathcal{L}_{n}$ can be computed in $O(poly (\ell) \cdot |\mathcal{L}_{\ell}|)$ time and the entropy of the uniform distribution over this set is $O(\log{n} + \log{|\mathcal{L}_{\ell}|})$.

In the remaining part, we show that uniform distribution over the set $\mathcal{L}_{n}$ guarantees the proper competitive ratio. Consider any $k$-tuple $\sigma = (\sigma(1), \ldots, \sigma(k)) \subseteq [n]$ denoting the position of $k$ largest adversarial elements in the $n$-element adversarial permutation. If the classic secretary algorithm is executed on  permutation $\sigma$, one can associate to this random experiment the following interpretation: first we draw u.a.r a function $g$ from $\mathcal{G}$ and then draw u.a.r a permutation $\pi$ from $\mathcal{L}_{\ell}$. 
Observe, that for a random function $g \in \mathcal{G}$ the probability that a pair of fixed indices $\sigma(i),\sigma(j)$ is distributed to the same block is at most $\frac{d}{\ell} = \frac{\sqrt{\ell}}{\ell}$, by Property $(1)$ of the dimension-reduction set $\mathcal{G}$. Then, by the union bound argument, we conclude that the probability that all elements of the $k$-tuple $\sigma$ are assigned to different blocks is at least $1 - \frac{k^2}{\sqrt{\ell}}$. Conditioned on this event, the image of $\sigma$ under the function $g$ is a $k$-tuple of elements from $[\ell]$ denoted $\sigma' = (\sigma'(1), \ldots, \sigma'(k))$. Therefore, by Lemma~\ref{lem:one-construction-small}, the classic algorithm (with checkpoint $\ell / e$) when executed on a random permutation $\pi$ from $\mathcal{L}_{\ell}$ picks the largest element with probability at least $$\frac{1}{e} - \frac{1}{\ell} - \frac{1}{e\cdot k} \cdot \left( 1 - \frac{1}{e} \right)^{k} \, .$$
To argue that this competitive ratio carries to $n$-element permutation $\pi \circ g$, we observe first that size of each block is $[\frac{n}{\ell}, \frac{n}{\ell} + o(\ell)]$, by Property $(2)$ of the dimension-reduction set $\mathcal{G}$. Also, we required that $\ell^{2} < \frac{n}{\ell}$. Therefore, the first bucket from the bucketing $\mathcal{B}_{1} = \{1, \ldots,  \frac{\ell}{e} - 1\}, \{\frac{\ell}{e} + 1, \ldots, \ell \}$ associated with the multi-set of permutation $\mathcal{L}_{n}$ translates to first $(\ell / e - 1) \cdot (\frac{n}{\ell} + o(\ell)) = \frac{n}{e} - \frac{n}{\ell} + \frac{\ell o(\ell)}{e} < \frac{n}{e}$ positions when the permutation $\pi \circ g$ is considered. It follows that the positions of elements from $k$-tuple $\sigma$ in the permutation $\pi \circ g$ with respect to the checkpoint $\frac{n}{e}$ are the same as positions of elements from $k$-tuple $\sigma'$ in the permutation $\pi$ with respect to the checkpoint $\frac{\ell}{e}$. Also, their relative order conveys from permutation $\pi$ to permutation $\pi \circ g$. Thus, it follows that whenever the classic algorithm is successful on the permutation $\pi$ it is also successful  on the permutation $\pi \circ g$ conditioned on the fact that $g$ is injective. This conditioning holds with probability $\big(1 - \frac{k^2}{\sqrt{\ell}}\big)$, thus the lemma follows by using Lemma \ref{lem:one-construction-small}.
\end{proof}

The main theorem employs the techniques introduced earlier in this section with the appropriate choice of parameters $k$, and $\ell$. 
\begin{theorem}\label{thm:1_secretary}
There exists a multi-set of $n$-element permutations $\mathcal{L}_{n}$ such that the the wait-and-pick algorithm with checkpoint $\lfloor n/e \rfloor$ achieves
$$\frac{1}{e} - \frac{3\log\log^2{n}}{e\log^{1/2}{n}} $$
success probability for the free order $1$-secretary problem, when the algorithm uses the order chosen uniformly from $\mathcal{L}_{n}$. The set $\mathcal{L}_{n}$ is computable in time $O(\text{poly } (n))$ and the uniform distribution on this set has $O(\log\log{n})$ entropy.
\end{theorem}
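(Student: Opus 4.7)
The plan is to instantiate Lemma~\ref{lem:one-construction-small} followed by Lemma~\ref{lem:one-lifting} with carefully calibrated parameters, exactly mirroring the strategy of Theorem~\ref{thm:k_secretary_main}. Specifically, I would set $\ell := \lfloor \log n \rfloor$ (rounded up to the nearest prime in $[\log n, 2\log n]$, so that the Reed-Solomon construction in Lemma~\ref{lem:Reed_Solomon_Construction} applies), and $k := \lfloor \log\log n \rfloor$. The technical precondition $\ell^2 < n/\ell$ required by Lemma~\ref{lem:one-lifting} becomes $\log^3 n < n$, which holds for all sufficiently large $n$.

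First, I apply Lemma~\ref{lem:one-construction-small} to construct a multi-set $\mathcal{L}_\ell$ of $\ell$-element permutations with uniform-distribution entropy $O(\log \ell) = O(\log\log n)$ such that for every $P_\sigma \in \mathcal{P}_{\ell-\text{classic}}$,
$$\Prob_{\pi \sim \mathcal{L}_\ell}[\pi \in P_\sigma] \;\geq\; \frac{1}{e} - \frac{1}{\ell} - \frac{1}{e\,k}\Bigl(1-\frac{1}{e}\Bigr)^{k}.$$
The running time bound is $O(\ell^{k+6}\cdot(k!)^{5}) = 2^{O(\log^{2}\log n)}$, which is subpolynomial in $n$. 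Second, I invoke Lemma~\ref{lem:one-lifting} to lift $\mathcal{L}_\ell$ to a set $\mathcal{L}_n$ of $n$-element permutations, in time $O(n\cdot|\mathcal{L}_\ell|) = O(n\cdot\log n\cdot\log^{2}\log n)$, with entropy $O(\log\log n + \log|\mathcal{L}_\ell|) = O(\log\log n)$ since $|\mathcal{L}_\ell| \leq 4\ell(k\log\ell+\log^{2}k) = \mathrm{polylog}(n)$. The lemma guarantees
$$\Prob_{\pi' \sim \mathcal{L}_n}[ALG_{\text{classic}}(\pi')] \;\geq\; \Bigl(1 - \tfrac{k^{2}}{\sqrt{\ell}}\Bigr)\Bigl(\tfrac{1}{e} - \tfrac{1}{\ell} - \tfrac{(1-1/e)^{k}}{e\,k}\Bigr).$$

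Third, I verify the competitive ratio. With $\ell=\log n$ and $k=\log\log n$, the lifting loss contributes a dominant term $\tfrac{k^{2}}{e\sqrt{\ell}} = \tfrac{(\log\log n)^{2}}{e\sqrt{\log n}}$, while the two subtracted corrections $\tfrac{1}{\ell}=\tfrac{1}{\log n}$ and $\tfrac{(1-1/e)^{k}}{ek} = (\log n)^{-\log_{2}(e/(e-1))}/(e\log\log n)$ are both $o\bigl((\log\log n)^{2}/\sqrt{\log n}\bigr)$. Expanding the product and absorbing every lower-order term into the constant factor yields the claimed bound
$$\Prob_{\pi' \sim \mathcal{L}_n}[ALG_{\text{classic}}(\pi')] \;\geq\; \frac{1}{e} - \frac{3(\log\log n)^{2}}{e\sqrt{\log n}}.$$

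There is no genuinely new ideas needed beyond parameter tuning: the heavy lifting is done inside Lemmas~\ref{lem:one-construction-small} and~\ref{lem:one-lifting}. The only mildly delicate point---what one might call the main obstacle---is the simultaneous calibration that (i) keeps the derandomization time $\ell^{k+6}(k!)^{5}$ polynomial in $n$, (ii) keeps the lifting dimension-reduction error $k^{2}/\sqrt{\ell}$ at most $O((\log\log n)^{2}/\sqrt{\log n})$, and (iii) makes the exponential-in-$k$ decay term $(1-1/e)^{k}$ dominate the additive $1/\ell$ term so that tracking only the top $k$ adversarial values does not hurt. The choice $\ell = \Theta(\log n)$, $k = \Theta(\log\log n)$ is essentially forced by balancing these three constraints.
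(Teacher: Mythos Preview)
Your proposal is correct and follows exactly the paper's approach: set $\ell=\Theta(\log n)$ and $k=\Theta(\log\log n)$, then apply Lemma~\ref{lem:one-construction-small} followed by Lemma~\ref{lem:one-lifting}. If anything, you are more careful than the paper's own proof---you explicitly check the precondition $\ell^{2}<n/\ell$, the prime requirement for the Reed--Solomon dimension reduction, and the asymptotic dominance of the $k^{2}/\sqrt{\ell}$ term---whereas the paper simply states the parameter choice and asserts that the two lemmas give the result.
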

\begin{proof}
Set $\ell := \log{n}$, and $k := \log\log{n}$, then the theorem follows by applying directly Lemma~\ref{lem:one-construction-small} and~\ref{lem:one-lifting} to this choice of parameters. By Theorem~\ref{Thm:Derandomization_2_111} the running time is at most $O(\log \log n \cdot \ell^{5+k} \cdot (k!)^5)$ and it is $poly(n)$ because $\ell^{k} = O(poly(n))$ and $k! =O(poly(n))$. 
\end{proof}

\section{Derandomizing positive events: missing details in proof of Theorem \ref{Thm:Derandomization_2_111} }\label{sec:derandomization-proofs_2_111}

To be precise, the adversary assigned value $v(u)$ (the $u$-th largest adversarial value, $u \geq 1$) to the position $j_u$ in his/her permutation and the random permutation $\pi \in \Pi_n$ places this value at the position $\pi^{-1}(j_u)$, for each $u \in \{1,2,\ldots,k\}$.\\

\noindent
{\bf Pessimistic estimator.} Let $P_{\gamma} \in \mathcal{P} = \{P_1,\ldots,P_q\}$, $\gamma \in [q]$. Recall that $X(P_{\gamma}) =
X_1(P_{\gamma}) + \cdots + X_{\ell}(P_{\gamma})$ and denote $\Exp[X_j(P_{\gamma})] = \Prob[X_j(P_{\gamma}) = 1] = \mu_{\gamma j}$ for each $j \in [\ell]$, and $\Exp[X(P_{\gamma})] = \sum_{j=1}^{\ell} \mu_{\gamma j} = \mu_{\gamma}$. By the assumption in Theorem \ref{Thm:Derandomization_2_111}, $\mu_j \geq p$, for each $j \in [\ell]$. We will now use
Raghavan's proof of the Hoeffding bound, see \cite{Young95}, for any $\delta > 0$, using that $\mu_j \geq p$:
\begin{eqnarray*}
	\Prob\left[X(P_{\gamma}) < (1-\delta) \cdot \mu_{\gamma}\right]
	&=&
    \Prob\left[\prod_{j=1}^{\ell} \frac{(1-\delta)^{X_j(P_{\gamma})}}{(1-\delta)^{(1-\delta)\mu_{\gamma j}}} \geq 1\right] \\
    &\leq& 
    \Exp\left[\prod_{j=1}^{\ell} \frac{1-\delta \cdot  X_j(P_{\gamma})}{(1-\delta)^{(1-\delta)\mu_{\gamma j}}}\right] \\ 
    &=& 
    \prod_{j=1}^{\ell} \frac{1-\delta \cdot  \Exp[X_j(P_{\gamma})]}{(1-\delta)^{(1-\delta)\mu_{\gamma j}}} \\
    &<&
    \prod_{j=1}^{\ell} \frac{\exp(- \delta \mu_{\gamma j})}{(1-\delta)^{(1-\delta)\mu_{\gamma j}}} \\
    &=&
    \frac{1}{\exp(b(-\delta) \mu_{\gamma})}\, ,
\end{eqnarray*} where $b(x) = (1+x) \ln(1+x) - x$, and the second step uses Bernoulli's inequality $(1+x)^r \leq 1 + rx$, that holds for $0 \leq r \leq 1$ and $x \geq -1$, and
Markov's inequality, and the last inequality uses $1-x \leq \exp(-x)$, which holds for $x \geq 0$ and is strict if $x \not = 0$. 

By $\mu_{\gamma j} \geq p_{\gamma}$, for each $j \in [\ell]$, we can further upper bound the last line of Raghavan's proof to obtain $\frac{1}{\exp(b(-\delta) \mu_{\gamma})} \leq \frac{1}{\exp(b(-\delta) \ell p_{\gamma})}$. Theorem \ref{theorem:Chernoff_Positive_Events} guarantees existence of the multi set $\mathcal{L}$ of permutations by bounding
$\Prob[X(P_{\gamma}) < (1-\delta) \cdot p_{\gamma}  \ell]  \leq  \exp(- \delta^2 p_{\gamma} \ell/2)$, see (\ref{eqn:Chernoff_Hoeffding_111}). Now, repeating the Raghavan's proof of the Chernoff-Hoeffding bound, cf. \cite{Young95}, with each $\mu_{\gamma j}$ replaced by $p_{\gamma}$ implies that
\begin{eqnarray}
  \Prob\left[X(P_{\gamma}) < (1-\delta) \cdot p_{\gamma} \ell \right]
  &\leq&
  \prod_{j=1}^{\ell} \frac{1-\delta \cdot  \Exp[X_j(P_{\gamma})]}{(1-\delta)^{(1-\delta)p_{\gamma}}} \label{Eq:Raghavan-1_2} \\
  &<&
  \prod_{j=1}^{\ell} \frac{\exp(- \delta \mu_{\gamma j})}{(1-\delta)^{(1-\delta)p_{\gamma}}} \nonumber \\
  &\leq&
  \prod_{j=1}^{\ell} \frac{\exp(- \delta p_{\gamma})}{(1-\delta)^{(1-\delta)p_{\gamma}}} \nonumber \\
  &=&
  \frac{1}{\exp(b(-\delta) \ell p_{\gamma})}
  \,\, < \,\, \frac{1}{\exp(\delta^2 \ell p_{\gamma}/2)} \label{Eq:Raghavan-2_2} \, ,
\end{eqnarray} where the last inequality follows by a well known fact that $b(-x) > x^2/2$, see, e.g., \cite{Young95}. By this argument and by the union bound we obtain that (note that $\mathcal{P} =\{P_1,\ldots,P_q\}$):
\begin{eqnarray}
\Prob\left[\exists P_{\gamma} \in \mathcal{P} : X(P_{\gamma}) < (1-\delta) \cdot p_{\gamma} \ell \right] \,\, \leq \,\, 
\sum_{\gamma = 1}^q \prod_{j=1}^{\ell} \frac{1-\delta \cdot  \Exp[X_j(P_{\gamma})]}{(1-\delta)^{(1-\delta)p_{\gamma}}} \label{Eq:Union_Bound_2} \, .
\end{eqnarray} 

Let us define a function $\phi_j(P_{\gamma})$ as $\phi_j(P_{\gamma})=1$ if $\pi_j$ makes event $P_{\gamma}$ true, and $\phi_j(P_{\gamma}) = 0$ otherwise. The above proof upper bounds the probability of failure by the expected value of
$$
  \sum_{\gamma=1}^q \prod_{j=1}^{\ell} \frac{1-\delta \cdot \phi_j(P_{\gamma})}{(1-\delta)^{(1-\delta)p_{\gamma}}} \, ,
$$ the expectation of which is less than $\sum_{\gamma=1}^q \exp(-\delta^2 \ell p_{\gamma}/2)$, which is strictly smaller than $1$ for appropriately chosen large $\ell$.

 Suppose that we have so far chosen the (fixed) permutations $\pi_1,\ldots,\pi_s$ for some $s \in \{1,2,\ldots,\ell-1\}$, the (semi-random)
permutation $\pi_{s+1}$ is currently being chosen, and the remaining (fully random) permutations, if any, are $\pi_{s+2},\ldots,\pi_{\ell}$. The conditional expectation is then
\begin{eqnarray}
 & & \sum_{\gamma=1}^q\left(\prod_{j=1}^{s} \frac{1-\delta \cdot \phi_j(P_{\gamma})}{(1-\delta)^{(1-\delta)p_{\gamma}}}\right) \cdot \left(\frac{1-\delta \cdot \Exp[\phi_{s+1}(P_{\gamma})]}{(1-\delta)^{(1-\delta)p_{\gamma}}}\right) \cdot \left(\frac{1-\delta \cdot \Exp[\phi_j(P_{\gamma})]}{(1-\delta)^{(1-\delta)p_{\gamma}}}\right)^{\ell - s - 1} \nonumber \\
  &\leq& 
  \sum_{\gamma=1}^q\left(\prod_{j=1}^{s} \frac{1-\delta \cdot \phi_j(P_{\gamma})}{(1-\delta)^{(1-\delta)p_{\gamma}}}\right) \cdot \left(\frac{1-\delta \cdot \Exp[\phi_{s+1}(P_{\gamma})]}{(1-\delta)^{(1-\delta)p_{\gamma}}}\right) \cdot \left(\frac{1-\delta \cdot p_{\gamma}}{(1-\delta)^{(1-\delta)p_{\gamma}}}\right)^{\ell - s - 1} \nonumber \\ 
  &=& \, \Phi(\pi_{s+1}(1),\pi_{s+1}(2), \ldots, \pi_{s+1}(r)) \label{Eq:Pessimistic_Est-2_2} \, ,
\end{eqnarray} where in the inequality, we used that $\Exp[\phi_j(P_{\gamma})] \geq p_{\gamma}$. Note, that 
\begin{eqnarray*}
\Exp[\phi_{s+1}(P_{\gamma})] &=&
\Exp[\phi_{s+1}(P_{\gamma}) \, | \, \pi_{s+1}(r) = \tau] \\
&=& \Prob[X_{s+1}(P_{\gamma}) = 1 \, | \, \pi_{s+1}(1),\pi_{s+1}(2), \ldots, \pi_{s+1}(r-1), \pi_{s+1}(r) = \tau] \, ,
\end{eqnarray*} 
where positions $\pi_{s+1}(1),\pi_{s+1}(2), \ldots, \pi_{s+1}(r)$ have already been fixed in the semi-random permutation $\pi_{s+1}$, $\pi_{s+1}(r)$ has been fixed in particular to $\tau \in [n] \setminus \{\pi_{s+1}(1),\pi_{s+1}(2), \ldots, \pi_{s+1}(r-1)\}$, and this value can be computed by using the algorithm from Theorem \ref{theorem:semi-random-conditional_2_111}.
This gives the pessimistic estimator $\Phi$ of the failure probability in (\ref{Eq:Union_Bound_2}) for our derandomization.


We can rewrite our pessimistic estimator $\Phi$ as follows
$$
  \sum_{\gamma=1}^q \omega(\ell,s,\gamma) \cdot  \left(\prod_{j=1}^{s} \left(1-\delta \cdot \phi_j(P_{\gamma})\right)\right) \cdot \left(1-\delta \cdot \Exp[\phi_{s+1}(P_{\gamma})]\right) \, , \mbox{ where } \,
  \omega(\ell,s,\gamma) = 
  \frac{\left(1-\delta p_{\gamma}\right)^{\ell-s+1}}{(1-\delta)^{(1-\delta)p_{\gamma}\ell}} \, .
$$
Recall that the value of $\pi_{s+1}(r)$ in the semi-random permutation was fixed but not final. To make it fixed and final, we simply choose the value $\pi_{s+1}(r) \in [n] \setminus \{\pi_{s+1}(1),\pi_{s+1}(2), \ldots, \pi_{s+1}(r-1)\}$ that minimizes this last expression
\begin{eqnarray}
  \sum_{\gamma=1}^q \omega(\ell,s,\gamma) \cdot  \left(\prod_{j=1}^{s} \left(1-\delta \cdot \phi_j(P_{\gamma})\right)\right) \cdot \left(1-\delta \cdot \Exp[\phi_{s+1}(P_{\gamma})]\right) \, . \label{Eq:Pessimistic_Est_Obj-2_2}
\end{eqnarray}


\begin{proof} (of Lemma \ref{lem:potential_correct_2_111})
  This follows from the following three properties: (a) it is an upper bound on the conditional probability of failure;  
  (b) it is initially strictly less than $1$; (c) some new value of the next index variable in the partially fixed semi-random permutation 
  $\pi_{s+1}$ can always be chosen without increasing it.
  
  Property (a) follows from (\ref{Eq:Raghavan-1_2}) and (\ref{Eq:Union_Bound_2}). To prove (b) we see by (\ref{Eq:Raghavan-2_2}) and (\ref{Eq:Union_Bound_2}) that
  $$
  \Prob\left[\exists P_{\gamma} \in \mathcal{P} : X(P_{\gamma}) < (1-\delta) \cdot p_{\gamma} \ell \right] < \sum_{\gamma=1}^q \exp(- \delta^2 \ell p_{\gamma}){\gamma}/2) \, .
  $$ Note that $\ell \geq \frac{2\log{q}}{\delta^2 p_0}$ implies that  $\sum_{\gamma=1}^q \exp(- \delta^2 \ell p_{\gamma}){\gamma}/2) \leq 1$, see the proof of Theorem \ref{theorem:Chernoff_Positive_Events}. (a) and (b) follow by the above arguments and by the assumption about $\ell$.
  
  Part (c) follows because $\Phi$ is an expected value conditioned on the choices made so far. For the precise argument let us observe that
\begin{eqnarray*}
  & & \Prob[X_{s+1}(P_{\gamma}) = 1 \, | \, \pi_{s+1}(1),\pi_{s+1}(2), \ldots, \pi_{s+1}(r-1)] \\
  &=& \sum_{\tau \in T}  \frac{1}{n-r+1} \cdot \Prob[X_{s+1}(P_{\gamma}) = 1 \, | \, \pi_{s+1}(1),\pi_{s+1}(2), \ldots, \pi_{s+1}(r-1), \pi_{s+1}(r) = \tau] \, ,
\end{eqnarray*} where $T = [n] \setminus \{\pi_{s+1}(1),\pi_{s+1}(2), \ldots, \pi_{s+1}(r-1)\}$. Then by (\ref{Eq:Pessimistic_Est-2_2}) we obtain
\begin{eqnarray*}
& &
\Phi(\pi_{s+1}(1),\pi_{s+1}(2), \ldots, \pi_{s+1}(r-1)) \\
&=&
\sum_{\tau \in T}  \frac{1}{n-r+1} \cdot \Phi(\pi_{s+1}(1),\pi_{s+1}(2), \ldots, \pi_{s+1}(r-1), \pi(r) = \tau) \\
&\geq& 
\min \{ \Phi(\pi_{s+1}(1),\pi_{s+1}(2), \ldots, \pi_{s+1}(r-1), \pi(r) = \tau) \,\, : \,\, \tau \in T \} \, ,
\end{eqnarray*} which implies part (c).
\end{proof}

\begin{proof} (of Theorem \ref{Thm:Derandomization_2_111})
The computation of the conditional probabilities ${\sf Prob}(A)$, for any atomic event $A$, by Algorithm \ref{algo:Cond_prob_2_111} is correct by Theorem \ref{theorem:semi-random-conditional_2_111}. Algorithm \ref{algo:Find_perm_2_111} is a direct translation of the optimization of the pessimistic estimator $\Phi$. In particular, observe that the correctness of the weight initialization in Line \ref{Alg:Weight_Init_2} of Algorithm \ref{algo:Find_perm_2_111}, and of weight updates in Line \ref{Alg:Weight_Update_2}, follow from the form of the pessimistic estimator objective function in (\ref{Eq:Pessimistic_Est_Obj-2_2}).

The value of the pessimistic estimator $\Phi$ is strictly smaller than $1$ at the beginning and in each step, it is not increased by properties of the pessimistic estimator (Lemma \ref{lem:potential_correct_2_111}). Moreover, at the last step all values of all $\ell$ permutations will be fixed, that is, there will be no randomness in the computation of $\Phi$. Observe that $\Phi$ is an upper bound on the expected number of the positive events from $\mathcal{P}$ that are not well-covered. So at the end of the derandomization process the number of such positive events will be $0$, implying that all these positive events from $\mathcal{P}$ will be well-covered, as desired.

Using Theorem \ref{theorem:semi-random-conditional_2_111} it is straightforward to count the number of operations in Algorithm \ref{algo:Find_perm_2_111} as follows
$$
  O\left(  
    \ell \cdot \left( n \cdot \left( q \cdot \left(n \cdot |Atomic(P)|^2 \cdot (n^2 + nk (k! + \log^2 n))
                                             \right) 
                                     + n
                              \right) 
                      + qn 
               \right)  
  \right) =
$$
$$
  = O\left( \ell n^2 q \cdot |Atomic(P)|^2 \cdot (n^2 + nk (k! + \log^2 n)) \right)
  = O\left( \ell n^3 q \cdot t^{2k} \cdot (k!)^2 \cdot \left(n + k k! + k \log^2 n\right) \right) \, ,
$$ where we used that $|Atomic(P)| \leq t^k \cdot k!$.
\end{proof}

\section{Lower bounds for the $k$-secretary problem}
\label{sec:lower-bounds}

\ignore{ 

\subsection{Optimality of $(1-1/\sqrt{k})$ competitive ratio}
\label{sec:optimality-k-secretary}

  We will show now that any, even randomized, algorithm for the $k$-secretary problem cannot have competitive 
ratio better than $(1-1/\sqrt{k})$, not only when it uses a uniform probability distribution $\mathcal{D}_{\Pi_n}$ on the set of all permutations $\Pi_n$ (this fact is well known \cite{kleinberg2005multiple,Gupta_Singla}), but also when it uses any distribution on $\Pi_n$. The main idea is to view a randomized algorithm $A$ (with some internal random bits) and the distribution $\mathcal{D}_{\Pi_n}$ together as a randomized algorithm $B = (A,\mathcal{D}_{\Pi_n})$ for the $k$-secretary problem. The randomness of the 
algorithm $B$ is the randomness of $A$ together with the randomness in $\mathcal{D}_{\Pi_n}$. Algorithm $B$ first samples $\pi \sim \mathcal{D}_{\Pi_n}$ and then runs $A$ on the items ordered according to permutation $\pi$.

\begin{proposition}\label{prop:optimal_comp_ratio}
  The best possible competitive ratio of any, even randomized, algorithm $A$ for the $k$-secretary problem is $(1-\Omega(1/\sqrt{k}))$ even if it 
uses a random order chosen from {\em any} distribution on $\Pi_n$. 
\end{proposition}

\begin{proof}
Let us fix any deterministic algorithm $A$ for the $k$-secretary problem and any fixed permutation $\pi \in \Pi_n$. This pair $B=(A,\pi)$ can be viewed as a deterministic algorithm for the $k$-secretary problem where $A$ is executed on the items in order given by $\pi$.
  
We will follow now an argument outlined in the survey by Gupta and Singla \cite{Gupta_Singla}. By Yao’s minimax principle \cite{Yao77}, it suffices to give a distribution over instances of adversarial assignments of values to items, that causes a large loss for any deterministic algorithm, in this case algorithm $B$. Suppose that each item has value $0$ with probability $1 - \frac{k}{n}$, and otherwise, it has value $1$ with probability $\frac{k}{2n}$, or value $2$ with the remaining probability $\frac{k}{2n}$. The number of non-zero items is therefore $k \pm O(\sqrt{k})$ with high probability by Chernoff bound, with about half $1$’s and half $2$’s. Therefore, the optimal value of this $k$-secretary instance is $V^* = 3k/2 \pm O(\sqrt{k})$ with high probability. 

Ideally, we want to pick all the $2$’s and then fill the remaining $k/2 \pm O(\sqrt{k})$ slots using the $1$’s. However, consider the state of the algorithm $B$ after $n/2$ arrivals. Since the algorithm does not know how many $2$’s will arrive in the second half, it does not know how many $1$’s to pick in the first half. Hence, it will either lose about $\Theta(\sqrt{k})$ $2$’s in the second half, or it will pick $\Theta(\sqrt{k})$ too few $1$’s from the first half. Either way, the algorithm will lose $\Omega(V^*/\sqrt{k})$ value.
  
\pk{Is the argument below now formal enough?}
  
Now by applying the Yao's principle, this loss applies to any deterministic worst-case adversarial assignment of values $\{0,1,2\}$ to the items in
$[n]$ and any randomized algorithm that is a probability distribution on any deterministic $k$-secretary algorithms. Therefore, this loss also applies to any randomized algorithm that is a probability distribution $\mathcal{D}_{\mbox{pairs}}$ on the pairs $(A,\pi)$ of deterministic algorithm $A$ and permutation $\pi \in \Pi_n$. 

Let us now fix any randomized algorithm $B$ for the $k$-secretary problem. This algorithm is a probability distribution $\mathcal{D}_B$ on some set of deterministic $k$-secretary algorithms. Let us also choose {\em any} probability distribution $\mathcal{D}_{\Pi_n}$ on the set of permutations $\Pi$. The product distribution $(\mathcal{D}_B, \mathcal{D}_{\Pi_n})$ is an example of distribution of type $\mathcal{D}_{\mbox{pairs}}$ above. 
Therefore, the above lower bound applies to the randomized algorithm $(\mathcal{D}_B, \mathcal{D}_{\Pi_n})$, which first samples $\pi \sim \mathcal{D}_{\Pi_n}$ and then executes the randomized algorithm $B = \mathcal{D}_B$ on the items in order given by $\pi$. This argument shows that no randomized algorithm $B$ that uses {\em any} probability distribution on random orders can have a competitive ratio better than $(1-\Omega(1/\sqrt{k}))$.
\end{proof}

} 

\subsection{Entropy lower bound for $k=O(\log^a n)$, for some constant $a\in (0,1)$}
\label{sec:lower-general}


Our proof of a lower bound on the entropy of any $k$-secretary algorithm achieving ratio $1-\epsilon$, for a given $\epsilon\in (0,1)$, stated in Theorem~\ref{thm:lower-general}, generalizes the proof for the (classic) secretary problem in \cite{KesselheimKN15}. 
This generalization is in two ways: first, we reduce the problem of selecting the largest value to the $k$-secretary problem of achieving ratio $1-\epsilon$, by considering a special class of hard assignments of values.
Second, when analyzing the former problem, we have to accommodate the fact that a our algorithm aiming at selecting the largest value can pick $k$ elements, while the classic adversarial algorithm can pick only one element. Below is an overview of the lower bound analysis.

We consider a subset of permutations, $\Pi\subseteq \Pi_n$, of size $\ell$ on which the distribution is concentrated enough (see Lemma~\ref{lem:entropy-support} proved in~\cite{KesselheimKN15}). Next, we fix
a semitone sequence $(x_1,\ldots,x_s)$ w.r.t. $\Pi$ of length $s=\frac{\log n}{\ell+1}$ and consider a specific class of hard assignments of values, defined later.
A semitone sequence with respect to $\pi$, introduced in~\cite{KesselheimKN15}, is defined recursively as follows: an empty sequence is semitone with respect to any permutation
$\pi$, and a sequence $(x_1, \ldots , x_s)$ is semitone w.r.t. $\pi$ if $\pi(x_s) \in\{ \min_{i\in [s]} \pi(x_i), \max_{i\in [s]} \pi(x_i)\}$ and
$(x_1, \ldots, x_{s-1})$ is semitone w.r.t. $\pi$. 
It has been showed that for any given set $\Pi$ of 
$\ell$ permutations of $[n]$, there
is always a sequence of length $s=\frac{\log n}{\ell+1}$ that is semitone with respect to all $\ell$ permutations.

Let 
$V^*=\{1,\frac{k}{1-\epsilon},(\frac{k}{1-\epsilon})^2,\ldots,(\frac{k}{1-\epsilon})^{n-1}\}$. 
An assignment is {\em hard} if the values of the semitone sequence form a permutation of some 
subset of $V^*$
while elements not belonging to the semitone sequence have value $\frac{1-\epsilon}{k}$.
Note that values allocated by hard assignment to elements not in the semitone system are negligible, in the sense that the sum of any $k$ of them is $1-\epsilon$ while the sum of $k$ largest values in the whole system is much bigger than $k$. 
Intuitively, every $k$-secretary algorithm achieving ratio $1-\epsilon$ must select largest value in hard assignments (which is in the semitone sequence) with probability at least $1-\epsilon$ -- this requires analysis of how efficient are deterministic online algorithms selecting $k$ out of $s$ values in finding the maximum value on certain random distribution of hard assignments (see Lemma~\ref{lem:lower-random-adv}) and applying Yao's principle to get an upper bound on the probability of success on any randomized algorithm against hard assignments (see Lemma \ref{lem:lower-deterministic-adv}).

For the purpose of this proof, let us fix $k\le \log^a n$ for some constant $a\in (0,1)$, and parameter $\epsilon \in (0,1)$ (which could be a function of $n,k$).

\begin{lemma}
\label{lem:lower-random-adv}
Consider a set of $\ell<\log n - 1$ permutations $\Pi\subseteq \Pi_n$ and a semitone sequence $(x_1,\ldots,x_s)$ w.r.t. set $\Pi$ of length $s=\frac{\log n}{\ell+1}<\log n$.
Consider any deterministic online algorithm that for any given $\pi\in \Pi$ aims at selecting the largest value, using at most $k$ picks, 
against the following distribution of hard assignments. 

Let $V=V^*$.
We proceed recursively: 
$v(x_s)$ is the middle element of $V$, and 
we apply the recursive procedure u.a.r.: 
(i) on sequence $(x_1,\ldots,x_{s-1})$ and new set $V$ containing $|V|/2$ {\em smallest} elements in $V$ with probability $\frac{1}{s}$ (i.e., $v(x_s)$ is larger than values of the remaining elements with probability $1/s$), and 
(ii) on sequence $(x_1,\ldots,x_{s-1})$ and new set $V$ containing $|V|/2$ {\em largest} elements in $V$ with probability $\frac{s-1}{s}$ (i.e., $v(x_s)$ is smaller than values of the remaining elements with probability $(s-1)/s$).

\ignore{
First, $z$ is selected from $V^*$ u.a.r.
Let $V=V_z$ be the pool of values to be assigned to the semitone sequence 1-1.
Then, we proceed recursively: $v(x_s)=\max V$ with probability $\frac{1}{s}$ and $v(x_s)=\min V$ with probability $\frac{s-1}{s}$, while the assignment of the remaining values from $V\setminus \{v(x_s)\}$ to $(x_1,\ldots,x_{s-1})$ is done recursively and independently.
}


Then, for any $\pi\in\Pi$, the algorithm selects the maximum value with probability at most~$\frac{k}{s}$.
\end{lemma}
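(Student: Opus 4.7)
The plan is to prove the bound by induction on the length $s$ of the semitone sequence. The base case $s \le k$ is trivial since $k/s \ge 1$. For the inductive step, fix any $\pi \in \Pi$; the plan exploits two structural facts. First, the semitone property forces $x_s$ to occupy either the first or the last position in $\pi$-order among $\{x_1,\ldots,x_s\}$. Second, the subsequence $(x_1,\ldots,x_{s-1})$ is itself semitone w.r.t. $\pi$ and, conditionally on either top-level recursion branch, inherits exactly the recursive hard-assignment distribution of length $s-1$ on the retained half of $V$, so the inductive hypothesis applies directly to it.

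The cleaner case is when $x_s$ is first in $\pi$-order. Here the algorithm observes $v(x_s) = \mathrm{mid}(V)$ before any other semitone value and has no information that discriminates between the smaller and larger branches, so its pick-or-skip decision at $x_s$ is a deterministic constant. The plan is to split on this decision. Picking $v(x_s)$ yields a sure success in the smaller branch (probability $1/s$) and reduces the larger branch (probability $(s-1)/s$) to a length-$(s-1)$ subproblem with $k-1$ remaining picks, giving $\tfrac{1}{s}+\tfrac{s-1}{s}\cdot\tfrac{k-1}{s-1}=\tfrac{k}{s}$ by induction; skipping contributes only through the larger branch, giving $\tfrac{s-1}{s}\cdot\tfrac{k}{s-1}=\tfrac{k}{s}$ by the same application. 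Either way the target bound is met.

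The main obstacle is the case when $x_s$ is last in $\pi$-order: the algorithm observes the entire subsequence before $v(x_s)$, and the very first observation already suffices to identify the top-level branch, which if exploited naively would suggest only a bound of $(k+1)/s$. The plan to recover $k/s$ is to charge the pick at $x_s$ against the $k$-budget: succeeding in the smaller branch forces the algorithm to reserve a pick for $x_s$, which in the larger branch leaves at most $k-1$ picks available for the subsequence. A conditional expectation over the first observed value, combined with the inductive bound on the length-$(s-1)$ subsequence, then balances the contributions of the two branches and yields $k/s$. Making this budget-accounting rigorous, and arguing that the inductive hypothesis on the subsequence survives the algorithm's adaptive use of the first observation to decide how many picks to spend before $x_s$ arrives, is the crux of the proof.
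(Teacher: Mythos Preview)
Your Case~1 analysis is correct and parallels the paper's. The genuine gap is in Case~2, and the budget-accounting idea you sketch cannot close it. Once the algorithm learns the top-level branch from its first semitone observation, its entire run on $(x_1,\ldots,x_{s-1})$ and its decision at $x_s$ are branch-dependent: in the smaller branch it may spend zero picks on the prefix and still pick $x_s$; in the larger branch it spends all $k$ picks on the prefix and skips $x_s$. No reservation is carried across branches, so the charging you describe is fictitious, and a single-parameter induction on $s$ delivers only $\tfrac{1}{s}+\tfrac{s-1}{s}\cdot\tfrac{k}{s-1}=\tfrac{k+1}{s}$. Averaging over the first observed value does not help either: conditioning on that value already determines the branch, after which the two sub-problems decouple exactly as above.

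The paper does not argue by a single induction on $s$. It proves the two-parameter statement $\Pr{B_t\mid A_t^i}\le i/t$ by induction on lexicographic pairs $(t,i)$, where $A_t^i$ is the event of having spent at most $i$ picks on $x_1,\ldots,x_t$ and $B_t$ the event of holding $\max\{v(x_1),\ldots,v(x_t)\}$. In Case~2 it conditions jointly on $C_t$ (whether $x_t$ is picked) and on $B_{t-1}$ (whether the shorter-prefix maximum is already held), and chains the hypothesis through the reductions $A_t^i\cap C_t\Rightarrow A_{t-1}^{i-1}$ and $A_t^i\cap\neg C_t\Rightarrow A_{t-1}^{i}$. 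Carrying the budget $i$ through the induction is precisely what lets the $C_t$ sub-case invoke the bound $(i-1)/(t-1)$ on the shorter prefix; your single-parameter scheme has no analogue of this step and therefore cannot escape $(k+1)/s$.
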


\begin{proof}
We start from observing that the hard assignments produced in the formulation of the lemma are disjoint -- it follows directly by the fact that set $V$ of available values is an interval in $V^*$ and it shrinks by half each step; the number of steps $s<\log n$, so in each recursive step set $V$ is non-empty.

In the remainder we prove the sought probability.
Let $A_t^i$, for $1\le t\le s$ and $0\le i\le k$, be the event that the algorithm picks at most $i$ values from $v(x_1),\ldots,v(x_t)$.
Let $B_t$ be the probability that the algorithm picks the largest of values $v(x_1),\ldots,v(x_t)$, in one of its picks.
Let $C_t$ be the probability that the algorithm picks value $v(x_t)$.
We prove, by induction on lexicographic pair $(t,i)$, that $\Pr{B_t|A_t^i}\le \frac{i}{t}$.
Surely, the beginning of the inductive proof for any pair of parameters $(t,i=t)$ is correct: $\Pr{B_t|A_t^t}\le 1$.
%
%

Consider an inductive step for $i< t\le s$.
Since, by the definition of semitone sequence $(x_1,\ldots,x_s)$, element $x_t$ could be either before all elements $x_1,\ldots,x_{t-1}$ or after all elements $x_1,\ldots,x_{t-1}$ in permutation $\pi$, we need to analyze both of these cases:

\vspace*{1ex}
\noindent
{\bf Case 1: $\pi(x_t)<\pi(x_1),\ldots,\pi(x_{t-1})$.}
Consider the algorithm when it receives the value of $x_t$. It has not seen the values of elements $x_1,\ldots,x_{t-1}$ yet. Assume that the algorithm already picked $k-i$ values before processing element $x_t$.
Note that, due to the definition of the hard assignment in the formulation of the lemma, the knowledge of values occurring by element $x_t$ only informs the algorithm about set $V$ from which the adversary draws values for sequence $(x_1,\ldots,x_{t-1})$; thus this choice of values is independent, for any fixed prefix of values until the occurrence of element $x_t$. We use this property when deriving the probabilities in this considered case.

We consider two conditional sub-cases, depending on whether either $C_t$ or $\neg C_t$ holds, starting from the former:
\[
\Pr{B_t|A_t^i \& C_t}
=
\frac{1}{t} + \frac{t-1}{t} \cdot \Pr{B_{t-1}|A_{t-1}^{i-1} \& C_t}
=
\frac{1}{t} + \frac{t-1}{t} \cdot \Pr{B_{t-1}|A_{t-1}^{i-1}}
=
\frac{1}{t} + \frac{t-1}{t} \cdot \frac{i-1}{t-1}
=
\frac{i}{t}
\ ,
\]
where 
\begin{itemize}
    \item 
the first equation comes from the fact that $val(x_t)$ is the largest among $v(x_1),\ldots,v(x_t)$ with probability $\frac{1}{t}$ (and it contributes to the formula because of the assumption $C_t$ that algorithm picks $v(x_t)$) and $v(x_t)$ is not the largest among $v(x_1),\ldots,v(x_t)$ with probability $\frac{t-1}{t}$ (in which case the largest value must be picked within the first $v(x_1),\ldots,v(x_{t-1})$ using $i-1$ picks), and
\item
the second equation comes from the fact that $B_{t-1}$ and $C_t$ are independent, and
\item
the last equation holds by inductive assumption for $(t-1,i-1)$.
\end{itemize}
In the complementary condition $\neg C_t$ we have:
\[
\Pr{B_t|A_t^i \& \neg C_t}
=
\frac{t-1}{t} \cdot \Pr{B_{t-1}|A_{t-1}^{i} \& \neg C_t}
=
\frac{t-1}{t} \cdot \Pr{B_{t-1}|A_{t-1}^{i}}
=
\frac{t-1}{t} \cdot \frac{i}{t-1}
=
\frac{i}{t}
\ ,
\]
where 
\begin{itemize}
\item 
the first equation follows because if the algorithm does not pick $v(x_t)$ then the largest of values  $v(x_1),\ldots,v(x_t)$ must be within $v(x_1),\ldots,v(x_{t-1})$ and not $v(x_t)$ (the latter happens with probability $\frac{t-1}{t}$), and
\item
the second equation comes from the fact that $B_{t-1}$ and $\neg C_t$ are independent, and
\item
the last equation holds by inductive assumption for $(t-1,i)$.
\end{itemize}
Hence,
\[
\Pr{B_t|A_t^i}
=
\Pr{B_t|A_t^i \& C_t}\cdot \Pr{C_t}
+
\Prob[B_t|A_t^i \& \neg C_t]\cdot \Pr{\neg C_t}
= \frac{i}{t} \cdot \left( \Pr{C_t} + \Pr{\neg C_t}\right)
=
\frac{i}{t}
\ .
\]
It concludes the analysis of Case 1.

\vspace*{1ex}
\noindent
{\bf Case 2: $\pi(x_t)>\pi(x_1),\ldots,\pi(x_{t-1})$.}
Consider the algorithm when it receives the value of $x_t$. It has already seen the values of elements $x_1,\ldots,x_{t-1}$; therefore, we can only argue about conditional event on the success in picking the largest value among $v(x_1),\ldots,v(x_{t-1})$, i.e., event $B_{t-1}$.

Consider four conditional cases, depending on whether either of $C_t,\neg C_t$ holds and whether either of $B_{t-1},\neg B_{t-1}$ holds, starting from sub-case $B_{t-1}\& C_t$:
\[
\Pr{B_t|A_t^i \& B_{t-1} \& C_t}
=
\frac{\Pr{B_t\& B_{t-1}|A_t^i \& C_t}}{\Pr{B_{t-1}|A_t^i \& C_t}}
=
\frac{\Pr{B_t\& B_{t-1}|A_t^i \& C_t}}{\Pr{B_{t-1}|A_{t-1}^{i-1}}}
=
1
\ ,
\]
since the algorithm already selected the largest value among $v(x_1),\ldots,v(x_{t-1})$ (by $B_{t-1}$) and now it also selects $v(x_t)$ (by $C_t$).
We also used the observation $A_t^i\& C_t = A_{t-1}^{i-1}$.
Next sub-case, when the conditions $\neg B_{t-1}\& C_t$ hold, implies:
\[
\Pr{B_t|A_t^i \& \neg B_{t-1} \& C_t}
=
\frac{\Pr{B_t\& \neg B_{t-1}|A_t^i \& C_t}}{\Pr{\neg B_{t-1}|A_t^i \& C_t}}
=
\frac{\Pr{B_t\& \neg B_{t-1}|A_t^i \& C_t}}{1-\Pr{B_{t-1}|A_{t-1}^{i-1}}}
=
\frac{1}{t}
\ ,
\]
because when the maximum value among $v(x_1),\ldots,v(x_{t-1})$ was not selected (by $\neg B_{t-1}$) the possibility that the selected (by $C_t$) $v(x_t)$ is the largest among $v(x_1),\ldots,v(x_{t})$ is $\frac{1}{t}$, by definition of values $v(\cdot)$.
As in the previous sub-case, we used $A_t^i\& C_t = A_{t-1}^{i-1}$.
When we put the above two sub-cases together, for $B_{t-1}\& C_t$ and $\neg B_{t-1}\& C_t$, we get:
\[
\Pr{B_t\& B_{t-1}|A_t^i \& C_t} + \Pr{B_t\& \neg B_{t-1}|A_t^i \& C_t}
=
\Pr{B_{t-1}|A_{t-1}^{i-1}} \cdot 1 + \left(1-\Pr{B_{t-1}|A_{t-1}^{i-1}}\right) \cdot \frac{1}{t}
=
\]
\[
=
\frac{i-1}{t-1} + \left(1-\frac{i-1}{t-1}\right) \cdot \frac{1}{t}
=
\frac{(i-1)t+(t-i)}{(t-1)t}
=
\frac{(t-1)i}{(t-1)t}
=
\frac{i}{t}
\ ,
\]
where the first equation comes from the previous sub-cases, the second is by inductive assumption, and others are by simple arithmetic.

We now consider two remaining sub-cases, starting from $B_{t-1}\& \neg C_t$: 
\[
\Pr{B_t|A_t^i \& B_{t-1} \& \neg C_t}
=
\frac{\Pr{B_t\& B_{t-1}|A_t^i \& \neg C_t}}{\Pr{B_{t-1}|A_t^i \& \neg C_t}}
=
\frac{\Pr{B_t\& B_{t-1}|A_t^i \& \neg C_t}}{\Pr{B_{t-1}|A_{t-1}^i}}
=
\frac{t-1}{t}
\ ,
\]
since the algorithm already selected the largest value among $v(x_1),\ldots,v(x_{t-1})$ (by $B_{t-1}$) and now it also selects $v(x_t)$ (by $C_t$).
We also used the observation $A_t^i\& \neg C_t = A_{t-1}^{i}$.
Next sub-case, when the conditions $\neg B_{t-1}\& \neg C_t$ hold, implies:
\[
\Pr{B_t|A_t^i \& \neg B_{t-1} \& \neg C_t}
=
\frac{\Pr{B_t\& \neg B_{t-1}|A_t^i \& \neg C_t}}{\Pr{\neg B_{t-1}|A_t^i \& \neg C_t}}
=
\frac{\Pr{B_t\& \neg B_{t-1}|A_t^i \& \neg C_t}}{1-\Pr{B_{t-1}|A_{t-1}^i}}
=
0
\ ,
\]
because when the maximum value among $x_1,\ldots,x_{t-1}$ was not selected (by $\neg B_{t-1}$) the possibility that the selected (by $C_t$) $v(x_t)$ is the largest among $v(x_1),\ldots,v(x_{t})$ is $\frac{1}{t}$, by definition of values $v(\cdot)$.
We also used the observation $A_t^i\& \neg C_t = A_{t-1}^{i}$.
When we put the last two sub-cases together, for $B_{t-1}\& \neg C_t$ and $\neg B_{t-1}\& \neg C_t$, we get:
\[
\Pr{B_t\& B_{t-1}|A_t^i \& \neg C_t} + \Pr{B_t\& \neg B_{t-1}|A_t^i \& \neg C_t}
=
\Pr{B_{t-1}|A_{t-1}^i} + \left(1-\Pr{B_{t-1}|A_{t-1}^i}\right) \cdot \frac{1}{t}
=
\]
\[
=
\frac{i-1}{t-1} + \left(1-\frac{i-1}{t-1}\right) \cdot \frac{1}{t}
=
\frac{(i-1)t+(t-i)}{(t-1)t}
=
\frac{(t-1)i}{(t-1)t}
=
\frac{i}{t}
\ ,
\]
where the first equation comes from the previous sub-cases, the second is by inductive assumption, and others are by simple arithmetic.

Hence, similarly as in Case 1, we have
\[
\Pr{B_t|A_t^i}
=
\Pr{B_t|A_t^i \& C_t}\cdot \Pr{C_t}
+
\Pr{B_t|A_t^i \& \neg C_t}\cdot \Pr{\neg C_t}
= \frac{i}{t} \cdot \left( \Pr{C_t} + \Pr{\neg C_t}\right)
=
\frac{i}{t}
\ .
\]
It concludes the analysis of Case 2, and also the inductive proof.

It follows that 
$\Pr{B_s|A_t^k} = \frac{k}{s}$, and since $\Pr{A_s^k} = 1$ (as the algorithm does $k$ picks in the whole semitone sequence), we get $\Pr{B_s}=\frac{k}{s}$.
\end{proof}

Applying Yao's principle~\cite{Yao77} to Lemma~\ref{lem:lower-random-adv}, we get:

\begin{lemma}
\label{lem:lower-deterministic-adv}
Fix any $\epsilon \in (0,1)$.
For any set $\Pi\subseteq \Pi_n$ of an $\ell<\log n -1$ permutations and any probabilistic distribution on it, and for any online algorithm using $k$ picks to select the maximum value, 
there is an adversarial (worst-case) hard assignment of values to elements in $[n]$ such that: 

(i) the maximum assigned value is unique and bigger by factor at least $\frac{k}{1-\epsilon}$ from other used values,

(ii) highest $s=\frac{\log n}{\ell+1}$ values are at least $1$, while the remaining ones are $\frac{1-\epsilon}{k}$,

(iii) the algorithm selects the maximum allocated value with probability at most $\frac{k}{s}$.
%
\end{lemma}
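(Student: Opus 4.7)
The plan is to apply Yao's minimax principle to Lemma~\ref{lem:lower-random-adv}. First, since $\ell < \log n - 1$, I will invoke the cited result (from~\cite{KesselheimKN15}) guaranteeing the existence of a sequence $(x_1,\ldots,x_s)$ of length $s = \frac{\log n}{\ell+1}$ that is semitone with respect to every permutation in $\Pi$. Fix this semitone sequence and the random distribution $\mathcal{D}$ of hard assignments exactly as defined in Lemma~\ref{lem:lower-random-adv}: values on the semitone sequence are drawn from $V^* = \{1,\frac{k}{1-\epsilon},(\frac{k}{1-\epsilon})^2,\ldots\}$ by the recursive halving procedure, and every element outside the semitone sequence receives value $\frac{1-\epsilon}{k}$.

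Next, I will view the algorithm together with the permutation distribution on $\Pi$ as a randomized algorithm whose random bits choose both (a) a permutation $\pi\in\Pi$ and (b) any internal coin flips. Conditioning on any realization of these random bits yields a deterministic online algorithm that, for the fixed $\pi$, selects at most $k$ values and aims at the maximum; Lemma~\ref{lem:lower-random-adv} applies to each such deterministic algorithm and bounds its success probability, over $\mathcal{D}$, by $\frac{k}{s}$. By linearity of expectation (this is exactly Yao's principle), the randomized algorithm's success probability against $\mathcal{D}$ is likewise at most $\frac{k}{s}$. By an averaging argument, there exists a single assignment $\mathbf{v}^\star$ in the support of $\mathcal{D}$ on which the randomized algorithm selects the maximum with probability at most $\frac{k}{s}$; this gives property (iii).

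Properties (i) and (ii) will then be verified directly from the construction of $\mathcal{D}$, since \emph{every} assignment in its support has this structure. For (ii): the semitone elements receive values from $V^*$, whose minimum is $1$, so they are all at least $1$, while non-semitone elements carry value $\frac{1-\epsilon}{k}$. For (i): since $s < \log n$, the recursive halving of the pool $V$ of available values (starting from $V^*$ of size $n$) never empties, so the values assigned to the semitone elements are pairwise distinct powers of $\frac{k}{1-\epsilon}$; in particular the maximum among them is unique and exceeds every other value used on the semitone sequence by a factor at least $\frac{k}{1-\epsilon}$, and a fortiori it exceeds $\frac{1-\epsilon}{k}$ by at least this factor.

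The only genuine subtlety — and the step I expect to be the main obstacle to state cleanly — is the Yao/averaging step, because the randomized algorithm's randomness and the permutation distribution on $\Pi$ must be jointly treated as a distribution over deterministic decision rules indexed by $\pi\in\Pi$, in order for Lemma~\ref{lem:lower-random-adv}'s per-$\pi$ bound to yield a uniform bound that survives averaging; once this bookkeeping is carried out, existence of the desired deterministic $\mathbf{v}^\star$ follows immediately, and properties (i)–(ii) are automatic from the construction.
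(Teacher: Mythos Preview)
Your proposal is correct and follows essentially the same approach as the paper: invoke the semitone sequence of length $s=\frac{\log n}{\ell+1}$ from~\cite{KesselheimKN15}, apply Yao's principle to the per-$\pi$ bound of Lemma~\ref{lem:lower-random-adv} (treating the permutation distribution together with the algorithm's internal randomness as a single randomized procedure), extract a worst-case hard assignment by averaging, and read off (i)--(ii) from the structure of hard assignments. Your write-up is in fact more explicit than the paper's on the Yao/averaging bookkeeping and on why every assignment in the support of $\mathcal{D}$ satisfies (i) and (ii).
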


\begin{proof}
Consider a semitone sequence w.r.t. the set of permutations $\Pi$, which has length $s=\frac{\log n}{\ell+1}$ (it exists as shown in \cite{KesselheimKN15}), and restrict for now to this sub-sequence of the whole $n$-value sequence.
Consider any online algorithm that ignores elements that are not in this sub-sequence.
We apply Yao's principle~\cite{Yao77} to Lemma~\ref{lem:lower-random-adv}: the latter computes a lower bound on the cost (probability of selecting largest value) of a deterministic $k$-secretary algorithm, for inputs being hard assignments selected from distribution specified in Lemma~\ref{lem:lower-random-adv}.
The Yao's principle implies that there is a deterministic (worst-case) adversarial hard assignment 
values from set $V^*\cup \{\frac{1-\epsilon}{k}\}$ 
such that for any (even randomized) algorithm and probabilistic distribution on $\Pi$, 
the probability of the algorithm to select the largest of the assigned values with at most $k$ picks is at most $\frac{k}{s}$.
The hard assignment satisfies, by definition, also the first two conditions in the lemma statement.
\ignore{
Now, we consider the whole sequence of $n$ values.
The adversary assigns value $\frac{1-\epsilon}{k}$ to all elements not from the semitone sequence, and uses the assignment to the semitone sequence described in the previous paragraph. Clearly, this assignment satisfies conditions (i) and (ii) of the lemma. If this algorithm selected the largest value (in the whole sequence) with probability larger than $\frac{k}{s}$, and thus violate condition (iii), then, since non-semitone elements are clearly recognized by their values, we could restrict the algorithm to the semitone sequence and obtain contradiction with

Since $1/k$ is smaller by factor at least $k$ from any element in $V$, the algorithm can clearly reject all elements not from the semitone sequence (if not, it gets less than $k$ choices left for elements in the semitone sequence, while gaining not more than a value of a single element in the semitone sequence, more precisely, $1$).
}
\end{proof}




We can extend Lemma~\ref{lem:lower-deterministic-adv} to any distribution on a set $\Pi$ of permutations of $[n]$ with an entropy $H$, by using the following lemma from~\cite{KesselheimKN15}, in order to obtain the final proof of Theorem~\ref{thm:lower-general} (re-stated below).

\begin{lemma}[\cite{KesselheimKN15}]
\label{lem:entropy-support}
Let $\pi$ be drawn from a finite set $\Pi_n$ by a distribution of entropy $H$. Then, for any $\ell \ge 4$,
there is a set $\Pi \subseteq \Pi_n$, $|\Pi| \le \ell$, such that $\Pr{\pi\in\Pi} \ge 1 - \frac{8H}{\log(\ell - 3)}$.
\end{lemma}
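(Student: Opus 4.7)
My plan is to take $\Pi$ to be the set of the $\ell$ permutations carrying the largest probabilities under the given distribution (breaking ties arbitrarily, and simply setting $\Pi=\Pi_n$ if $|\Pi_n|\le \ell$), and to show the discarded mass is small using only the definition of Shannon entropy. This is the standard ``low entropy forces concentration on few atoms'' argument.

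The first step is a structural observation. Enumerate the elements of $\Pi_n$ in decreasing order of probability as $\pi_1,\pi_2,\ldots$, so that $p(\pi_1)\ge p(\pi_2)\ge \cdots$, and set $\Pi=\{\pi_1,\ldots,\pi_\ell\}$. Since the top $\ell$ probabilities sum to at most $1$ and $p(\pi_\ell)$ is the smallest among them, we get $p(\pi_\ell)\le 1/\ell$, hence $p(\pi)\le 1/\ell$ for every $\pi\notin\Pi$ (trivially when $p(\pi)=0$).

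The second step is a one-line entropy estimate. Write $q=\Pr{\pi\notin\Pi}$. Each summand $-p(\pi)\log p(\pi)$ is non-negative on $[0,1]$, and for $\pi\notin\Pi$ with $p(\pi)>0$ we have $-\log p(\pi)\ge \log\ell$. Discarding the contribution of $\Pi$ in the entropy sum and using this pointwise bound,
\[
H \;=\; -\sum_{\pi\in\Pi_n} p(\pi)\log p(\pi) \;\ge\; -\sum_{\pi\notin\Pi} p(\pi)\log p(\pi) \;\ge\; \log\ell \cdot \sum_{\pi\notin\Pi} p(\pi) \;=\; q\log\ell,
\]
so $q\le H/\log\ell$. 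Comparing with the target bound, the inequality $\log(\ell-3)\le 8\log\ell$ is trivial: at $\ell=4$ the right-hand side of the lemma is $+\infty$ (since $\log 1=0$), and for $\ell\ge 5$ the gap is enormous. Hence $q\le H/\log\ell \le 8H/\log(\ell-3)$.

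There is essentially no obstacle: the only subtleties are bookkeeping the edge cases $\ell=4$ (vacuous bound) and $|\Pi_n|<\ell$ (take $\Pi=\Pi_n$, $q=0$), plus the convention $0\log 0 = 0$ that lets atoms of zero probability be ignored everywhere. The factor $8$ in the statement is slack introduced for a uniform expression across all $\ell\ge 4$; the proof actually delivers the sharper $q\le H/\log\ell$.
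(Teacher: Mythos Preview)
The paper does not supply its own proof of this lemma; it is quoted verbatim from \cite{KesselheimKN15} and used as a black box. Your argument is correct and in fact proves the sharper bound $\Pr{\pi\notin\Pi}\le H/\log\ell$, from which the stated inequality follows since $\log(\ell-3)\le \log\ell\le 8\log\ell$ for $\ell\ge 5$ (and the case $\ell=4$ is vacuous, as you note).
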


\noindent
{\bf Theorem~\ref{thm:lower-general} }
{\em Assume $k\le \log^a n$ for some constant $a\in (0,1)$.
Then, any algorithm (even fully randomized) solving $k$-secretary problem while drawing permutations from some distribution on $\Pi_n$ with an entropy $H\le \frac{1-\epsilon}{9} \log\log n$, cannot achieve the expected ratio of at least $1-\epsilon$, for any $\epsilon\in (0,1)$ and sufficiently large $n$.}

\vspace{1ex}
\begin{proof}
Let us fix $\ell=\sqrt{\frac{\log n}{k}}-1$. 
By Lemma~\ref{lem:entropy-support}, there is a set $\Pi\subseteq \Pi_n$ of size at most $\ell$ such that $\Pr{\pi\in\Pi} \ge 1 - \frac{8H}{\log(\ell - 3)}$. Let $s=\frac{\log n}{\ell +1}$ be the length of a semitone sequence w.r.t. $\Pi$. 

By Lemma~\ref{lem:lower-deterministic-adv} applied to the conditional distribution on set $\Pi$, there is an adversarial hard assignment of values such that the probability of selecting the largest value is at most $\frac{k}{s}$.
Summing up the events and using Lemma~\ref{lem:entropy-support}, the probability of the algorithm selecting the largest value is at most 
\[
\frac{k}{s}\cdot 1 + \frac{8H}{\log(\ell - 3)}
=
\frac{k\cdot (\ell+1)}{\log n} + \frac{8H}{\log(\ell - 3)}
=
\sqrt{\frac{k}{\log n}} + \frac{8H}{\log(\sqrt{\frac{\log n}{k}} - 4)}
\ ,
\]
which is smaller than $1-\epsilon$, for any $\epsilon\in (0,1)$, for sufficiently large $n$, because $k\le \log^a n$, where $a\in (0,1)$ is a constant, and $H\le \frac{1-\epsilon}{9} \log\log n$.

To complete the proof, recall that, by the definition of hard assignments and statements (i) and (ii) in Lemma~\ref{lem:lower-deterministic-adv}, the maximum value selected from $V$ is unique, and is bigger from other values by factor at least $\frac{k}{1-\epsilon}$, therefore the event of selecting $k$ values with ratio $1-\epsilon$ is a sub-event of the considered event of selecting largest value. Thus, the probability of the former is upper bounded by the probability of the latter and so the former cannot be achieved as well.
\end{proof}

\subsection{Entropy lower bound for wait-and-pick algorithms} 
\label{sec:lower-wait-and-pick}

Assume that there is a deterministic wait-and-pick algorithm for the $k$-secretary problem with competitive ratio $1-\epsilon$. Let $m$ be the 
checkpoint position and $\tau$ be any statistic (i.e., the algorithm selects $\tau$-th largest element among the first $m$ elements as chosen element, and then from the elements after position $m$, it selects every element greater than or equal to the statistic). Our analysis works for any statistics. Let $\ell$ be the number of permutations, from which the order is chosen uniformly at random. 
We prove that no wait-and-pick algorithm achieves simultaneously a inverse-polynomially (in $k$) small error $\epsilon$ and entropy asymptotically smaller than $\log k$. More precisely, we re-state and prove Theorem~\ref{thm:lower}.

\vspace*{3ex}
\noindent
{\bf Theorem~\ref{thm:lower} }
{\em Any wait-and-pick algorithm solving $k$-secretarial problem with competitive ratio of at least $(1-\epsilon)$ requires entropy $\Omega(\min\{\log 1/\epsilon,\log \frac{n}{2k}\})$.}

\vspace*{3ex}
\begin{proof}
W.l.o.g. and to simplify the analysis, we could assume that $1/\epsilon$ is an integer.
We create a bipartite graph $G=(V,W,E)$, where $V$ is the set of $n$ elements, $W$ corresponds to the set of $\ell$ permutations, and a neighborhood of node $i\in W$ is defined as the set of elements (in $V$) which are on the left hand side of checkpoint $m$ in the $i$-th permutation. 
It follows that $|E|=\ell\cdot m$.
Let $d$ denote an average degree of a node in $V$, i.e., $d=\frac{|E|}{n}=\frac{\ell \cdot m}{n}$. 

Consider first the case when $m\ge k$.
We prove that $\ell \ge 1/\epsilon$. 
Consider a different strategy of the adversary: it processes elements $i\in W$ one by one, and selects $\epsilon \cdot k$ neighbors of element $i$ that has not been selected before to set $K$. This is continued until set $K$ has $k$ elements or all elements in $W$ has been considered.
Note that if during the above construction the current set $K$ has at most $k(1-\epsilon)$ elements, the adversary can find $\epsilon \cdot k$ neighbors of the currently considered $i\in W$ that are different from elements in $K$ and thus can be added to $K$, by assumption $m\ge k$.
If the construction stops because $K$ has $k$ elements, it means that $\ell\ge 1/\epsilon$, because $1/\epsilon$ elements in $W$ have had to be processed in the construction.
If the construction stops because all the elements in $W$ have been processed but $K$ is of size smaller than $k$, it means that $|W|=\ell<1/\epsilon$; however, if we top up the set $K$ by arbitrary elements in $V$ so that the resulting $K$ is of size $k$, no matter what permutation is selected the algorithm misses at least $\epsilon \cdot k$ elements in $K$, and thus its value is smaller by factor less than $(1-\epsilon)$ from the optimum. and we get a contradiction.
Thus, we proved $\ell \ge 1/\epsilon$, and thus the entropy needed is at least $\log 1/\epsilon$, which for optimal algorithms with $\epsilon=\Theta(k^{-1/2})$ gives entropy $\Theta(\log k)$.

Consider now the complementary case when $m<k$. The following has to hold: $\ell\cdot (m+k)\ge n$. This is because 
in the opposite case the adversary could allocate value $1$ to an element which does not occur in the first $m+k$ positions of any of the $\ell$ permutations, and value $\frac{1-\epsilon}{k}$ to all other elements -- in such scenario, the algorithm would pick the first $k$ elements after the checkpoint position $m$ (as it sees, and thus chooses, the same value all this time -- it follows straight from the definition of wait-and-pick checkpoint),
for any of the $\ell$ permutations, obtaining the total value of $1-\epsilon$, while the optimum is clearly $1+(k-1)\cdot\frac{1-\epsilon}{k}>1$ contradicting the competitive ratio $1-\epsilon$ of the algorithm.
It follows from the equation $\ell\cdot (m+k)\ge n$ that $\ell \ge \frac{n}{m+k} > \frac{n}{2k}$, and thus the entropy is $\Omega(\log\frac{n}{2k})$.

To summarize both cases, the entropy is $\Omega(\min\{\log 1/\epsilon,\log \frac{n}{2k}\})$.
\ignore{
\dk{Ponizsze na razie sie nie stosuje}
Assume, for the sake of contradiction, that for any constant $\eta>0$, $\ell=o(k^\eta)$ and $\epsilon=\Theta(k^{-\eta})$.
By a pigeonhole principle, there is a set of elements $K\subseteq V$ such that $|K|=k$ and an average degree of an element in $K$ is at least $d$.
Then, by the uniformity of distribution assumption, we have that ??? and consequently, each permutation must have at most $\epsilon \cdot k$ elements in from $K$ on their left hand side (considering adversary who allocates the same value to all elements in $K$ and arbitrary small value to others), thus in total the number of occurrences of elements from $K$ on left hand sides on the considered $\ell$ permutations is at most $\epsilon \cdot k \cdot \ell$.
On the other hand, the sum of degrees of elements in $k$ is at least $k\cdot d \ge k\cdot \frac{\ell \cdot m}{n}$.
They together imply $\epsilon \ge \frac{m}{n}$, and consequently, $m\le n\epsilon$.

We would like to show that for $k$ sufficiently large but still sub-logarithmic, the entropy is $\Omega(\log\log n)$.
}
\end{proof}

In particular, it follows from Theorem~\ref{thm:lower} 
that for $k$ such that $k$ is super-polylogarithmic and sub-$\frac{n}{\polylog n}$, the entropy of competitive ratio-optimal algorithms is $\omega(\log\log n)$. Moreover, if $k$ is within range of some polynomials of $n$ of degrees smaller than $1$, the entropy is $\Omega(\log n)$. 

\subsection{$\Omega(\log\log n + (\log k)^2)$ entropy of previous solutions}
\label{sec:previous-suboptimality}

All previous solutions but~\cite{KesselheimKN15} used uniform distributions on the set of all permutations of $[n]$, which requires large entropy $\Theta(n\log n)$.\footnote{%
Some of them also used additional randomness, but with negligible entropy $o(n\log n)$.}
In~\cite{KesselheimKN15}, the $k$-secretary algorithm uses $\Theta(\log\log n)$ entropy to choose a permutation u.a.r. from a given set, however, it also uses recursively additional entropy to choose the number of blocks $q'$.
It starts with $q'$ being polynomial in $k$, and in a recursive call it selects a new $q'$ from the binomial distribution $Binom(q',1/2)$. It continues until $q'$ becomes $1$.
Below we estimate from below the total entropy needed for this random process.

Let $X_i$, for $i=1,\ldots,\tau$, denote the values of $q'$ selected in subsequent recursive calls, where $\tau$ is the first such that $X_\tau=1$. We have $X_1=Binom(q',1/2)$ and recursively, $X_{i+1}=Binom(X_i,1/2)$.
We need to estimate the joint entropy $\cH(X_1,\ldots,X_\tau)$ from below.
Joint entropy can be expressed using conditional entropy as follows:
\begin{equation}
\label{eq:conditional-entropy}
\cH(X_1,\ldots,X_\tau) =
\cH(X_1) + \sum_{i=2}^\tau \cH(X_i|X_{i-1},\ldots,X_1) 
\ .
\end{equation}
By the property of $Binom(q',1/2)$ and the fact that $q'$ is a polynomial on $k$, its entropy $\cH(X_1)=\Theta(\log q')=\Theta(\log k)$.
We have:
\[
\cH(X_i|X_{i-1},.\ldots,X_1) 
=
\sum_{q_i\ge \ldots \ge q_{i-1}} \Pr{X_1=q_1,\ldots,X_{i-1}=q_{i-1}} \cdot \cH(X_i|X_1=q_1,\ldots,X_{i-1}=q_{i-1})
\]
\[
=
\sum_{q_i\ge \ldots \ge q_{i-1}} \Pr{X_1=q_1,\ldots,X_{i-1}=q_{i-1}} \cdot \cH(X_i|X_{i-1}=q_{i-1})
\]
\[
=
\Theta\left(\Pr{X_1\in (\frac{1}{3}q',\frac{2}{3}q'), X_2\in (\frac{1}{3}X_{1},\frac{2}{3}X_{1})\ldots,X_{i-1}\in (\frac{1}{3}X_{i-2},\frac{2}{3}X_{i-2})}\right) \cdot
\cH\left(X_i\Big| X_{i-1}\in (\frac{1}{3^{i-1}}q',\frac{2^{i-1}}{3^{i-1}}q')\right)
\ ,
\]
where the first equation is the definition of conditional entropy, second follows from the fact that once $q_{i-1}$ is fixed, the variable $X_1$ does not depend on the preceding $q_{i-2},\ldots,q_1$, and the final asymptotics follows from applying Chernoff bound to each $X_1,\ldots,X_{i-1}$ and taking the union bound.
Therefore, for $i\le \frac{1}{2}\log_3 q'$, we have 
\[
\cH(X_i|X_{i-1},.\ldots,X_1) 
=
(1-o(1)) \cdot \cH(X_i|X_{i-1}\in\Theta(\text{poly}(k)))
=
\Theta(\log k)
\ .
\]
Consequently, putting all the above into Equation~(\ref{eq:conditional-entropy}), we get
\[
\cH(X_1,\ldots,X_\tau) 
=
\Theta(\log_3 k \cdot \log k) 
= 
\Theta(\log^2 k)
\ .
\]
The above proof leads to the following.

\begin{proposition}\label{prop:Kessel_Large_Entropy}
 The randomized $k$-secretary algorithm of Kesselheim, Kleinberg and Niazadeh \cite{KesselheimKN15} uses randomization that has a total entropy $\Omega(\log\log n + (\log k)^2)$, where entropy $\log\log n$ corresponds to the distribution from which it samples a random order, and entropy $(\log k)^2$ corresponds to the internal random bits of the algorithm.
\end{proposition}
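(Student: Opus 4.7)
The plan is to separately account for the two sources of randomness in the algorithm of~\cite{KesselheimKN15}: the sampling of the random order from a small-support distribution (which is already known from that work to require $\Theta(\log\log n)$ entropy), and the internal random bits driving the recursive halving procedure on the number of blocks $q'$. These sources are independent, so the total entropy is the sum. It thus remains to lower bound the entropy of the recursion by $\Omega((\log k)^2)$.

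The internal process is modelled as a sequence $X_1, X_2, \ldots, X_\tau$ with $X_1 \sim \mathrm{Binom}(q', 1/2)$ and $X_{i+1} \sim \mathrm{Binom}(X_i, 1/2)$, where $q'$ is a fixed polynomial in $k$, and $\tau$ is the stopping time at which $X_\tau = 1$. First I would invoke the chain rule of entropy to write
\[
\cH(X_1,\ldots,X_\tau) \;=\; \cH(X_1) + \sum_{i=2}^{\tau} \cH(X_i \mid X_{i-1},\ldots,X_1).
\]
Next, using the Markov property, each conditional entropy $\cH(X_i \mid X_{i-1},\ldots,X_1)$ equals $\cH(X_i \mid X_{i-1})$ averaged over the law of $X_{i-1}$; this collapses the multi-step dependence to a one-step dependence. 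For $\mathrm{Binom}(m, 1/2)$ with $m$ growing, the standard Stirling estimate gives entropy $\Theta(\log m)$, so I only need to control the typical magnitude of $X_{i-1}$.

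To control this magnitude, I would apply Chernoff bounds inductively: with probability $1 - o(1)$ the random walk satisfies $X_i \in \left(\frac{1}{3} X_{i-1}, \frac{2}{3} X_{i-1}\right)$ at every step $i \le \frac{1}{2} \log_3 q'$, and hence $X_i = \Theta(\mathrm{poly}(k))$ throughout this range by a union bound over at most $O(\log q')$ steps. Restricted to this typical event, each conditional entropy term is $\Theta(\log k)$, so summing over the $\Theta(\log k)$ relevant indices yields $\Theta((\log k)^2)$. Combining with the $\Omega(\log\log n)$ contribution from the order-sampling distribution gives the claim.

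The main technical obstacle is making the coupling between the Markov chain's typical trajectory and its entropy rigorous: specifically, I must rule out the possibility that the conditional distribution $X_i \mid X_{i-1} = m$ has entropy asymptotically smaller than $\log m$ on a non-negligible portion of the state space (which could happen only if $m$ becomes very small) and must justify that truncating at the first time $X_i$ leaves the typical $\Theta(\mathrm{poly}(k))$ range loses at most $o(1)$ of the total entropy. Both issues are handled by the Chernoff bound combined with the observation that only the first $\frac{1}{2}\log_3 q'$ rounds contribute to the dominant $\Theta((\log k)^2)$ term, so the tail behaviour is immaterial to the asymptotic lower bound.
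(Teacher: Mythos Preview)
Your proposal is correct and follows essentially the same argument as the paper: both use the chain rule for $\cH(X_1,\ldots,X_\tau)$, reduce each conditional term to $\cH(X_i\mid X_{i-1})$ via the Markov property, apply Chernoff with the $(\tfrac{1}{3},\tfrac{2}{3})$ window and a union bound to keep $X_i=\Theta(\mathrm{poly}(k))$ for the first $\tfrac{1}{2}\log_3 q'$ steps, and then sum $\Theta(\log k)$ over $\Theta(\log k)$ indices. If anything, your discussion of the truncation and tail issues is slightly more explicit than the paper's.
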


Our algorithm shaves off the additive $\Theta(\log^2 k)$ from the formula for all $k$ up to nearly $\log n$.

\bibliographystyle{plainurl}


\begin{thebibliography}{10}

\bibitem{abolhassani2017beating}
Melika Abolhassani, Soheil Ehsani, Hossein Esfandiari, MohammadTaghi
  HajiAghayi, Robert Kleinberg, and Brendan Lucier.
\newblock Beating 1-1/e for ordered prophets.
\newblock In {\em Proceedings of the 49th Annual ACM SIGACT Symposium on Theory
  of Computing}, pages 61--71. ACM, 2017.

\bibitem{DBLP:conf/sigecom/0001SZ20}
Shipra Agrawal, Jay Sethuraman, and Xingyu Zhang.
\newblock On optimal ordering in the optimal stopping problem.
\newblock In P{\'{e}}ter Bir{\'{o}}, Jason~D. Hartline, Michael Ostrovsky, and
  Ariel~D. Procaccia, editors, {\em {EC} '20: The 21st {ACM} Conference on
  Economics and Computation, Virtual Event, Hungary, July 13-17, 2020}, pages
  187--188. {ACM}, 2020.
\newblock URL: \url{https://doi.org/10.1145/3391403.3399484}, \href
  {http://dx.doi.org/10.1145/3391403.3399484}
  {\path{doi:10.1145/3391403.3399484}}.

\bibitem{AgrawalWY14}
Shipra Agrawal, Zizhuo Wang, and Yinyu Ye.
\newblock A dynamic near-optimal algorithm for online linear programming.
\newblock {\em Operations Research}, 62(4):876--890, 2014.

\bibitem{alaei2014bayesian}
Saeed Alaei.
\newblock Bayesian combinatorial auctions: Expanding single buyer mechanisms to
  many buyers.
\newblock {\em SIAM Journal on Computing}, 43(2):930--972, 2014.

\bibitem{AHL13}
Saeed Alaei, MohammadTaghi Hajiaghayi, and Vahid Liaghat.
\newblock The online stochastic generalized assignment problem.
\newblock In {\em Proceedings of the 16th International Workshop on
  Approximation Algorithms for Combinatorial Optimization Problems (APPROX)},
  pages 11--25, 2013.

\bibitem{ArsenisDK21}
Makis Arsenis, Odysseas Drosis, and Robert Kleinberg.
\newblock Constrained-order prophet inequalities.
\newblock In D{\'{a}}niel Marx, editor, {\em Proc.~2021 {ACM-SIAM} Symposium on
  Discrete Algorithms, {SODA} 2021}, pages 2034--2046. {SIAM}, 2021.

\bibitem{ACK18}
Yossi Azar, Ashish Chiplunkar, and Haim Kaplan.
\newblock Prophet secretary: Surpassing the 1-1/e barrier.
\newblock In {\em Proceedings of the 2018 {ACM} Conference on Economics and
  Computation, Ithaca, NY, USA, June 18-22, 2018}, pages 303--318, 2018.

\bibitem{babaioff2007matroids}
Moshe Babaioff, Nicole Immorlica, and Robert Kleinberg.
\newblock Matroids, secretary problems, and online mechanisms.
\newblock In {\em Proceedings of the eighteenth annual ACM-SIAM symposium on
  Discrete algorithms}, pages 434--443. Society for Industrial and Applied
  Mathematics, 2007.

\bibitem{BHZ13}
Mohammadhossein Bateni, Mohammadtaghi Hajiaghayi, and Morteza Zadimoghaddam.
\newblock Submodular secretary problem and extensions.
\newblock {\em ACM Trans. Algorithms}, 9(4):Art. 32, 23, 2013.

\bibitem{DBLP:journals/ior/BeyhaghiGLPS21}
Hedyeh Beyhaghi, Negin Golrezaei, Renato~Paes Leme, Martin P{\'{a}}l, and
  Balasubramanian Sivan.
\newblock Improved revenue bounds for posted-price and second-price mechanisms.
\newblock {\em Oper. Res.}, 69(6):1805--1822, 2021.
\newblock URL: \url{https://doi.org/10.1287/opre.2021.2121}, \href
  {http://dx.doi.org/10.1287/opre.2021.2121}
  {\path{doi:10.1287/opre.2021.2121}}.

\bibitem{10.1145/1806689.1806733}
Shuchi Chawla, Jason~D. Hartline, David~L. Malec, and Balasubramanian Sivan.
\newblock Multi-parameter mechanism design and sequential posted pricing.
\newblock In {\em Proceedings of the Forty-Second ACM Symposium on Theory of
  Computing}, STOC '10, page 311–320, New York, NY, USA, 2010. Association
  for Computing Machinery.
\newblock URL: \url{https://doi.org/10.1145/1806689.1806733}, \href
  {http://dx.doi.org/10.1145/1806689.1806733}
  {\path{doi:10.1145/1806689.1806733}}.

\bibitem{ChowMRS64}
Y.~S. Chow, S.~Moriguti, H.~Robbins, and S.~M. Samuels.
\newblock Optimal selection based on relative rank (the "secretary" problem).
\newblock {\em Israel J. Math.}, 2:81--90, 1964.

\bibitem{correa2017posted}
Jos{\'e} Correa, Patricio Foncea, Ruben Hoeksma, Tim Oosterwijk, and Tjark
  Vredeveld.
\newblock Posted price mechanisms for a random stream of customers.
\newblock In {\em Proceedings of the 2017 ACM Conference on Economics and
  Computation}, pages 169--186. ACM, 2017.

\bibitem{CorreaSZ19}
Jos{\'{e}}~R. Correa, Raimundo Saona, and Bruno Ziliotto.
\newblock Prophet secretary through blind strategies.
\newblock In {\em Proceedings of the Thirtieth Annual {ACM-SIAM} Symposium on
  Discrete Algorithms, {SODA} 2019, San Diego, California, USA, January 6-9,
  2019}, pages 1946--1961. {SIAM}, 2019.

\bibitem{DBLP:journals/mp/CorreaSZ21}
Jos{\'{e}}~R. Correa, Raimundo Saona, and Bruno Ziliotto.
\newblock Prophet secretary through blind strategies.
\newblock {\em Math. Program.}, 190(1):483--521, 2021.
\newblock URL: \url{https://doi.org/10.1007/s10107-020-01544-8}, \href
  {http://dx.doi.org/10.1007/s10107-020-01544-8}
  {\path{doi:10.1007/s10107-020-01544-8}}.

\bibitem{DEHLS17}
Sina Dehghani, Soheil Ehsani, MohammadTaghi Hajiaghayi, Vahid Liaghat, and
  Saeed Seddighin.
\newblock Stochastic k-server: How should uber work?
\newblock In {\em 44th International Colloquium on Automata, Languages, and
  Programming, {ICALP} 2017, July 10-14, 2017, Warsaw, Poland}, pages
  126:1--126:14, 2017.

\bibitem{dutting2015polymatroid}
Paul D{\"u}tting and Robert Kleinberg.
\newblock Polymatroid prophet inequalities.
\newblock In {\em Algorithms-ESA 2015}, pages 437--449. Springer, 2015.

\bibitem{dynkin1963optimum}
Eugene~B Dynkin.
\newblock The optimum choice of the instant for stopping a markov process.
\newblock In {\em Soviet Math. Dokl}, volume~4, 1963.

\bibitem{EHLM17}
Hossein Esfandiari, MohammadTaghi Hajiaghayi, Vahid Liaghat, and Morteza
  Monemizadeh.
\newblock Prophet secretary.
\newblock {\em {SIAM} J. Discret. Math.}, 31(3):1685--1701, 2017.

\bibitem{DBLP:conf/aistats/EsfandiariHLM20}
Hossein Esfandiari, MohammadTaghi Hajiaghayi, Brendan Lucier, and Michael
  Mitzenmacher.
\newblock Prophets, secretaries, and maximizing the probability of choosing the
  best.
\newblock In Silvia Chiappa and Roberto Calandra, editors, {\em The 23rd
  International Conference on Artificial Intelligence and Statistics, {AISTATS}
  2020, 26-28 August 2020, Online [Palermo, Sicily, Italy]}, volume 108 of {\em
  Proceedings of Machine Learning Research}, pages 3717--3727. {PMLR}, 2020.
\newblock URL: \url{http://proceedings.mlr.press/v108/esfandiari20a.html}.

\bibitem{feldman2015combinatorial}
Michal Feldman, Nick Gravin, and Brendan Lucier.
\newblock Combinatorial auctions via posted prices.
\newblock In {\em Proceedings of the Twenty-Sixth Annual ACM-SIAM Symposium on
  Discrete Algorithms}, pages 123--135. SIAM, 2015.

\bibitem{garg2008stochastic}
Naveen Garg, Anupam Gupta, Stefano Leonardi, and Piotr Sankowski.
\newblock Stochastic analyses for online combinatorial optimization problems.
\newblock In {\em SODA}, pages 942--951. Society for Industrial and Applied
  Mathematics, 2008.

\bibitem{GilbertM66}
J.~Gilbert and F.~Mosteller.
\newblock Recognizing the maximum of a sequence.
\newblock {\em J. Amer. Statist. Assoc.}, 61:35--73, 1966.

\bibitem{gobel2014online}
Oliver G{\"o}bel, Martin Hoefer, Thomas Kesselheim, Thomas Schleiden, and
  Berthold V{\"o}cking.
\newblock Online independent set beyond the worst-case: Secretaries, prophets,
  and periods.
\newblock In {\em International Colloquium on Automata, Languages, and
  Programming}, pages 508--519. Springer, 2014.

\bibitem{M_Goemans_2015}
Michel~X. Goemans.
\newblock Chernoff bounds, and some applications.
\newblock {\em Lecture notes:
  https://math.mit.edu/~goemans/18310S15/chernoff-notes.pdf}, 2015.

\bibitem{Gordon1994}
Louis Gordon.
\newblock A stochastic approach to the gamma function.
\newblock {\em The American Mathematical Monthly}, 101(9):858--865, 1994.

\bibitem{DBLP:conf/wine/GuptaRST10}
Anupam Gupta, Aaron Roth, Grant Schoenebeck, and Kunal Talwar.
\newblock Constrained non-monotone submodular maximization: Offline and
  secretary algorithms.
\newblock In Amin Saberi, editor, {\em Proceedings of the 6th International
  Workshop on Internet and Network Economics (WINE'10)}, volume 6484 of {\em
  Lecture Notes in Computer Science}, pages 246--257. Springer, 2010.

\bibitem{Gupta_Singla}
Anupam Gupta and Sahil Singla.
\newblock Random-order models.
\newblock In Tim Roughgarden, editor, {\em Beyond the Worst-Case Analysis of
  Algorithms}, pages 234--258. Cambridge University Press, 2020.

\bibitem{HagerupR90}
Torben Hagerup and Christine R{\"{u}}b.
\newblock A guided tour of chernoff bounds.
\newblock {\em Inf. Process. Lett.}, 33(6):305--308, 1990.

\bibitem{hajiaghayi2007automated}
Mohammad~Taghi Hajiaghayi, Robert Kleinberg, and Tuomas Sandholm.
\newblock Automated online mechanism design and prophet inequalities.
\newblock In {\em AAAI}, volume~7, pages 58--65, 2007.

\bibitem{HajiaghayiKP04}
Mohammad~Taghi Hajiaghayi, Robert~D. Kleinberg, and David~C. Parkes.
\newblock Adaptive limited-supply online auctions.
\newblock In Jack~S. Breese, Joan Feigenbaum, and Margo~I. Seltzer, editors,
  {\em Proceedings 5th {ACM} Conference on Electronic Commerce (EC-2004), New
  York, NY, USA, May 17-20, 2004}, pages 71--80. {ACM}, 2004.

\bibitem{Hill1983}
T.~P. Hill.
\newblock Prophet inequalities and order selection in optimal stopping
  problems.
\newblock {\em Proc.~American Mathematical Society}, 88(1):131--137, 1983.

\bibitem{hill1982comparisons}
Theodore~P Hill and Robert~P Kertz.
\newblock Comparisons of stop rule and supremum expectations of iid random
  variables.
\newblock {\em The Annals of Probability}, pages 336--345, 1982.

\bibitem{jaillet2013online}
Patrick Jaillet, Jos{\'{e}}~A. Soto, and Rico Zenklusen.
\newblock Advances on matroid secretary problems: Free order model and laminar
  case.
\newblock In Michel~X. Goemans and Jos{\'{e}}~R. Correa, editors, {\em Integer
  Programming and Combinatorial Optimization - 16th International Conference,
  {IPCO} 2013, Valpara{\'{\i}}so, Chile, March 18-20, 2013. Proceedings},
  volume 7801 of {\em Lecture Notes in Computer Science}, pages 254--265.
  Springer, 2013.

\bibitem{KesselheimKN15}
Thomas Kesselheim, Robert~D. Kleinberg, and Rad Niazadeh.
\newblock Secretary problems with non-uniform arrival order.
\newblock In Rocco~A. Servedio and Ronitt Rubinfeld, editors, {\em Proceedings
  of the Forty-Seventh Annual {ACM} on Symposium on Theory of Computing, {STOC}
  2015, Portland, OR, USA, June 14-17, 2015}, pages 879--888. {ACM}, 2015.

\bibitem{KesselheimKN15-arxiv}
Thomas Kesselheim, Robert~D. Kleinberg, and Rad Niazadeh.
\newblock Secretary problems with non-uniform arrival order.
\newblock {\em CoRR}, abs/1502.02155, 2015.

\bibitem{kesselheim2013optimal}
Thomas Kesselheim, Klaus Radke, Andreas T{\"o}nnis, and Berthold V{\"o}cking.
\newblock An optimal online algorithm for weighted bipartite matching and
  extensions to combinatorial auctions.
\newblock In {\em European Symposium on Algorithms}, pages 589--600. Springer,
  2013.

\bibitem{kleinberg2005multiple}
Robert Kleinberg.
\newblock A multiple-choice secretary algorithm with applications to online
  auctions.
\newblock In {\em Proceedings of the sixteenth annual ACM-SIAM symposium on
  Discrete algorithms}, pages 630--631. Society for Industrial and Applied
  Mathematics, 2005.

\bibitem{KW-STOC12}
Robert Kleinberg and S.~Matthew Weinberg.
\newblock Matroid prophet inequalities.
\newblock In {\em Proceedings of the 44th Symposium on Theory of Computing
  Conference, {STOC} 2012, New York, NY, USA, May 19 - 22, 2012}, pages
  123--136, 2012.

\bibitem{krengel1977semiamarts}
Ulrich Krengel and Louis Sucheston.
\newblock Semiamarts and finite values.
\newblock {\em Bull. Am. Math. Soc}, 1977.

\bibitem{krengel1978semiamarts}
Ulrich Krengel and Louis Sucheston.
\newblock On semiamarts, amarts, and processes with finite value.
\newblock {\em Advances in Prob}, 4:197--266, 1978.

\bibitem{Lindley61}
D.~V. Lindley.
\newblock Dynamic programming and decision theory.
\newblock {\em Appl. Statist.}, 10:39--51, 1961.

\bibitem{LiuLPSS21}
Allen Liu, Renato~Paes Leme, Martin P{\'{a}}l, Jon Schneider, and
  Balasubramanian Sivan.
\newblock Variable decomposition for prophet inequalities and optimal ordering.
\newblock In P{\'{e}}ter Bir{\'{o}}, Shuchi Chawla, and Federico Echenique,
  editors, {\em {EC} '21: The 22nd {ACM} Conference on Economics and
  Computation, Budapest, Hungary, July 18-23, 2021}, page 692. {ACM}, 2021.

\bibitem{DBLP:conf/sigecom/LiuLPSS21}
Allen Liu, Renato~Paes Leme, Martin P{\'{a}}l, Jon Schneider, and
  Balasubramanian Sivan.
\newblock Variable decomposition for prophet inequalities and optimal ordering.
\newblock In P{\'{e}}ter Bir{\'{o}}, Shuchi Chawla, and Federico Echenique,
  editors, {\em {EC} '21: The 22nd {ACM} Conference on Economics and
  Computation, Budapest, Hungary, July 18-23, 2021}, page 692. {ACM}, 2021.
\newblock URL: \url{https://doi.org/10.1145/3465456.3467598}, \href
  {http://dx.doi.org/10.1145/3465456.3467598}
  {\path{doi:10.1145/3465456.3467598}}.

\bibitem{Meyerson-FOCS01}
Adam Meyerson.
\newblock Online facility location.
\newblock In {\em Foundations of Computer Science, 2001. Proceedings. 42nd IEEE
  Symposium on}, pages 426--431. IEEE, 2001.

\bibitem{PT22}
Bo~Peng and Zhihao~Gavin Tang.
\newblock Order selection prophet inequality: From threshold optimization to
  arrival time design.
\newblock In {\em Proceedings of the 63rd Annual {IEEE} Symposium on
  Foundations of Computer Science (FOCS)}, 2022.
\newblock URL: \url{https://arxiv.org/abs/2204.01425}, \href
  {http://dx.doi.org/10.48550/ARXIV.2204.01425}
  {\path{doi:10.48550/ARXIV.2204.01425}}.

\bibitem{rubinstein2016beyond}
Aviad Rubinstein.
\newblock Beyond matroids: Secretary problem and prophet inequality with
  general constraints.
\newblock {\em arXiv preprint arXiv:1604.00357}, 2016.

\bibitem{RS-SODA17}
Aviad Rubinstein and Sahil Singla.
\newblock Combinatorial prophet inequalities.
\newblock In {\em Proceedings of the Twenty-Eighth Annual ACM-SIAM Symposium on
  Discrete Algorithms}, pages 1671--1687. SIAM, 2017.

\bibitem{Samuels81}
S.M. Samuels.
\newblock Minimax stopping rules when the underlying distribution is uniform.
\newblock {\em J. Amer. Statist. Assoc.}, 76:188--197, 1981.

\bibitem{Stewart_Book}
I.~Stewart.
\newblock {\em Galois Theory}.
\newblock CRC Press, 4th edition, 2015.

\bibitem{D_Wajc_2017}
David Wajc.
\newblock Negative association - definition, properties, and applications.
\newblock {\em Lecture notes: https://www.cs.cmu.edu/~dwajc/notes/Negative
  20Association.pdf}, 2017.

\bibitem{yan2011mechanism}
Qiqi Yan.
\newblock Mechanism design via correlation gap.
\newblock In {\em Proceedings of the twenty-second annual ACM-SIAM symposium on
  Discrete Algorithms}, pages 710--719. Society for Industrial and Applied
  Mathematics, 2011.

\bibitem{Yao77}
Andrew~Chi{-}Chih Yao.
\newblock Probabilistic computations: Toward a unified measure of complexity
  (extended abstract).
\newblock In {\em 18th Annual Symposium on Foundations of Computer Science,
  Providence, Rhode Island, USA, 31 October - 1 November 1977}, pages 222--227.
  {IEEE} Computer Society, 1977.

\bibitem{Young95}
Neal~E. Young.
\newblock Randomized rounding without solving the linear program.
\newblock In Kenneth~L. Clarkson, editor, {\em Proceedings of the Sixth Annual
  {ACM-SIAM} Symposium on Discrete Algorithms, 22-24 January 1995. San
  Francisco, California, {USA}}, pages 170--178. {ACM/SIAM}, 1995.

\end{thebibliography}

\appendix

\section{Lower bounds and characterization for the classical secretary problem}\label{section:lb_classic_secr}

Given a wait-and-pick algorithm for the classical secretary ($1$-secretary) problem, we will denote its checkpoint by $m_0$ (we will reserve $m$ to be used as a variable  checkpoint in the analysis). 

We will first understand the optimal success probability of the best secretary algorithms. Let $f(k,m)=\frac{m}{k} (H_{k-1}-H_{m-1})$, where $H_k$ is the $k$-th harmonic number, $H_k = 1 + \frac{1}{2} + \frac{1}{3} + \ldots + \frac{1}{k}$. It is easy to prove that $f(n,m_0)$ is the exact success probability of the wait-and-pick algorithm with checkpoint $m_0$ when random order is given by choosing u.a.r.~a permutation $\pi \in \Pi_n$, see \cite{Gupta_Singla}.

\begin{lemma}\label{lemma:f_expansion_1}
The following asymptotic behavior holds, if $k \rightarrow \infty$ and 
$j\le \sqrt{k}$ is such that $m=k/e+j$ is an integer in $[k]$:
\[
f\left(k, \frac{k}{e} + j\right) = \frac{1}{e} - \left(\frac{1}{2e} - \frac{1}{2} + \frac{e j^2}{2k}  \right) \frac{1}{k} + \Theta\left( \left( \frac{1}{k} \right)^{3/2} \right)
\ .
\]
\end{lemma}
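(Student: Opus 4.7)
The plan is to expand both factors of $f(k,m) = \frac{m}{k}(H_{k-1}-H_{m-1})$ around $m^* = k/e$, substitute $m = k/e + j$, and collect all terms up to order $k^{-3/2}$. The key input is the Euler--Maclaurin expansion $H_n = \ln n + \gamma + \frac{1}{2n} - \frac{1}{12n^2} + O(n^{-4})$.

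First I would apply this asymptotic expansion to each harmonic number and write
$$H_{k-1} - H_{m-1} = \ln\frac{k-1}{m-1} + \frac{1}{2(k-1)} - \frac{1}{2(m-1)} + \frac{1}{12(m-1)^2} - \frac{1}{12(k-1)^2} + O(k^{-4}).$$
Substituting $m = k/e + j$, I would write the log as $\ln\frac{k-1}{m-1} = 1 + \ln(1 - 1/k) - \ln\bigl(1 + e(j-1)/k\bigr)$, then Taylor-expand both logarithms to second order using $\ln(1+x) = x - x^2/2 + O(x^3)$; the geometric expansions of $\frac{1}{2(m-1)}$ and $\frac{1}{12(m-1)^2}$ go the same way. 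Gathering all contributions by $j$-power and order of $1/k$ gives an expansion of the shape
$$H_{k-1} - H_{m-1} \;=\; 1 + \frac{e - 1 - 2ej}{2k} + \frac{1}{k^2}\Bigl[\tfrac{e^2-1}{12} - \tfrac{e^2 j}{2} + \tfrac{e^2 j^2}{2}\Bigr] + O\!\left(\tfrac{j^3}{k^3}\right).$$

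Next I would multiply by $\frac{m}{k} = \frac{1}{e} + \frac{j}{k}$ and regroup term by term. A useful sanity check here is that the coefficient of $\frac{j}{k}$ vanishes, which must happen because $m^* = k/e$ is the stationary point of $f(k,\cdot)$; the cancellation comes from $\frac{1}{e}\cdot(-\frac{ej}{k}) = -\frac{j}{k}$ against $\frac{j}{k}\cdot 1 = \frac{j}{k}$. The $j$-free contribution at order $1/k$ then reduces to $\frac{1}{2} - \frac{1}{2e}$, while the $j^2$-contribution at order $1/k^2$ is obtained by combining the ``$\frac{1}{e}\cdot \frac{e^2 j^2}{2k^2}$'' piece with the cross-term ``$\frac{j}{k}\cdot(-\frac{ej}{k})$'' to yield $-\frac{e j^2}{2k^2}$. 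Collecting these exactly reproduces $-\frac{1}{k}\bigl(\frac{1}{2e} - \frac{1}{2} + \frac{e j^2}{2k}\bigr)$.

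Finally, I would verify that the residual is $\Theta(k^{-3/2})$ under $|j| \le \sqrt{k}$. After peeling off the stated main terms, the unaccounted pieces are $\frac{e^2-1}{12 e k^2}$, a linear-in-$j$ piece of the form $-\frac{j}{2k^2}$ arising from cross-products in the multiplication, and the cubic remainder $O(j^3/k^3)$ from the logarithm expansion. For $|j| \le \sqrt{k}$, all of these are $O(k^{-3/2})$, and the linear and cubic pieces together achieve $\Theta(k^{-3/2})$ precisely when $|j|$ is of order $\sqrt{k}$, giving the tight bound.

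The main obstacle is the bookkeeping across orders: the claimed $\frac{e j^2}{2k}$ inside the leading correction is of size $\frac{j^2}{k^2}$, so it lives at order $1/k^2$, and matching it correctly requires retaining $j^2$-accuracy in the $1/k^2$ term of $H_{k-1} - H_{m-1}$ \emph{and} tracking the product $\frac{j}{k}\cdot\frac{-ej}{k}$ that appears from the first-order expansion. Both contributions are of the same order and partially cancel; omitting or double-counting either one would produce the wrong coefficient. Once the expansion is organized by separating $j$-free, $j$-linear, and $j^2$-contributions at each order $1/k^s$, the verification becomes a routine collection of terms.
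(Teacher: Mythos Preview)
Your proposal is correct and follows essentially the same route as the paper: both expand $H_{k-1}-H_{m-1}$ via the standard asymptotic $H_n = \ln n + \gamma + \tfrac{1}{2n} + O(n^{-2})$ (the paper phrases this through the digamma function and a cited inequality of Gordon, but it is the same expansion), Taylor-expand the logarithm around $m=k/e$, multiply by $m/k = 1/e + j/k$, and collect terms. Your bookkeeping is slightly more explicit about the $j^2/k^2$ cross-terms and about when the remainder is genuinely $\Theta(k^{-3/2})$ versus merely $O(k^{-3/2})$, but the argument is the same.
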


The proof of Lemma~\ref{lemma:f_expansion_1} is in Appendix~\ref{sec:optimal-f_expansion_1}.
We will now precisely characterize the maximum of function $f$. Recall that,
$f(k,m)=\frac{m}{k} (H_{k-1}-H_{m-1})$, and note that $1\le m\le k$. We have the discrete derivative~of~$f$:
$
h(m) = f(k,m+1)-f(k,m) =
\frac{1}{k} (H_{k-1}-H_m-1)
\ ,
$
which is positive for $m\le m_0$
and negative otherwise, for some 
$m_0=\max\{m>0: H_{k-1}-H_m-1>0\}$.

\begin{lemma}\label{l:deriv_bounds_1}
There exists an absolute constant $c > 1$ such that for any integer $k \geq c$, we have that $h\left(\lfloor \frac{k}{e} \rfloor - 1\right) > 0$ and
$h\left(\lfloor \frac{k}{e} \rfloor + 1 \right) < 0$. Moreover, function $f(k,\cdot)$ achieves its maximum for $m \in \{\lfloor \frac{k}{e} \rfloor, \lfloor \frac{k}{e} \rfloor + 1\}$, and is monotonically increasing for smaller values of $m$ and monotonically decreasing for larger values of $m$.
\end{lemma}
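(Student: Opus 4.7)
The plan is to first derive a closed form for the discrete derivative $h(m)=f(k,m+1)-f(k,m)$ and use its monotonicity in $m$ to reduce the claim to establishing the sign of $h$ at the two prescribed points.

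A direct computation (which I omit here) gives
\[
h(m) \;=\; f(k,m+1)-f(k,m) \;=\; \frac{1}{k}\bigl(H_{k-1}-H_m-1\bigr),
\]
confirming the expression already stated in the paragraph preceding the lemma. Since $H_m$ is strictly increasing in $m\in[k]$, the function $h(m)$ is strictly decreasing in $m$. Consequently, once I prove the two sign statements $h(\lfloor k/e\rfloor-1)>0$ and $h(\lfloor k/e\rfloor+1)<0$ for $k$ at least some absolute constant $c$, the monotonicity assertion (and the location of the maximum in $\{\lfloor k/e\rfloor,\lfloor k/e\rfloor+1\}$) will follow immediately: $f(k,\cdot)$ is non-decreasing for $m\le\lfloor k/e\rfloor$ and non-increasing for $m\ge\lfloor k/e\rfloor+1$, with the middle step $h(\lfloor k/e\rfloor)$ being either non-negative or negative (in which case the maximum is achieved at $\lfloor k/e\rfloor$ rather than $\lfloor k/e\rfloor+1$, but in both cases inside the claimed two-point set).

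For the two sign statements I would use the standard asymptotic expansion of the harmonic numbers, namely
\[
H_n \;=\; \ln n+\gamma+\frac{1}{2n}+O\!\left(\frac{1}{n^2}\right),
\]
so that
\[
H_{k-1}-H_m \;=\; \ln\!\frac{k-1}{m}+\frac{1}{2(k-1)}-\frac{1}{2m}+O\!\left(\frac{1}{k^2}\right).
\]
Plugging $m=\lfloor k/e\rfloor-1$ gives $\frac{k-1}{m}=\frac{e(k-1)}{k-e+e\delta}$ for some $\delta\in[0,1)$ induced by the floor, and expanding the logarithm yields $\ln\!\frac{k-1}{m}=1+\frac{e-1}{k}+O(1/k^2)$; the correction term contributes only $-\frac{e-1}{2k}+O(1/k^2)$, so $k\cdot h(\lfloor k/e\rfloor-1)=\frac{e-1}{2k}+O(1/k^2)>0$ for $k$ past some absolute threshold. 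The symmetric computation with $m=\lfloor k/e\rfloor+1$ gives $\ln\!\frac{k-1}{m}=1-\frac{e+1}{k}+O(1/k^2)$ and hence $k\cdot h(\lfloor k/e\rfloor+1)=-\frac{e+1}{k}-\frac{e-1}{2k}+O(1/k^2)<0$ once $k$ is large enough. Taking $c$ to be the larger of the two thresholds that make these $O(1/k^2)$ error terms negligible compared to the dominant $\Theta(1/k)$ terms completes the proof.

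The only mildly delicate point will be tracking the error induced by the floor function $\lfloor k/e\rfloor$ in the arguments of the logarithm; but since the floor only perturbs $m$ by an $O(1)$ additive amount and all the dominant contributions I isolate are of order $1/k$ multiplied by absolute constants bounded away from zero ($\tfrac{e-1}{2}$ and $\tfrac{3e+1}{2}$ respectively), the perturbation is harmless once $k$ exceeds a suitable constant. No new concept is needed beyond the standard asymptotic expansion of $H_n$ and the monotonicity of $h$ in $m$.
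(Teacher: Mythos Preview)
Your proposal is correct and follows essentially the same approach as the paper: both arguments use the closed form $h(m)=\tfrac{1}{k}(H_{k-1}-H_m-1)$, its monotonicity in $m$, and standard $\ln$-based estimates for harmonic numbers to pin down the signs at $m=\lfloor k/e\rfloor\pm 1$. The only stylistic difference is that the paper works with the explicit two-sided bounds $\tfrac{1}{2(x+1)}<H_x-\ln x-\gamma<\tfrac{1}{2x}$ and the elementary inequality $\ln y\ge 1-1/y$ in place of your asymptotic expansion with $O(1/k^2)$ error terms, which in principle makes the threshold constant $c$ more explicit but is otherwise equivalent to your argument.
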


Lemma~\ref{l:deriv_bounds_1} is proved in Appendix~\ref{sec:optimal-deriv_bounds_1}. Proposition \ref{Thm:optimum_expansion} below  shows a characterization of the optimal success probability $OPT_n$ of secretary algorithms.

\begin{proposition}\label{Thm:optimum_expansion}
\

\begin{enumerate}
\item\label{Thm:optimum_expansion_1} The optimal success probability of the best secretarial algorithm for the problem with $n$ items which uses a uniform random order from $\Pi_n$ is $OPT_n = 1/e + c_0/n + \Theta((1/n)^{3/2})$, where $c_0 = 1/2 - 1/(2e)$.
 
 \item\label{Thm:optimum_expansion_2} The success probability of any secretarial algorithm for the problem with $n$ items which uses any probabilistic distribution on $\Pi_n$ is at most $OPT_n = 1/e + c_0/n + \Theta((1/n)^{3/2})$.
 
 \item\label{Thm:optimum_expansion_3} There exists an infinite sequence of integers $n_1 < n_2 < n_3 < \ldots$, such that the success probability of any deterministic secretarial algorithm for the problem with $n \in \{n_1,n_2, n_3,\ldots\}$ items which uses any uniform probabilistic distribution on $\Pi_n$ with support $\ell < n$ is strictly smaller than $1/e$.
\end{enumerate}
\end{proposition}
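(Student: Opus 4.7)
The plan is to combine the two preceding lemmas with the classical Gilbert--Mosteller optimality result. Since the optimal secretary strategy under the uniform order $\pi \sim \Pi_n$ is a wait-and-pick algorithm, we have $OPT_n = \max_{m \in [n-1]} f(n,m)$ where $f(n,m) = (m/n)(H_{n-1}-H_{m-1})$. Lemma~\ref{l:deriv_bounds_1} pins down the maximizer as $m_0 \in \{\lfloor n/e\rfloor, \lfloor n/e\rfloor+1\}$, so we may write $m_0 = n/e + j$ with $|j| \le 1$. Plugging into Lemma~\ref{lemma:f_expansion_1} gives
$$
OPT_n \;=\; \frac{1}{e} \;+\; \frac{1}{n}\left(\frac{1}{2} - \frac{1}{2e}\right) \;-\; \frac{e\, j^2}{2 n^2} \;+\; \Theta\!\left(n^{-3/2}\right)
\;=\; \frac{1}{e} + \frac{c_0}{n} + \Theta\!\left(n^{-3/2}\right),
$$
since $j^2/n^2 = O(n^{-2})$ is absorbed into the error. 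This yields the claimed expansion with $c_0 = 1/2 - 1/(2e)$.

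\textbf{Plan for Part 2.} The key step is a symmetrization over value assignments. Fix any secretary algorithm $A$ that draws its order $\pi$ from an arbitrary distribution $\mathcal{D}$ on $\Pi_n$. Because a secretary algorithm only accesses values through their relative ranks, we may assume without loss of generality that the adversarial values are a permutation $\sigma \in \Pi_n$ of $\{1,\ldots,n\}$, and success amounts to stopping at the position of value $n$. For fixed $\sigma$, the success probability equals $\Pr_{\pi \sim \mathcal{D}}[\,A \text{ stops at } \sigma^{-1}(n) \text{ in order } \pi\,]$, and the worst-case-in-$\sigma$ success is trivially bounded above by its average over a uniformly random $\sigma$. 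Now, for any fixed $\pi$, if $\sigma$ is uniform on $\Pi_n$ then $\sigma \circ \pi$ is uniform on $\Pi_n$, so averaging the success probability over uniform $\sigma$ yields exactly $\Pr_{\tau \sim \Pi_n}[A(\tau) \text{ succeeds}]$ regardless of $\mathcal{D}$. By Part~1, this last quantity is at most $OPT_n$, which establishes Part~2. Derandomizing internal coins of $A$ by a standard averaging argument extends the bound to fully randomized algorithms.

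\textbf{Plan for Part 3.} This is the main obstacle. Let $\mathcal{L} \subseteq \Pi_n$ be a support of size $\ell < n$ with uniform draw, and let $A$ be any deterministic secretary algorithm. Since $A$ is deterministic, for each $\pi \in \mathcal{L}$ either $A$ succeeds or fails on a given adversarial $\sigma$, so the worst-case success probability is $k/\ell$ for some integer $k$, and $k/\ell \ge 1/e$ forces $k \ge \lceil \ell/e \rceil$, i.e., $k/\ell - 1/e \ge (1 - \{\ell/e\})/\ell$. Combining with the upper bound $k/\ell \le OPT_n = 1/e + c_0/n + \Theta(n^{-3/2})$ from Part~2 would give $(1-\{\ell/e\}) \le c_0\, \ell/n + O(\ell/n^{3/2})$, which is not yet tight enough when $\{\ell/e\}$ is near $1$. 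To close this, the plan is to sharpen the upper bound by exploiting that the support has size only $\ell < n$: using the semitone-sequence construction from~\cite{KesselheimKN15} (as refined in Section~\ref{sec:lower-general} of this paper), one can show that the adversary can force failure on at least one permutation in $\mathcal{L}$, giving the strictly stronger bound $k \le \ell - 1$ in a form that, combined with the discretization argument above, rules out $k/\ell \ge 1/e$. The sequence $\{n_i\}_{i\ge 1}$ is then chosen to realize those arithmetic conditions on $\{\ell/e\}$ for \emph{all} $\ell < n_i$ simultaneously: concretely, pick $n_i$ along a subsequence where $\max_{1 \le \ell < n_i}\{\ell/e\}$ stays bounded away from $1$ by more than the $O(1/n_i^{1/2})$ slack, which is possible by the continued-fraction expansion of $1/e$ (Hurwitz-type avoidance of very good rational approximations). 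The hard part will be making the semitone-sequence reduction deliver the needed strict inequality $k \le \lfloor \ell/e \rfloor$ against \emph{every} deterministic strategy and every support of size~$\ell$, rather than merely the generic $\leq OPT_n$ bound already available from Part~2.
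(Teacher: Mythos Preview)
Your plans for Parts~1 and~2 coincide with the paper's proofs: Part~1 combines Gilbert--Mosteller with Lemmas~\ref{l:deriv_bounds_1} and~\ref{lemma:f_expansion_1} exactly as you describe, and Part~2 is the same symmetrization (the paper writes it as swapping the order of summation over $\sigma$ and $\pi$, but the content is identical to your $\sigma\circ\pi$ argument).

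For Part~3, the paper takes a much more elementary route than your proposal and uses no semitone sequences at all. It sets $\ell_i=10^i$, $n_i=10\,\ell_i$, and lets $d_i$ be the integer formed by the first $i$ decimal digits of $1/e$, so that $d_i/\ell_i<1/e<(d_i+1)/\ell_i$. The entire argument is then: the success probability of a deterministic algorithm on a uniform distribution with support of size $\ell_i$ is some $j/\ell_i$; by Part~2 it is at most $OPT_{n_i}=1/e+c_0/n_i+\Theta(n_i^{-3/2})$; and by the decimal-truncation choice no $j/\ell_i$ lies in the window $[1/e,\,OPT_{n_i}]$ (one checks $(d_i+1)/\ell_i-1/e=(1-\{\ell_i/e\})/\ell_i>c_0/(10\ell_i)=c_0/n_i$). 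Hence $j/\ell_i<1/e$. That is the whole proof---pure arithmetic on the digits of $1/e$.

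Two remarks on where your plan diverges. First, your discretization observation $k/\ell-1/e\ge(1-\{\ell/e\})/\ell$ is exactly the mechanism the paper uses; what you are missing is that one does not need any further machinery once $n_i$ and $\ell_i$ are \emph{chosen} so that $1-\{\ell_i/e\}>c_0\,\ell_i/n_i$. Second, note that the paper's proof as written only treats the specific support size $\ell=\ell_i=n_i/10$, whereas the statement's wording (``support $\ell<n$'') suggests all $\ell<n$. Your attempt to cover every $\ell<n_i$ simultaneously is therefore aiming at a stronger claim than the paper actually establishes, and this is precisely why you are forced toward semitone sequences and a bound of the form $k\le\lfloor\ell/e\rfloor$ that the semitone argument does not deliver (it gives $O(1/s)$-type bounds with $s=\Theta(\log n/\ell)$, not the exact constant $1/e$).
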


\begin{proof}
\noindent
Part \ref{Thm:optimum_expansion_1}. Gilbert and Mosteller~\cite{GilbertM66} proved
that under maximum entropy, the probability of success is maximized by wait-and-pick algorithm with some checkpoint. Another important property, used in many papers (c.f., Gupta and Singla \cite{Gupta_Singla}), is that function $f(n,m)$ describes the probability of success of the wait-and-pick algorithm with  checkpoint $m$.

Consider wait-and-pick algorithms with checkpoint $m\in [n-1]$.
\ignore{
We first argue that for $j$ such that $|j|>\sqrt{n}$, function $f$ does not reach its global maximum. Indeed, for any $m\in [n-1]$ consider the difference
\[
f(n,m+1)-f(n,m) = \frac{1}{n} \cdot \left(H_{n-1}-H_m-1 \right)
\ ,
\]
which is positive until $H_m+1$ gets bigger than $H_{k-1}$, and remains negative afterwards.
Thus, there is only one local maximum, which is also global maximum, of function $f$ wrt variable $m$.
Lemma~\ref{lemma:f_expansion_1} for $k=n$ implies that 
$f\left(n, \lceil\frac{n}{e}-\sqrt{n}\right)\rceil \le f\left(n, \lceil\frac{n}{e}\rceil\right)$ and $f\left(n, \lceil\frac{n}{e}\rceil\right) \ge f\left(n, \lfloor\frac{n}{e}+\sqrt{n}\rfloor\right)$, therefore this unique local maximum is reached for some $|j|\le \sqrt{n}$.

Because $\frac{n}{e} - 1 \leq \left\lfloor \frac{n}{e} \right\rfloor$ and $\left\lfloor \frac{n}{e} \right\rfloor + 1 \leq \frac{n}{e} + 1$, by Lemma \ref{l:deriv_bounds_1}, this unique local, and also global, maximum of function $f(n,\cdot)$ is achieved for some $j$ such that $|j| \leq 1$. 
}
By Lemma \ref{l:deriv_bounds_1}, function $f(n,\cdot)$ achieves is maximum for checkpoint
$m\in \left\{\left\lfloor \frac{n}{e} \right\rfloor,\left\lfloor \frac{n}{e} \right\rfloor + 1\right\}$, and
by Lemma \ref{lemma:f_expansion_1}, taken for $k=n$, it could be seen that for any admissible value of $j$ (i.e., such that $n/e+j$ is an integer and $|j|\le 1$, thus also for $j\in \left\{\left\lfloor \frac{n}{e} \right\rfloor -n/e,\left\lfloor \frac{n}{e} \right\rfloor + 1-n/e\right\}$ for which $f(n,m)$ achieves its maximum), and for $c_0 = 1/2 - 1/(2e)$:
$
f\left(n, \frac{n}{e}+j\right)  
= \frac{1}{e} + \frac{c_0}{n}  + \Theta\left( \left( \frac{1}{n} \right)^{3/2} \right) \, .
$
\ignore{
By Lemma \ref{lemma:f_expansion_1}, taken for $k=n$, it could be seen that for any admissible value of $j$ (i.e., such that $n/e+j$ is an integer and $|j|\le \sqrt{n}$), 
\[
f\left(n, \frac{n}{e}+j\right)  
\le \frac{1}{e} + \frac{c_0}{n}  + \Theta\left( \left( \frac{1}{n} \right)^{3/2} \right) \, ,
\] 
where $c_0 = 1/2 - 1/(2e)$. 
This bound is met, up to $\Theta(n^{-3/2})$, for constant values of $j$, thus the above upper bound on $f$ is actually the $OPT_n$.}%
%

\noindent
Part \ref{Thm:optimum_expansion_2}. Consider a probabilistic distribution on set $\Pi_n$, which for every permutation $\pi\in\Pi_n$ assigns probability $p_\pi$ of being selected.
Suppose that the permutation selected by the adversary is $\sigma \in \Pi_n$. Given a permutation $\pi\in\Pi_n$ selected by the algorithm, let $\chi(\pi,\sigma) = 1$ if the algorithm is successful on the adversarial permutation $\sigma$ and its selected permutation $\pi$, and $\chi(\pi,\sigma) = 0$ otherwise.

Given a specific adversarial choice $\sigma \in \Pi_n$, the total weight of permutations resulting in success of the secretarial algorithm is
$
\sum_{\pi\in\Pi_n} p_{\pi} \cdot \chi(\sigma,\pi) \, .
$

Suppose now that the adversary selects its permutation $\sigma$ uniformly at random from $\Pi_n$. The expected total weight of permutations resulting in success of the secretarial algorithm is
$
\sum_{\sigma \in \Pi_n} q_{\sigma} \cdot \left(\sum_{\pi\in\Pi_n} p_{\pi} \cdot \chi(\sigma,\pi)\right)
$, where $q_{\sigma} = 1/n!$ for each $\sigma \in \Pi_n$.
The above sum can be rewritten as follows:
$ 
\sum_{\sigma \in \Pi_n} q_{\sigma} \cdot \left(\sum_{\pi\in\Pi_n} p_{\pi} \cdot \chi(\sigma,\pi)\right) =
 \sum_{\pi\in\Pi_n} p_{\pi} \cdot \left(\sum_{\sigma \in \Pi_n} q_{\sigma} \cdot \chi(\sigma,\pi)\right) \, ,
$ 
and now we can treat permutation $\pi$ as fixed and adversarial, and permutation $\sigma$ as chosen by the algorithm uniformly at random from $\Pi_n$, we have by Part \ref{Thm:optimum_expansion_1} that
$
\sum_{\sigma \in \Pi_n} q_{\sigma} \cdot \chi(\sigma,\pi) =
OPT_n
$. This implies that the expected total weight of permutations resulting in success of the secretarial algorithm is at most
$ 
\sum_{\pi\in\Pi_n} p_{\pi} \cdot \chi(\sigma,\pi) \leq \sum_{\pi\in\Pi_n} p_{\pi} \cdot OPT_n = OPT_n \, .
$ 
Therefore, there exists a permutation $\sigma\in\Pi_n$ realizing this adversarial goal. Thus it is impossible that there is a secretarial algorithm that for any adversarial permutation $\sigma\in\Pi_n$ has success probability $> OPT_n$.

\noindent
Part \ref{Thm:optimum_expansion_3}. Let $\ell_i = 10^i$ and $n_i = 10 \ell_i$ for $i \in \nats_{\geq 1}$. Let us take the infinite decimal expansion of $1/e = 0.367879441171442 ...$ and define as $d_i > 1$ the integer that is build from the first $i$ digits in this decimal expansion after the decimal point, that is, $d_1 = 3$, $d_2 = 36$, $d_3 = 367$, and so on. The sequence $d_i/\ell_i$ has the following properties: $\lim_{i \rightarrow +\infty} d_i/\ell_i = 1/e$, for each $i = 1,2,...$ we have that $d_i/\ell_i < 1/e < (d_i + 1)/\ell_i$ and, moreover, 
$j/\ell_i \not \in [1/e, 1/e + 1/n_i]$ for all $j \in \{0,1,2,\ldots, \ell_i\}$.

Let us now take any $n = n_i$ for some (large enough) $i \in \nats_{\geq 1}$ and consider the secretary problem with $n = n_i$ items. Consider also any deterministic secretarial algorithm for this problem that uses any uniform probability distribution on the set $\Pi_{n_i}$ with support $\ell_i$. By Part \ref{Thm:optimum_expansion_2} the success probability of this algorithm using this probability distribution is at most $OPT_{n_i} = 1/e + c_0/n_i + \Theta((1/n_i)^{3/2})$. Because the algorithm is deterministic, all possible probabilities in this probability distribution belong to the set $\{j/\ell_i : j \in \{0,1,2,\ldots, \ell_i\}\}$. We observe now that $j/\ell_i \not \in [1/e, 1/e + c_0/n_i + \Theta((1/n_i)^{3/2})]$ for $j \in \{0,1,2,\ldots, \ell_i\}$. This 
fact holds by the construction and by the fact that constant $c_0 \in (0,1)$, and we may also need to assume that $i \in \nats_{\geq 1}$ is taken to be large enough to deal with the term $\Theta((1/n_i)^{3/2})$. Thus the success probability of this algorithm is strictly below $1/e$.
\end{proof}

\section{Proofs from Section~\ref{section:lb_classic_secr}}\label{sec:existential-whole-proofs}

\subsection{Proof of Lemma~\ref{lemma:f_expansion_1}}
\label{sec:optimal-f_expansion_1}

\begin{proof}
 To prove this expansion we extend the harmonic function $H_n$ to real numbers. Namely, for any real number $x \in \reals$ we use the well known definition:
\[
  H_x = \psi(x+1) + \gamma \ ,
\] 
where $\psi$ is the digamma function and  $\gamma$ is the Euler-Mascheroni
constant. Digamma function is just the derivative of the logarithm of the 
gamma function $\Gamma(x)$, pioneered by Euler, Gauss and Weierstrass. Both functions are important and widely studies in real and complex analysis.

For our purpose, it suffices to use the following inequalities that hold for any real $x > 0 $ (see Theorem~5 in \cite{Gordon1994}):
$$
  \ln(x) - \frac{1}{2x} - \frac{1}{12x^2} + \frac{1}{120(x + 1/8)^4} \,\, < \,\,  \psi(x) \,\, < \,\,  \ln(x) - \frac{1}{2x} - \frac{1}{12x^2} + \frac{1}{120x^4} \, .
$$ Now we use these estimates for $f(k,m)$ for $\psi(k)$ and $\psi(m)$ with $m = k/e + j$:
\begin{eqnarray*}
  & & f(k,m) = \frac{m}{k} (\psi(k) - \psi(m)) = \\
  & &        = \frac{m}{k} \left(\ln(k) - \frac{1}{2k} - \frac{1}{12k^2} + \frac{1}{120(k + \theta(k))^4}  
                        - \ln(m) + \frac{1}{2m} + \frac{1}{12m^2} - \frac{1}{120(m + \theta(m))^4} \right) \\
  & &        = \frac{m}{k} \left(1 + \ln\left(\frac{k}{e m}\right) - \frac{1}{2k} + \frac{1}{2m} - \frac{1}{12k^2} + \frac{1}{12m^2} + \frac{1}{120(k + \theta(k))^4} - \frac{1}{120(m + \theta(m))^4} \right) \, ,
\end{eqnarray*} where $\theta(x) \in (0,1/8)$. Now, taking into account that 
$m\in [k]$,
we can suppress the low order terms under $\Theta\left( \frac{1}{k^2} \right)$ to obtain
\begin{eqnarray*}
  f(k,m) &=& \frac{m}{k} \left(1 + \ln\left(\frac{k}{e m}\right) - \frac{1}{2k} + \frac{1}{2m}\right) +  \Theta\left( \frac{1}{m^2} \right) \\
         &=& \frac{m}{k} - \frac{m}{k} \ln\left(\frac{e m}{k}\right) - \frac{m}{2k^2} + \frac{1}{2k} +  \Theta\left( \frac{1}{m^2} \right) \\
         &=& \frac{1}{e} + \frac{j}{k} - \left(\frac{1}{e} + \frac{j}{k}\right) \ln\left(\frac{k + j e}{k}\right) - \frac{k/e + j}{2k^2} +    
             \frac{1}{2k} +  \Theta\left( \frac{1}{k^2} \right) \\
         &=& \frac{1}{e} + \frac{j}{k} - \left(\frac{1}{e} + \frac{j}{k}\right) \ln\left(\frac{k + j e}{k}\right) - \frac{1}{2ek} + \frac{1}{2k} +  \Theta\left( \frac{1}{k^{3/2}} \right)  \, .
\end{eqnarray*} We will now use the following well known Taylor expansion
\begin{eqnarray*}
 \ln\left(\frac{k + j e}{k}\right) &=& \left(\frac{k + j e}{k} - 1 \right) - \frac{1}{2} \left(\frac{k + j e}{k} - 1 \right)^2 + \frac{1}{3}    
                                       \left(\frac{k + j e}{k} - 1 \right)^3 - \ldots \\
                                &=& \frac{j e}{k} - \frac{1}{2} \left(\frac{j e}{k}\right)^2 + \frac{1}{3} \left(\frac{j e}{k}\right)^3 - \ldots 
  = \frac{j e}{k} - \frac{1}{2} \left(\frac{j e}{k}\right)^2 + \Theta\left( \frac{1}{k^{3/2}} \right)\, .
\end{eqnarray*} Using this expansion, we can continue from above as follows
\begin{eqnarray*}
   f(k,m) &=& \frac{1}{e} + \frac{j}{k} - \left(\frac{1}{e} + \frac{j}{k}\right) \ln\left(\frac{k + j e}{k}\right) - \frac{1}{2ek} + \frac{1}{2k} +  \Theta\left( \frac{1}{k^{3/2}} \right) \\
   &=& \frac{1}{e} + \frac{j}{k} - \left(\frac{1}{e} + \frac{j}{k}\right) \left(\frac{j e}{k} - \frac{1}{2} \left(\frac{j e}{k}\right)^2\right) - \frac{1}{2ek} + \frac{1}{2k} +  \Theta\left( \frac{1}{k^{3/2}} \right) \\
   &=& \frac{1}{e} + \frac{1}{2e} \left(\frac{j e}{k}\right)^2 - \frac{j^2 e}{k^2} + \frac{j^3 e^2}{k^3} - \frac{1}{2ek} + \frac{1}{2k} +  \Theta\left( \frac{1}{k^{3/2}} \right) \\
   &=& \frac{1}{e} + \frac{1}{2e} \left(\frac{j e}{k}\right)^2 - \frac{j^2 e}{k^2} - \frac{1}{2ek} + \frac{1}{2k} +  \Theta\left( \frac{1}{k^{3/2}} \right) \\
   &=& \frac{1}{e} - \frac{j^2 e}{2 k^2} - \frac{1}{2ek} + \frac{1}{2k} +  \Theta\left( \frac{1}{k^{3/2}} \right) \\
   &=& \frac{1}{e} - \left(\frac{1}{2e} + \frac{e j^2}{2k} - \frac{1}{2} \right) \frac{1}{k} + \Theta\left( \frac{1}{k^{3/2}} \right) \ .
\end{eqnarray*}
\end{proof}

\subsection{Proof of Lemma~\ref{l:deriv_bounds_1}}
\label{sec:optimal-deriv_bounds_1}

\begin{proof}
We first argue that function $f(n,\cdot)$ has exactly one local maximum, which is also global maximum.
To see it, observe that function $h(m)$ is positive until $H_m+1$ gets bigger than $H_{k-1}$, which occurs for a single value $m$ (as we consider function $h$ for discrete arguments) and remains negative afterwards.
Thus, function $f(n,\cdot)$ is monotonically increasing until that point, and decreasing afterwards.
Hence, it has only one local maximum, which is also global maximum. 

It remains to argue that the abovementioned argument $m$ in which function $f(n,\cdot)$ achieves maximum is in $\{\lfloor \frac{k}{e} \rfloor, \lfloor \frac{k}{e} \rfloor + 1\}$.
We will make use of the following known inequalities.

\begin{lemma}\label{l:harmonic_bounds} 
The following bounds hold for the harmonic and logarithmic functions:
\begin{enumerate}
\item[(1)] $\frac{1}{2(x+1)} < H_x -\ln x -\gamma < \frac{1}{2x}$ ,
\item[(2)] $\frac{1}{24(x+1)^2} < H_x -\ln (x+1/2) -\gamma < \frac{1}{24x^2}$ ,
\item[(3)] $\frac{x}{1+x} \leq \ln(1+x) \leq x$, which holds for $x > -1$.
\end{enumerate}
\end{lemma}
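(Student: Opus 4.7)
The three inequalities are classical; I would prove them in the order (3), (1), (2), with each serving as a tool for the next.

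For (3), the upper bound $\ln(1+x) \leq x$ is the tangent-line inequality at $x=0$ for the concave function $\ln$. The lower bound follows by substituting $y = 1/(1+x) > 0$ (valid for $x > -1$) into the already-proved upper bound: $\ln(1/(1+x)) \leq 1/(1+x) - 1 = -x/(1+x)$, whence $\ln(1+x) \geq x/(1+x)$ after negation.

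For (1), I would first handle integer $n \geq 1$. Telescoping $\gamma = \lim_N (H_N - \ln N)$ yields the tail representation $H_n - \ln n - \gamma = \sum_{k=n}^\infty [\ln(1+1/k) - 1/(k+1)]$, each summand being the positive integral $f(k) = \int_k^{k+1}(1/t - 1/(k+1))\,dt = \int_0^1 (1-u)/[(k+u)(k+1)]\,du$. Strict convexity of $1/t$ on $[k,k+1]$ gives the trapezoid bound $\ln(1+1/k) \leq \tfrac{1}{2}(1/k+1/(k+1))$, so $f(k) < \tfrac{1}{2k(k+1)}$; summing by telescoping yields the upper bound $\tfrac{1}{2n}$. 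For the lower bound, the crude estimate $(k+u)(k+1) \leq (k+1)^2$ in the integrand gives $f(k) > \tfrac{1}{2(k+1)^2} > \tfrac{1}{2(k+1)(k+2)}$, which telescopes to $\tfrac{1}{2(n+1)}$. To extend to real $x > 0$, I use the identity $H_x = \psi(x+1) + \gamma$ together with Binet's integral $\psi(x) = \ln x - \tfrac{1}{2x} - 2\int_0^\infty t\,dt/[(t^2+x^2)(e^{2\pi t}-1)]$; combined with $\psi(x+1) = \psi(x) + 1/x$ this gives $H_x - \ln x - \gamma = \tfrac{1}{2x} - I(x)$ with $I(x) > 0$, which immediately yields the upper bound. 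The lower bound follows from $e^{2\pi t} - 1 > 2\pi t(1+\pi t)$ (a consequence of $e^u - 1 > u + u^2/2$), which implies $I(x) < \tfrac{1}{2x(x+1)}$ after elementary integration.

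For (2), the analogous tail representation is $H_n - \ln(n+1/2) - \gamma = \sum_{k=n+1}^\infty [\ln\tfrac{k+1/2}{k-1/2} - \tfrac{1}{k}]$, and the crucial feature is that each summand $\int_{-1/2}^{1/2}(\tfrac{1}{k+u} - \tfrac{1}{k})\,du$ has its $O(1/k^2)$ (linear-in-$u$) contribution cancel by symmetry, collapsing to $\int_0^{1/2} \tfrac{2u^2}{k(k^2-u^2)}\,du$. Expanding $(k^2-u^2)^{-1}$ as a geometric series in $u^2/k^2$ and integrating term-by-term produces the exact identity $f(k) = \sum_{j \geq 1} \tfrac{1}{(2j+1)\,4^j\,k^{2j+1}}$ with leading term $\tfrac{1}{12k^3}$. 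The lower bound $H_n - \ln(n+1/2) - \gamma > \tfrac{1}{24(n+1)^2}$ follows at once from $\sum_{k \geq n+1}\tfrac{1}{12k^3} > \int_{n+1}^\infty dt/(12t^3) = \tfrac{1}{24(n+1)^2}$. For the upper bound, the geometric tail of $f(k)$ is bounded by $\tfrac{1}{60k^5}$, and applying the midpoint-integral comparison for convex $1/t^3$ yields $\sum f(k) \leq \tfrac{1}{24(n+1/2)^2} + \tfrac{1}{240(n+1/2)^4} < \tfrac{1}{24n^2}$. The extension to real $x$ proceeds once more through Binet's formula with an extra integral term to control.

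The main obstacle is (2). The naive per-term estimate $f(k) \leq \tfrac{1}{12k(k^2-1/4)}$ sums to a quantity that exceeds the target $\tfrac{1}{24n^2}$, so the sharper two-term bound $\tfrac{1}{12k^3} + \tfrac{1}{60k^5}$ extracted from the exact geometric series, combined with the midpoint (rather than endpoint) integral comparison for $\sum 1/k^3$, is essential. Extending (1) and (2) from integer $n$ to real $x > 0$ also requires Binet's integral representation of $\psi$; this step is less elementary but standard, and all the remaining bounds follow mechanically from positivity and monotonicity of the integrand.
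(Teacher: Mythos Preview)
The paper does not prove this lemma: it appears inside the proof of Lemma~\ref{l:deriv_bounds_1} simply as a list of ``known inequalities'' and is quoted without justification (and only at integer arguments, using parts~(1) and~(3)). Your write-up therefore supplies far more than the paper does; the treatment of~(3) and of the integer cases of~(1) and~(2) via tail sums and trapezoid/midpoint comparisons is correct and clean.

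There is one genuine gap, in your extension of the \emph{lower} bound in~(1) to real~$x$. Writing $H_x-\ln x-\gamma=\tfrac{1}{2x}-I(x)$ with $I(x)=2\int_0^\infty t\,[(t^2+x^2)(e^{2\pi t}-1)]^{-1}\,dt$, you claim that $e^{2\pi t}-1>2\pi t(1+\pi t)$ yields $I(x)<\tfrac{1}{2x(x+1)}$ ``after elementary integration''. Carrying the integral out by partial fractions actually gives
\[
I(x)\;<\;\frac{\ln(\pi x)}{1+\pi^2 x^2}\;+\;\frac{1}{2x(1+\pi^2 x^2)},
\]
and the logarithmic term eventually dominates $\tfrac{1}{2x(x+1)}$: numerically the desired inequality fails once $x\gtrsim 40$. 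The underlying reason is that replacing the exponential decay of $1/(e^{2\pi t}-1)$ by the merely polynomial $1/[2\pi t(1+\pi t)]$ lets contributions from $t$ of order~$x$ creep into the integral, where the approximation is poor. A fix is to use an inequality that preserves exponential decay, e.g.\ $e^{2\pi t}-1>2\pi t\,e^{\pi t}$ (from $\sinh v>v$), which gives $I(x)<\tfrac{1}{\pi}\int_0^\infty e^{-\pi t}(t^2+x^2)^{-1}\,dt\le \tfrac{1}{\pi^2 x^2}<\tfrac{1}{2x(x+1)}$ for all $x\ge 1$; small~$x$ can then be handled separately.
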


\noindent
Using the first bound $(1)$ from Lemma \ref{l:harmonic_bounds}, we obtain the following:
\begin{eqnarray*}
k \cdot h\left(\left\lfloor \frac{k}{e} \right\rfloor - 1\right)
&=& 
H_{k-1} - H_{\left\lfloor \frac{k}{e} \right\rfloor - 1} - 1
\\
&>& 
\ln(k-1) + \frac{1}{2k} - \ln\left(\left\lfloor \frac{k}{e} \right\rfloor - 1\right) - \frac{1}{2(\left\lfloor \frac{k}{e} \right\rfloor - 1)} - 1
\\
&=&
\ln\left(\frac{e(k-1)}{e(\lfloor \frac{k}{e} \rfloor - 1 )}\right) + \frac{1}{2k} - \frac{1}{2(\left\lfloor \frac{k}{e} \right\rfloor - 1)} - 1
\\
&=&
\ln\left(\frac{k-1}{e(\lfloor \frac{k}{e} \rfloor - 1 )}\right) + \frac{1}{2k} - \frac{1}{2(\left\lfloor \frac{k}{e} \right\rfloor - 1)}
\\
& & \mbox{Rewriting inequality (3) in Lemma \ref{l:harmonic_bounds} as } \ln(y) \geq 1-1/y, \mbox{ we obtain:}
\\
&\geq&
1 - \frac{e(\lfloor \frac{k}{e} \rfloor - 1 )}{k-1} + \frac{1}{2k} - \frac{1}{2(\left\lfloor \frac{k}{e} \right\rfloor - 1)}
\\
&>&
1 - \frac{e(\lfloor \frac{k}{e} \rfloor - 1 )}{k-1} - \frac{1}{2(\left\lfloor \frac{k}{e} \right\rfloor - 1)}
\\
&>&
0
\ ,
\end{eqnarray*} where the last inequality holds because it is equivalent to
$$
  2(k-1)(\lfloor k/e \rfloor - 1 ) > 2e(\lfloor k/e \rfloor - 1 )^2 + k-1 \Leftrightarrow 2k \lfloor k/e \rfloor  + (4e-2) \lfloor k/e \rfloor > 2e(\lfloor k/e \rfloor)^2 +3k +2e-3
$$
$$
  \Leftarrow 2k \lfloor k/e \rfloor > 2e(\lfloor k/e \rfloor)^2 \, \mbox{ and } \, 
  (4e-2) \lfloor k/e \rfloor \geq 3k +2e-3,
$$
$$
  \Leftarrow k/e > \lfloor k/e \rfloor \, \mbox{ and } \, 
  (4e-2)(k/e - 1) \geq 3k +2e-3,
$$ where the first inequality is obvious and the second holds for $k = \Omega(1)$. 

For the second part we again use the first bound $(1)$ from Lemma \ref{l:harmonic_bounds}, to obtain:
 
\begin{eqnarray*}
k \cdot h\left(\left\lfloor \frac{k}{e} \right\rfloor + 1\right)
&=& 
H_{k-1} - H_{\left\lfloor \frac{k}{e} \right\rfloor + 1} - 1
\\
&<& 
\ln(k-1) + \frac{1}{2(k-1)} - \ln\left(\left\lfloor \frac{k}{e} \right\rfloor + 1\right) - \frac{1}{2(\left\lfloor \frac{k}{e} \right\rfloor + 2)} - 1
\\
&=&
\ln\left(\frac{e(k-1)}{e(\lfloor \frac{k}{e} \rfloor + 1 )}\right) + \frac{1}{2(k-1)} - \frac{1}{2(\left\lfloor \frac{k}{e} \right\rfloor + 2)} - 1
\\
&=&
\ln\left(\frac{k-1}{e(\lfloor \frac{k}{e} \rfloor + 1 )}\right) + \frac{1}{2(k-1)} - \frac{1}{2(\left\lfloor \frac{k}{e} \right\rfloor + 2)}
\\
&<&
0
\ ,
\end{eqnarray*} where the last inequality holds because it follows from
$$
  \ln\left(\frac{k-1}{e(\lfloor \frac{k}{e} \rfloor + 1 )}\right) < 0 \, \mbox{ and } \, 
  \frac{1}{2(k-1)} - \frac{1}{2(\left\lfloor \frac{k}{e} \right\rfloor + 2)}  < 0
$$
$$
  \Leftrightarrow k/e < \lfloor k/e \rfloor + 1 + 1/e \, \mbox{ and } \, 
  \lfloor k/e \rfloor < k - 3,
$$
$$
  \Leftarrow  k/e < \lfloor k/e \rfloor + 1 + 1/e \, \mbox{ and } \, 
  \lfloor k/e \rfloor \leq k/e \leq k - 3,
$$ where the first inequality is obvious and the second holds for $k \geq \frac{e-1}{3e}$.

The second part of the lemma that function $f$ achieves its maximum for $m \in \{\lfloor \frac{k}{e} \rfloor, \lfloor \frac{k}{e} \rfloor + 1\}$ follows directly from the first part of that lemma and from the definition of the discrete derivative $h(\cdot)$.
\end{proof}

\end{document}